
\documentclass[11pt]{article}
\usepackage{authblk}
\usepackage{natbib}
\usepackage{microtype}
\usepackage{graphicx}
\usepackage{subfigure}
\usepackage{booktabs}
\usepackage{comment}
\usepackage{algorithm2e}

\usepackage{hyperref}




\usepackage{amsmath}
\usepackage{amsthm}
\usepackage{amssymb}
\usepackage{amsfonts}
\usepackage{mathtools}
\usepackage{graphicx}
\usepackage{hyperref}
\usepackage{verbatim}

\usepackage{bbm}
\usepackage{color}
\usepackage{xspace}

\usepackage{multirow}

\usepackage[capitalize,noabbrev]{cleveref}

\theoremstyle{plain}
\newtheorem{theorem}{Theorem}[section]

\newtheorem{corollary}[theorem]{Corollary}
\theoremstyle{definition}
\newtheorem{definition}[theorem]{Definition}

\theoremstyle{remark}

\usepackage[textsize=tiny]{todonotes}

\def\R{\mathbb{R}}


\newcommand{\EE}[1]{\mathbb{E}\left[{#1}\right]} 


\def\R{\mathbb{R}}
\def\Z{\mathbf{Z}}
\def\W{\mathbf{W}}

\newcommand{\ident}{\mathbf{I}}

\newcommand\eqd{\stackrel{\mathclap{\normalfont\mbox{d}}}{=}}

\newcommand{\X}{\mathbf{X}}
\newcommand{\Y}{\mathbf{Y}}
\newcommand{\B}{\mathbf{B}}
\newcommand{\U}{\mathbf{U}}
\newcommand{\C}{\mathbf{C}}
\newcommand{\D}{\mathbf{D}}
\newcommand{\0}{\mathbf{0}}
\newcommand{\K}{\mathbf{K}}
\newcommand{\J}{\mathbf{J}}

\newcommand{\fdr}{\textnormal{FDR}}

\newcommand{\gfdr}{\fdr_{\textnormal{group}}}

\newcommand{\indep}{\perp \!\!\! \perp}
\newcommand{\iidsim}{\stackrel{\mathrm{iid}}{\sim}}

\title{Controlling FDR in selecting group-level simultaneous signals from multiple data sources with application to the National Covid Collaborative Cohort data}


\author[1]{Runqiu Wang}
\author[1]{Ran Dai\thanks{ran.dai@unmc.edu}}
\author[1]{Hongying Dai}
\author[2]{Evan French}
\author[1]{Cheng Zheng\thanks{cheng.zheng@unmc.edu}}
\author[ ]{on behalf of the N3C consortium}

\affil[1]{Department of Biostatistics, University of Nebraska Medical Center, Omaha, Nebraska, U.S.A.}

\affil[2]{Wright Center for Clinical and Translational Research, Virginia Commonwealth University, Richmond, Virginia, U.S.A.}

\begin{document}

\maketitle

\begin{abstract}

One challenge in exploratory association studies using observational data is that the associations between the predictors and the outcome are potentially weak and rare, and the candidate predictors have complex correlation structures. False discovery rate (FDR) controlling procedures can provide important statistical guarantees for replicability in predictor identification in exploratory research. In the recently established National COVID Collaborative Cohort (N3C), electronic health record (EHR) data on the same set of candidate predictors are independently collected in multiple different sites, offering opportunities to identify true associations by combining information from different sources. This paper presents a general knockoff-based variable selection algorithm to identify associations from unions of group-level conditional independence tests (simultaneous signals) with exact FDR control guarantees under finite sample settings. This algorithm can work with general regression settings, allowing heterogeneity of both the predictors and the outcomes across multiple data sources. We demonstrate the performance of this method with extensive numerical studies and an application to the N3C data.
\end{abstract}

\textbf{Keywords}: COVID-19, FDR, Multiple testing, Replicability, Variable selection, long COVID

\section{Introduction}
\label{sec:intro}

 With recent advances in biomedical research, data on the same set of candidate predictors are often collected independently from multiple sources, and there is a challenge in making reliable discoveries from such data jointly. {\color{black} For example, the electronic health record (EHR) data contains comprehensive information on patients' demographics, comorbidity, and medical history, providing great opportunities for observational studies. However, for multiple reasons (privacy protection, data storage capacity, data heterogeneity), EHR data from different sites are difficult to combine, bringing a challenge in effectively analyzing EHR data from multiple sites collectively. In addition, clinical concepts in the EHR data are often stored as groups of variables with complex dependence structures and data types.} In this paper, we introduce a knockoff-based framework to identify mutual signals from multiple independent studies and provide group-level variable selection accuracy guarantees under mild design and model assumptions.
 
 \subsection{The National COVID Collaborative Cohort example} 

Our methodology was motivated by a National COVID Collaborative Cohort (N3C) study. The N3C offers one of the largest collections of secure and de-identified clinical data in the United States for COVID-19 research \citep{Haendel-N3C}. Up to December 15, 2023, N3C has EHR information on over 17 million patients from 83 data-contributing sites (DCSs), with over 6 million confirmed COVID patients. With the accumulated COVID cohort data over time, long-term effects from SARS-CoV-2 infection have been identified and brought to attention. Some COVID-19 survivors present with persistent neurological, respiratory, or cardiovascular symptoms after the acute phase of the infection (Post-Acute Sequelae of COVID or ``long COVID”), regardless of the initial disease severity, vaccination status, and demographic and comorbidity status \citep{montani2022}. The identification of risk factors for long COVID has become an important question. Long COVID is not an illness that can be easily and universally defined. So far, two different long COVID indicator variables are recorded in the N3C database \citep{Pfaff2022}. The ``long COVID U9.09 diagnosis indicator" is a clinical diagnosis based on the International Classification of Diseases, Tenth Revision (ICD-10) codes; and the ``long COVID clinic visit indicator" indicates the patients' clinical visits for long COVID-related symptoms. These two long COVID definitions are highly related and both indicate patients' long COVID status. So far, the long COVID diagnosis information is only available from several DCSs, in the form of either ``long COVID U9.09 diagnosis indicator" or ``long COVID clinic visit indicator". 

To fully exploit the N3C long COVID data from multiple DCSs, we propose to develop a method to select mutual predictors (at the group level) from multiple data sources to identify reproducible risk factors for long COVID. For this purpose, we do not want to simply pool the data from different DCSs together. On one hand, the long COVID response variables have different definitions across different DCSs. {\color{black} On the other hand, data for clinical concepts are potentially recorded in variables with multiple data types; not all of them are available in every DCS, as the DCSs are heterogeneous and with different quality. Therefore, in the N3C EHR data analyses, there is a need for grouping variables with different data types, and allowing the group contents in different DCSs to be different.
For example, in N3C, there are multiple variables related to obesity. Some of them are continuous variables (BMI, body fat), and others are categorical (a four-level ordinal variable or a binary indicator for obesity). For diabetes, there are two related continuous variables (glucose and glycated hemoglobin) and two related categorical variables (complicated diabetes and uncomplicated diabetes). For high blood pressure, apart from a binary indicator variable, there are also longitudinal systolic and diastolic blood pressure measurements. The availability of these variables varies across the DCSs. Therefore, for the group indicating comorbidity (i.e. obesity), continuous variables (BMI, body fat) and categorical variables (obesity level) need to be considered as a group.} The N3C data with multiple independent DCSs provides us opportunities for reliably identifying signals, whereas the heterogeneous nature and the grouping structure of the data require strategic methodology planning. {\color{black}The increasing number of DCSs and sample sizes requires computational and communication efficiency, as well as data analysis capability to work in an online fashion, where analyses can be efficiently updated when data from more contributing sites become available.} It is also desirable to make reproducible discoveries. Novel false discovery rate (FDR) controlling methods are needed for these new data challenges. Our proposed method will be used to identify mutual risk factor signals for long COVID from patients' demographic and comorbidity information with an FDR control guarantee.

\subsection{Selecting group-level simultaneous signals}
 To formulate the group-level simultaneous signal identification problem mathematically, for a number $N \in \mathbb{N}$, denote $[N] = \{1, \cdots, N\}$. We are interested in $M$ domains of variables as candidate risk factors for an outcome. Our data is from $K$ independent datasets (the DCSs in the N3C example) $(\Y^1,\X^1)$, $\cdots$, $(\Y^K,\X^K)$, where {\color{black}$\Y^{k}\in \Omega_Y^{k}\subseteq \R^{n_k}$} (the long COVID response) and {\color{black} $\X^k\in \Omega_X^k\subseteq\R^{n_k\times p_k}$} (the candidate risk factors from the $M$ domains) for $k \in [K]$ {\color{black} with $\Omega_Y^{k}$ and $\Omega_X^k$ being the support of the distribution of $Y^{k}$ and $\X^k$ respectively}. Within the $k$-th dataset, there are $p_k$ variables (demographic, comorbidity, and medical record information), i.e. 
 \[(Y^k_{i},X^k_{i1},\cdots,X^k_{ip_k}) \iidsim \mathcal{D}_k, ~\text{for}~ i \in [n_k].\] {\color{black} for some arbitrary $p_k+1$ dimensional joint distribution $\mathcal{D}_k$s. For $k\in [K]$, we denote the $M$ domains as mutually exclusive groups of variables, with index set $G_{k1}, \cdots, G_{kM}$, where $G_{km}\subseteq [p_k]$ for all $k \in [K]$ and $m \in [M]$.}
 
 Across the $K$ experiments, both the outcome variables $\Y^k$s and the $\X_j^k$s for $k \in [K]$ and $j \in [p_k]$ can be of different data types, and $(\Y^k, \X_1^k,\cdots, \X_{p_k}^k)$ can have different distributions (heterogeneous). 
 For example, $\Y^k$s can be continuous or binary disease outcomes and $\X^k$s can be a mixture of continuous and categorical medical records from the EHR data. Furthermore, we do not assume $p_k$s to be identical across the $K$ datasets. For any $m \in [M]$, we also allow different group sizes ($|G_{km}|$s) across the K datasets. For example, in dataset $k$, $\X^k_{G_{km}}$ can be a group of dummy variables created for the categorical obesity level, and in dataset $l$, $\X^l_{G_{lm}}$ can be the continuously measured BMI.  
 
 Define the null hypothesis for the following test of group $m$ in dataset $k$ as $H_{0m}^k:= Y^k \indep \X^k_{G_{km}}|\X^k_{-G_{km}}$ where {\color{black}$\X^k_{-G_{km}}:= \X^k_{[p_k] \setminus G_{km}}$}, and the union null hypothesis $H_{0m} :=\cup_{k=1}^KH_{0m}^k$. Our goal is to control the FDR for the $M$ tests for the $H_{0m}$s. {\color{black}With} the group-level hypotheses, we define
\begin{equation}\label{eqn:H}
    \mathcal{S} = \{m\in [M]: H_{0m} ~\text{is false}\}~,  \text{and  $\mathcal{H} = \mathcal{S}^c = \{m\in [M]: H_{0m} ~\text{is true}\}$}.
\end{equation} 
We aim at developing a selection procedure returning a selection set of groups $\widehat{\mathcal{S}} \subseteq [M]$ {\color{black}to control the} group level FDR:  \begin{equation}\label{eqn:fdr}
\gfdr(\widehat{\mathcal{S}}) = \EE{\frac{|\widehat{\mathcal{S}} \cap \mathcal{H}|}{|\widehat{\mathcal{S}}|\vee 1}}.
\end{equation}

\subsection{Related prior work}
For risk factor identification using a single data source, knockoff-based methods have been developed for exact FDR control in selecting features with conditional associations with the response \citep{barber2015, candes2018}. {\color{black} The core concept of the knockoff-based method is to construct ``knockoff" copies of the covariates that retain their inner correlations. Unlike the original covariates, these knockoff copies are generated independently of the response variable. By incorporating these knockoff variables into the model, the method allows for the control of the False Discovery Proportion (FDP) during variable selection. Intuitively, if a variable represents a true signal, it is more likely to be selected over its knockoff copy; otherwise, the variable and its knockoff are equally likely to be selected. Thus, the FDP can be (conservatively) estimated by counting the number of knockoff variables included in the selected set.}

The original knockoff filter \citep{barber2015,barber2019} works on linear models assuming no knowledge of the design of covariates, the signal amplitude, or the noise level. It achieves exact FDR control under finite sample settings. For more general nonlinear models, \cite{candes2018} proposed the Model-X knockoff method, which allows the conditional distribution of the response to be arbitrary and completely unknown but requires some knowledge of the distribution of $\X$ \citep{huang2020}. Model-X knockoff method is also robust against errors in the estimation of the distribution of $\X$ \citep{barber2020}. There are also abundant publications on the construction of knockoffs with an approximated distribution of $\X$. \cite{romano2019} developed a Deep knockoff machine using deep generative models. \cite{liu2019} developed a Model-X generating method using deep latent variable models. \cite{bates2020} proposed an efficient general metropolized knockoff sampler. \cite{spector2020} proposed to construct knockoffs by minimizing the reconstructability of the features. Model-specific Knockoff methods have been proposed. \cite{dai2022kernel} proposed a kernel knockoff selection procedure for the nonparametric additive model. \cite{kormaksson2021} proposed the sequential knockoffs for continuous and categorical $\X$ variables. Knockoff-based methods have also been extended to test the null hypotheses at the group level. In this direction, group and multitask knockoff methods \cite{dai2016knockoff}, and prototype group knockoff methods \cite{chen2020} have been proposed. These group knockoff methods can also be used when there are categorical variables in $\X$ (see details in Section \ref{sec:groupknockoff}). Variants of knockoff methods have become useful tools in scientific research. For example, to identify the variations across the whole genome associated with a disease, \cite{sesia2019} developed a hidden Markov model knockoff method for FDR control in the genome-wide association study (GWAS). \cite{srinivasan2022} proposed a compositional knockoff filter for the analysis of microbiome compositional data.

{\color{black}
For simultaneous variable selection from multiple experiments, prior work focuses on individual-level variables without complex dependence structures. Methods based on the BH procedure \citep{heller2014, bogomolov2013, bogomolov2018}, local FDR-based methods \citep{chi2008, heller2014b} and a nonparametric method \citep{zhao2020} have been proposed. However, all these methods assume not only the independence of the experiments (in the N3C scenario, the DCSs), but also the independence or positive regression dependency (PRDS) property \citep{benjamini2001} of the p-values for the features within each experiment (in the N3C scenario, the demographic and comorbidity variables). This is not realistic for the patient information data in N3C. For example, age is correlated with many comorbidities such as heart disease and high blood pressure. More recently, \cite{dai2021multiple} proposed a simultaneous knockoff method for testing the union null hypotheses for feature selection at the individual level in $\X$, and \cite{li2021} proposed a multi-environment knockoff filter to find conditional associations that are consistent across environments. Both these knockoff methods can not be directly used when a group of $\X$ variables (for example, the group of diabetic variables HbA1c, glucose, complicated diabetes, and uncomplicated diabetes) needs to be selected together or categorical $\X$ variables with more than 2 categories (for example, dummy variables for the categorical COVID severity level) are present. 
}

\subsection{Our contribution}

In this paper, we propose a generalized simultaneous knockoff (\textit{GS knockoff}) framework, to establish exact FDR control in selecting mutual signals at the group level from multiple conditional independence tests, assuming very general conditional models. This extension is especially useful when using a general machine-learning model to select important groups of variables or categorical variables. The main contributions of this paper are summarized below:

\begin{itemize}
    \item We present a general knockoff-based algorithm for selecting simultaneous group-level features from multiple data sources; and show that it controls the exact group level FDR under mild conditions {\color{black}(see Sections \ref{sec:fixknockoff} and \ref{sec:method2} for details}) for $\X$ and $Y|\X$ under finite sample settings.
    \item We provide a collection of easy-to-implement group knockoff construction methods that are compatible with our framework, as well as powerful and simple-to-implement filter statistics.
    \item We demonstrate the FDR control property and the power of our method with extensive simulation settings. We also illustrate the application with the N3C long COVID research.
    \item Our method only requires the communication of the summary statistics from the individual datasets to identify the simultaneous signals, leading to the advantages of privacy-preserving, efficient distributed learning algorithms with the potential to work on online stream data {\color{black}(when a new DCS is added, instead of repeating the analysis using the expanded data, we can use the stored statistics from previous DCSs to efficiently update the results)}. 
\end{itemize}

\section{Group knockoff construction methods}\label{sec:groupknockoff}

In this section, we present a collection of methods for generating group knockoffs for an individual dataset. For notation simplicity, we omit the superscript $k$ in this section. We begin with some definitions.

\begin{definition}
(Swapping) For a set $S \subseteq [M]$, and for a vector $\mathbf{V} = (V_1, \cdots, V_{2M}) \in \R^{2M}$, $\mathbf{V}_{\textnormal{Swap}(S)}$ indicates the swapping of $V_j$ with $V_{j+M}$ for all $j \in S$.
\end{definition}
\begin{definition}
(Group Swapping) For a set $S \subseteq [M]$ and a group partition $G=\{G_1,\dots,G_M\}$ with $G_m\subseteq[p]$, and for a vector $\mathbf{V} = (V_1, \cdots, V_{2p}) \in \R^{2p}$, $\mathbf{V}_{\textnormal{GSwap}(S,G)}=\mathbf{V}_{\textnormal{Swap}(\cup_{m\in S}G_m)}$.
\end{definition}

\subsection{Group knockoff construction for Fixed-X knockoff approach}\label{sec:fixknockoff}

We first briefly review the group knockoff construction to work with the Fixed-X knockoff method. This group knockoff construction method has been proposed by {\color{black}Dai and Barber} \citep{dai2016knockoff}. 
{\color{black}The Fixed-X knockoff framework is predicated on a decentralized linear model structure \cite{barber2015}. This approach makes modest assumptions about the covariates and is tolerant of uncertainty in the magnitude of the regression coefficients, $\beta$. Furthermore, it is not contingent upon pre-established knowledge of the noise parameter, $\sigma^2$.}

\begin{definition} 
A Fixed-X group knockoff for a fixed design matrix $\X=(\X_1,\cdots,\X_p)$ with group partition $G=\{G_1,\cdots,G_M\}$ is a new design matrix $\widetilde{\X}=(\widetilde{\X}_1,\cdots,\widetilde{\X}_p)$ constructed with the following two properties:
\begin{enumerate}
\item $\widetilde{\X}^\top \widetilde{\X} = \Sigma\coloneqq \X^\top \X$
\item $\widetilde{\X}^\top \X = \Sigma - \B$, where $\B\succeq 0$ is group-block-diagonal meaning that $\B_{G_i,G_j} = 0$ for any two distinct groups $i\neq j$. 
\end{enumerate}
\end{definition}
Specifically, write $\B = \text{diag}\{\B_1,\dots,\B_M\}$ where $\B_m=\B_{G_m,G_m}$ and $\0\preceq \B \preceq 2\Sigma$. Here for $\J, \K \in \R^{m\times m}$, $\J \preceq \K$ if and only if $\K -\J$ is positive semidefinite. We can construct the fixed group knockoffs by setting \[\widetilde{\X} = \X(\ident_p -\Sigma^{-1}\B)+\widetilde{\U}\C\]
where $\widetilde{\U}$ is a $n\times p$ matrix orthogonal to the span of $\X$, while $\C^\top \C = 2\B - \B \Sigma^{-1}\B$ is a Cholesky decomposition. The condition $\0\preceq\B\preceq 2\Sigma$ guarantees the existence of such a Cholesky decomposition. We can select $\B$ using either the equivariant approach or the semidefinite programming (SDP) approach. For equivariant approach, we have $\B_m = b \cdot \Sigma_{G_m,G_m}$ where \[b = \min\left\{1, 2\lambda_{\min}\left(\D\Sigma \D\right)\right\}\] where $\lambda_{min}(\cdot)$ means the minimum eigen value and $\D = \text{diag}\{\Sigma_{G_1,G_1}^{-\frac{1}{2}},\dots,\Sigma_{G_M,G_M}^{-\frac{1}{2}}\}$. For SDP approach, we have $\B_m = b_m \cdot \Sigma_{G_m,G_m}$ and we can find $(b_1,\dots,b_M)$ that minimize $\sum_{m=1}^M (1-b_m)$ with the constraint $\B\preceq 2\Sigma$. In the non-group setting, it has been shown that the SDP approach can lead to a slight power increase.

We can also use an individual-level Fixed-X knockoff matrix which automatically satisfies the fixed group knockoff matrix requirement. However, the group-level condition is weaker and it allows more flexibility in constructing $\widetilde{\X}$. Such flexibility will enable more separation between a feature $\X_j$ and its knockoff $\widetilde{\X}_j$, which in turn can increase the power to detect true signals

The Fixed-X group knockoff method enjoys very relaxed assumptions on the covariates $\X$, the unknown regression coefficients, or the noise level, as $Y|\X$ follows a linear model (see an example in simulation E.4).  However, Fixed-X group knockoff can not work with the binary long COVID response in the N3C data example. We further extend the Model-X knockoff \cite{candes2018} to group knockoff construction.

\subsection{Group knockoff construction for Model-X knockoff approach} \label{sec:method2}

{\color{black}
The Model-X knockoff approach can work with arbitrary unknown dependence structure of $Y|\X$, {\color{black} assuming the knowledge of the distribution of $\X$ (or if the distribution of $\X$ can be well approximated)}\citep{candes2018} for variable selection at the individual level. However, in the N3C data, some clinical concepts are characterized as a group of variables with various data types (see the obesity, diabetes and high blood pressure examples in Section \ref{sec:intro}). When categorical variables exist in the data set, it is more natural to select all dummy variables for one categorical variable as a group. For the more complicated cases in EHR data, a group level factor can be defined as a mixture of continuous and categorical variables. In this section, we extend the group knockoff construction to work with the Model-X knockoff method settings.}

\begin{definition} \label{def:group_modelX}
A group Model-X knockoffs for the family of random variables $\X=(\X_1,\cdots,\X_p)$ with group partition $G=\{G_1,\cdots,G_M\}$ are a new family of random variables $\widetilde{\X}=(\widetilde{\X}_1,\cdots,\widetilde{\X}_p)$ constructed with the following two properties:
\begin{enumerate}
\item for any subset $S\subseteq[M]$, $(\X,\widetilde{\X})_{\text{GSwap}(S,G)}\eqd (\X,\widetilde{\X})$
\item $\widetilde{\X}\indep Y| \X$ if there is a response $Y$
\end{enumerate}
\end{definition}

\cite{candes2018} proposed a general algorithm to sample the model-X knockoff when each column is a single variable. We can extend it to allow each variable to be a multivariate random vector and thus the general algorithm to sample group knockoff can be given as below:\\
\begin{algorithm}
$m=1$; \\ 
\textbf{while} $m\leq M$, \textbf{do}
\begin{itemize}
    \item Sample $\widetilde{\X}_{G_m}$ from distribution $\mathcal{L}(\X_{G_m}|\X_{-G_m},\widetilde{\X}_{\cup_{j=1}^{m-1} G_j})$;
    \item $m=m+1$;
\end{itemize}
\textbf{end}
\caption{Model-X Group Knockoff construction}
\label{alg:groupModelX}
\end{algorithm}

The proof that this algorithm leads to knockoffs that satisfy the group Model-X knockoff properties (Definition \ref{def:group_modelX}) is given in Lemma 1 in Web Appendix A. Next, we present the sequential group knockoff construction algorithm to construct the Model-X group knockoffs.

\subsubsection{Sequential group knockoff construction}
We consider a general case where the group-wise $\X$ variables are composed of both continuous and categorical variables. 
For individual knockoff procedures with categorical $\X$ variables, a (individual) sequential knockoff construction has been proposed \citep{kormaksson2021}. We propose a sequential group knockoff construction algorithm that allows for both continuous and categorical variables to co-exist in one group in $\X$. Without loss of generality, we can assume that for each $\X_{G_m}$, it contains two components, the continuous component $\X_{G_m}^{con}$ and the categorical component $\X_{G_m}^{cat}$. {\color{black} We construct the knockoffs for the groups one by one, and in each group, we first generate knockoffs for the continuous components and then the categorical components.} We summarize the algorithm below:

\begin{algorithm}
                
$m=1$, 
\textbf{While} $m\leq M$, \textbf{do}
\begin{itemize}
    \item $\widetilde{\X}^{con}_{G_m}$ construction: {\color{black} If $\X_{G_m}^{con}$=$\emptyset$, ignore this step; Otherwise, }sample $\widetilde{\X}^{con}_{G_m} \sim \mathcal{N}(\widehat{\mu}_m,\widehat{\Sigma}_m)$ where $\widehat{\mu}_m$,$\widehat{\Sigma}_m$ are obtained by fitting a penalized multi-task linear regression of $\X^{con}_{G_m}$ on $[\X_{-G_m},\widetilde{\X}_{\cup_{j=1}^{m-1} G_j}]$.
    \item $\widetilde{\X}^{cat}_{G_m}$ construction: {\color{black} If $\X_{G_m}^{cat}$=$\emptyset$, ignore this step; Otherwise, }sample $\widetilde{\X}^{cat}_{G_m} \sim \text{Multinom}(\widehat{\pi})$, where $\widehat{\pi}$ are obtained by fitting a penalized multinomial logistic regression of $\X^{cat}_{G_m}$ on $[\X^{con}_{G_m},\X_{-G_m},\widetilde{\X}_{\cup_{j=1}^{m-1} G_j}]$ with predictions made on $[\widetilde{\X}^{con}_{G_m},\X_{-G_m},\widetilde{\X}_{\cup_{j=1}^{m-1} G_j}]$.
    \item $m=m+1$
\end{itemize}
\textbf{end}

\caption{Sequential Group Knockoff construction
        }
    \label{alg:seqknockoff}   
\end{algorithm}

In Lemma 2 in Web Appendix A, we show that when the model is correct, this satisfies the general Group Model-X Knockoff generation procedure (Algorithm \ref{alg:groupModelX}). For constructing $\widetilde{\X}^{con}_{G_m}$, the penalized multitask linear regression can be fitted using the method in Section 2.2 of \citep{dai2016knockoff}. For $\widetilde{\X}^{cat}_{G_m}$, the penalized multinomial regression is performed using the R package glmnet. More details can be found in Web Appendix D.

For the misspecified model cases, previous literature has shown the original Model-X knockoffs \citep{candes2018} and simultaneous knockoffs \citep{dai2021multiple} are robust to moderate model misspecifications. Theoretically, the misspecification problem has been further studied by Barber \emph{et al.}\citep{barber2020} and Huang and Janson \citep{huang2020}. Also, our simulation study in Section \ref{sec:simulation} reflects the scenario of creating Model-X knockoffs under approximated distribution and the numerical result shows robustness for FDR control.

\section{Generalized Simultaneous Knockoff Method}

{\color{black}
Intuitively, one might propose some naive methods to solve the mutual signal identification problem. For example, the \textit{intersection} strategy first selects signals specific to the individual datasets and then constructs the simultaneous signal set by taking the intersection of the signals selected from the multiple datasets. However, this method is not guaranteed to control the FDR \citep{katsevich2023filtering} (see an example in Figure \ref{fig:figure1}). Another strategy, the \textit{pooling} method aggregates data from the multiple datasets to construct a single dataset. This strategy has been used when data are homogeneous across the datasets \citep{kormaksson2021, sechidis2021biomarker}. However, datasets from multiple sites may face heterogeneous problems so the pooling might not be always meaningful; when the data types and dimensions of the (group level) variables from the multiple datasets are different, it is not possible to pool the datasets. Furthermore, the \textit{pooling} method also fails in controlling the FDR as defined in \eqref{eqn:fdr}.}

Our proposed \textit{GS knockoff} framework can work with general regression models as long as the settings for the individual datasets satisfy the Fixed-X or Model-X knockoff assumptions \citep{barber2015, candes2018}. Therefore it can work with a large spectrum of models, from linear regression models with very weak assumptions on $\X$, to machine learning models with some knowledge of the $\X$ distribution. For the group settings, we only assume that for all the $K$ datasets there are $M$ groups, but we do not require the group sizes to be the same across the datasets. Also, we do not require $\cup_{m=1}^M G_{km}=[M]$ so that we can adjust for confounding variables in the models. For example, in some study to on concurrent medications using EHR data, demographic information is always adjusted, but we are not interested in testing their associations with the outcome. 

\subsection{Preliminaries}
\begin{definition} \label{def:compatible}
A test statistics $[\Z, \widetilde{\Z}]$ is called group knockoff compatible with the group partition $G_1,\dots,G_M\subseteq[p]$ if it can be written as $[\Z, \widetilde{\Z}]=t([\X,\widetilde{\X}],Y])$ for some function $t(\cdot)$ such that for any $S\subseteq[M]$, $[\Z, \widetilde{\Z}]_{\text{Swap}(S)}=t([\X,\widetilde{\X}]_{\text{GSwap}(S,G)},Y])$.
\end{definition}
\begin{definition} \label{def:sufficiency}
A test statistics $[\Z, \widetilde{\Z}]$ satisfies the sufficiency requirement if it can be written as a function of $[\X,\widetilde{\X}]^\top [\X,\widetilde{\X}]$ and $[\X,\widetilde{\X}]^\top Y$.
\end{definition}
\begin{definition}\label{def:one-swap-flip}
(One swap flip sign function (OSFF)) A function $f: \R^{2MK}\rightarrow \R^{M}$ is called a one swap flip sign function (OSFF) if it satisfies that for all $k\in [K]$ and all $S \subseteq[M]$, \begin{equation*}
    f([\Z^1,\widetilde{\Z}^1],\cdots,[\Z^k,\widetilde{\Z}^k]_{\textnormal{Swap}(S)},\cdots,[\Z^K,\widetilde{\Z}^K])\\=f([\Z^1,\widetilde{\Z}^1],\cdots,[\Z^k,\widetilde{\Z}^k],\cdots,[\Z^K,\widetilde{\Z}^K])\odot \epsilon(S),
\end{equation*}
where $\Z^k,\widetilde{\Z}^k, {\color{black}\epsilon(S)} \in {\color{black}\R^M}$ for $k \in [K]$, {\color{black}$\epsilon(S)_j=-1$ for all $j \in S$, otherwise $\epsilon(S)_j=1$} and $\odot$ represents the Hadamard product (elementwise product).
\end{definition}

\subsection{Algorithm}\label{sec:algorithm}

The \textit{GS knockoff} procedure is described below:
\begin{itemize}
    
    \item \textit{Step 1: Group knockoff construction for the individual experiments.} Denote the knockoff matrices for $\X^1,\cdots,\X^K$ as $\widetilde{\X}^1, \cdots,\widetilde{\X}^K$. The $\widetilde{\X}^k$ matrices can be generated using the group knockoff construction methods as described in Section \ref{sec:groupknockoff}. When only individual features exist, methods for generating individual knockoffs \citep{barber2015, candes2018, romano2019, bates2020, spector2020} can also be used since satisfying individual knockoff requirements implies satisfying group knockoff requirements. However, using individual knockoff might cause the knockoff to be very similar to the original feature and thus has less power when the within-group variables are highly correlated. 
    
    \item \textit{Step 2: Test statistics calculation for the individual experiments.} For each experiment $k \in [K]$, choose and calculate statistics $[\Z^k, \widetilde{\Z}^k] \in \R^{2M}$ that are group knockoff compatible (Definition \ref{def:compatible}) with the group partition $G$ (and satisfy the sufficiency (Definition \ref{def:sufficiency}) requirement when fixed group knockoff construction is used). 
    For our analysis, we assume the true model is
    \begin{equation}
    g_k(\EE{Y_{i}^k})=\beta_0^k+\X^k_{i}\beta^{k},
    \end{equation}
    where $g_k(\cdot)$ is the link function for the $k$th experiment, $\X_i^k$ is the $i$th row of $\X^k$ and $\tilde{\X}_i^k$ is the $i$th row of $\tilde{\X}^k$. We fit the working model
    \begin{equation} g_k(\EE{Y_{i}^k})=\beta_0^k+\X_i^k \beta^k + \widetilde{\X}_i^k\widetilde{\beta}^{k},
    \end{equation}   
    by defining 
    \begin{eqnarray*}
    \left(\begin{array}{c}
         \widehat{\beta}^k_0\\ \widehat{\beta}^k(\lambda)  \\
        \widehat{\tilde{\beta}}^k(\lambda)
        \end{array}\right) &=& \arg\min_{(\beta_0^k,\beta^{k\top},\tilde{\beta}^{k\top})^\top} \sum_{i=1}^{n_k}\frac{(Y_i^k- {\color{black} g_k^{-1}(\beta_0^k+\X_i^k \beta^k+\widetilde{\X}_i^k\widetilde{\beta}^{k})})^2}{V^k_i}\\&&+\lambda \color{black}\sum_{m=1}^{M} \left(\sqrt{\sum_{j\in G_{km}}(\beta_{j}^k)^2}+\sqrt{\sum_{j\in G_{km}}(\tilde{\beta}_{j}^k)^2}\right),
    \end{eqnarray*}
    where $V^k_i=V^k(g_k^{-1}(\beta_0^k+\X_i^k \beta^k + \widetilde{\X}_i^k\widetilde{\beta}^{k}))$ and $V^k(\cdot)$ is the variance function specified for the generalized linear model (GLM) for $Y^k$. Then we define 
    \begin{equation}
    Z^k_m=\sup\{\lambda: \sum_{j\in G_{km}}\widehat{\beta}^k_j(\lambda)^2>0\}
    \end{equation}
   \begin{equation}
   \widetilde{Z}^k_m=\sup\{\lambda: \sum_{j\in G_{km}}\widehat{\tilde{\beta}}^k_{j}(\lambda)^2>0\}.
   \end{equation}
Denote $\Z^k = (Z_1^k,\cdots,Z_M^k)$ and $\widetilde{\Z}^k = (\widetilde{Z}_1^k,\cdots,\widetilde{Z}_M^k)$.
    
    \item \textit{Step 3: Calculation of the filter statistics {\color{black}$\W \in \R^M$.}} Choose an arbitrary OSFF $f$ as defined in Definition \ref{def:one-swap-flip} and calculate $\W=f([\Z^1,\widetilde{\Z}^1],\cdots,[\Z^K,\widetilde{\Z}^K])$. In this work, we use the difference function \citep{dai2021multiple}
    \begin{equation}
    \W=\odot_{k=1}^K [\Z^k-\widetilde{\Z}^k].
    \end{equation}
    Other choices of $\W$ construction can be found in Appendix A5 of \cite{dai2021multiple}. 
  
    \item \textit{Step 4: Threshold calculation and feature selection.} Using the filter statistics $\W$ from Step 3, we apply the knockoff+ filter \eqref{eqn:knockoff+} to obtain the selection set $\widehat{S}_+$ under the \textit{Generalized Simultaneous knockoff}+ procedure; or apply the knockoff filter \eqref{eqn:knockoff} to obtain $\widehat{S}$ under the \textit{Generalized Simultaneous knockoff} procedure.
    \begin{equation}\label{eqn:knockoff}
   \widehat{S} = \{j: W_j \geq \tau\},~\text{where }~\\ \tau = \min\left\{t \in \mathcal{W}_+: \frac{\#\{j:W_j \leq -t\}}{\#\{j:W_j\geq t\}\vee 1}\leq q\right\} .
\end{equation}
\begin{equation}\label{eqn:knockoff+}
   \widehat{S}_+ = \{j: W_j \geq \tau_+\}, ~\text{where }~ \\ \tau_+ = \min\left\{t \in \mathcal{W}_+: \frac{1+\#\{j:W_j \leq -t\}}{\#\{j:W_j\geq t\}\vee 1}\leq q\right\} .
\end{equation}
\end{itemize}
{\color{black}Here q is the target FDR level and $ \mathcal{W}_+= \{|W_j|:|W_j|>0\}$.}
 
\section{Main results}
\begin{theorem}\label{thm:1}
With the test statistics $[\Z^k,\widetilde{\Z}^k]$ for $k\in [K]$ satisfy the property that $[\Z^k,\widetilde{\Z}^k]\eqd [\Z^k,\widetilde{\Z}^k]_{Swap(S)}$ for any $S\in \mathcal{H}$ and $W=f([\Z^1,\widetilde{\Z}^1],\cdots,[\Z^K,\widetilde{\Z}^K])$ for an OSFF function $f$, the \textit{GS knockoff} procedure \eqref{eqn:knockoff} 
controls the modified group FDR defined as
\begin{equation}\label{eqn:mfdr}
\textnormal{m}\gfdr=\EE{\frac{|\widehat{S}\cap \mathcal{H}|}{|\widehat{S}|+1/q}}\leq q, \end{equation}
and the \textit{GS knockoff}+ procedure \eqref{eqn:knockoff+} controls the group FDR as defined in \eqref{eqn:fdr}.
\end{theorem}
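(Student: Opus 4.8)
The plan is to reduce Theorem~\ref{thm:1} to the classical Barber--Candès knockoff filter analysis, whose only real input is that the filter vector $W$ has \emph{sign-symmetric null coordinates}: for every $B\subseteq\mathcal{H}$ one has $W\odot\epsilon(B)\eqd W$, where $\epsilon(B)\in\{\pm1\}^M$ negates exactly the coordinates in $B$. Equivalently, conditional on the magnitudes $(|W_1|,\dots,|W_M|)$ and on the signs of the non-null coordinates, the signs $\{\mathrm{sign}(W_m):m\in\mathcal{H}\}$ are i.i.d. fair coin flips. Granting this, the knockoff and knockoff+ thresholds control the two error rates by a now-standard supermartingale bound, so essentially all of the content is in establishing the sign-flip property, which is where independence across sources and the union-null structure enter.

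To prove the sign-flip property I would first read off, for each source $k$, the set $\mathcal{H}^k=\{m:H_{0m}^k\text{ holds}\}$ of groups that are null in dataset $k$; validity of the group knockoff construction of Section~\ref{sec:groupknockoff} together with group-knockoff-compatibility (and sufficiency in the fixed-X case) gives the per-source invariance $[\Z^k,\widetilde\Z^k]\eqd[\Z^k,\widetilde\Z^k]_{\textnormal{Swap}(S)}$ for every $S\subseteq\mathcal{H}^k$. Because the union null $H_{0m}=\cup_k H_{0m}^k$ holds precisely when $m$ is null in at least one source, $\mathcal{H}=\cup_k\mathcal{H}^k$, so every $m\in\mathcal{H}$ admits a witnessing source $k(m)$ with $m\in\mathcal{H}^{k(m)}$. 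Fix $B\subseteq\mathcal{H}$ and set $S_k=\{m\in B:k(m)=k\}\subseteq\mathcal{H}^k$; then each $m\in B$ lies in exactly one $S_k$ and each $m\notin B$ in none. Applying the swaps $\textnormal{Swap}(S_1),\dots,\textnormal{Swap}(S_K)$ one source at a time, the OSFF property (Definition~\ref{def:one-swap-flip}) multiplies $W$ coordinatewise by $\odot_k\epsilon(S_k)=\epsilon(B)$, while independence of the $K$ datasets combined with the per-source invariance shows the swapped collection has the same joint law as the original; hence $W\odot\epsilon(B)\eqd W$. I expect this bookkeeping to be the main obstacle: one must flip each null coordinate an odd number of times (exactly once, through its witnessing source) and each non-null coordinate an even number of times, all while confining the swap inside each source to its own null set $\mathcal{H}^k$ so that marginal invariance plus independence can be invoked.

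Finally I would run the filter argument on $V^+(t)=\#\{m\in\mathcal{H}:W_m\ge t\}$ and $V^-(t)=\#\{m\in\mathcal{H}:W_m\le -t\}$. The sign-flip property makes the null signs exchangeable coin flips given magnitudes, which is exactly the hypothesis of the Barber--Candès stopping-time lemma, yielding $\EE{V^+(\tau)/(1+V^-(\tau))}\le1$ at any threshold of the permitted form (in particular at both $\tau$ and $\tau_+$). For GS knockoff+, the defining inequality for $\tau_+$ gives $\#\{j:W_j\ge\tau_+\}\ge q^{-1}\bigl(1+\#\{j:W_j\le-\tau_+\}\bigr)\ge q^{-1}(1+V^-(\tau_+))$, so $\gfdr=\EE{V^+(\tau_+)/(\#\{j:W_j\ge\tau_+\}\vee1)}\le q\,\EE{V^+(\tau_+)/(1+V^-(\tau_+))}\le q$. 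For the plain threshold the analogous bound $\#\{j:W_j\ge\tau\}+1/q\ge q^{-1}(1+V^-(\tau))$ gives $\textnormal{m}\gfdr=\EE{V^+(\tau)/(|\widehat{S}|+1/q)}\le q\,\EE{V^+(\tau)/(1+V^-(\tau))}\le q$, which is the modified-FDR claim. These last manipulations are routine once the sign-flip lemma is in place.
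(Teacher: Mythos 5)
Your proposal is correct and follows essentially the same route as the paper: the paper's Lemma~\ref{lem:1} proves exactly your sign-flip property by partitioning $S\subseteq\mathcal{H}$ into disjoint sets $S_k=S\cap\mathcal{H}_k\cap(\cup_{j<k}\mathcal{H}_j)^c$ (your ``witnessing source'' assignment), invoking per-source swap-invariance, independence across datasets, and the OSFF property, and the paper's Lemma~\ref{lem:mar} is the same backward supermartingale/stopping-time bound $\EE{V^{+}/(1+V^{-})}\leq 1$ that you apply at $\tau$ and $\tau_{+}$. The concluding factorizations for the knockoff and knockoff+ thresholds match the paper's final display verbatim.
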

The proof of Theorem \ref{thm:1} is in Web Appendix B.

\begin{corollary}\label{cor1} Under the specific choice of $[\Z^k,\widetilde{\Z}^k]$ in equations (5) and (6), and the choice of $W$ as in equation (7), we have that the \textit{GS knockoff} procedure controls the modified group FDR and the \textit{GS knockoff}+ procedure controls the group FDR as defined in \eqref{eqn:mfdr}.
\end{corollary}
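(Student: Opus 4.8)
The plan is to obtain the corollary as a direct instance of Theorem~\ref{thm:1}: it suffices to check that the concrete filter statistic $\W=\odot_{k=1}^K[\Z^k-\widetilde{\Z}^k]$ of equation~(7) is an OSFF in the sense of Definition~\ref{def:one-swap-flip}, and that the group-lasso path statistics $[\Z^k,\widetilde{\Z}^k]$ defined through equations~(5)--(6) satisfy the distributional swap invariance posited as the hypothesis of Theorem~\ref{thm:1}. Once both are verified, the mFDR and FDR conclusions follow verbatim from that theorem.

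First I would verify the OSFF property by a coordinatewise computation. Fix a dataset $k$ and a set $S\subseteq[M]$. Swapping $\Z^k$ with $\widetilde{\Z}^k$ on the coordinates in $S$ replaces the $k$-th factor $Z^k_m-\widetilde{Z}^k_m$ of $W_m=\prod_{l=1}^K(Z^l_m-\widetilde{Z}^l_m)$ by $\widetilde{Z}^k_m-Z^k_m=-(Z^k_m-\widetilde{Z}^k_m)$ when $m\in S$, and leaves it unchanged when $m\notin S$. Hence $W_m\mapsto -W_m$ for $m\in S$ and $W_m\mapsto W_m$ otherwise, which is exactly the relation $f(\cdots,[\Z^k,\widetilde{\Z}^k]_{\textnormal{Swap}(S)},\cdots)=f(\cdots,[\Z^k,\widetilde{\Z}^k],\cdots)\odot\epsilon(S)$, with $\epsilon(S)$ the sign vector equal to $-1$ on $S$ and $+1$ elsewhere.

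The substantive step is the distributional swap invariance of $[\Z^k,\widetilde{\Z}^k]$, which I would argue in two stages. The first stage is deterministic group knockoff compatibility: in the augmented weighted group lasso of Step~2 each original block $G_{km}$ and its knockoff block $G_{km}+p_k$ enter as two separate penalty groups treated symmetrically by $\sum_m\sqrt{\sum_{j\in G_{km}}(\beta^k_j)^2}$, and the weights $V^k_i$ depend on the coefficients only through the fitted linear predictor. Consequently, performing a GSwap of the columns of $[\X^k,\widetilde{\X}^k]$ on the groups in $S$ merely relabels the matched coefficient blocks along the entire $\lambda$-path, so the block entry times are exchanged, $Z^k_m\leftrightarrow\widetilde{Z}^k_m$ for $m\in S$; that is, $[\Z^k,\widetilde{\Z}^k]_{\textnormal{Swap}(S)}=t([\X^k,\widetilde{\X}^k]_{\textnormal{GSwap}(S,G)},Y^k)$. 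In the linear Gaussian Fixed-X case, where $V^k_i$ is constant, the solution additionally depends on the data only through $[\X^k,\widetilde{\X}^k]^\top[\X^k,\widetilde{\X}^k]$ and $[\X^k,\widetilde{\X}^k]^\top Y^k$, so the sufficiency requirement holds as well. The second stage invokes the knockoff exchangeability of the constructions of Section~\ref{sec:groupknockoff}: for any subset $S$ of the null groups of dataset $k$, the Fixed-X defining properties (resp.\ the group Model-X properties of Definition~\ref{def:group_modelX}) give $([\X^k,\widetilde{\X}^k]_{\textnormal{GSwap}(S,G)},Y^k)\eqd([\X^k,\widetilde{\X}^k],Y^k)$, whence compatibility yields $[\Z^k,\widetilde{\Z}^k]_{\textnormal{Swap}(S)}\eqd[\Z^k,\widetilde{\Z}^k]$.

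Finally, since every globally null group $m\in\mathcal{H}$ is, by the definition $H_{0m}=\cup_kH_{0m}^k$, null in at least one dataset, the per-dataset invariance just established supplies exactly the hypothesis of Theorem~\ref{thm:1}, and the FDR/mFDR control follows. I expect the main obstacle to be the first stage of the substantive step: carefully justifying path equivariance of the group lasso under within-group original/knockoff column swaps, in particular checking that the IRLS weights $V^k_i$ are invariant under the joint relabeling of columns and coefficients so that the whole solution path, and hence the entry times $Z^k_m,\widetilde{Z}^k_m$, transforms as claimed, together with confirming sufficiency in the Fixed-X case.
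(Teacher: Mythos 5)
Your proposal is correct and follows essentially the same route as the paper's proof: verify that the elementwise product $\W=\odot_{k}[\Z^k-\widetilde{\Z}^k]$ is an OSFF by the coordinatewise sign computation, then establish the per-dataset swap invariance $[\Z^k,\widetilde{\Z}^k]_{\textnormal{Swap}(S)}\eqd[\Z^k,\widetilde{\Z}^k]$ from knockoff compatibility together with the GSwap exchangeability of the Model-X (respectively Fixed-X, via the sufficiency requirement) construction, and invoke Theorem~\ref{thm:1}. The only difference is that you additionally spell out why the group-lasso path statistics are knockoff compatible, a step the paper takes as given from the definitions in Step~2.
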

The proof of Corollary \ref{cor1} is in Web Appendix C.
\section{Simulation} \label{sec:simulation}
To evaluate the performance of our proposed method, We simulated multiple settings with group-wise sparse predictor variables $\X^k \in \R^{n_k \times p_k}$ for $k \in [K]$. 

\textbf{Setting 1:} For $K= 3,4,5$, we have the same sample sizes $n_k = 1000{\color{black},200}$ and the same number of groups of features {\color{black}$M=40$} for $k \in [K]$. Within each group, there are 3 continuous variables and 1 categorical variable with 3 levels. In total, we have $p_k=160$ variables. 

\textbf{Setting 2:} For $K=4$, we vary the sample size and the types within the group across different sites. We set $n_1=2000, n_2=1200, n_3=700, n_4=600$. The types within the group across different sites are different. Site 1 encompasses 4 continuous variables per group. Site 2 also has 3 continuous variables and 1 categorical variable with 4 levels. Site 3 offers only 2 categorical features with 3 levels per group. Site 4 has 2 continuous and 2 binary categorical variables per group. In total, we have $p_1 = 160, p_2=160, p_3=80, p_4=160$.

For categorical variables with $L$ levels, we create $L-1$ dummy variables. Let $\bar{\X}^k$ denote the expanded design matrix of $\X^k$ after replacing each categorical variable with dummy variables. 

We consider the following three different models for $Y^k$s:
\begin{enumerate}
\item \textbf{Continuous}: For $k \in [K]$, {\color{black} $Y^k$s are continuous and simulated using linear regression models in all datasets.}
\begin{eqnarray*}\label{eqn:linmod}
Y^k&=&{\color{black}\bar{\X}^k\beta^{k}}+\varepsilon^k,
\end{eqnarray*}
where $\varepsilon^k\sim \mathcal{N}(0,\sigma_k^2)$, and $\sigma_k$ is the signal noise ratio. 
\item \textbf{Binary}: For $k \in [K]$, {\color{black}$Y^k$s are binary and simulated using logistic regression models for all datasets.}
\begin{eqnarray*}
Y^k\sim \textnormal{Bernoulli} \left(\frac{\exp(\alpha_k+{\color{black}\bar{\X}^k\beta^{k}})}{1+\exp(\alpha_k+{\color{black}\bar{\X}^k\beta^{k}})}\right).
\end{eqnarray*}
\item \textbf{Mixed}: {\color{black}$Y^k$s are either continuous or binary, $Y^k$s are either simulated from linear regression models or probit regression models.} We generate the latent outcome $\overline{Y}^k$s for $k \in [K]$ from the linear models:
\begin{eqnarray*}
\overline{Y}^k&=&{\color{black}\bar{\X}^k\beta^{k}}+\varepsilon^k,
\end{eqnarray*}
where $\varepsilon^k\sim \mathcal{N}(0,\sigma_k^2)$, and $\sigma_k$ is the signal noise ratio. Then for continuous outcome $Y^{k}$, we set $Y^{k}=\overline{Y}^{k}$; for binary outcome $Y^{k}$,
we set a threshold for $\overline{Y}^k$: \[Y^k = \mathbbm{1}\{\overline{Y}^k \geq 0\}. \]
\end{enumerate}

{\color{black}We also consider two scenarios for the signal strengths: 

\textbf{Scenario 1}: both directions and strengths of the simultaneous signals are the same among the $K$ datasets.

\textbf{Scenario 2}: only the directions of the simultaneous signals are the same but the signal strengths are different among the $K$ datasets.

For the coefficients $\beta^1, \cdots,\beta^K$ among the $K$ experiments, we explore three \textbf{choices} including \textbf{choice 1}: only simultaneous signals exist; \textbf{choice 2}: simultaneous signals and non-simultaneous signals exist in one dataset; \textbf{choice 3}: simultaneous signals and non-simultaneous signals exist in multiple datasets. These choices are frequently observed in the N3C database. The design structure for coefficients is shown in Figure \ref{fig:design}. More details on the data generation are provided in Web Appendix E.}

{\color{black}We perform the \textit{GS knockoff} procedure as described in Section \ref{sec:algorithm}. 
{\color{black}For individual dataset, when {\color{black}all the features are groups of continuous variables, and} Y$|$X follows a linear model, we use the Fixed-X knockoff approach (site 1 in \textbf{Setting 2}).  Otherwise, we use the sequential group knockoff (Algorithm \ref{alg:seqknockoff}) for group knockoff construction. }

In Step 4, we use the knockoff+ filter to control the FDR at $0.2$. We compare the proposed method with two alternative strategies \citep{dai2021multiple} for combining information from multiple datasets and two approaches that use individual knockoff constructions rather than group knockoff constructions: 
\begin{itemize}
    \item \textit{Pooling:} The multiple datasets are first pooled together, and the tests of the conditional associations are performed using the group knockoff methods for a single dataset.
    \item \textit{Intersection:} First, the group knockoff methods for single datasets are used to select signals from individual datasets. Then the intersection set of the selected signals from the multiple datasets is constructed as the simultaneous signal set.
    \item \textit{Individual (Lasso):} First, we construct the knockoff using the individual knockoff method. Then we fit the model using Lasso. If one signal is selected within a group, then the whole group will be selected.
   \item \textit{Individual (Group Lasso):} First, we construct the knockoff using individual knockoff. Then we fit the model using group Lasso. 
\end{itemize}

We run {\color{black}$500$} simulations under each of the following data settings. We first vary the signal sparsity levels of the mutual signals among K datasets ($s_0$), the number of groups of signals specifically for the $k$-th dataset ($s_k, ~\text{for}~ k \in [K]$), the number of groups of mutual signals in two datasets ($s_{ij}$, $i$-th and $j$-th datasets), {\color{black} three datasets ($s_{ijo}$, $i$-th, $j$-th, and $o$-th datasets), four datasets ($s_{ijop}$, and $i$-th, $j$-th, $o$-th and $p$-th datasets).} We also vary the within-group feature correlations $\rho_k$, and the ratio between the between-group correlations and within-group correlations $\gamma_k$. 
To validate the distribution of generating knockoffs, rather than assessing each group of predictors individually, we apply the Chi-square test to examine the symmetry of the filter statistics W distribution. More details on the data generation, simulation settings, {\color{black}and validation of knockoffs} are provided in Web Appendix E.


 {\color{black} Figure \ref{fig:figure1} compares the performances of five methods (\textit{GS knockoffs}, \textit{Pooling}, \textit{Intersection}, \textit{Individual (Group Lasso)}, and \textit{Individual (Lasso))} on \textbf{Setting 1} for the \textbf{Mixed} models setting ($\Y^k$s are either continuous or binary)} {\color{black}when $n_k=1000$.} We first demonstrate the performance of the methods as the sparsity level changes (a) for the mutual signals when no non-mutual signals exist, (b) when unique signals for each data set exist, (c) when mutual signals for 2 datasets exist. In Figure \ref{fig:figure1} (d)-(f), we demonstrate the effect of the (group) correlation structure of $\X$. The \textit{GS knockoff} method controls FDR in all the settings and has good power. The \textit{Pooling} method fails to control FDR when non-mutual signals exist (Figure \ref{fig:figure1} b-f). The \textit{Intersection} method fails to control FDR when mutual signals two datasets exist  (Figure \ref{fig:figure1} c). The \textit{Individual (Lasso)} method fails to control the group FDR in most settings (Figure \ref{fig:figure1} a-e), which is as expected theoretically. The \textit{Individual (Group Lasso)} method controls the FDR in all settings, which is also as expected theoretically; however, as the within-group correlation increases, there is a substantial power loss for this method (Figure \ref{fig:figure1} d). Simulation experiments for the continuous and binary settings show similar results (See Figures S1-S3 in Web Appendix F for more simulation results). 


Figure \ref{fig:figure2} shows simulation results for the $K=4$ and $K=5$ cases {\color{black} on \textbf{Setting 1} for the \textbf{Mixed} models setting} {\color{black}when $n_k=1000$.} Overall the results are consistent with the $K=3$ cases. As $K$ increases we see a {\color{black}slight} power decrease with all the three methods. The \textit{GS knockoff} method effectively controls the FDR and demonstrates good power. Although the \textit{Pooling} method has the highest power, it has high FDP when non-mutual signals exist (Figure \ref{fig:figure2}). The \textit{Intersection} method has comparable power with the \textit{GS knockoff} method but has no FDR control guarantee{\color{black}, especially for those mutual signals that only appear in a few sites.} Regarding the \textit{Individual} knockoff methods, the results are consistent with K=3. The group filter (\textit{Individual (Group Lasso)}) can control the group FDR but the power is very low when the within-group correlation is very strong while the individual filter (\textit{Individual (Lasso)}) fails to control the group FDR.

{\color{black}
Figure \ref{fig:figure3} displays simulation results from different sites with varied sample sizes and types for {\color{black} $K=4$ ($\X^k$ is simulated from data setting 2, $\Y^k$s are either continuous or binary}), showing consistency with previous scenarios of uniform sample sizes and types. The GS knockoff method's effectiveness remains unaffected by these differences, highlighting its robustness and adaptability to varied data conditions. This feature is particularly beneficial in multi-site studies, ensuring consistent and reliable results across diverse research environments.
}

The simulation results are consistent with our theoretical expectations. In terms of FDR, the proposed \textit{GS knockoff} method controls FDR across all designed settings while the other methods fail. The \textit{Pooling} method can control FDR when only simultaneous signals exist. {\color{black}The \textit{Intersection} method fails to control FDR in some settings when non-mutual signals exist, especially for settings when mutual signals for most but not all datasets are dominant.} In terms of power, the \textit{GS knockoff} method has good power, which is comparable to the \textit{Pooling} method and is slightly higher than the \textit{Intersection} method when only simultaneous signals exist. For the \textit{Individual} methods, when using an individual filter (i.e., \textit{Individual (Lasso)}), the group FDR is not always controlled. For the group filter (i.e., \textit{Individual (Group Lasso)}), the group FDR can be controlled but the power is less than the proposed \textit{GS knockoff} method when within group correlation is strong. {\color{black}Therefore, when only simultaneous signals are present across all sites, the \textit{Pooling} method outperforms others, offering controlled FDR and the highest power. Conversely, when non-simultaneous signals are present in only a few sites (e.g., unique to each site), the \textit{Intersection} method is superior, demonstrating comparable power to \textit{GS knockoff}, but with a lower FDR. However, when mutual signals are present in most, but not all datasets (e.g., appearing in 2 out of 3 sites, or 3 out of 4 sites), the \textit{Intersection} method fails to control FDR, whereas the \textit{GS knockoff} method effectively controls the FDR and provide satisfactory power (See Figure \ref{fig:figure3} right panel and Figure S6 right panel). The two individual knockoff methods do not offer any advantages compared with the \textit{GS knockoff} method.}
The performance of the methods on Scenarios 1 (same signal strengths) and 2 (different signal strengths), and different data settings (continuous, binary, and mixed) are similar. {\color{black}Additionally, the disparities in sample sizes and types at various sites do not impinge upon the efficacy of the proposed \textit{GS knockoff method}. This robustness underscores the method's adaptability to diverse data conditions, maintaining its performance regardless of sample size and type variations. {\color{black}Moreover, despite the limited sample size ($n_k=200$), the \textit{GS knockoffs} method consistently maintains a high stable power, outperforming all other methods.} This attribute of the \textit{GS knockoff} method is particularly advantageous in multi-site studies where such variability is common, ensuring reliable and stable results across different research settings. More simulation results are shown in Web Appendix F.}

\section{The N3C data analysis} \label{sec:realdata}
In this section, we demonstrate the application of our proposed \textit{GS knockoff} method to the N3C data for the selection of risk factors of long COVID from a collection of patient baseline demographic, comorbidity, and medication information (pre-conditions before the infection of acute COVID). Our data is from the N3C Knowledge Store Shared Project. The N3C enclave consists of EHR data for over 8 million patients with confirmed COVID-19 infection. It also contains high-dimensional patient demographics, comorbidity, medication, and socioeconomic information. As of December 15, 2023, there are over 83 DCSs in the N3C data enclave. The population is heterogeneous across the DCSs, and the data qualities are different. The long COVID indicator is not well recorded in a majority of the DCSs. There are only six DCSs with more than 1,000 long COVID cases reported and two of them have substantial missingness in the demographic and comorbidity information. The cohort is constructed by a matched case-control sampling of patients with confirmed COVID infections from four DCSs {\color{black}($n_1 = 11,797$ in site A1, $n_2=5,922$ in site A2, $n_3=3,175$ in site B1 and $n_4 = 2,749$ in site B2)}. Information on whether the patient has developed long COVID after the acute COVID has been recorded in the data sites differently. In data sites A1 and A2, a binary long COVID U9.09 diagnosis is provided as the long COVID outcome, whereas in sites B1 and B2, a binary long COVID clinical visit index is recorded as the long COVID outcome. These two long COVID indicators are highly related but not the same. A list of patient baseline information has been extracted as (group level) candidate risk factors ({\color{black}$M = 37$}). For some of the candidate variables, the data from the two sites are recorded differently (for example, the ``obesity" variable and ``diabetes" variable, see details in Web Appendix G). Our goal for this analysis is to identify mutual risk factors for these two outcomes. Details on the cohort construction and candidate risk factors can be found in Web Appendix G. 

We use the \textit{GS knockoff} method with the sequential group knockoff method for the knockoff construction, and the group knockoff+ filter with the $\gfdr$ controlled at 0.2. We also compare the result with the selection using the group knockoff filter and the \textit{intersection} method with knockoff+ filter. 

The GS knockoff+ method identifies 6 risk factors: age at COVID, obesity, systemic corticosteroids, depression, chronic lung disease, and usage of corticosteroids during COVID hospitalization. Using the group knockoff filter, {\color{black}five} additional risk factors are selected, namely, {\color{black}malignant cancer, antibody of COVID, the usage of Remdisivir during COVID hospitalization, emergence room indicator due to the COVID, and COVID severity type}. Because the long COVID indicators are not the same across DCSs, the \textit{pooling} method is not suitable for the analysis, the \textit{intersection} method with the knockoff+ filter selects {\color{black} age at COVID, obesity, systemic corticosteroids, depression, chronic lung disease, usage of corticosteroids during COVID hospitalization, malignant cancer, the antibody of COVID, dementia, metastatic solid tumor cancer.
}

{\color{black}
We also conduct a sensitivity analysis by adding the below 10 variables with permutations within each site into the original data: race, rheumatologic disease, kidney disease, heart failure, hemiplegia or paraplegia, psychosis, peptic ulcer, hypertension, tobacco smoker, solid organ or blood stem cell transplant.
The GS knockoff+ method identifies 5 risk factors: age at COVID, obesity, systemic corticosteroids, depression, and chronic lung disease. The intersection method identifies 9 risk factors: age at COVID, obesity, systemic corticosteroids, depression, chronic lung disease, metastatic solid tumor cancers, antibody of COVID, the usage of Remdisivir during COVID hospitalization, and severity type. The GS knockoff identifies 2 additional risk factors: sex and usage of corticosteroids during COVID hospitalization. No methods select permutation variables. The results show the stability of our proposed method.
}

Many of the risk factors identified using the GS knockoff method are also reported to be associated with long COVID in other independent studies. For example, older age has been found to be associated with a higher risk of long COVID, possibly due to the higher likelihood of severe initial COVID-19 illness and a slower, more complex recovery process in older people \citep{sudre2021attributes}. Obesity can lead to chronic inflammation and impair immune response, which may make individuals more susceptible to long-term effects of COVID-19 \citep{vimercati2021association}. Patients with pre-existing lung conditions may experience more severe COVID-19 symptoms and longer recovery times, leading to a higher risk of long COVID \citep{beltramo2021chronic}. Corticosteroids are often used in severe cases of COVID-19 to manage the body's immune response. However, their usage can also suppress the immune system, potentially leading to a longer recovery period and a higher likelihood of long COVID \citep{goel2022systemic}. There is a bidirectional relationship between COVID-19 and psychiatric disorders, with COVID-19 increasing the risk of psychiatric sequelae and a diagnosis of a mental health disorder increasing the risk of COVID-19 \citep{taquet2021bidirectional}.

\section{Discussion}
In this paper, we present a novel \textit{GS knockoff} method, which allows us to control FDR in testing the union null hypotheses on conditional associations between group-level candidate features and outcomes. Like other knockoff-based methods, the \textit{GS knockoffs} can work with very general conditional model settings and covariate structures within the individual datasets, assuming the independence between the datasets. This method allows us to collectively use information from datasets with different dependencies of $Y|\X$, different outcomes $Y$, and heterogeneous $\X$ structures, allowing for different data types and different group sizes across multiple datasets. The FDR control guarantee is exact for finite sample settings under the Fix-X or Model-X settings. 

When approximation error exists in the estimation of the $\X$ distributions, inflation on the FDR is expected \citep{barber2020}, with the inflation rate proportional to the exponential of the Kullback-Leibler divergence between the true distribution and the approximated distribution. With all the data settings we experimented numerically, we have sufficient sample size $n$ to approximate the $\X$ distribution. Therefore, we see well-controlled FDR in all our simulated settings. For potential application to ultra-high-dimensional data, an extension of the robustness result to our proposed group-level \textit{GS knockoff} is desirable.  

This method has broad applications beyond the N3C long COVID real data example.{\color{black}In EHR data from multiple data centers, some covariates are recorded differently among the centers, some groups of variables are of different group sizes and data types, and some groups are composed of both continuous and categorical variables, and the population distributions are different across the data sources. Our extensive simulations and the N3C data example show the \textit{GS knockoff} has satisfactory power and FDR control performance under different scenarios.}  Although we illustrate our methods with observation studies, we want to highlight that it can be useful for clinical trial data. When trials are homogeneous, the \textit{pooling} strategy is powerful and shows success in controlling the FDR when selecting predictive biomarker \citep{sechidis2021biomarker} and treatment effect modifiers for clinical trials \citep{katsevich2023filtering}. However, when trials are heterogeneous and group-level risk factors or effect modifiers are of concern, our proposed method provides a powerful tool. In addition, this method requires very limited information (only the test statistics) to be shared among the data centers, which benefits data collaboration under privacy protections. The general framework is compatible with all the existing knockoff and group knockoff construction methods. We develop a list of group knockoff construction methods to work with both the Fixed-X and Model-X knockoff approaches. Our framework can be implemented to extend other knockoff approaches. For example, for non-Gaussian mixed data, we construct group model-X knockoffs with the sequential group knockoff construction. However, there are alternative ways to construct group model-X knockoffs. For example, the Latent Gaussian Copula Knockoffs \citep{vasquez2023controlling} can also be extended for group knockoff construction by using second-order group Model-X construction instead of the original second-order Model-X construction algorithm for the latent Gaussian variables. As long as the group knockoffs can be constructed for Step 1, they can be used in our general simultaneous group knockoff framework. 

There are limitations of the current \textit{GS knockoff} method. First, the power is expected to decrease as the number of datasets and non-mutual signals increase. In Sections \ref{sec:simulation} and \ref{sec:realdata}, we demonstrate satisfying performance when $K=3$, $4$ or $5$. As $K$ further increases, the power will decrease, because we are testing the union null hypotheses. When $K$ is much higher, instead of pursuing simultaneous signals across all the datasets, one may be more interested in signals that are non-nulls in a fraction of the datasets. The multi-environment knockoff method \citep{li2021} can be extended for such applications. Second, the current \textit{Simultaneous} knockoff methods can only work with datasets that are mutually independent, which is satisfied by the N3C data. Methods allowing for overlapping samples across the datasets will be very useful for identifying signals for multiple outcomes using the same dataset. 

\section*{Acknowledgements}
This research is partly supported by the National Institute of General Medical Sciences under grants U54 GM115458 and U54 GM104942. 

\paragraph{N3C Attribution} The analyses described in this publication were conducted with data or tools accessed through the NCATS N3C Data Enclave \url{https://covid.cd2h.org} and N3C Attribution \& Publication Policy v 1.2-2020-08-25b supported by NCATS U24 TR002306, Axle Informatics Subcontract: NCATS-P00438-B. This research was possible because of the patients whose information is included within the data and the organizations (\url{https://ncats.nih.gov/n3c/resources/data-contribution/data-transfer-agreement-signatories}) and scientists who have contributed to the on-going development of this community resource \url{https://doi.org/10.1093/jamia/ocaa196}.

\paragraph{Disclaimer} The N3C Publication committee confirmed that this manuscript (msid:952.78) is in accordance with N3C data use and attribution policies; however, this content is solely the responsibility of the authors and does not necessarily represent the official views of the National Institutes of Health or the N3C program.

\paragraph{IRB}
The N3C data transfer to NCATS is performed under a Johns Hopkins University Reliance Protocol \# IRB00249128 or individual site agreements with NIH. The N3C Data Enclave is managed under the authority of the NIH; information can be found at \url{https://ncats.nih.gov/n3c/resources}.

\paragraph{Individual Acknowledgements For Core Contributors}
We gratefully acknowledge the following core contributors to N3C:

Adam B. Wilcox, Adam M. Lee, Alexis Graves, Alfred (Jerrod) Anzalone, Amin Manna, Amit Saha, Amy Olex, Andrea Zhou, Andrew E. Williams, Andrew Southerland, Andrew T. Girvin, Anita Walden, Anjali A. Sharathkumar, Benjamin Amor, Benjamin Bates, Brian Hendricks, Brijesh Patel, Caleb Alexander, Carolyn Bramante, Cavin Ward-Caviness, Charisse Madlock-Brown, Christine Suver, Christopher Chute, Christopher Dillon, Chunlei Wu, Clare Schmitt, Cliff Takemoto, Dan Housman, Davera Gabriel, David A. Eichmann, Diego Mazzotti, Don Brown, Eilis Boudreau, Elaine Hill, Elizabeth Zampino, Emily Carlson Marti, Emily R. Pfaff, Evan French, Farrukh M Koraishy, Federico Mariona, Fred Prior, George Sokos, Greg Martin, Harold Lehmann, Heidi Spratt, Hemalkumar Mehta, Hongfang Liu, Hythem Sidky, J.W. Awori Hayanga, Jami Pincavitch, Jaylyn Clark, Jeremy Richard Harper, Jessica Islam, Jin Ge, Joel Gagnier, Joel H. Saltz, Joel Saltz, Johanna Loomba, John Buse, Jomol Mathew, Joni L. Rutter, Julie A. McMurry, Justin Guinney, Justin Starren, Karen Crowley, Katie Rebecca Bradwell, Kellie M. Walters, Ken Wilkins, Kenneth R. Gersing, Kenrick Dwain Cato, Kimberly Murray, Kristin Kostka, Lavance Northington, Lee Allan Pyles, Leonie Misquitta, Lesley Cottrell, Lili Portilla, Mariam Deacy, Mark M. Bissell, Marshall Clark, Mary Emmett, Mary Morrison Saltz, Matvey B. Palchuk, Melissa A. Haendel, Meredith Adams, Meredith Temple-O'Connor, Michael G. Kurilla, Michele Morris, Nabeel Qureshi, Nasia Safdar, Nicole Garbarini, Noha Sharafeldin, Ofer Sadan, Patricia A. Francis, Penny Wung Burgoon, Peter Robinson, Philip R.O. Payne, Rafael Fuentes, Randeep Jawa, Rebecca Erwin-Cohen, Rena Patel, Richard A. Moffitt, Richard L. Zhu, Rishi Kamaleswaran, Robert Hurley, Robert T. Miller, Saiju Pyarajan, Sam G. Michael, Samuel Bozzette, Sandeep Mallipattu, Satyanarayana Vedula, Scott Chapman, Shawn T. O'Neil, Soko Setoguchi, Stephanie S. Hong, Steve Johnson, Tellen D. Bennett, Tiffany Callahan, Umit Topaloglu, Usman Sheikh, Valery Gordon, Vignesh Subbian, Warren A. Kibbe, Wenndy Hernandez, Will Beasley, Will Cooper, William Hillegass, Xiaohan Tanner Zhang. Details of contributions are available at \url{covid.cd2h.org/core-contributors}.

\paragraph{Data Partners with Released Data}
The following institutions whose data is released or pending: Available: Advocate Health Care Network — UL1TR002389: The Institute for Translational Medicine (ITM) • Boston University Medical Campus — UL1TR001430: Boston University Clinical and Translational Science Institute • Brown University — U54GM115677: Advance Clinical Translational Research (Advance-CTR) • Carilion Clinic — UL1TR003015: iTHRIV Integrated Translational health Research Institute of Virginia • Charleston Area Medical Center — U54GM104942: West Virginia Clinical and Translational Science Institute (WVCTSI) • Children’s Hospital Colorado — UL1TR002535: Colorado Clinical and Translational Sciences Institute • Columbia University Irving Medical Center — UL1TR001873: Irving Institute for Clinical and Translational Research • Duke University — UL1TR002553: Duke Clinical and Translational Science Institute • George Washington Children’s Research Institute — UL1TR001876: Clinical and Translational Science Institute at Children’s National (CTSA-CN) • George Washington University — UL1TR001876: Clinical and Translational Science Institute at Children’s National (CTSA-CN) • Indiana University School of Medicine — UL1TR002529: Indiana Clinical and Translational Science Institute • Johns Hopkins University — UL1TR003098: Johns Hopkins Institute for Clinical and Translational Research • Loyola Medicine — Loyola University Medical Center • Loyola University Medical Center — UL1TR002389: The Institute for Translational Medicine (ITM) • Maine Medical Center — U54GM115516: Northern New England Clinical \& Translational Research (NNE-CTR) Network • Massachusetts General Brigham — UL1TR002541: Harvard Catalyst • Mayo Clinic Rochester — UL1TR002377: Mayo Clinic Center for Clinical and Translational Science (CCaTS) • Medical University of South Carolina — UL1TR001450: South Carolina Clinical \& Translational Research Institute (SCTR) • Montefiore Medical Center — UL1TR002556: Institute for Clinical and Translational Research at Einstein and Montefiore • Nemours — U54GM104941: Delaware CTR ACCEL Program • NorthShore University HealthSystem — UL1TR002389: The Institute for Translational Medicine (ITM) • Northwestern University at Chicago — UL1TR001422: Northwestern University Clinical and Translational Science Institute (NUCATS) • OCHIN — INV-018455: Bill and Melinda Gates Foundation grant to Sage Bionetworks • Oregon Health \& Science University — UL1TR002369: Oregon Clinical and Translational Research Institute • Penn State Health Milton S. Hershey Medical Center — UL1TR002014: Penn State Clinical and Translational Science Institute • Rush University Medical Center — UL1TR002389: The Institute for Translational Medicine (ITM) • Rutgers, The State University of New Jersey — UL1TR003017: New Jersey Alliance for Clinical and Translational Science • Stony Brook University — U24TR002306 • The Ohio State University — UL1TR002733: Center for Clinical and Translational Science • The State University of New York at Buffalo — UL1TR001412: Clinical and Translational Science Institute • The University of Chicago — UL1TR002389: The Institute for Translational Medicine (ITM) • The University of Iowa — UL1TR002537: Institute for Clinical and Translational Science • The University of Miami Leonard M. Miller School of Medicine — UL1TR002736: University of Miami Clinical and Translational Science Institute • The University of Michigan at Ann Arbor — UL1TR002240: Michigan Institute for Clinical and Health Research • The University of Texas Health Science Center at Houston — UL1TR003167: Center for Clinical and Translational Sciences (CCTS) • The University of Texas Medical Branch at Galveston — UL1TR001439: The Institute for Translational Sciences • The University of Utah — UL1TR002538: Uhealth Center for Clinical and Translational Science • Tufts Medical Center — UL1TR002544: Tufts Clinical and Translational Science Institute • Tulane University — UL1TR003096: Center for Clinical and Translational Science • University Medical Center New Orleans — U54GM104940: Louisiana Clinical and Translational Science (LA CaTS) Center • University of Alabama at Birmingham — UL1TR003096: Center for Clinical and Translational Science • University of Arkansas for Medical Sciences — UL1TR003107: UAMS Translational Research Institute • University of Cincinnati — UL1TR001425: Center for Clinical and Translational Science and Training • University of Colorado Denver, Anschutz Medical Campus — UL1TR002535: Colorado Clinical and Translational Sciences Institute • University of Illinois at Chicago — UL1TR002003: UIC Center for Clinical and Translational Science • University of Kansas Medical Center — UL1TR002366: Frontiers: University of Kansas Clinical and Translational Science Institute • University of Kentucky — UL1TR001998: UK Center for Clinical and Translational Science • University of Massachusetts Medical School Worcester — UL1TR001453: The UMass Center for Clinical and Translational Science (UMCCTS) • University of Minnesota — UL1TR002494: Clinical and Translational Science Institute • University of Mississippi Medical Center — U54GM115428: Mississippi Center for Clinical and Translational Research (CCTR) • University of Nebraska Medical Center — U54GM115458: Great Plains IDeA-Clinical \& Translational Research • University of North Carolina at Chapel Hill — UL1TR002489: North Carolina Translational and Clinical Science Institute • University of Oklahoma Health Sciences Center — U54GM104938: Oklahoma Clinical and Translational Science Institute (OCTSI) • University of Rochester — UL1TR002001: UR Clinical \& Translational Science Institute • University of Southern California — UL1TR001855: The Southern California Clinical and Translational Science Institute (SC CTSI) • University of Vermont — U54GM115516: Northern New England Clinical \& Translational Research (NNE-CTR) Network • University of Virginia — UL1TR003015: iTHRIV Integrated Translational health Research Institute of Virginia • University of Washington — UL1TR002319: Institute of Translational Health Sciences • University of Wisconsin-Madison — UL1TR002373: UW Institute for Clinical and Translational Research • Vanderbilt University Medical Center — UL1TR002243: Vanderbilt Institute for Clinical and Translational Research • Virginia Commonwealth University — UL1TR002649: C. Kenneth and Dianne Wright Center for Clinical and Translational Research • Wake Forest University Health Sciences — UL1TR001420: Wake Forest Clinical and Translational Science Institute • Washington University in St. Louis — UL1TR002345: Institute of Clinical and Translational Sciences • Weill Medical College of Cornell University — UL1TR002384: Weill Cornell Medicine Clinical and Translational Science Center • West Virginia University — U54GM104942: West Virginia Clinical and Translational Science Institute (WVCTSI) Submitted: Icahn School of Medicine at Mount Sinai — UL1TR001433: ConduITS Institute for Translational Sciences • The University of Texas Health Science Center at Tyler — UL1TR003167: Center for Clinical and Translational Sciences (CCTS) • University of California, Davis — UL1TR001860: UCDavis Health Clinical and Translational Science Center • University of California, Irvine — UL1TR001414: The UC Irvine Institute for Clinical and Translational Science (ICTS) • University of California, Los Angeles — UL1TR001881: UCLA Clinical Translational Science Institute • University of California, San Diego — UL1TR001442: Altman Clinical and Translational Research Institute • University of California, San Francisco — UL1TR001872: UCSF Clinical and Translational Science Institute Pending: Arkansas Children’s Hospital — UL1TR003107: UAMS Translational Research Institute • Baylor College of Medicine — None (Voluntary) • Children’s Hospital of Philadelphia — UL1TR001878: Institute for Translational Medicine and Therapeutics • Cincinnati Children’s Hospital Medical Center — UL1TR001425: Center for Clinical and Translational Science and Training • Emory University — UL1TR002378: Georgia Clinical and Translational Science Alliance • HonorHealth — None (Voluntary) • Loyola University Chicago — UL1TR002389: The Institute for Translational Medicine (ITM) • Medical College of Wisconsin — UL1TR001436: Clinical and Translational Science Institute of Southeast Wisconsin • MedStar Health Research Institute — UL1TR001409: The Georgetown-Howard Universities Center for Clinical and Translational Science (GHUCCTS) • MetroHealth — None (Voluntary) • Montana State University — U54GM115371: American Indian/Alaska Native CTR • NYU Langone Medical Center — UL1TR001445: Langone Health’s Clinical and Translational Science Institute • Ochsner Medical Center — U54GM104940: Louisiana Clinical and Translational Science (LA CaTS) Center • Regenstrief Institute — UL1TR002529: Indiana Clinical and Translational Science Institute • Sanford Research — None (Voluntary) • Stanford University — UL1TR003142: Spectrum: The Stanford Center for Clinical and Translational Research and Education • The Rockefeller University — UL1TR001866: Center for Clinical and Translational Science • The Scripps Research Institute — UL1TR002550: Scripps Research Translational Institute • University of Florida — UL1TR001427: UF Clinical and Translational Science Institute • University of New Mexico Health Sciences Center — UL1TR001449: University of New Mexico Clinical and Translational Science Center • University of Texas Health Science Center at San Antonio — UL1TR002645: Institute for Integration of Medicine and Science • Yale New Haven Hospital — UL1TR001863: Yale Center for Clinical Investigation

\section*{Data Availability Statement}
The data that support the findings of this study are not publicly available. Any data request needs to be submitted to the N3C (\url{https://ncats.nih.gov/n3c/about/applying-for-access}).

\section*{Supporting Information}
The Web Appendix referenced in sections 2-5 is available with this paper. Data supporting the findings of this paper can be requested as described in the data availability statement. The R codes for the simulation of this paper are available at \url{https://github.com/RunqiuWang22/Generalized_Simultaneous_knockoff}.


\bibliography{example_paper}

\begin{thebibliography}{35}
\providecommand{\natexlab}[1]{#1}
\providecommand{\url}[1]{\texttt{#1}}
\expandafter\ifx\csname urlstyle\endcsname\relax
  \providecommand{\doi}[1]{doi: #1}\else
  \providecommand{\doi}{doi: \begingroup \urlstyle{rm}\Url}\fi

\bibitem[Barber \& Candès(2015)Barber and Candès]{barber2015}
Barber, R.~F. and Candès, E.~J.
\newblock Controlling the false discovery rate via knockoffs.
\newblock \emph{Ann. Statist.}, 43\penalty0 (5):\penalty0 2055--2085, 2015.
\newblock \doi{10.1214/15-AOS1337}.

\bibitem[Barber \& Candès(2019)Barber and Candès]{barber2019}
Barber, R.~F. and Candès, E.~J.
\newblock A knockoff filter for high-dimensional selective inference.
\newblock \emph{Ann. Statist.}, 47\penalty0 (5):\penalty0 2504--2537, 2019.
\newblock \doi{10.1214/18-AOS1755}.

\bibitem[Barber et~al.(2020)Barber, Candès, and Samworth]{barber2020}
Barber, R.~F., Candès, E.~J., and Samworth, R.~J.
\newblock Robust inference with knockoffs.
\newblock \emph{Ann. Statist.}, 48\penalty0 (3):\penalty0 1409--1431, 2020.
\newblock \doi{10.1214/19-AOS1852}.

\bibitem[Bates et~al.(2021)Bates, Candès, Janson, and Wang]{bates2020}
Bates, S., Candès, E., Janson, L., and Wang, W.
\newblock Metropolized knockoff sampling.
\newblock \emph{Journal of the American Statistical Association}, 116\penalty0
  (535):\penalty0 1413--1427, 2021.
\newblock \doi{10.1080/01621459.2020.1729163}.

\bibitem[Beltramo et~al.(2021)Beltramo, Cottenet, Mariet, Georges, Piroth,
  Tubert-Bitter, Bonniaud, and Quantin]{beltramo2021chronic}
Beltramo, G., Cottenet, J., Mariet, A.-S., Georges, M., Piroth, L.,
  Tubert-Bitter, P., Bonniaud, P., and Quantin, C.
\newblock Chronic respiratory diseases are predictors of severe outcome in
  covid-19 hospitalised patients: a nationwide study.
\newblock \emph{European Respiratory Journal}, 58\penalty0 (6), 2021.
\newblock \doi{10.1183/13993003.04474-2020}.

\bibitem[Benjamini \& Yekutieli(2001)Benjamini and Yekutieli]{benjamini2001}
Benjamini, Y. and Yekutieli, D.
\newblock {The control of the false discovery rate in multiple testing under
  dependency}.
\newblock \emph{The Annals of Statistics}, 29\penalty0 (4):\penalty0 1165 --
  1188, 2001.
\newblock \doi{10.1214/aos/1013699998}.
\newblock URL \url{https://doi.org/10.1214/aos/1013699998}.

\bibitem[Bogomolov \& Heller(2013)Bogomolov and Heller]{bogomolov2013}
Bogomolov, M. and Heller, R.
\newblock Discovering findings that replicate from a primary study of high
  dimension to a follow-up study.
\newblock \emph{Journal of the American Statistical Association}, 108\penalty0
  (504):\penalty0 1480--1492, 2013.
\newblock \doi{10.1080/01621459.2013.829002}.

\bibitem[Bogomolov \& Heller(2018)Bogomolov and Heller]{bogomolov2018}
Bogomolov, M. and Heller, R.
\newblock {Assessing replicability of findings across two studies of multiple
  features}.
\newblock \emph{Biometrika}, 105\penalty0 (3):\penalty0 505--516, 2018.
\newblock ISSN 0006-3444.
\newblock \doi{10.1093/biomet/asy029}.

\bibitem[Candès et~al.(2018)Candès, Fan, Janson, and Lv]{candes2018}
Candès, E., Fan, Y., Janson, L., and Lv, J.
\newblock Panning for gold: ‘model-x’ knockoffs for high dimensional
  controlled variable selection.
\newblock \emph{Journal of the Royal Statistical Society: Series B (Statistical
  Methodology)}, 80\penalty0 (3):\penalty0 551--577, 2018.
\newblock \doi{https://doi.org/10.1111/rssb.12265}.

\bibitem[Chen et~al.(2019)Chen, Hou, and Hou]{chen2020}
Chen, J., Hou, A., and Hou, T.~Y.
\newblock {A prototype knockoff filter for group selection with FDR control}.
\newblock \emph{Information and Inference: A Journal of the IMA}, 9\penalty0
  (2):\penalty0 271--288, 2019.
\newblock ISSN 2049-8772.
\newblock \doi{10.1093/imaiai/iaz012}.

\bibitem[Chi(2008)]{chi2008}
Chi, Z.
\newblock False discovery rate control with multivariate p -values.
\newblock \emph{Electron. J. Statist.}, 2:\penalty0 368--411, 2008.
\newblock \doi{10.1214/07-EJS147}.

\bibitem[Dai \& Barber(2016)Dai and Barber]{dai2016knockoff}
Dai, R. and Barber, R.
\newblock The knockoff filter for fdr control in group-sparse and multitask
  regression.
\newblock In \emph{International conference on machine learning}, pp.\
  1851--1859. PMLR, 2016.

\bibitem[Dai \& Zheng(2023)Dai and Zheng]{dai2021multiple}
Dai, R. and Zheng, C.
\newblock False discovery rate-controlled multiple testing for union null
  hypotheses: a knockoff-based approach.
\newblock \emph{Biometrics}, 79:\penalty0 3497--3509, 2023.
\newblock \doi{https://doi.org/10.1111/biom.13848}.

\bibitem[Dai et~al.(2023)Dai, Lyu, and Li]{dai2022kernel}
Dai, X., Lyu, X., and Li, L.
\newblock Kernel knockoffs selection for nonparametric additive models.
\newblock \emph{Journal of the American Statistical Association}, 118\penalty0
  (543):\penalty0 2158--2170, 2023.
\newblock \doi{10.1080/01621459.2022.2039671}.
\newblock URL \url{https://doi.org/10.1080/01621459.2022.2039671}.

\bibitem[Goel et~al.(2022)Goel, Goyal, Nagaraja, and Kumar]{goel2022systemic}
Goel, N., Goyal, N., Nagaraja, R., and Kumar, R.
\newblock Systemic corticosteroids for management of ‘long-covid’: an
  evaluation after 3 months of treatment.
\newblock \emph{Monaldi Archives for Chest Disease}, 92\penalty0 (2), 2022.
\newblock \doi{10.4081/monaldi.2021.1981}.

\bibitem[Haendel et~al.(2020)Haendel, Chute, Bennett, Eichmann, Guinney, Kibbe,
  Payne, Pfaff, Robinson, Saltz, Spratt, Suver, Wilbanks, Wilcox, Williams, Wu,
  Blacketer, Bradford, Cimino, Clark, Colmenares, Francis, Gabriel, Graves,
  Hemadri, Hong, Hripscak, Jiao, Klann, Kostka, Lee, Lehmann, Lingrey, Miller,
  Morris, Murphy, Natarajan, Palchuk, Sheikh, Solbrig, Visweswaran, Walden,
  Walters, Weber, Zhang, Zhu, Amor, Girvin, Manna, Qureshi, Kurilla, Michael,
  Portilla, Rutter, Austin, Gersing, and the N3C~Consortium]{Haendel-N3C}
Haendel, M.~A., Chute, C.~G., Bennett, T.~D., Eichmann, D.~A., Guinney, J.,
  Kibbe, W.~A., Payne, P. R.~O., Pfaff, E.~R., Robinson, P.~N., Saltz, J.~H.,
  Spratt, H., Suver, C., Wilbanks, J., Wilcox, A.~B., Williams, A.~E., Wu, C.,
  Blacketer, C., Bradford, R.~L., Cimino, J.~J., Clark, M., Colmenares, E.~W.,
  Francis, P.~A., Gabriel, D., Graves, A., Hemadri, R., Hong, S.~S., Hripscak,
  G., Jiao, D., Klann, J.~G., Kostka, K., Lee, A.~M., Lehmann, H.~P., Lingrey,
  L., Miller, R.~T., Morris, M., Murphy, S.~N., Natarajan, K., Palchuk, M.~B.,
  Sheikh, U., Solbrig, H., Visweswaran, S., Walden, A., Walters, K.~M., Weber,
  G.~M., Zhang, X.~T., Zhu, R.~L., Amor, B., Girvin, A.~T., Manna, A., Qureshi,
  N., Kurilla, M.~G., Michael, S.~G., Portilla, L.~M., Rutter, J.~L., Austin,
  C.~P., Gersing, K.~R., and the N3C~Consortium.
\newblock {The National COVID Cohort Collaborative (N3C): Rationale, design,
  infrastructure, and deployment}.
\newblock \emph{Journal of the American Medical Informatics Association},
  28\penalty0 (3):\penalty0 427--443, 08 2020.
\newblock ISSN 1527-974X.
\newblock \doi{10.1093/jamia/ocaa196}.
\newblock URL \url{https://doi.org/10.1093/jamia/ocaa196}.

\bibitem[Heller \& Yekutieli(2014)Heller and Yekutieli]{heller2014b}
Heller, R. and Yekutieli, D.
\newblock Replicability analysis for genome-wide association studies.
\newblock \emph{Ann. Appl. Stat.}, 8\penalty0 (1):\penalty0 481--498, 2014.
\newblock \doi{10.1214/13-AOAS697}.

\bibitem[Heller et~al.(2014)Heller, Bogomolov, and Benjamini]{heller2014}
Heller, R., Bogomolov, M., and Benjamini, Y.
\newblock Deciding whether follow-up studies have replicated findings in a
  preliminary large-scale omics study.
\newblock \emph{Proceedings of the National Academy of Sciences}, 111\penalty0
  (46):\penalty0 16262--16267, 2014.
\newblock ISSN 0027-8424.
\newblock \doi{10.1073/pnas.1314814111}.

\bibitem[Huang \& Janson(2020)Huang and Janson]{huang2020}
Huang, D. and Janson, L.
\newblock Relaxing the assumptions of knockoffs by conditioning.
\newblock \emph{Ann. Statist.}, 48\penalty0 (5):\penalty0 3021--3042, 2020.
\newblock \doi{10.1214/19-AOS1920}.

\bibitem[Katsevich et~al.(2023)Katsevich, Sabatti, and
  Bogomolov]{katsevich2023filtering}
Katsevich, E., Sabatti, C., and Bogomolov, M.
\newblock Filtering the rejection set while preserving false discovery rate
  control.
\newblock \emph{Journal of the American Statistical Association}, 118\penalty0
  (541):\penalty0 165--176, 2023.

\bibitem[Kormaksson et~al.(2021)Kormaksson, Kelly, Zhu, Haemmerle, Pricop, and
  Ohlssen]{kormaksson2021}
Kormaksson, M., Kelly, L.~J., Zhu, X., Haemmerle, S., Pricop, L., and Ohlssen,
  D.
\newblock Sequential knockoffs for continuous and categorical predictors: With
  application to a large psoriatic arthritis clinical trial pool.
\newblock \emph{Statistics in Medicine}, 40\penalty0 (14):\penalty0 3313--3328,
  2021.
\newblock \doi{https://doi.org/10.1002/sim.8955}.

\bibitem[Li et~al.(2021)Li, Sesia, Romano, Candès, and Sabatti]{li2021}
Li, S., Sesia, M., Romano, Y., Candès, E., and Sabatti, C.
\newblock {Searching for robust associations with a multi-environment knockoff
  filter}.
\newblock \emph{Biometrika}, 109\penalty0 (3):\penalty0 611--629, 11 2021.
\newblock ISSN 1464-3510.
\newblock \doi{10.1093/biomet/asab055}.

\bibitem[Liu \& Zheng(2019)Liu and Zheng]{liu2019}
Liu, Y. and Zheng, C.
\newblock Deep latent variable models for generating knockoffs.
\newblock \emph{Stat}, 8\penalty0 (1):\penalty0 e260, 2019.
\newblock \doi{https://doi.org/10.1002/sta4.260}.

\bibitem[Montani et~al.(2022)Montani, Savale, Noel, Meyrignac, Colle, Gasnier,
  Corruble, Beurnier, Jutant, Pham, Lecoq, Papon, Figueiredo, Harrois, Humbert,
  and Monnet]{montani2022}
Montani, D., Savale, L., Noel, N., Meyrignac, O., Colle, R., Gasnier, M.,
  Corruble, E., Beurnier, A., Jutant, E.-M., Pham, T., Lecoq, A.-L., Papon,
  J.-F., Figueiredo, S., Harrois, A., Humbert, M., and Monnet, X.
\newblock Post-acute covid-19 syndrome.
\newblock \emph{European Respiratory Review}, 31\penalty0 (163), 2022.
\newblock ISSN 0905-9180.
\newblock \doi{10.1183/16000617.0185-2021}.
\newblock URL \url{https://err.ersjournals.com/content/31/163/210185}.

\bibitem[Pfaff et~al.(2023)Pfaff, Madlock-Brown, Baratta, Bhatia, Davis,
  Girvin, Hill, Kelly, Kostka, Loomba, McMurry, Wong, Bennett, Moffitt, Chute,
  Haendel, {The N3C Consortium}, and {The RECOVER Consortium}]{Pfaff2022}
Pfaff, E.~R., Madlock-Brown, C., Baratta, J.~M., Bhatia, A., Davis, H., Girvin,
  A., Hill, E., Kelly, L., Kostka, K., Loomba, J., McMurry, J.~A., Wong, R.,
  Bennett, T.~D., Moffitt, R., Chute, C.~G., Haendel, M., {The N3C Consortium},
  and {The RECOVER Consortium}.
\newblock Coding long covid: characterizing a new disease through an icd-10
  lens.
\newblock \emph{BMC Med}, 21:\penalty0 58, 2023.
\newblock \doi{10.1186/s12916-023-02737-6}.

\bibitem[Romano et~al.(2020)Romano, Sesia, and Cand\`{e}s]{romano2019}
Romano, Y., Sesia, M., and Cand\`{e}s, E.
\newblock Deep knockoffs.
\newblock \emph{Journal of the American Statistical Association}, 115\penalty0
  (532):\penalty0 1861--1872, 2020.
\newblock \doi{10.1080/01621459.2019.1660174}.

\bibitem[Sechidis et~al.(2021)Sechidis, Kormaksson, and
  Ohlssen]{sechidis2021biomarker}
Sechidis, K., Kormaksson, M., and Ohlssen, D.
\newblock Using knockoffs for controlled predictive biomarker identification.
\newblock \emph{Statistics in Medicine}, 40\penalty0 (25):\penalty0 5453--5473,
  2021.
\newblock \doi{https://doi.org/10.1002/sim.9134}.

\bibitem[Sesia et~al.(2018)Sesia, Sabatti, and Candès]{sesia2019}
Sesia, M., Sabatti, C., and Candès, E.~J.
\newblock {Gene hunting with hidden Markov model knockoffs}.
\newblock \emph{Biometrika}, 106\penalty0 (1):\penalty0 1--18, 2018.
\newblock ISSN 0006-3444.
\newblock \doi{10.1093/biomet/asy033}.

\bibitem[Spector \& Janson(2022)Spector and Janson]{spector2020}
Spector, A. and Janson, L.
\newblock {Powerful knockoffs via minimizing reconstructability}.
\newblock \emph{The Annals of Statistics}, 50\penalty0 (1):\penalty0 252 --
  276, 2022.
\newblock \doi{10.1214/21-AOS2104}.
\newblock URL \url{https://doi.org/10.1214/21-AOS2104}.

\bibitem[Srinivasan et~al.(2021)Srinivasan, Xue, and Zhan]{srinivasan2022}
Srinivasan, A., Xue, L., and Zhan, X.
\newblock Compositional knockoff filter for high-dimensional regression
  analysis of microbiome data.
\newblock \emph{Biometrics}, 77\penalty0 (3):\penalty0 984--995, 2021.
\newblock \doi{https://doi.org/10.1111/biom.13336}.
\newblock URL \url{https://onlinelibrary.wiley.com/doi/abs/10.1111/biom.13336}.

\bibitem[Sudre et~al.(2021)Sudre, Murray, Varsavsky, Graham, Penfold, Bowyer,
  Pujol, Klaser, Antonelli, Canas, et~al.]{sudre2021attributes}
Sudre, C.~H., Murray, B., Varsavsky, T., Graham, M.~S., Penfold, R.~S., Bowyer,
  R.~C., Pujol, J.~C., Klaser, K., Antonelli, M., Canas, L.~S., et~al.
\newblock Attributes and predictors of long covid.
\newblock \emph{Nature medicine}, 27\penalty0 (4):\penalty0 626--631, 2021.
\newblock \doi{10.1038/s41591-021-01292-y}.

\bibitem[Taquet et~al.(2021)Taquet, Luciano, Geddes, and
  Harrison]{taquet2021bidirectional}
Taquet, M., Luciano, S., Geddes, J.~R., and Harrison, P.~J.
\newblock Bidirectional associations between covid-19 and psychiatric disorder:
  retrospective cohort studies of 62354 covid-19 cases in the usa.
\newblock \emph{The Lancet Psychiatry}, 8\penalty0 (2):\penalty0 130--140,
  2021.
\newblock \doi{10.1016/S2215-0366(20)30462-4}.

\bibitem[V{\'a}squez et~al.(2023)V{\'a}squez, M{\'a}rquez~Urbina,
  Gonz{\'a}lez~Far{\'\i}as, and Escarela]{vasquez2023controlling}
V{\'a}squez, A.~R., M{\'a}rquez~Urbina, J.~U., Gonz{\'a}lez~Far{\'\i}as, G.,
  and Escarela, G.
\newblock Controlling the false discovery rate by a latent gaussian copula
  knockoff procedure.
\newblock \emph{Computational Statistics}, pp.\  1--24, 2023.

\bibitem[Vimercati et~al.(2021)Vimercati, De~Maria, Quarato, Caputi, Gesualdo,
  Migliore, Cavone, Sponselli, Pipoli, Inchingolo,
  et~al.]{vimercati2021association}
Vimercati, L., De~Maria, L., Quarato, M., Caputi, A., Gesualdo, L., Migliore,
  G., Cavone, D., Sponselli, S., Pipoli, A., Inchingolo, F., et~al.
\newblock Association between long covid and overweight/obesity.
\newblock \emph{Journal of Clinical Medicine}, 10\penalty0 (18):\penalty0 4143,
  2021.
\newblock \doi{10.3390/jcm10184143}.

\bibitem[Zhao \& Nguyen(2020)Zhao and Nguyen]{zhao2020}
Zhao, S.~D. and Nguyen, Y.~T.
\newblock Nonparametric false discovery rate control for identifying
  simultaneous signals.
\newblock \emph{Electron. J. Statist.}, 14\penalty0 (1):\penalty0 110--142,
  2020.
\newblock \doi{10.1214/19-EJS1663}.

\end{thebibliography}


\begin{thebibliography}{4}
\providecommand{\natexlab}[1]{#1}
\providecommand{\url}[1]{\texttt{#1}}
\expandafter\ifx\csname urlstyle\endcsname\relax
  \providecommand{\doi}[1]{doi: #1}\else
  \providecommand{\doi}{doi: \begingroup \urlstyle{rm}\Url}\fi

\bibitem[Barber \& Candès(2015)Barber and Candès]{barber2015}
Barber, R.~F. and Candès, E.~J.
\newblock Controlling the false discovery rate via knockoffs.
\newblock \emph{Ann. Statist.}, 43\penalty0 (5):\penalty0 2055--2085, 2015.
\newblock \doi{10.1214/15-AOS1337}.

\bibitem[Candès et~al.(2018)Candès, Fan, Janson, and Lv]{candes2018}
Candès, E., Fan, Y., Janson, L., and Lv, J.
\newblock Panning for gold: ‘model-x’ knockoffs for high dimensional
  controlled variable selection.
\newblock \emph{Journal of the Royal Statistical Society: Series B (Statistical
  Methodology)}, 80\penalty0 (3):\penalty0 551--577, 2018.
\newblock \doi{https://doi.org/10.1111/rssb.12265}.

\bibitem[Dai \& Barber(2016)Dai and Barber]{dai2016knockoff}
Dai, R. and Barber, R.
\newblock The knockoff filter for fdr control in group-sparse and multitask
  regression.
\newblock In \emph{International conference on machine learning}, pp.\
  1851--1859. PMLR, 2016.

\bibitem[Dai \& Zheng(2023)Dai and Zheng]{dai2021multiple}
Dai, R. and Zheng, C.
\newblock False discovery rate-controlled multiple testing for union null
  hypotheses: a knockoff-based approach.
\newblock \emph{Biometrics}, 79:\penalty0 3497--3509, 2023.
\newblock \doi{https://doi.org/10.1111/biom.13848}.

\end{thebibliography}
\bibliographystyle{icml2022}


\pagebreak
\begin{figure}[!ht]
    \centering
    \includegraphics[scale=0.35]{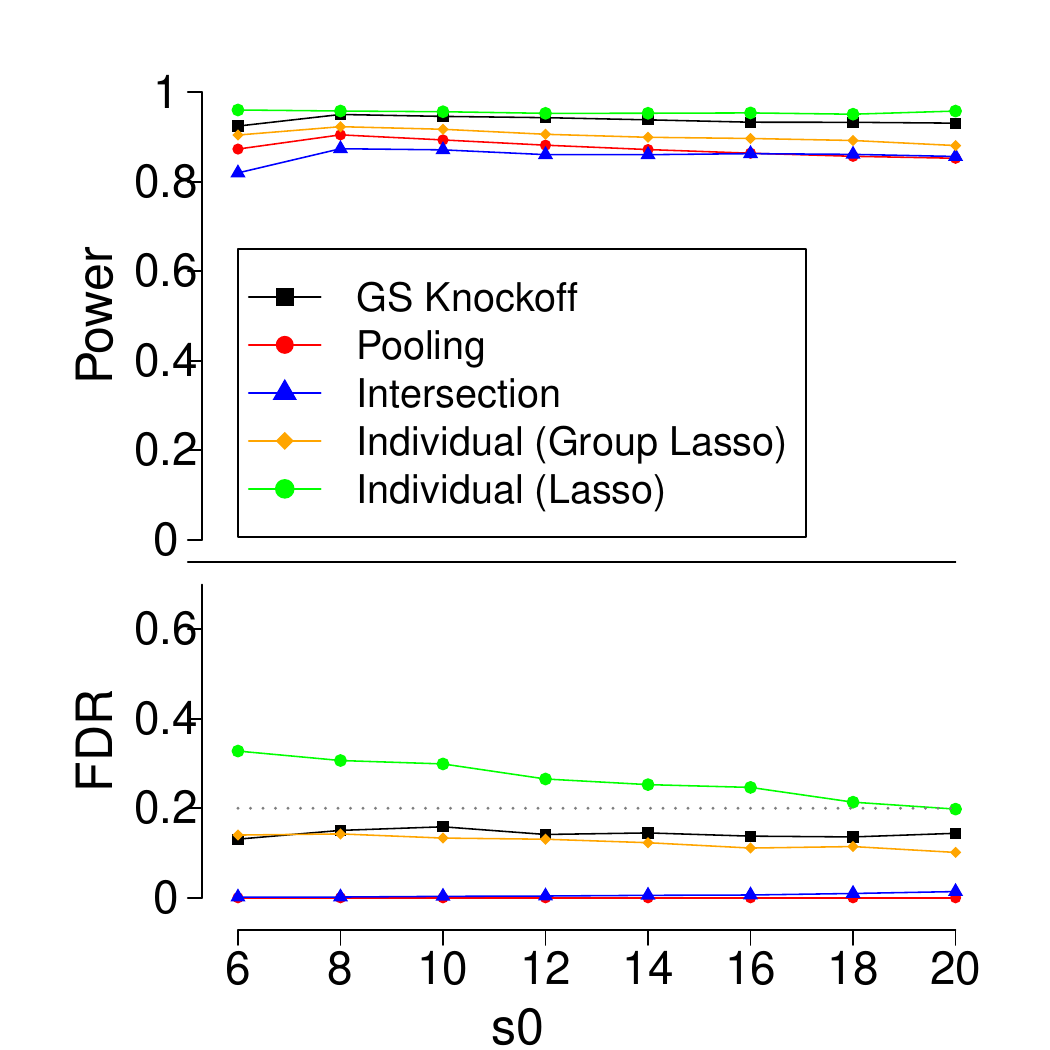}
    \includegraphics[scale=0.35]{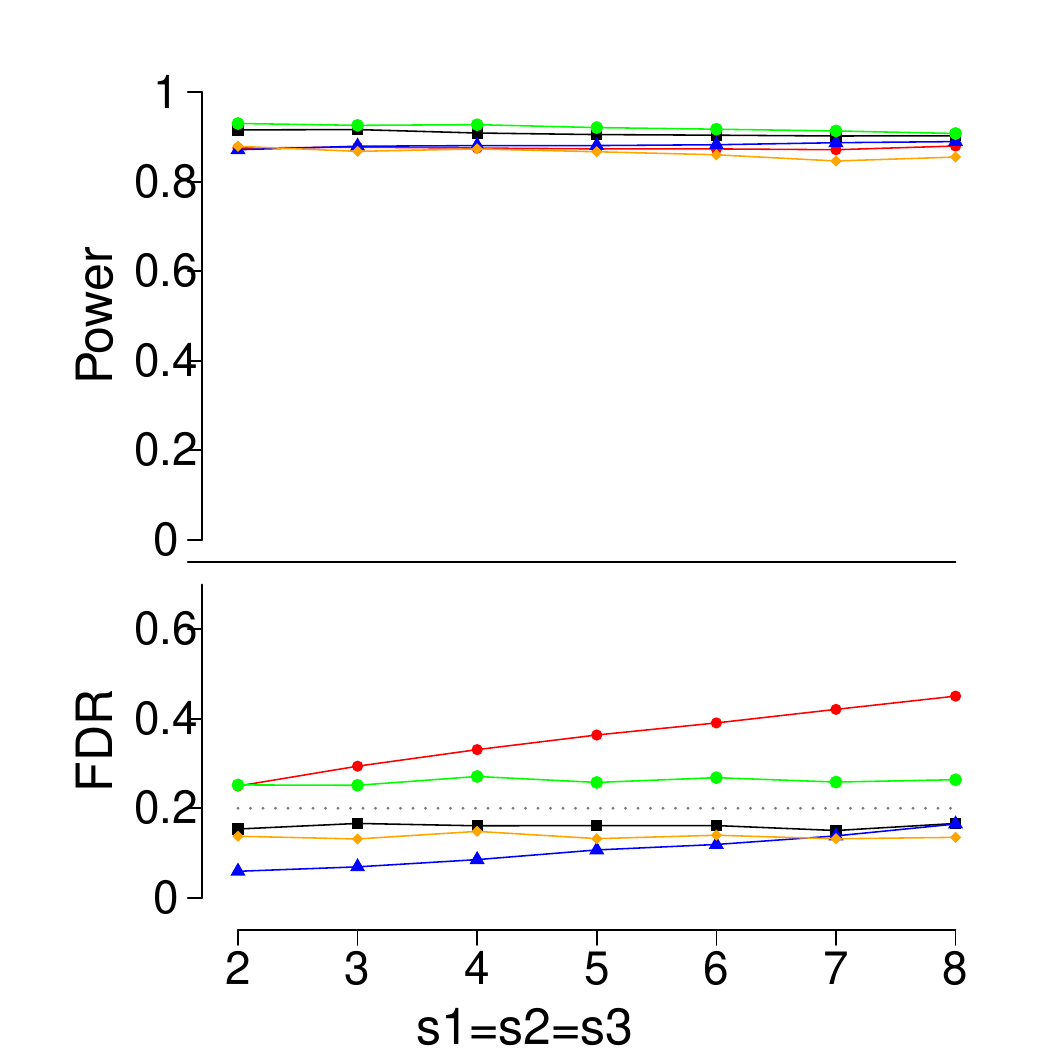}\\
    \includegraphics[scale=0.35]{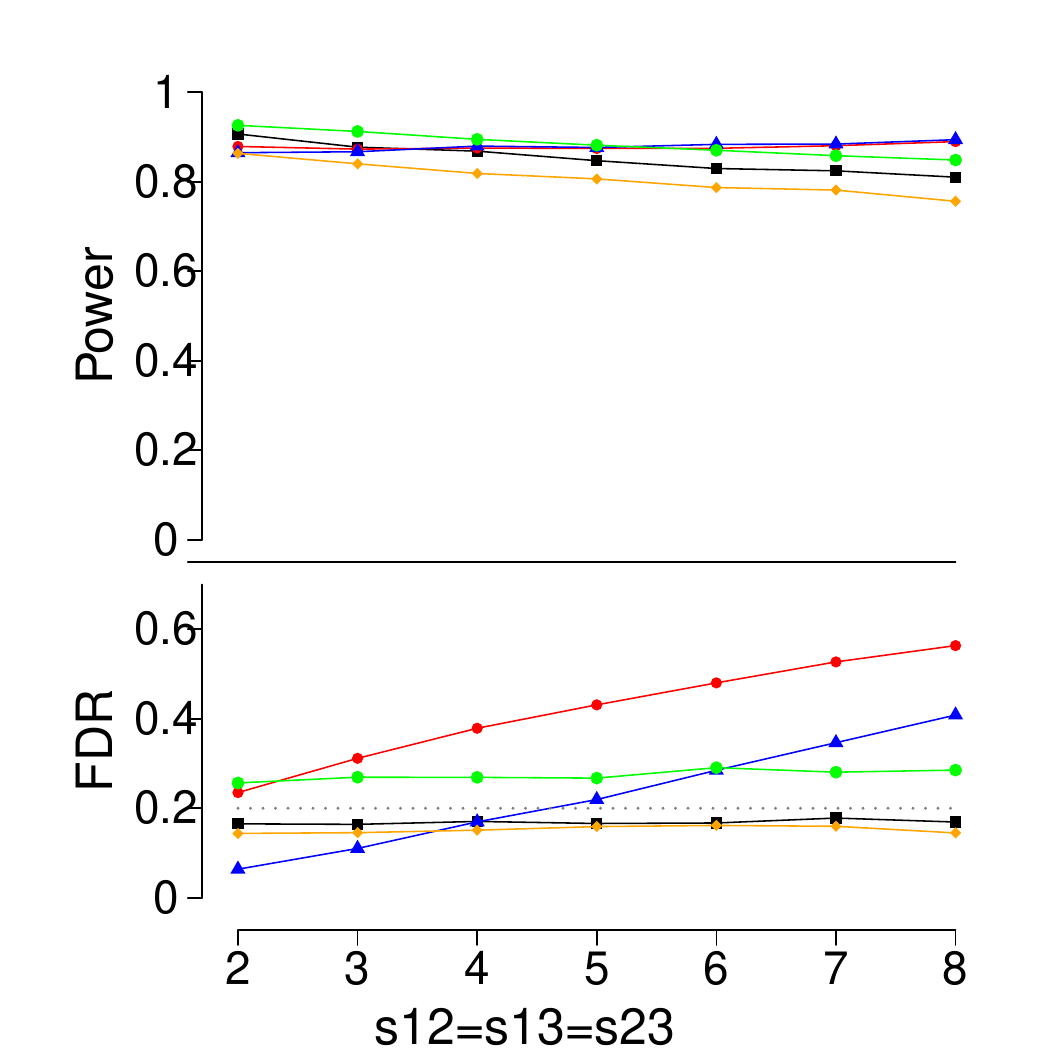}
    \includegraphics[scale=0.35]{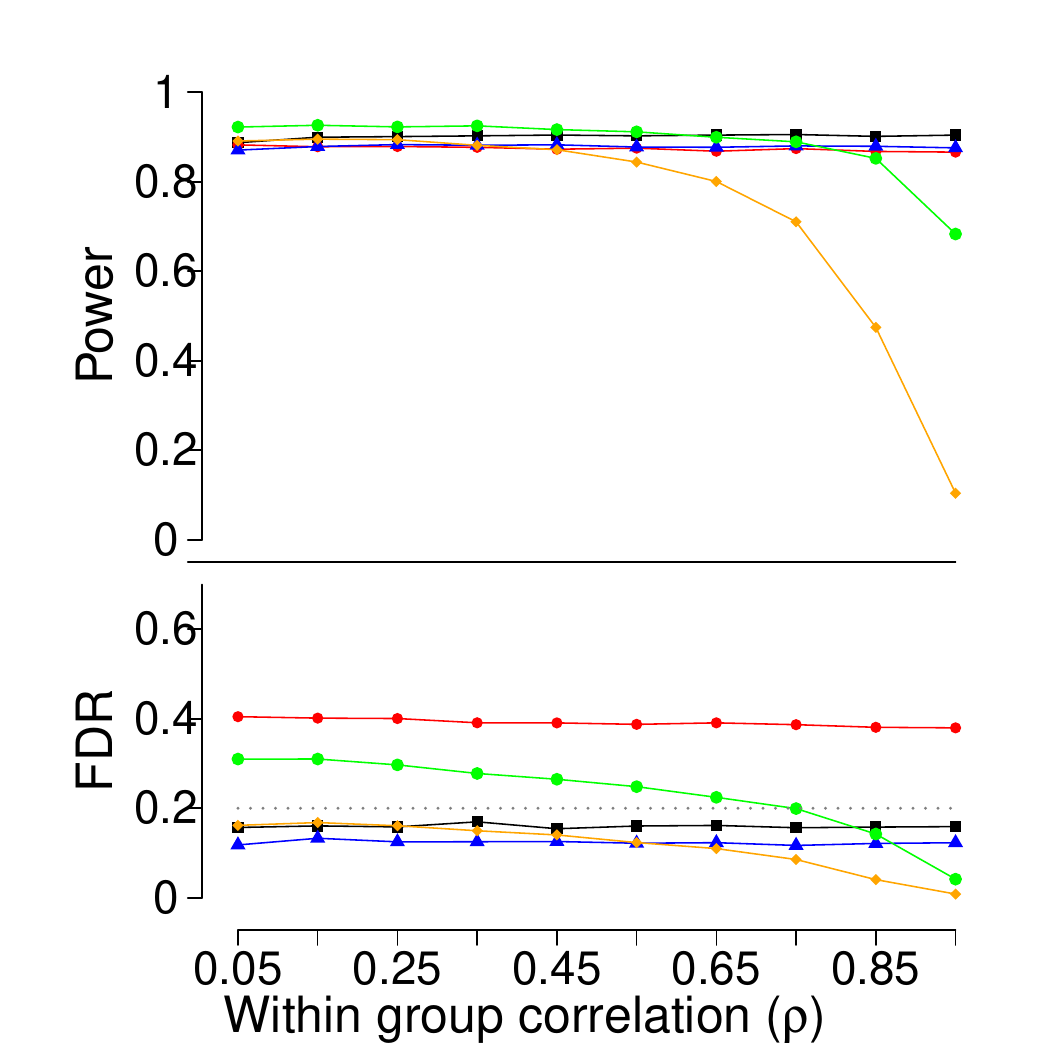}\\
    \includegraphics[scale=0.35]{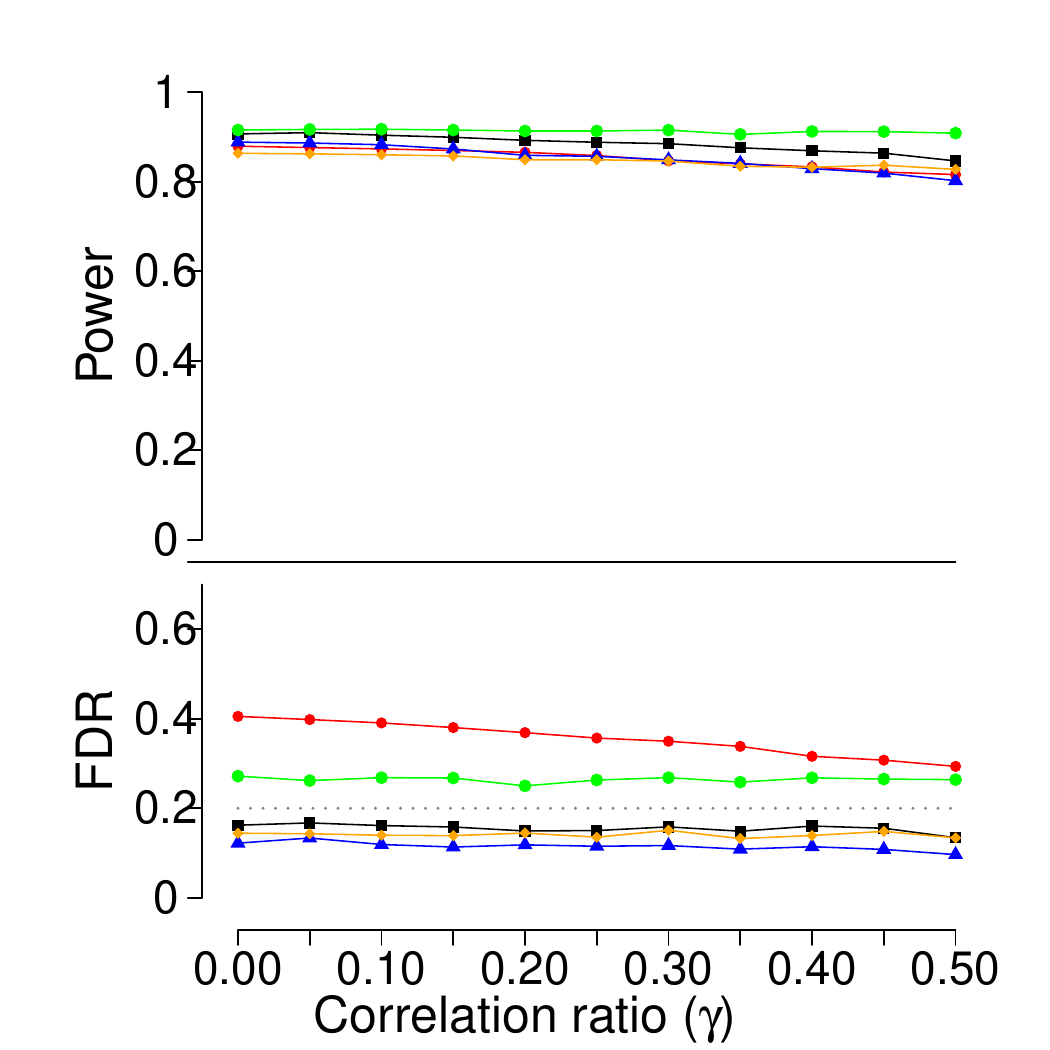}
    \includegraphics[scale=0.35]{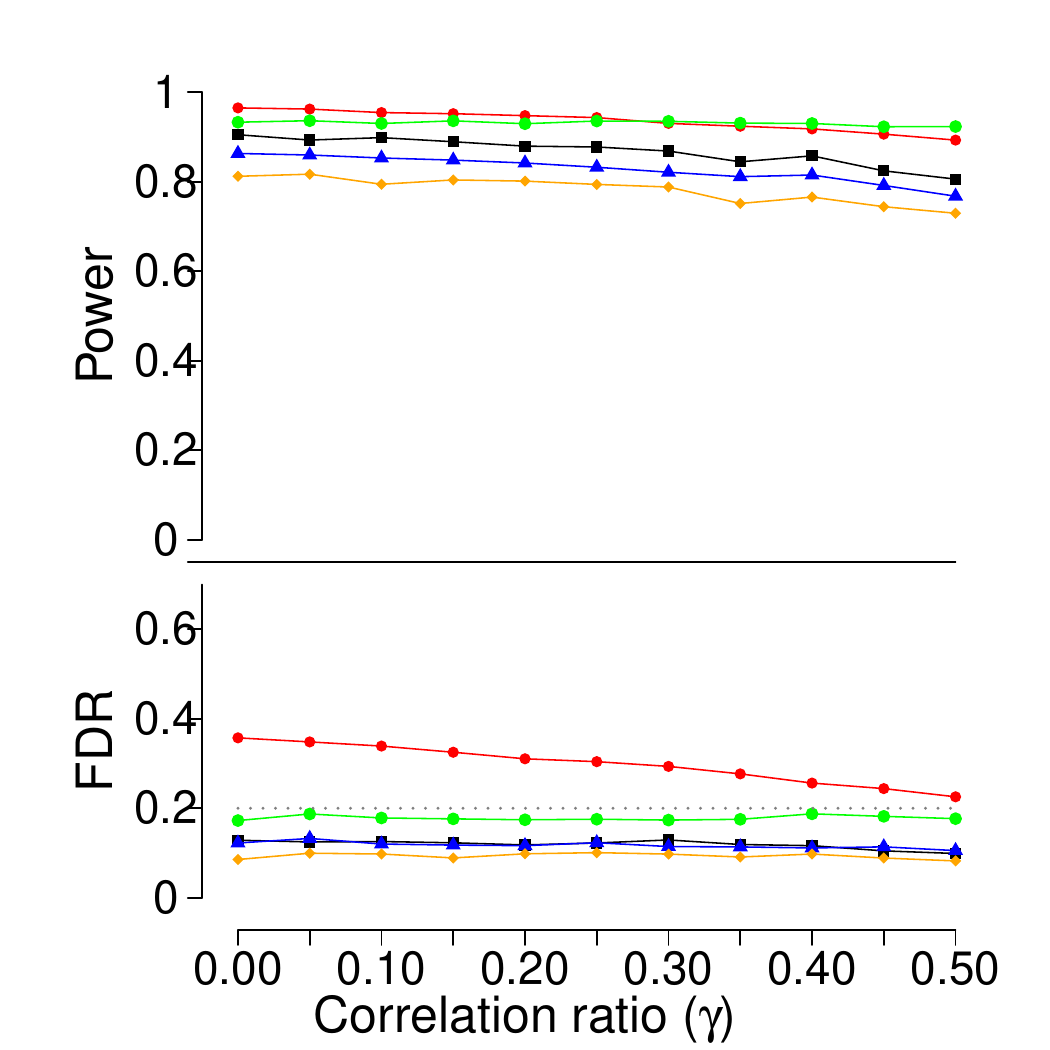}
    \caption{The power and the FDR for identifying group level simultaneous signals {\color{black} with data generated from \textbf{Setting 1} {\color{black}(the same sample size=1000)} for the \textbf{Mixed} models (K=3)} when varying (a) $s_0$ (\textbf{Scenario 1}); (b) $s_1=s_2=s_3$ (\textbf{Scenario 1}); (c) $s_{13}=s_{12}=s_{23}$ (\textbf{Scenario 1}); (d) within-group correlation $\rho$ (\textbf{Scenario 1}, \textbf{choice 2}); (e) Correlation ratio $\gamma$ (\textbf{Scenario 1}, \textbf{choice 2}); (f) Correlation ratio $\gamma$ (\textbf{Scenario 2}, \textbf{choice 2}). Details on the parameter settings for different \textbf{Scenarios} and \textbf{choices} are in Web Appendix E.}
    \label{fig:figure1}
\end{figure}

\pagebreak
\begin{figure}
    \centering
    \includegraphics[scale=0.35]{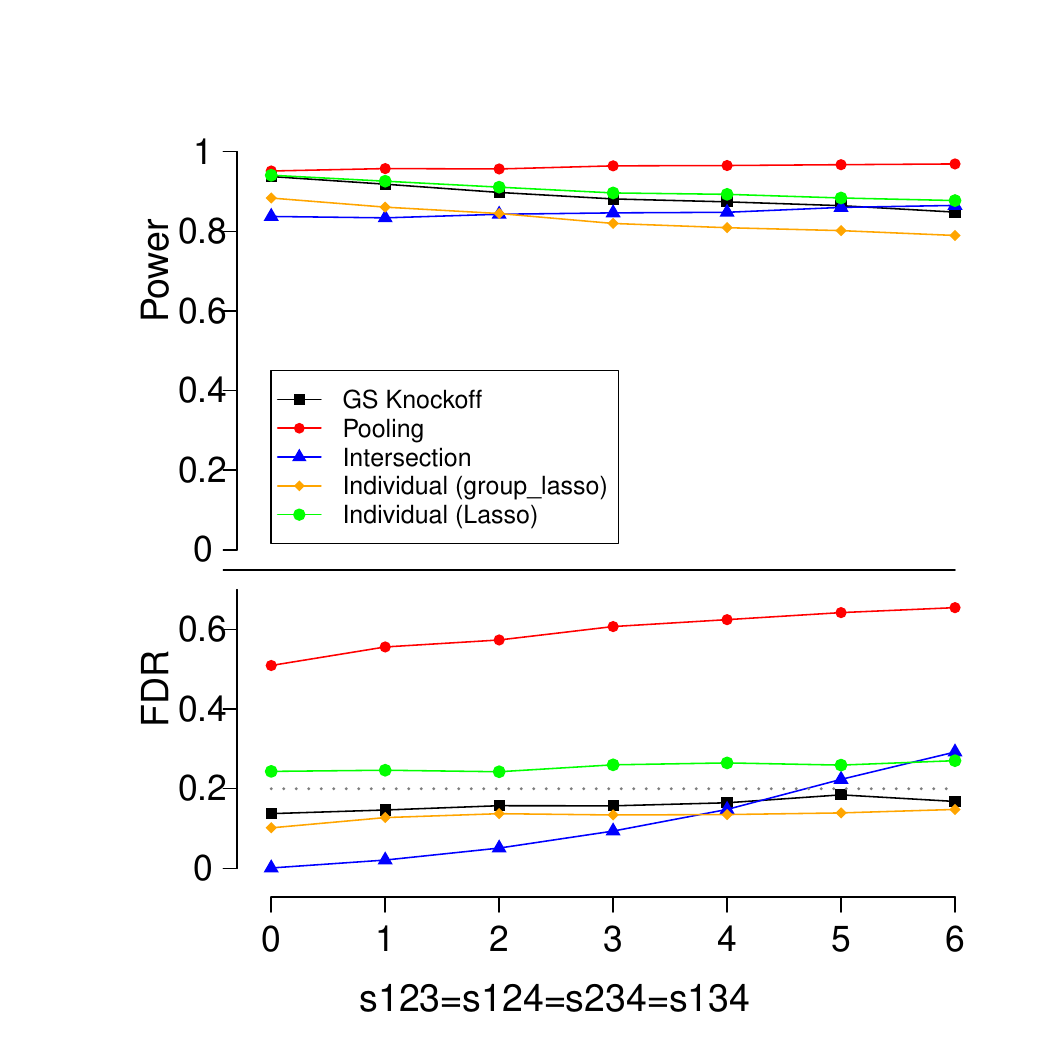}
    \includegraphics[scale=0.35]{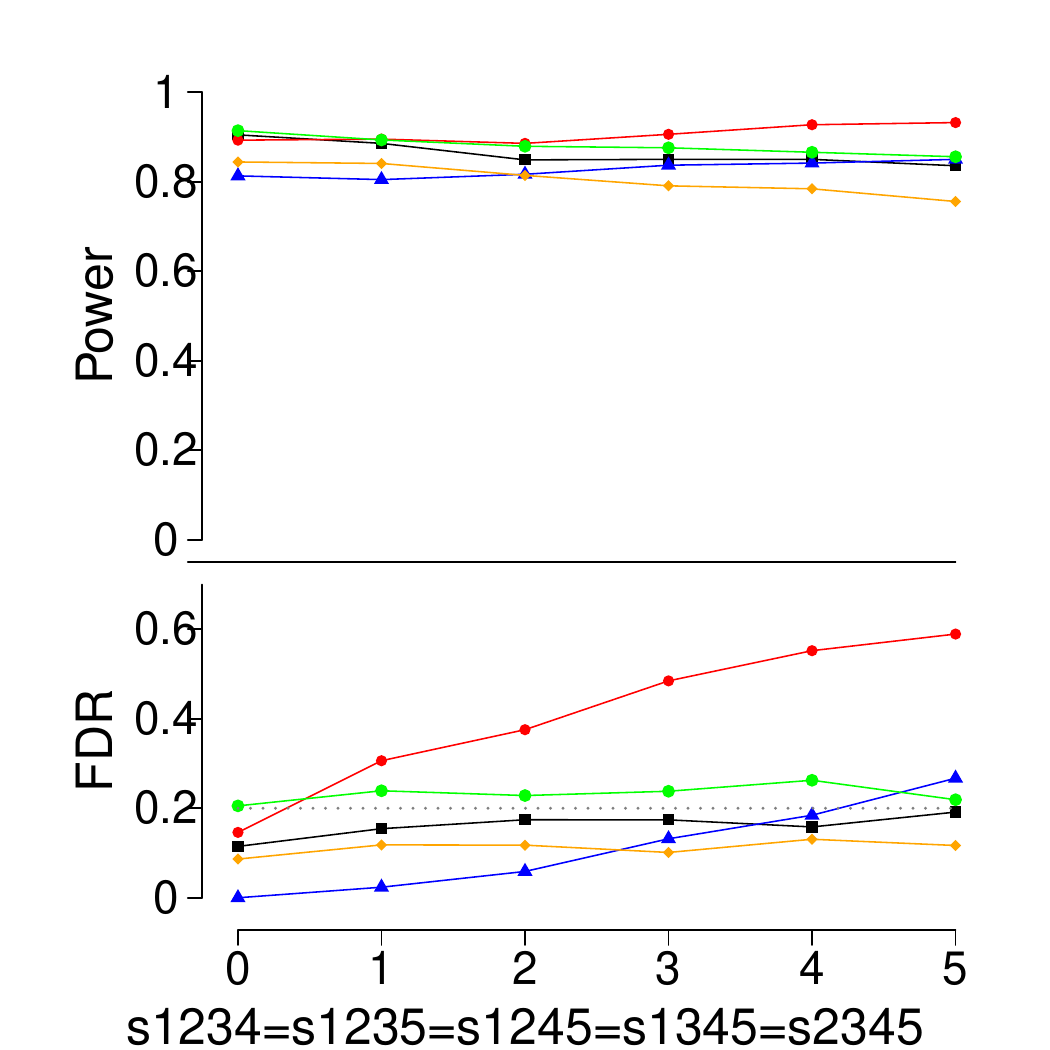}\\
    \includegraphics[scale=0.35]{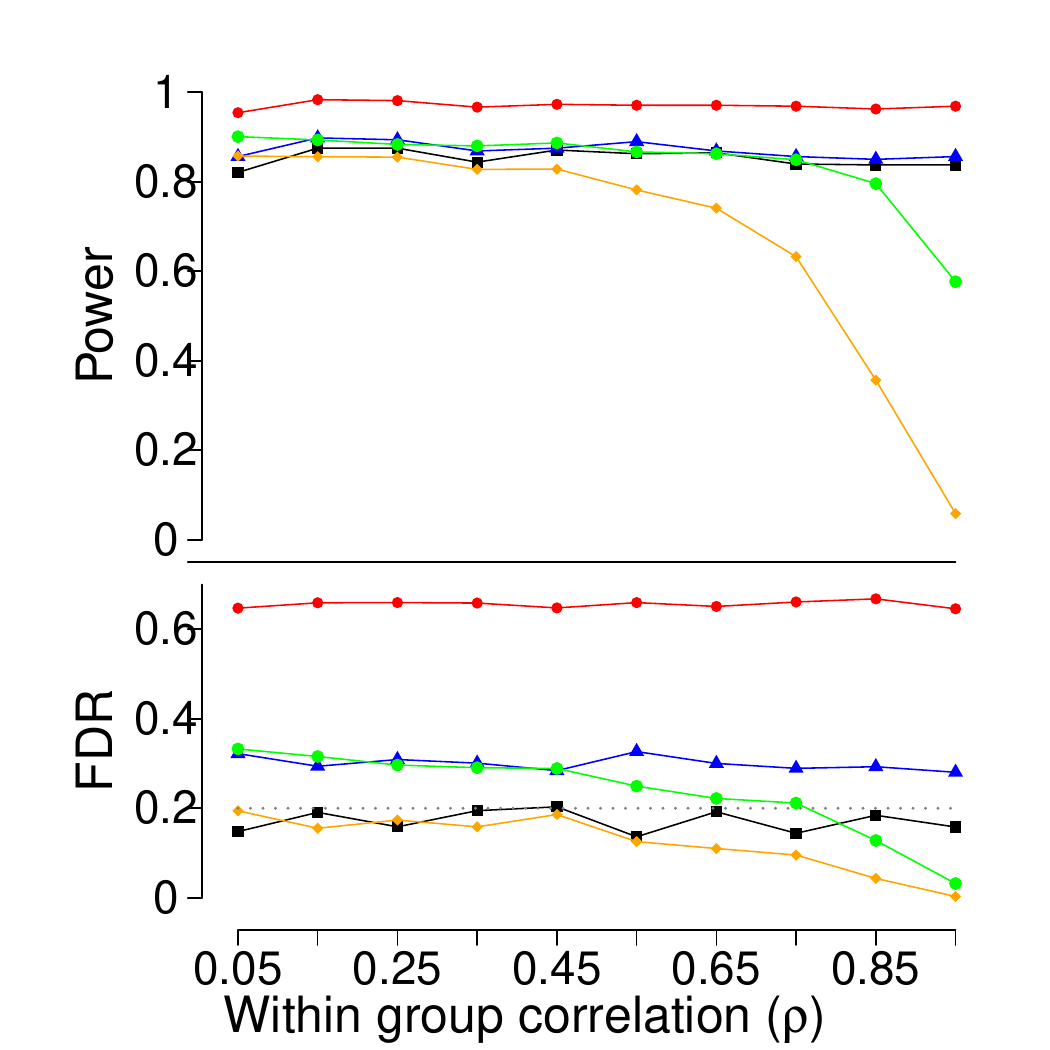}
    \includegraphics[scale=0.35]{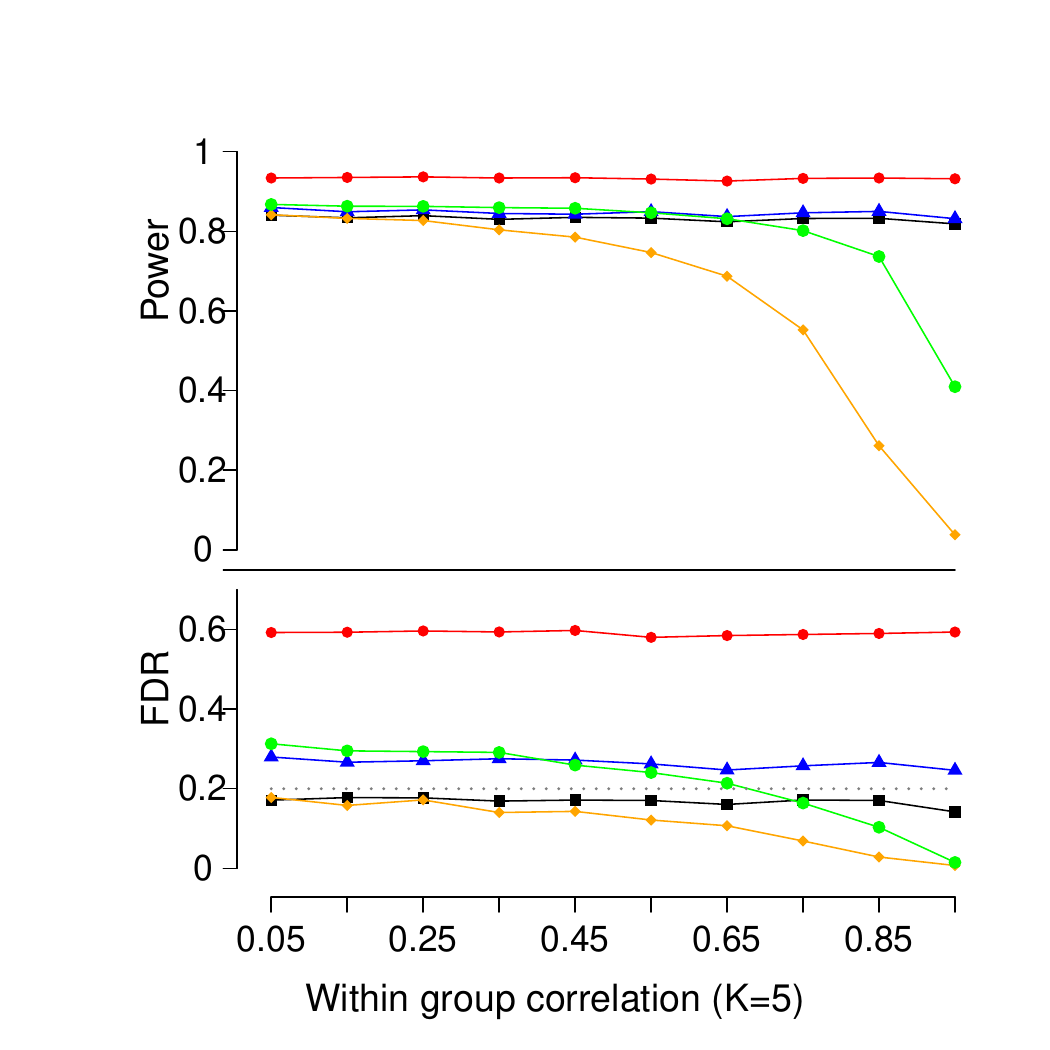}\\
    \includegraphics[scale=0.35]{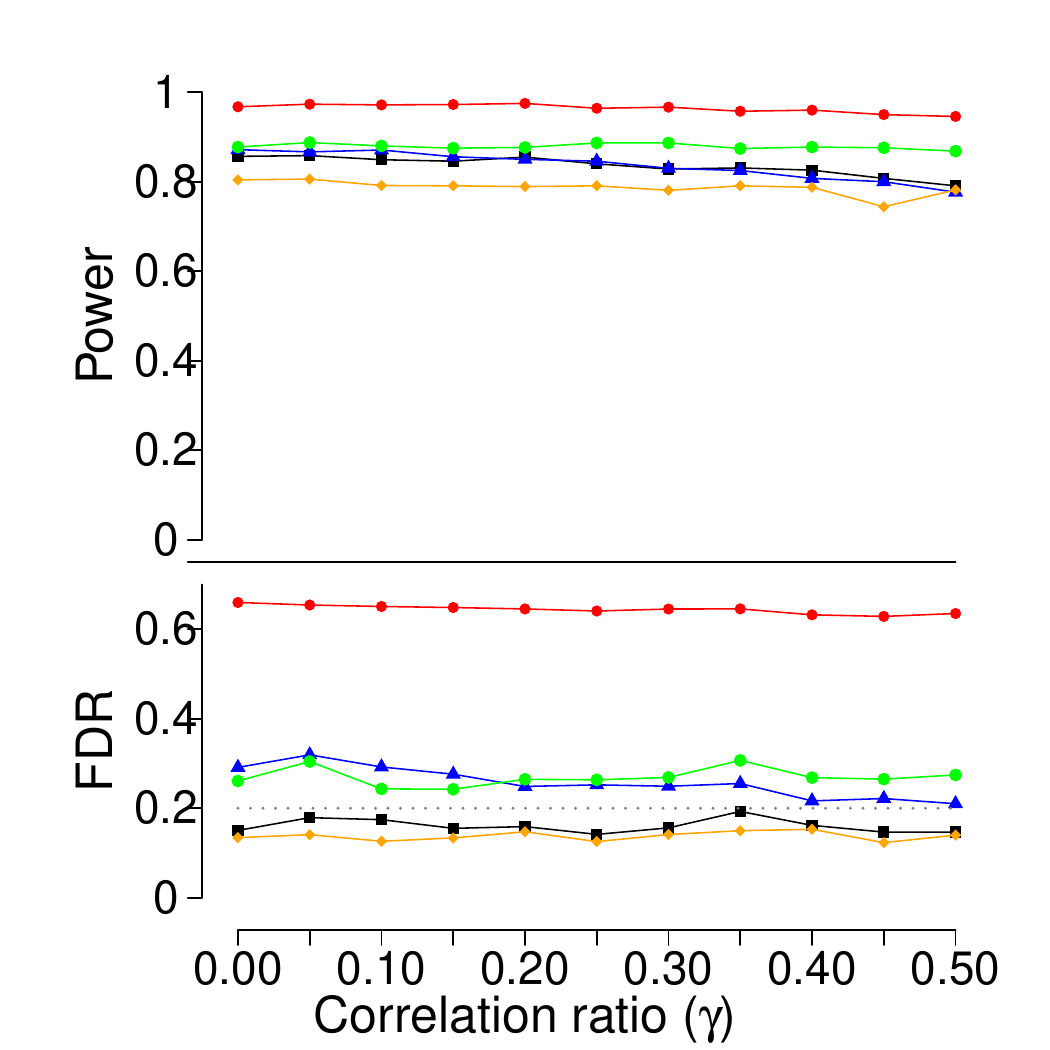}
    \includegraphics[scale=0.35]{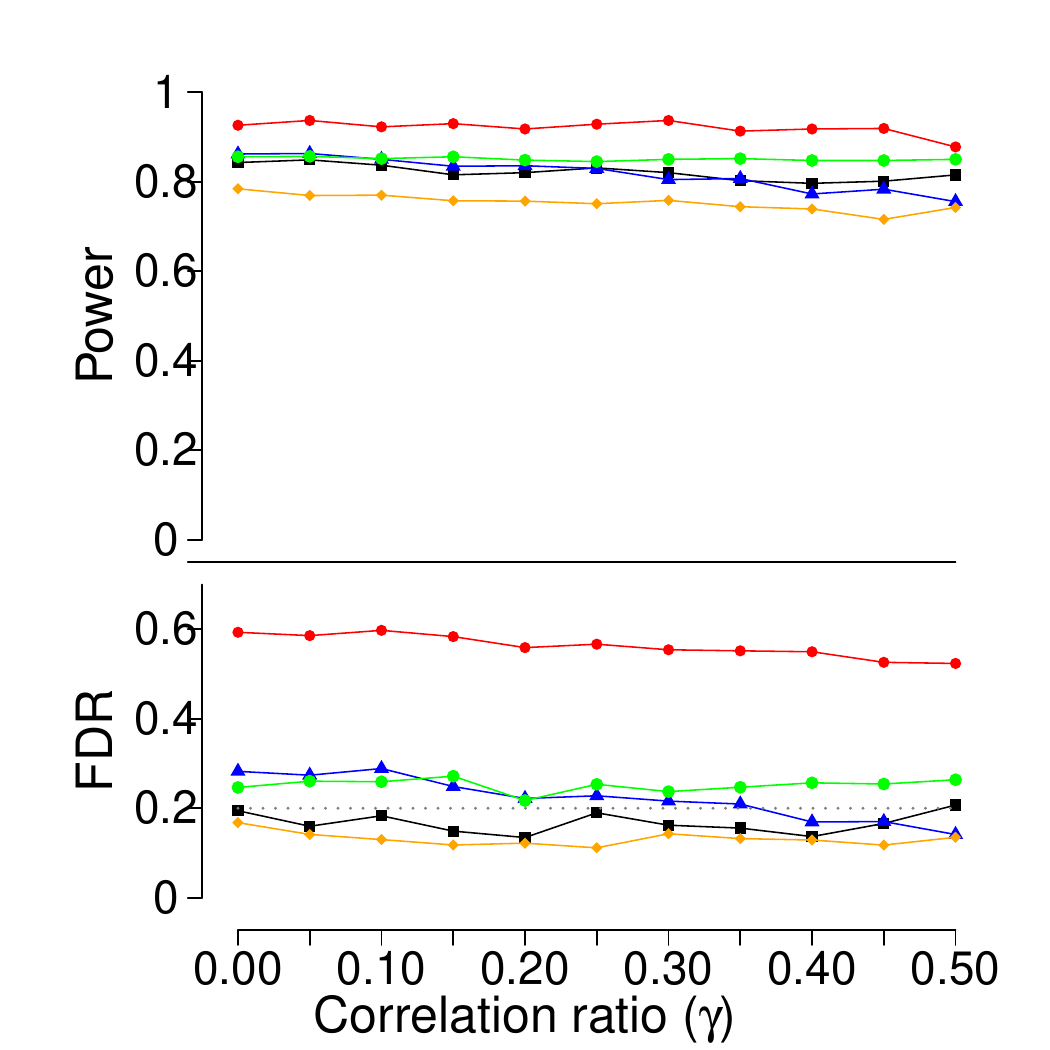}
    \caption{The power and the FDR for identifying group level simultaneous signals with {\color{black} data generated from \textbf{Setting 1} {\color{black}(the same sample size=1000)} for the \textbf{Mixed} models} on \textbf{Scenario 1} (same strengths) when K=4 (left column) and K=5 (right column). More details on the parameter settings are in Web Appendix E.}
    \label{fig:figure2}
\end{figure}

\pagebreak
\begin{figure}
    \centering
    \includegraphics[scale=0.35]{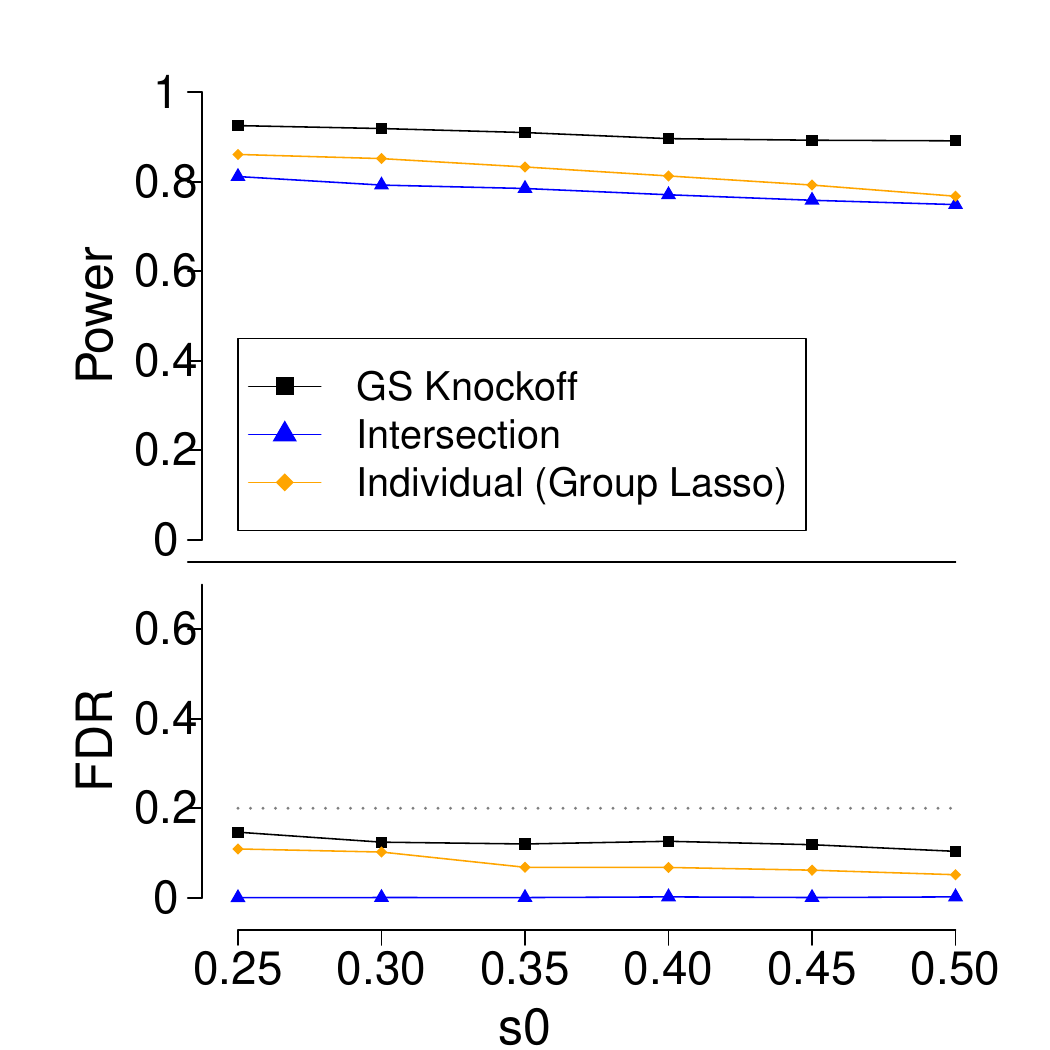}
    \includegraphics[scale=0.35]{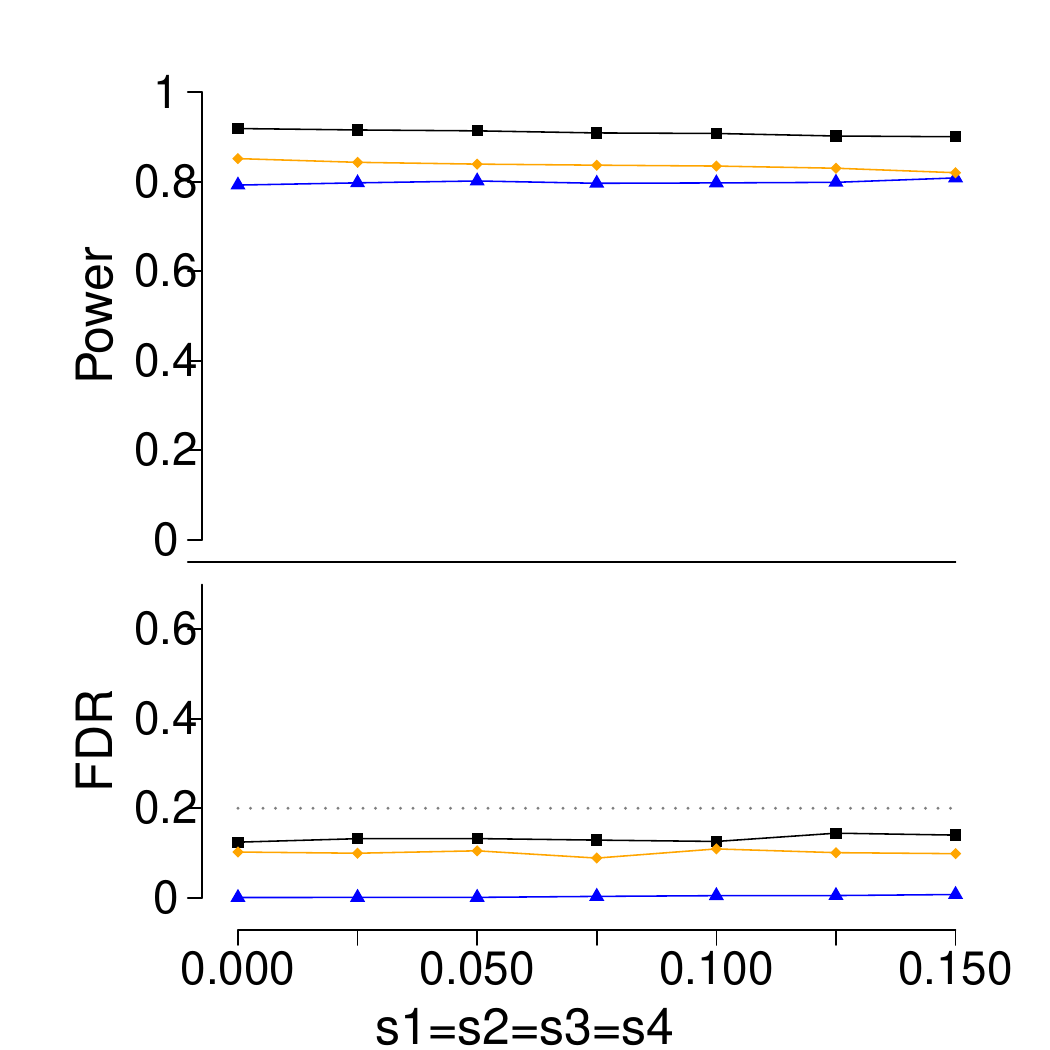}\\
    \includegraphics[scale=0.35]{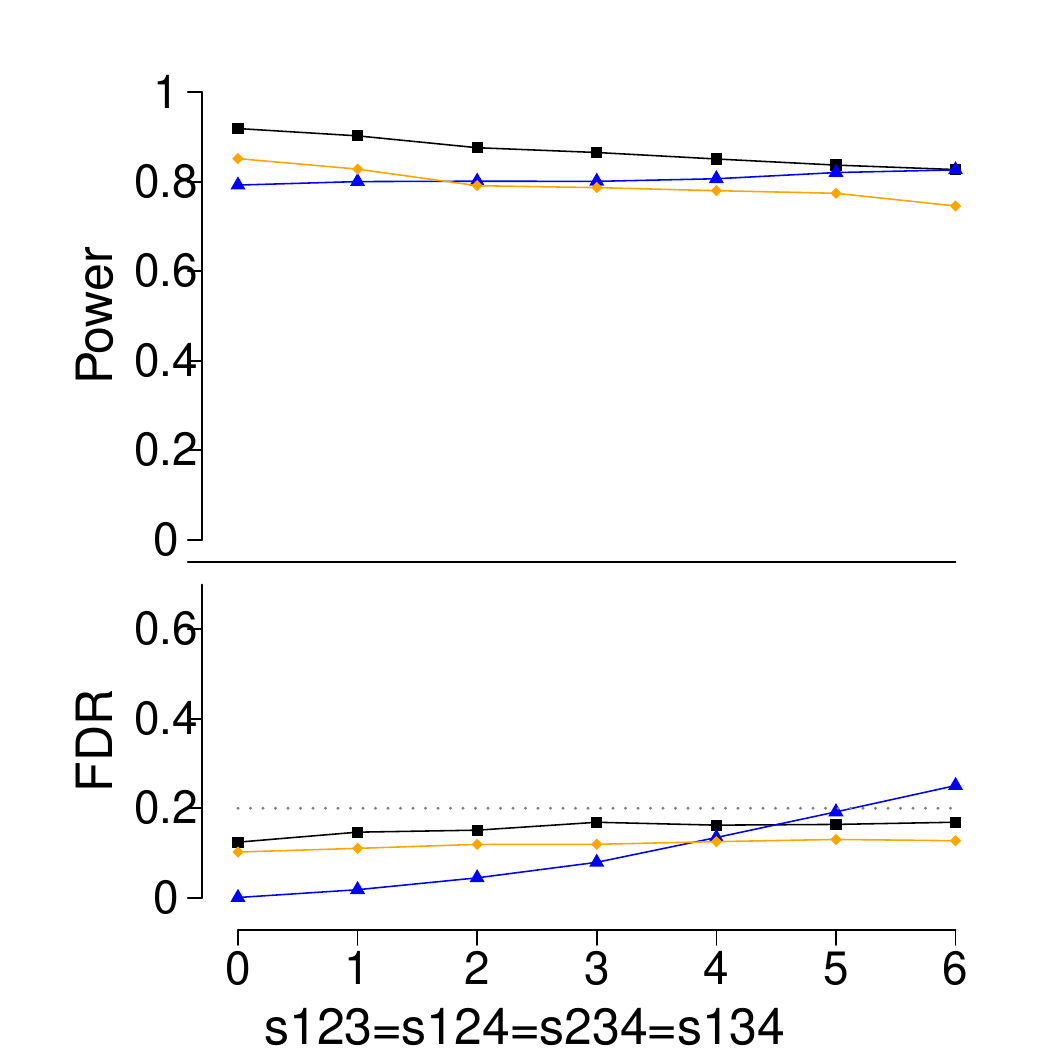}
    \includegraphics[scale=0.35]{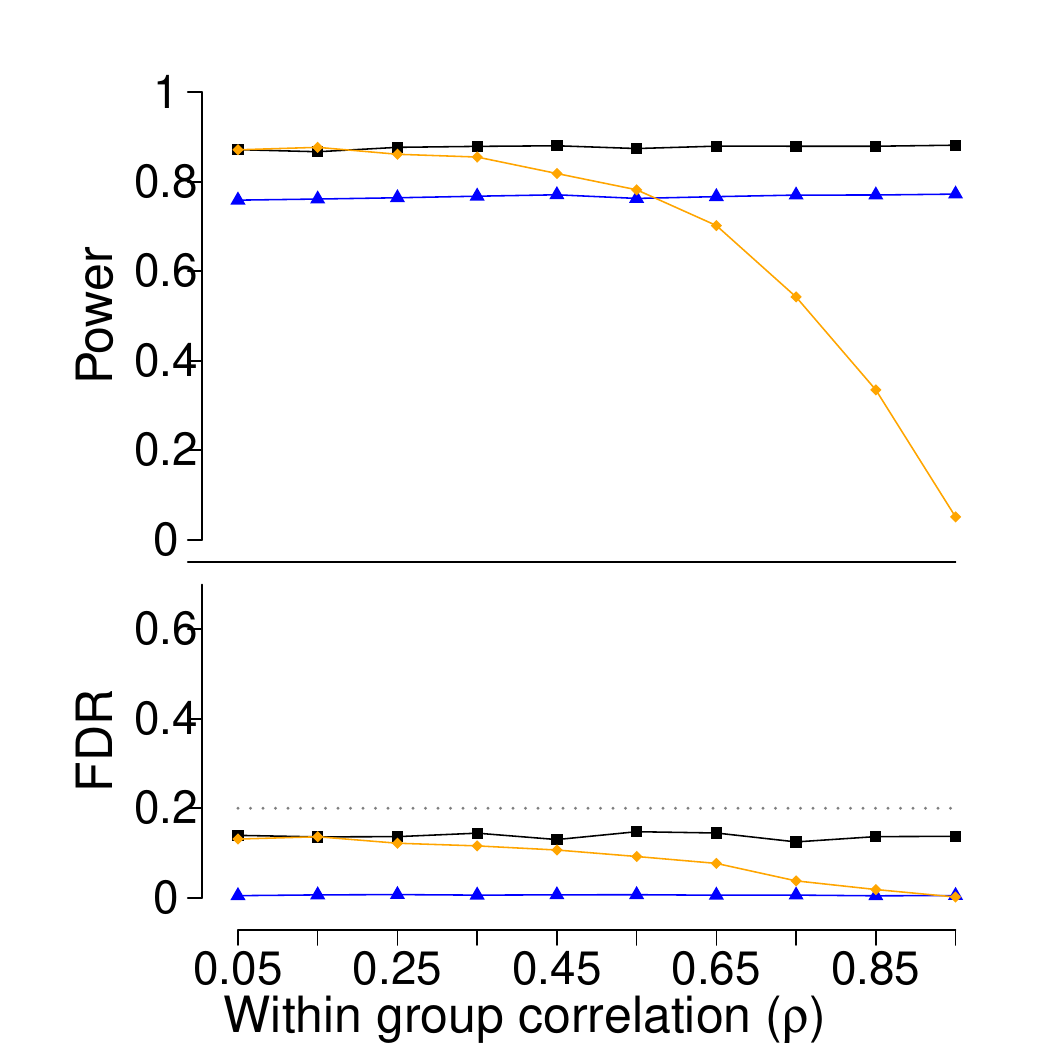}\\
    \includegraphics[scale=0.35]{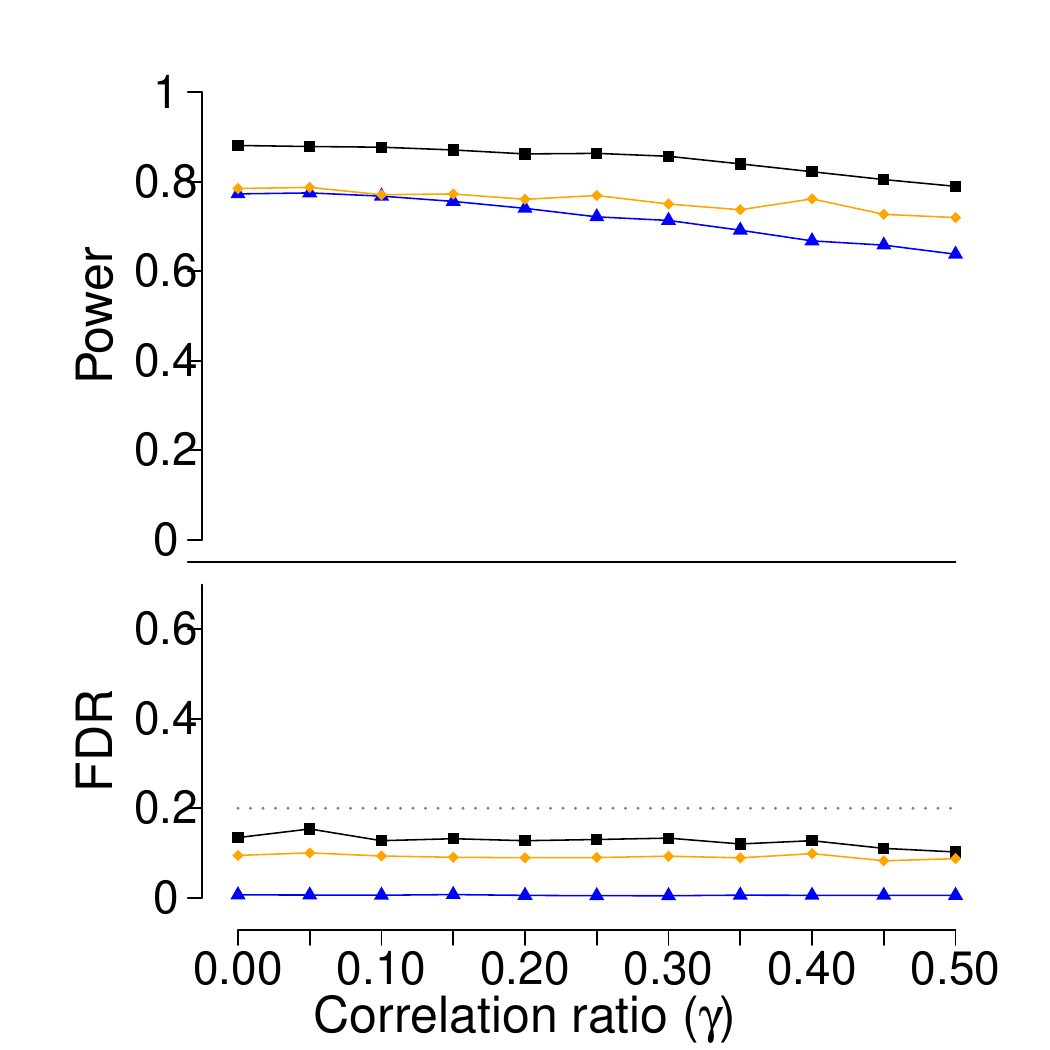}
    \includegraphics[scale=0.35]{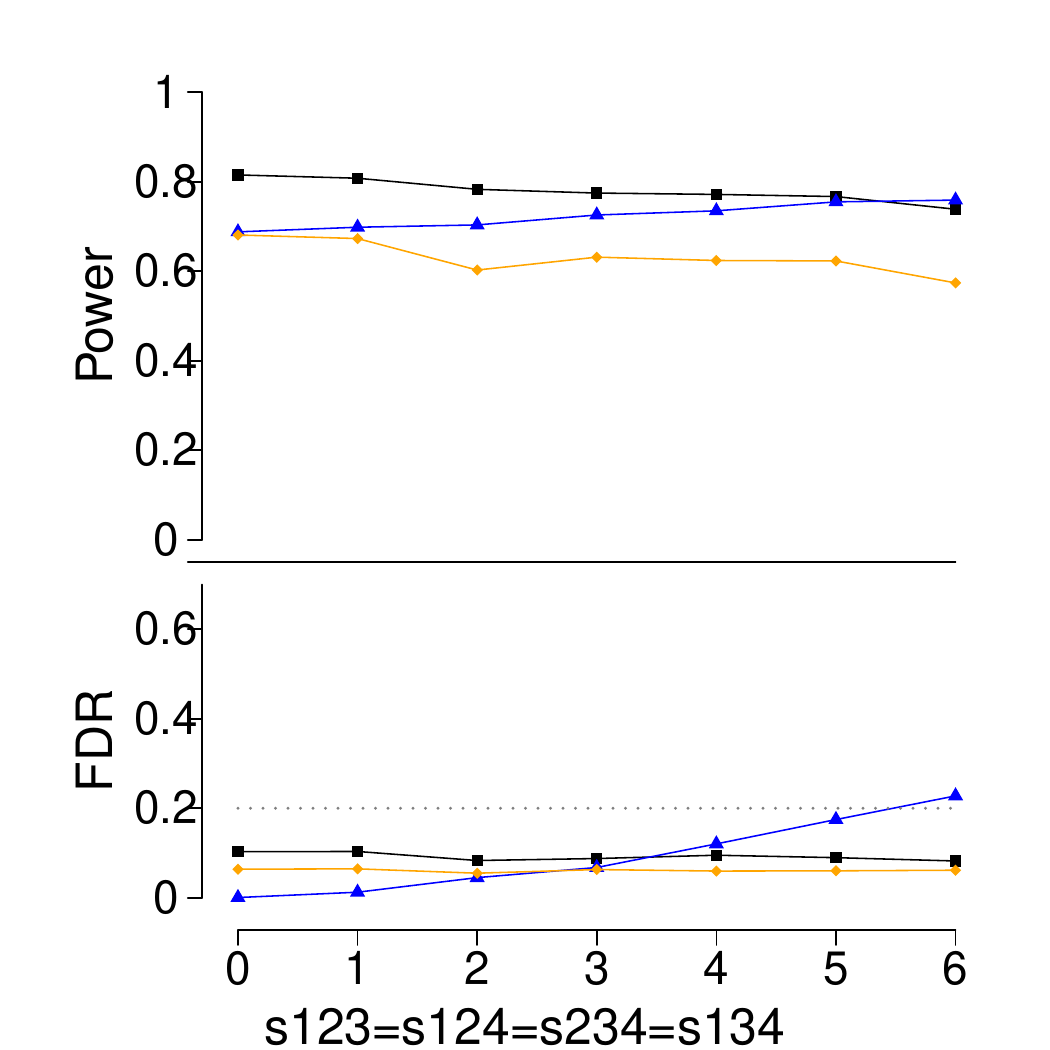}
       \caption{The power and the FDR for identifying group level simultaneous signals with {\color{black} with data generated from \textbf{Setting 2} (different numbers of variables and types of variables across the sites) for the \textbf{Mixed} models (K=4)} when varying (a) $s_0$ (Scenario 1); (b) $s_1=s_2=s_3=s_4$ (\textbf{Scenario 1}); (c) $s_{123}=s_{124}=s_{134}=s_{234}$ (\textbf{Scenario 1}); (d) within-group correlation $\rho$ (\textbf{Scenario 1}, \textbf{choice 2}); (e) Correlation ratio $\gamma$ (\textbf{Scenario 1}, \textbf{choice 2}); (f) $s_{123}=s_{124}=s_{134}=s_{234}$ (\textbf{Scenario 2}). Details on the parameter settings for different \textbf{Scenarios} and \textbf{choices} are in Web Appendix E.}
    \label{fig:figure3}
\end{figure}

\pagebreak
\begin{figure}
    \centering
    \includegraphics[scale=0.5]{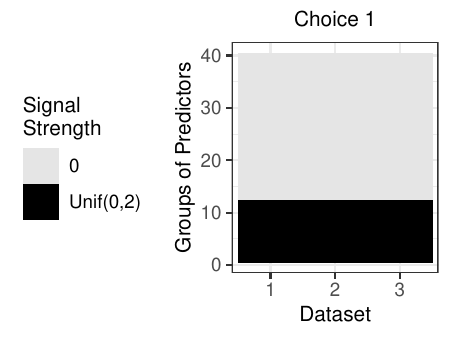}
    \includegraphics[scale=0.5]{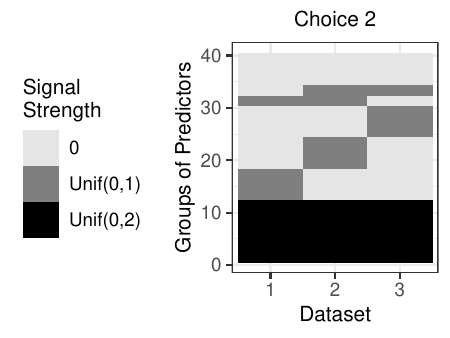}
    \includegraphics[scale=0.5]{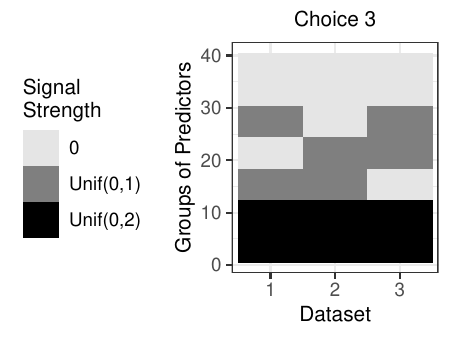}\\
    \includegraphics[scale=0.5]{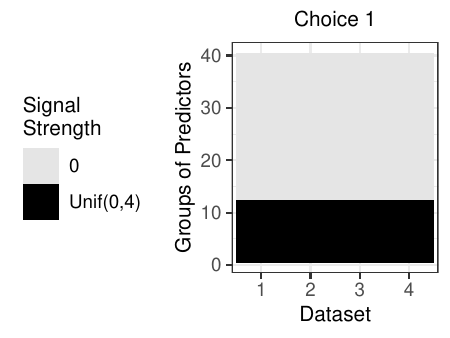}
    \includegraphics[scale=0.5]{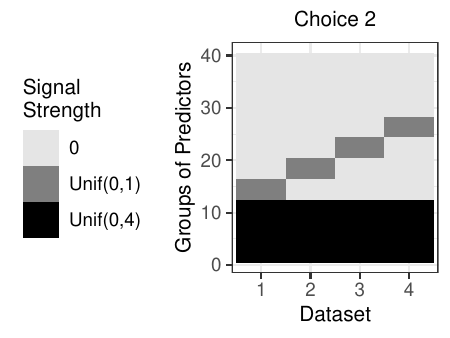}
    \includegraphics[scale=0.5]{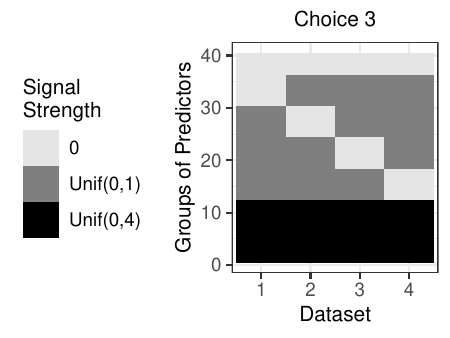}\\
    \includegraphics[scale=0.5]{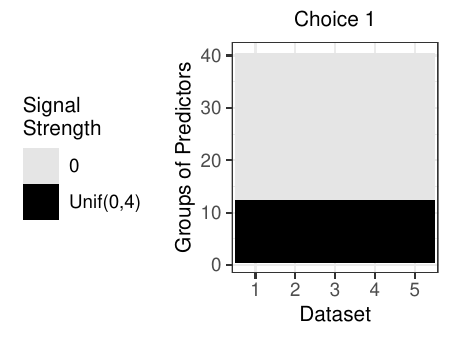}
    \includegraphics[scale=0.5]{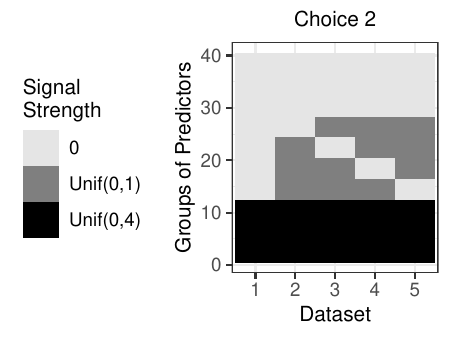}
    \includegraphics[scale=0.5]{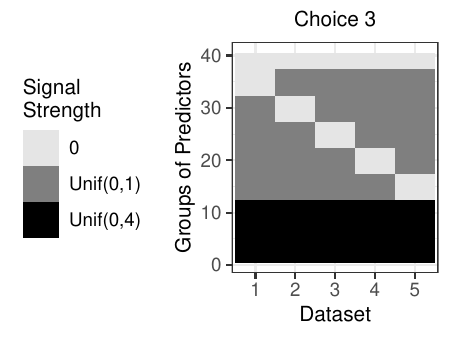}
    \caption{Design structure for coefficients when K=3 (row 1) for \textbf{choice 1} (only simultaneous signals exist), \textbf{choice 2} (simultaneous signals and non-simultaneous signals exist in one dataset and two datasets), and \textbf{choice 3} (simultaneous signals and non-simultaneous signals exist in two datasets); when K=4 (row 2) for \textbf{choice 1} (only simultaneous signals exist), \textbf{choice 2} (simultaneous signals and non-simultaneous signals exist in one dataset), and \textbf{choice 3} (simultaneous signals and non-simultaneous signals exist in three datasets); when K=5 (row 3) for \textbf{choice 1}(only simultaneous signals exist), \textbf{choice 2} (simultaneous signals and non-simultaneous signals exist in three datasets), and \textbf{choice 3} (simultaneous signals and non-simultaneous signals exist in four datasets). The black shades are simultaneous signals. The dark grey shades are signals that exist in partial of the datasets. The light grey areas are non-signals in all datasets.}
\label{fig:design}
\end{figure}
\end{document}


\maketitle









\section*{Web Appendix A: Technical Lemmas}\label{app:A}
\begin{lemma}\label{lem:gsknock}
For the $\widetilde{\X}$ generated from Algorithm 1, it satisfies the group Model-X knockoff requirements.
\end{lemma}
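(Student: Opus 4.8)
The plan is to verify the two defining properties of a group Model-X knockoff (Definition \ref{def:group_modelX}) separately, handling the trivial conditional-independence property first and then the substantive pairwise group-exchangeability. Property 2, $\widetilde{\X}\indep Y\mid\X$, is immediate: Algorithm \ref{alg:groupModelX} generates $\widetilde{\X}$ using only $\X$ and fresh randomness, never looking at $Y$, so conditional on $\X$ the knockoffs carry no information about $Y$. The work is all in Property 1, that $(\X,\widetilde{\X})_{\text{GSwap}(S)}\eqd(\X,\widetilde{\X})$ for every $S\subseteq[M]$.

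For Property 1 I would argue by induction on the construction step $m$, adapting the sequential-conditional-independent-pairs argument of \citet{candes2018panning} from single coordinates to groups. Let $p_m$ denote the joint density of $(\X_{G_1},\dots,\X_{G_M},\widetilde{\X}_{G_1},\dots,\widetilde{\X}_{G_m})$ after $m$ groups of knockoffs have been drawn, and take as the inductive hypothesis that $p_m$ is invariant under $\text{GSwap}(S)$ for every $S\subseteq[m]$ (the base case $m=0$ being vacuous). In the inductive step I split on whether $m\in S$. When $m\notin S$, the swap permutes the conditioning blocks $(\X_{G_i},\widetilde{\X}_{G_i})$, $i\in S\subseteq[m-1]$, among themselves while fixing $\X_{G_m}$; since by hypothesis $p_{m-1}$ is $\text{GSwap}(S)$-invariant, the conditional law $\mathcal{L}(\X_{G_m}\mid\X_{-G_m},\widetilde{\X}_{\cup_{j<m}G_j})$ used to draw $\widetilde{\X}_{G_m}$ is unchanged under the swap, and invariance of $p_m$ follows.

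The crux is the case $m\in S$. Writing $c_m(t\mid\X_{-G_m},\widetilde{\X}_{1:m-1})$ for the conditional density of $\X_{G_m}$ under $p_{m-1}$, the sampling step gives
\[
p_m \;=\; p_{m-1}(\X_{-G_m},\X_{G_m},\widetilde{\X}_{1:m-1})\,c_m(\widetilde{\X}_{G_m}\mid\X_{-G_m},\widetilde{\X}_{1:m-1}),
\]
and substituting $c_m(t\mid\cdot)=p_{m-1}(\X_{-G_m},t,\widetilde{\X}_{1:m-1})/\int p_{m-1}(\X_{-G_m},u,\widetilde{\X}_{1:m-1})\,du$ yields
\[
p_m \;=\; \frac{p_{m-1}(\X_{-G_m},\X_{G_m},\widetilde{\X}_{1:m-1})\,p_{m-1}(\X_{-G_m},\widetilde{\X}_{G_m},\widetilde{\X}_{1:m-1})}{\int p_{m-1}(\X_{-G_m},u,\widetilde{\X}_{1:m-1})\,du}.
\]
The numerator is a symmetric product in the $G_m$-blocks $\X_{G_m}$ and $\widetilde{\X}_{G_m}$, and the denominator depends on neither, so $p_m$ is invariant under swapping $\X_{G_m}\leftrightarrow\widetilde{\X}_{G_m}$, i.e.\ under $\text{GSwap}(\{m\})$. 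Because the block swaps indexed by distinct groups commute and are involutions, invariance under $\text{GSwap}(\{m\})$ together with the $m\notin S$ case applied to $S'=S\setminus\{m\}\subseteq[m-1]$ gives invariance under $\text{GSwap}(S)=\text{GSwap}(S')\circ\text{GSwap}(\{m\})$, completing the induction; taking $m=M$ gives Property 1.

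I expect the main obstacle to be exactly this $m\in S$ computation: it is where the self-consistency of the construction --- that $\widetilde{\X}_{G_m}$ is drawn from the \emph{same} conditional law that governs $\X_{G_m}$ given the already-realized variables --- is converted into an explicit symmetric factorization of $p_m$. Two points need care: asserting that the densities exist (I would either assume absolute continuity throughout or restate the argument at the level of regular conditional distributions to avoid it), and checking that the group-block bookkeeping in $\text{GSwap}$ matches the single-coordinate $\text{Swap}$ identities, which is a purely notational extension of the individual-variable proof.
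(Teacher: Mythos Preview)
Your proposal is correct and takes essentially the same route as the paper: verify Property~2 trivially, then prove Property~1 by induction on $m$ using the symmetric factorization
\[
p_m \;=\; \frac{p_{m-1}(\X_{-G_m},\X_{G_m},\widetilde{\X}_{1:m-1})\,p_{m-1}(\X_{-G_m},\widetilde{\X}_{G_m},\widetilde{\X}_{1:m-1})}{\int p_{m-1}(\X_{-G_m},u,\widetilde{\X}_{1:m-1})\,du},
\]
exactly as in the sequential-conditional-independent-pairs argument of \citet{candes2018panning} lifted to group blocks. The only cosmetic difference is that the paper first reduces a general $S$ to singletons $\{j\}$, $j\in[m]$, and then distinguishes $j=m$ versus $j<m$, whereas you split on $m\in S$ versus $m\notin S$ and invoke the commuting-involution observation at the end; the substance is identical.
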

\begin{proof}
 {\color{black} The proof is a straightforward extension of Section E in \cite{candes2018} to the group-level signal selection case.} Since the generation is without looking at $Y$, so the second condition holds automatically. Here we just need to verify the first condition. Since $(\X_{G_1},\widetilde{\X}_{G_1})|\X_{-G_1}\eqd (\widetilde{\X}_{G_1},\X_{G_1})|\X_{-G_1}$, we have $(\X,\widetilde{\X}_{G_1})\eqd (\X,\widetilde{\X}_{G_1})_{\textnormal{GSwap}(\{1\},G)}$. Now, we use proof by induction similar as \citep{candes2018} to show that $(\X,\widetilde{\X}_{G_1},\dots,\widetilde{\X}_{G_m})\eqd (\X,\widetilde{\X}_{G_1},\dots,\widetilde{\X}_{G_m})_{\textnormal{GSwap}(S,G)}$ for any $S\in [m]$ after $m$ steps. Since the swap of multiple groups can be performed in multiple steps with a swap of one group at a time, we just need to prove the setting for $S=\{j\}$ for $j\in [m]$. Notice the measure for the joint distribution can be written as
\begin{eqnarray*}
dP(\X,\widetilde{\X}_{\cup_{l=1}^m G_l})&=&dP(\X,\widetilde{\X}_{\cup_{l=1}^{m-1} G_l})\frac{dP(\X,\widetilde{\X}_{{\color{black} G_m}},\widetilde{\X}_{\cup_{l=1}^{m-1} G_l})}{dP(\X,\widetilde{\X}_{\cup_{l=1}^{m-1} G_l})}\\
&=&dP(\X_{-{\color{black}G_m}},\X_{{\color{black}G_m}},\widetilde{\X}_{\cup_{l=1}^{m-1} G_l})\frac{dP(\X_{-{\color{black} G_m}},\widetilde{\X}_{{\color{black}G_m}},\widetilde{\X}_{\cup_{l=1}^{m-1} G_l})}{\int dP(\X_{-{\color{black} G_m}},du,\widetilde{\X}_{\cup_{l=1}^{m-1} G_l})},
\end{eqnarray*}
{\color{black} where the integration is over $du$.} When $j=m$, since changing $\X_{G_m}$ and $\widetilde{\X}_{G_m}$ will lead to the same numerator and the denominator does not depend on $\X_{G_m}$ and $\widetilde{\X}_{G_m}$, the group exchangeability property holds. When $j<m$, by induction, the function $dP(\cdot)$ is symmetric between $\X_{G_j}$ and $\widetilde{\X}_{G_j}$ and thus the group exchangeability also holds.

Now after the $M$ steps, we get the knockoff that satisfies condition 1 of the group Model-X knockoff construction.
\end{proof}

\begin{lemma}\label{lem:gseqknock}
For the $\widetilde{\X}$ generated from the sequential group knockoff, it satisfies the group Model-X knockoff requirements when the model is correctly specified and the true parameters are used.
\end{lemma}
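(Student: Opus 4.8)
The plan is to show that, under correct model specification with the true parameters, Algorithm \ref{alg:seqknockoff} is merely a concrete implementation of the generic sampling step in Algorithm \ref{alg:groupModelX}, so that the conclusion follows immediately from Lemma \ref{lem:gsknock}. Concretely, I would fix the group index $m$ and abbreviate the variables already conditioned on at this stage as $W := (\X_{-G_m}, \widetilde{\X}_{\cup_{j=1}^{m-1} G_j})$. Algorithm \ref{alg:groupModelX} requires drawing $\widetilde{\X}_{G_m} = (\widetilde{\X}^{con}_{G_m}, \widetilde{\X}^{cat}_{G_m})$ from $\mathcal{L}(\X_{G_m} \mid W)$. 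The first step is to factor this target law by the chain rule of conditional probability,
\[
\mathcal{L}(\X^{con}_{G_m}, \X^{cat}_{G_m} \mid W) = \mathcal{L}(\X^{con}_{G_m} \mid W)\,\cdot\,\mathcal{L}(\X^{cat}_{G_m} \mid \X^{con}_{G_m}, W),
\]
which expresses the group draw as a continuous draw followed by a categorical draw conditioned on the continuous value.

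Next I would match each factor to the two substeps of Algorithm \ref{alg:seqknockoff}. For the continuous factor, when the penalized multi-task linear regression of $\X^{con}_{G_m}$ on $[\X_{-G_m}, \widetilde{\X}_{\cup_{j=1}^{m-1} G_j}]$ is correctly specified and the fitted $(\widehat{\mu}_m, \widehat{\Sigma}_m)$ equal the true conditional mean and covariance, the Gaussian draw $\widetilde{\X}^{con}_{G_m} \sim \mathcal{N}(\widehat{\mu}_m, \widehat{\Sigma}_m)$ is exactly a sample from $\mathcal{L}(\X^{con}_{G_m} \mid W)$. For the categorical factor, the multinomial regression is \emph{fit} on the true continuous covariate $\X^{con}_{G_m}$, so under correct specification its fitted probability function $\widehat{\pi}(\cdot)$ equals the true conditional kernel $\mathcal{L}(\X^{cat}_{G_m} \mid \X^{con}_{G_m} = \cdot, W)$; the algorithm then evaluates this kernel at the freshly drawn knockoff value $\widetilde{\X}^{con}_{G_m}$ and samples $\widetilde{\X}^{cat}_{G_m}$ from it.

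The crux, and the step I expect to require the most care, is the composition of these two draws. Because $\widetilde{\X}^{con}_{G_m}$ is drawn from $\mathcal{L}(\X^{con}_{G_m} \mid W)$ and $\widetilde{\X}^{cat}_{G_m}$ is then drawn from the true kernel $\mathcal{L}(\X^{cat}_{G_m} \mid \X^{con}_{G_m} = \widetilde{\X}^{con}_{G_m}, W)$, the resulting pair has joint law equal to the chain-rule product displayed above, i.e. $(\widetilde{\X}^{con}_{G_m}, \widetilde{\X}^{cat}_{G_m}) \sim \mathcal{L}(\X_{G_m} \mid W)$. The delicate point is that the ``fit on $\X^{con}_{G_m}$, predict on $\widetilde{\X}^{con}_{G_m}$'' device is legitimate precisely because the learned conditional kernel is a fixed function of its continuous argument: substituting the knockoff value and integrating against its law reproduces the joint distribution rather than, say, a product of marginals. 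Once this substep is verified, the group-$m$ draw of Algorithm \ref{alg:seqknockoff} coincides in distribution with the group-$m$ draw of Algorithm \ref{alg:groupModelX}. Applying this matching across all $m$ and then invoking Lemma \ref{lem:gsknock} yields that the sequential construction satisfies the group model-X knockoff requirements.
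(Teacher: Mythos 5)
Your proposal is correct and follows essentially the same route as the paper: establish that the continuous substep reproduces $\mathcal{L}(\X^{con}_{G_m}\mid \X_{-G_m},\widetilde{\X}_{\cup_{j=1}^{m-1}G_j})$ and the categorical substep reproduces the true kernel of $\X^{cat}_{G_m}$ given the continuous part and the conditioning set, compose the two by the chain rule to conclude $\widetilde{\X}_{G_m}\mid \X_{-G_m},\widetilde{\X}_{\cup_{j=1}^{m-1}G_j}\eqd \X_{G_m}\mid \X_{-G_m},\widetilde{\X}_{\cup_{j=1}^{m-1}G_j}$, and then invoke Lemma \ref{lem:gsknock}. Your writeup is in fact more explicit than the paper's about why the ``fit on $\X^{con}_{G_m}$, predict on $\widetilde{\X}^{con}_{G_m}$'' device is valid, but the argument is the same.
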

\begin{proof}
When the model is correctly specified and true parameters are used, we have
\begin{eqnarray*}
\widetilde{\X}_{G_m}^{con}|\X_{-G_m}, \widetilde{\X}_{\cup_{j=1}^{m-1}G_j}\eqd \X_{G_m}^{con}|\X_{-G_m}, \widetilde{\X}_{\cup_{j=1}^{m-1}G_j}\\
\widetilde{\X}_{G_m}^{cat}|\widetilde{\X}_{G_m}^{{\color{black}con}}{\color{black}=x},\X_{-G_m}, \widetilde{\X}_{\cup_{j=1}^{m-1}G_j}\eqd \X_{G_m}^{{\color{black}cat}}|{\color{black}\X}_{G_m}^{\color{black}con}{\color{black}=x},\X_{-G_m}, \widetilde{\X}_{\cup_{j=1}^{m-1}G_j}
\end{eqnarray*}
which together implies
\begin{eqnarray*}
\widetilde{\X}_{G_m}|\X_{-G_m}, \widetilde{\X}_{\cup_{j=1}^{m-1}G_j}\eqd \X_{G_m}|\X_{-G_m}, \widetilde{\X}_{\cup_{j=1}^{m-1}G_j},
\end{eqnarray*}
and this follows the general group Model-X knockoff generation procedure. So applying Lemma 1 finishes the proof.
\end{proof}

The \textit{GS knockoff} procedure robustness against the misspecification of the distribution of $\X$. In real applications, when we have additional samples of $\X$ (for estimating the distribution of $\X$), we will be able to approximate the $\X$ distribution well. Theorem 2 of \citep{dai2021multiple} can be easily extended to show an FDR upper bound result for \textit{GS knockoffs}.


\begin{lemma}\label{lem:1}
Let $\W = f([\Z^1,\widetilde{\Z}^1],\cdots,[\Z^K,\widetilde{\Z}^K])$ where $f$ is an OSFF. Let $\epsilon \in \{\pm 1\}^M$ {\color{black} be an arbitrary sign sequence 
with $\epsilon_j = +1$ for all $j \in \mathcal{S}$ and $\epsilon_j \in \{\pm 1\}$} for all $j \in \mathcal{H}$. Then
$(W_1, \cdots, W_M) \eqd (W_1\cdot \epsilon_1, \cdots, W_M\cdot\epsilon_M)$. 
\end{lemma}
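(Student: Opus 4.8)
The plan is to reduce the statement about the random sign vector $\epsilon$ to a family of \emph{deterministic} sign-flip identities, one for each fixed subset $S\subseteq\mathcal{H}$, and then to prove each of those by the ``routing'' idea that is the heart of the simultaneous construction: every null coordinate is flipped in exactly one dataset in which it is genuinely null. Concretely, for $S\subseteq\mathcal{H}$ let $\eta^S\in\{\pm1\}^M$ have $\eta^S_j=-1$ for $j\in S$ and $\eta^S_j=+1$ otherwise. First I would argue that it suffices to show $\W\odot\eta^S\eqd\W$ for every fixed $S\subseteq\mathcal{H}$. Writing the random flip set as $S=\{j:\epsilon_j=-1\}\subseteq\mathcal{H}$ (this uses $\epsilon_j=+1$ on $\mathcal{S}$) and using $\epsilon\indep\W$, for any bounded measurable $g$ one gets
\[
\EE{g(W_1\epsilon_1,\dots,W_M\epsilon_M)}=\sum_{S\subseteq\mathcal{H}}\PP{\{j:\epsilon_j=-1\}=S}\,\EE{g(\W\odot\eta^S)}=\EE{g(\W)},
\]
where the last equality invokes the deterministic identity termwise.

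Next I would fix $S\subseteq\mathcal{H}$ and route. For each dataset $k$ set $\mathcal{H}^k=\{m\in[M]:H^k_{0m}\text{ holds}\}$; the union-null structure $H_{0m}=\cup_{k}H^k_{0m}$ gives $\mathcal{H}=\cup_{k=1}^K\mathcal{H}^k$, so for each $m\in S$ I can pick a dataset $k_m$ with $m\in\mathcal{H}^{k_m}$. Putting $S_k=\{m\in S:k_m=k\}$, the sets $S_1,\dots,S_K$ are pairwise disjoint, their union is $S$, and $S_k\subseteq\mathcal{H}^k$ for every $k$. Because the individual (group) knockoff construction together with the swap-compatible test statistics makes each per-dataset statistic swap-invariant on its own null coordinates (the hypothesis of Theorem~\ref{thm:1} applied inside dataset $k$, valid since $S_k\subseteq\mathcal{H}^k$), we have $[\Z^k,\widetilde{\Z}^k]_{\textnormal{Swap}(S_k)}\eqd[\Z^k,\widetilde{\Z}^k]$ for each $k$. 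The $K$ datasets are independent, so the joint law factorizes and the simultaneous swaps preserve it:
\[
\big([\Z^1,\widetilde{\Z}^1]_{\textnormal{Swap}(S_1)},\dots,[\Z^K,\widetilde{\Z}^K]_{\textnormal{Swap}(S_K)}\big)\eqd\big([\Z^1,\widetilde{\Z}^1],\dots,[\Z^K,\widetilde{\Z}^K]\big).
\]

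I would then apply $f$ to both sides. Measurability preserves equality in distribution, and $f$ of the right-hand tuple is exactly $\W$. For the left-hand tuple I apply the OSFF property one dataset at a time: changing the swap in dataset $k$ multiplies the output by $\epsilon(S_k)$, so chaining over all $K$ datasets yields
\[
f\big([\Z^1,\widetilde{\Z}^1]_{\textnormal{Swap}(S_1)},\dots,[\Z^K,\widetilde{\Z}^K]_{\textnormal{Swap}(S_K)}\big)=\W\odot\epsilon(S_1)\odot\cdots\odot\epsilon(S_K)=\W\odot\eta^S,
\]
the final step holding because the $S_k$ are disjoint with union $S$, so every coordinate of $S$ is flipped exactly once and every coordinate outside $S$ is untouched. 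Comparing the two displays gives $\W\odot\eta^S\eqd\W$, which closes the deterministic step and hence the lemma.

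I expect the main obstacle to be the routing together with the parity bookkeeping in the OSFF step: each null coordinate must be flipped in \emph{exactly one} dataset. Flipping it in all datasets would compose $K$ sign flips and reproduce $\eta^S$ only for specific parities of $K$, so the union-null structure $\mathcal{H}=\cup_k\mathcal{H}^k$ (guaranteeing a valid target dataset exists), the disjointness of the $S_k$, and the independence of the datasets all have to be used in concert. The remaining technical point is verifying that the per-dataset swap invariance genuinely holds on $\mathcal{H}^k$, which follows from the individual knockoff exchangeability on null coordinates, the conditional independence $\widetilde{\X}^k\indep Y^k\mid\X^k$, and the group-knockoff-compatibility of the chosen statistics.
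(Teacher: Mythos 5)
Your proposal is correct and follows essentially the same route as the paper's own proof: decompose $S\subseteq\mathcal{H}$ into disjoint pieces $S_k\subseteq\mathcal{H}_k$ (the paper uses the specific routing $S_k=S\cap\mathcal{H}_k\cap(\cup_{j<k}\mathcal{H}_j)^c$, a special case of your ``pick one dataset per null coordinate''), invoke per-dataset swap invariance and independence across datasets, and chain the OSFF property one dataset at a time. Your explicit conditioning on $\{j:\epsilon_j=-1\}$ to pass from the fixed-$S$ identity to the random sign vector is slightly more careful than the paper's one-line closing remark, but it is the same argument.
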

\begin{proof}
For any {\color{black} $V\subseteq \mathcal{H}$}, we can write it as the union of $K$ subsets {\color{black}$V=\cup_{k=1}^K V_k$, where $V_k\subseteq \mathcal{H}_k$} for $k = 1,\cdots,K$, and {\color{black}$V_{k1}\cap V_{k2} = \emptyset$} for all $k_1 \neq k_2$. In particular, we can let {\color{black}$V_k=S\cap \mathcal{H}_k\cap (\cup_{j=1}^{k-1}\mathcal{H}_j)^c$}. 
Since {\color{black}$V_k\subseteq \mathcal{H}_k$}, for $k\in [K]$, any statistics $[\Z^k, \widetilde{\Z}^k]=w([\X^k,\widetilde{\X}^k],\Y^k)$, because of Lemmas 1 and 2, the construction of knockoffs from Algorithms 1 and 2, satisfy 
$[\Z^k, \widetilde{\Z}^k] \eqd [\Z^k, \widetilde{\Z}^k]_{\textnormal{Swap}({\color{black}V}_k)} $. By the mutually independence between $[Z^1,\widetilde{Z}^1],\cdots, [Z^K,\widetilde{Z}^K]$, we have 

\[f\left([\Z^1,\widetilde{\Z}^1]_{\textnormal{Swap}({\color{black}V}_1)},\cdots,[\Z^K,\widetilde{\Z}^K]_{\textnormal{Swap}({\color{black}V}_K)}\right)\eqd f\left([\Z^1,\widetilde{\Z}^1],\cdots,[\Z^K,\widetilde{\Z}^K]\right).\] 

Using the definition of the OSFF, we have
\begin{eqnarray*}
&&f\left([\Z^1,\widetilde{\Z}^1]_{\textnormal{Swap}({\color{black}V}_1)},[\Z^2,\widetilde{\Z}^2]_{\textnormal{Swap}({\color{black}V}_2)},\cdots,[\Z^K,\widetilde{\Z}^K]_{\textnormal{Swap}({\color{black}V}_K)}\right)\\
&=&f\left([\Z^1,\widetilde{\Z}^1],[\Z^2,\widetilde{\Z}^2]_{\textnormal{Swap}({\color{black}V}_2)},\cdots,[\Z^K,\widetilde{\Z}^K]_{\textnormal{Swap}({\color{black}V}_K)}\right)\odot \epsilon({\color{black}V}_1)\\
&=&\cdots\\
&=&f\left([\Z^1,\widetilde{\Z}^1],[\Z^2,\widetilde{\Z}^2],\cdots,[\Z^K,\widetilde{\Z}^K]\right)\odot_{k=1}^K \epsilon({\color{black}V}_k)\\
&=&f\left([\Z^1,\widetilde{\Z}^1],[\Z^2,\widetilde{\Z}^2],\cdots,[\Z^K,\widetilde{\Z}^K]\right)\odot\epsilon({\color{black}V}).
\end{eqnarray*}

So we obtain \[\W=f\left([\Z^1,\widetilde{\Z}^1],\cdots,[\Z^K,\widetilde{\Z}^K]\right)\eqd f\left([\Z^1,\widetilde{\Z}^1],\cdots,[\Z^K,\widetilde{\Z}^K]\right)\odot\epsilon({\color{black}V})=\W\odot\epsilon({\color{black}V}).\] for any ${\color{black}V}\subseteq \mathcal{H}$. Therefore, by choosing ${\color{black}V}$ as the set $\{j:\epsilon_j=-1\}$, we have \[(W_1, \cdots, W_M) \eqd (W_1\cdot \epsilon_1, \cdots, W_M\cdot\epsilon_M).\] and thus we finish the proof of the lemma. {\color{black} This lemma implies that conditional on $(|W_1|,\cdots,|W_M|)$, the sign of $W_j$ for all $j\in \mathcal{H}$ i.i.d. $\sim{\pm 1}$ with equal probability of being 1 and being -1.}
\end{proof}

\begin{lemma}\label{lem:mar}
Assume $p_j\geq \textnormal{Uniform}[0,1]$ are i.i.d. for all nulls and are independent from non-nulls; {\color{black}that is, for all null $j$ and all $u \in$ [0,1], $P(p_j \leq u) \leq u$}. For ${\color{black}      o}=m,m-1,\cdots,1,0$, put $V^{+}({\color{black} o})=\#\{j:1\leq j\leq {\color{black} o}, p_j\leq 1/2, j\in \mathcal{H}\}$ and $V^{-}({\color{black} o})=\#\{ j:1\leq j\leq {\color{black} o}, p_j> 1/2, j\in \mathcal{H}\}$ with the convention that $V^{\pm}(0)=0$. Let $\mathcal{F}_{\color{black} o}$ be the filtration defined by knowing all the non-null $p$-values, as well as $V^{\pm}({\color{black} o}')$ for all ${\color{black} o}'\geq {\color{black} o}$. Then the process $M({\color{black} o})=\frac{V^{+}({\color{black} o})}{1+V^{-}({\color{black} o})}$ is a super-martingale running backward in time with respect to $\mathcal{F}_{\color{black} o}$. For any fixed $q$, $\widehat{{\color{black} o}}=\widehat{{\color{black} o}}_{+}$ or $\widehat{{\color{black} o}}=\widehat{{\color{black} o}}_{0}$ as defined in the proof of Theorem 1 are stopping times, and as consequences
\begin{eqnarray*}
\EE{\frac{\#\{j\leq \widehat{{\color{black} o}}:p_j\leq 1/2, j\in \mathcal{H}\}}{1+\#\{j\leq \widehat{{\color{black} o}}:p_j>1/2, j\in \mathcal{H}\}}}\leq 1
\end{eqnarray*}
\end{lemma}

\begin{proof}
The filtration $\mathcal{F}_{\color{black} o}$ contains the information of whether ${\color{black} o}$ is null and non-null process is known exactly. If ${\color{black} o}$ is non-null, then $M({\color{black} o}-1)=M({\color{black} o})$ and if ${\color{black} o}$ is null, we have
\begin{eqnarray*}
M({\color{black} o}-1)=\frac{V^{+}({\color{black} o})-\mathbbm{1}_{p_{\color{black} o}\leq 1/2}}{1+V^{-}({\color{black} o})-(1-\mathbbm{1}_{p_{\color{black} o}\leq 1/2})}=\frac{V^{+}({\color{black} o})-\mathbbm{1}_{p_{\color{black} o}\leq 1/2}}{\left(V^{-}({\color{black} o})+\mathbbm{1}_{p_{\color{black} o}\leq 1/2}\right)\vee 1}
\end{eqnarray*}
Given that nulls are \iid, we have
\begin{eqnarray*}
\PP{\mathbbm{1}_{p_{\color{black} o}\leq 1/2}|\mathcal{F}_{\color{black} o}}=\frac{V^{+}({\color{black} o})}{\left(V^{+}({\color{black} o})+V^{-}({\color{black} o})\right)}.
\end{eqnarray*}
So when ${\color{black} o}$ is null, we have
\begin{eqnarray*}
\EE{M({\color{black} o}-1)|\mathcal{F}_{\color{black} o}}&=&\frac{1}{V^{+}({\color{black} o})+V^{-}({\color{black} o})}\left[V^{+}({\color{black} o})\frac{V^{+}({\color{black} o})-1}{V^{-}({\color{black} o})+1}+V^{-}({\color{black} o})\frac{V^{+}({\color{black} o})}{V^{-}({\color{black} o})\vee 1}\right]\\
&=&\frac{V^{+}({\color{black} o})}{1+V^{-}({\color{black} o})}\mathbbm{1}_{V^{-}({\color{black} o})>0}+(V^{+}({\color{black} o})-1)\mathbbm{1}_{V^{-}({\color{black} o})=0}\\
&\leq &M({\color{black} o})
\end{eqnarray*}
This finishes the proof of super-martingale property. $\widehat{{\color{black} o}}$ is a stopping time with respect to $\{\mathcal{F}_{\color{black} o}\}$ since $\{\widehat{{\color{black} o}}\geq {\color{black} o}\}\in \mathcal{F}_{\color{black} o}$. So we have $\EE{M(\widehat{{\color{black} o}})}\leq \EE{M(m)}=\EE{\frac{\#\{j:p_j\leq 1/2, j\in \mathcal{H}\}}{1+\#\{j:p_j>1/2, j\in \mathcal{H}\}}}$.

Let $X=\#\{j:p_j\leq 1/2, j\in \mathcal{H}\}$, given that $p_j\geq \textnormal{Uniform}[0,1]$ independently for all nulls, we have $X\leq_d \textnormal{Binomial}(N,1/2)$. Let $Y\sim \textnormal{Binomial}(N,1/2)$ where $N$ is the total number of nulls. Given that $f(x)=\frac{x}{1+N-x}$ is non-decreasing, we have
\begin{eqnarray*}
\EE{\frac{X}{1+N-X}}&\leq &\EE{\frac{Y}{1+N-Y}}\\
&=&\sum_{i=1}^N (1/2)^{N}\frac{N!}{i!(N-i)!}\frac{i}{1+N-i}\\
&=&\sum_{i=1}^N \PP{Y=i-1}\\
&\leq &1.
\end{eqnarray*}
\end{proof}

\section*{Web Appendix B: Proof of the main theorem (Theorem 1
)} \label{sec:pf-thm1}
The proof of Theorem 1 follows the proof idea in \citep{barber2015}.
Let ${\color{black}l}=\#\{j:W_j\neq 0\}$ and assume without loss of generality that $|W_1|\geq |W_2|\geq\cdots\geq |W_{\color{black}l}|>0$. Define p-values $p_j=1$ if $W_j<0$ and $p_j=1/2$ if $W_j>0$, then Lemma 3 implies that null $p$-values are $i.i.d.$ with $p_j\geq \textnormal{Uniform}[0,1]$ and are independent from nonnulls. 

We first show the result for the knockoff+ threshold. Define $V=\#\{j\leq \widehat{k}_{+}: p_j\leq 1/2, j\in \mathcal{H}\}$ and $R=\#\{j\leq \widehat{k}_{+}:p_j\leq 1/2\}$ where $\widehat{k}_{+}$ satisfy that $|W_{\widehat{k}_{+}}|=\tau_{+}$ where $\tau_{+}$ is defined in {\color{black}equation (9) of the main paper}, we have
\begin{eqnarray*}
&&\EE{\frac{V}{R\vee 1}}=\EE{\frac{V}{R\vee 1}\mathbbm{1}_{\widehat{k}_{+}>0}}\\
&=&\EE{\frac{\#\{j\leq \widehat{k}_{+}: p_j\leq 1/2, j\in \mathcal{H}\}}{1+\#\{j\leq \widehat{k}_{+}: p_j> 1/2, j\in \mathcal{H}\}}\left(\frac{1+\#\{j\leq \widehat{k}_{+}: p_j> 1/2, j\in \mathcal{H}\}}{\#\{j\leq \widehat{k}_{+}:p_j\leq 1/2\}\vee 1}\right)\mathbbm{1}_{\widehat{k}_{+}>0}}\\
&\leq&\EE{\frac{\#\{j\leq \widehat{k}_{+}: p_j\leq 1/2, j\in \mathcal{H}\}}{1+\#\{ j\leq \widehat{k}_{+}: p_j> 1/2, j\in \mathcal{H}\}}}q\leq q,
\end{eqnarray*}
where the first inequality holds by the definition of $\widehat{k}_{+}$ and the second inequality holds by Lemma 4.

Similarly, for the knockoff threshold, we have $V=\#\{j\leq \widehat{k}_0: p_j\leq 1/2, j\in \mathcal{H}\}$ and $R=\#\{j\leq \widehat{k}_0:p_j\leq 1/2\}$ where $\widehat{k}_0$ satisfies that $|W_{\widehat{k}_0}|=\tau$ where $\tau$ is defined as in {\color{black}equation (8) of the main paper}, then
\begin{eqnarray*}
&& \EE{\frac{V}{R+q^{-1}}}\\
&=&\EE{\frac{\#\{ j\leq \widehat{k}_0: p_j\leq 1/2, j\in \mathcal{H}\}}{1+\#\{j\leq \widehat{k}_0: p_j> 1/2, j\in \mathcal{H}\}}\left(\frac{1+\#\{ j\leq \widehat{k}_0: p_j> 1/2, j\in \mathcal{H}\}}{\#\{j\leq \widehat{k}_0:p_j\leq 1/2\}+q^{-1}}\right)\mathbbm{1}_{\widehat{k}_0>0}}\\
&\leq&\EE{\frac{\#\{j\leq \widehat{k}_0: p_j\leq 1/2, j\in \mathcal{H}\}}{1+\#\{ j\leq \widehat{k}_0: p_j> 1/2, j\in \mathcal{H}\}}}q\leq q,
\end{eqnarray*}
where the first inequality holds by the definition of $\widehat{k}_0$ and the second inequality holds by Lemma 4.

\section*{Web Appendix C: Proof of Corollary 1
} \label{sec:pf-cor1}
\begin{proof}
First, we noticed that for $m\in S$,
\begin{eqnarray*}
W_m&=&f([\Z^1,\widetilde{\Z}^1],\cdots,[\Z^k,\widetilde{\Z}^k]_{\textnormal{Swap}(S)},\cdots,[\Z^K,\widetilde{\Z}^K])_m\\
&=&\prod_{j\in [K]\backslash k} (Z^j_m-\widetilde{Z}^j_m)(\widetilde{Z}_m^k-Z_m^k)\\
&=&-\prod_{j\in [K]} (Z^j_m-\widetilde{Z}^j_m)\\
&=&-f([\Z^1,\widetilde{\Z}^1],\cdots,[\Z^K,\widetilde{\Z}^K])_m,
\end{eqnarray*}
and for $m\notin S$
\begin{eqnarray*}
W_m&=&f([\Z^1,\widetilde{\Z}^1],\cdots,[\Z^k,\widetilde{\Z}^k]_{\textnormal{Swap}(S)},\cdots,[\Z^K,\widetilde{\Z}^K])_m\\
&=&\prod_{j\in [K]\backslash k} (Z^j_m-\widetilde{Z}^j_m)(Z_m^k-\widetilde{Z}_m^k)\\
&=&\prod_{j\in [K]} (Z^j_m-\widetilde{Z}^j_m)\\
&=&f([\Z^1,\widetilde{\Z}^1],\cdots,[\Z^K,\widetilde{\Z}^K])_m,
\end{eqnarray*}
So we have
\begin{eqnarray*}
W&=&f([\Z^1,\widetilde{\Z}^1],\cdots,[\Z^k,\widetilde{\Z}^k]_{\textnormal{Swap}(S)},\cdots,[\Z^K,\widetilde{\Z}^K])\\
&=&f([\Z^1,\widetilde{\Z}^1],\cdots,[\Z^k,\widetilde{\Z}^k],\cdots,[\Z^K,\widetilde{\Z}^K])\odot \epsilon(S)
\end{eqnarray*}
and thus the OSFF assumption is satisfied. 

When Model-X group knockoff construction is used, based on the construction of group Model-X knockoff, by Lemmas 1 and 2 we have $[\X^k \widetilde{\X}^k]\eqd [\X^k \widetilde{\X}^k]_{\text{GSwap}(S,G)}$, $\widetilde{\X}^k\indep Y^k | \X^k$. For any $S\subseteq\mathcal{H}$, we have $[\X^k, \widetilde{\X}^k]|Y^k\eqd [\X^k, \widetilde{\X}^k]_{\text{GSwap}(S,G)}|Y^k$. By the definition of knockoff-compatible statistics, we have 
\begin{eqnarray*}
&&[\Z^k,\widetilde{\Z}^k]_{\text{Swap}(S)}=t([\X^k, \widetilde{\X}^k]_{\text{GSwap}(S,G)}, Y^k)\\
&\eqd&t([\X^k, \widetilde{\X}^k], Y^k)\\
&=&[\Z^k,\widetilde{\Z}^k]
\end{eqnarray*}
 for any $S\subseteq\mathcal{H}$. When fixed group knockoff construction is used, by the definition of knockoff compatible statistics and sufficiency requirement, we have
 \begin{eqnarray*}
&&[\Z^k,\widetilde{\Z}^k]_{\text{Swap}(S)}=t([\X^k, \widetilde{\X}^k]^\top_{\text{GSwap}(S,G)}[\X^k, \widetilde{\X}^k]_{\text{GSwap}(S,G)}, [\X^k, \widetilde{\X}^k]_{\text{GSwap}(S,G)}^\top Y^k)\\
&\eqd&t([\X^k, \widetilde{\X}^k]^\top[\X^k, \widetilde{\X}^k], [\X^k, \widetilde{\X}^k]^\top Y^k)\\
&=&[\Z^k,\widetilde{\Z}^k].
\end{eqnarray*}

 Applying Theorem 1, we obtain the conclusion for this corollary.
\end{proof}

\section*{Web Appendix D: Details for the group knockoff construction algorithms} \label{app:algorithm}

For the sequential group knockoff construction in Algorithm 2, we have the following steps. 
For the continuous $\X$, the knockoff distribution can be generated by fitting the penalized multitask linear regression

\[ \widehat{\mathbf{B}}_m = \arg\min_{\mathbf{B}_m} \norm{ \X_{G_m}^{con} - [\X_{-G_m},\widetilde{\X}_{\cup_{j=1}^{m-1}G_j} ] \mathbf{B}_m }_{Fro}^2 + \lambda \norm{\mathbf{B}_m}_{l_1 / l_2} ,\]

where $\norm{\cdot}_{Fro}$ is the Frobenius norm, and the $\norm{\cdots}_{l_1 / l_2}$ is defined as $\norm{\mathbf{B}}_{l_1 / l_2} = \sum_i \sqrt{\sum_j B_{ij}}$. Then
\begin{align*}
    \widehat{\mathbf{\mu}}_m &= [\X_{-G_m},\widetilde{\X}_{\cup_{j=1}^{m-1}G_j} ] \widehat{\mathbf{B}}_m, \text{~~and} \\
    \widehat{\Sigma}_m &= \frac{1}{n} (\X_{G_m}^{con} -  \widehat{\mathbf{\mu}}_m)^{\top} (\X_{G_m}^{con} -  \widehat{\mathbf{\mu}}_m) .
\end{align*}

The high-dimensional multitask regression problem can be reformulated into a group lasso problem as described in Section 2.2 of \citep{dai2016knockoff}.

For the penalized multinomial regression to construct $\X_{G_m}^{cat}$, we use the penalty and fit the model with the R package glmnet. {\color{black} The penalized log-likelihood is 
\[
\widehat{\mathbf{B}}_m = \arg\min_{\mathbf{B}_m} \{- \frac{1}{n} \sum_{i=1}^{n} \log p_{i} + \lambda \norm{\mathbf{B}_m}_{l_1 / l_2} \},
\]
where \( p_i = \frac{\exp(\eta_{il})}{\sum_{l'=1}^{L} \exp(\eta_{il'})}\) for $l=\X_{iG_m}^{cat}$ and $\eta_{il}$ is the $(i,l)$th element of $[\X_{G_m}^{con},\X_{-G_m},\widetilde{\X}_{\cup_{j=1}^{m-1}G_j}]\mathbf{B}_m$.

We use a 10-fold cross-validation method to select penalty parameters with minimal mean square error for the continuous outcome or misclassification error rate for the categorical outcome.
} 

\section*{Web Appendix E: Additional simulation details} \label{app:sim}
\subsection*{E.1: Simulation for {\color{black} \textbf{Setting 1} when $K=3$}}
\subsection*{E.1.1: Data Generation}

We set each dataset to have the same sample size {\color{black}{(1)$n_k=1000$; (2)$n_k=200$}} and the same number of groups of features {\color{black}$M=40$} for each dataset. We have 4 features per group with 75\% continuous variables and 25\% categorical variables leading to a total of $p_k=160$ variables. {\color{black} We simulate independent $X^k$s for $k \in [K]$ such that
\[\X_i^k \sim \mathcal{N} (\mathbf{0}, \Sigma^{k}) ~\text{for}~ i \in [n_k],\]
where $\Sigma^k \in \R^{p_k \times p_k}$ with diagonal elements $\Sigma_{jj}^{k}=1$ for $j \in [p_k]$, within-group correlations $\Sigma_{ji}^{k}=\rho_{k}$ for $i \neq j$ in the same group (i.e., $\{i,j\}\subset G_{km}$ for some $m\in [M]$) and between-group correlations $\Sigma_{ji}^{k}=\gamma_{k}\rho_{k}$ for $j	
\neq i$ in different groups (i.e., there is no $m\in [M]$ such that $\{i,j\}\subset G_{km}$). Within each group, we randomly select one variable and transform it into a three-level categorical variable by breaking it down using the 25th and 75th percentiles. Then, we create 2 dummy variables for this categorical variable and consider the 3 continuous variables and 2 dummy variables as a group.} Let $\bar{\X}^k$ denote the expanded design matrix of $\X^k$ after replacing each categorical variable with dummy variables, then we have $\bar{p}_k=200$ for $k \in [K]$ columns in total and 5 columns per each group. The group index for the expanded design matrix will be $G_{km} = \{5m-4, \cdots, 5m\}$ for $k \in [K], m \in [M]$). 

Next, we generate the coefficients $\beta^1, \cdots,\beta^K$ for the $K$ experiments. 
We denote $s_0$ as the number of groups of simultaneous signals among the $K$ datasets, $s_k$ as the number of groups of signals specifically for the $k$-th datasets, $s_{ij}$ as the number of groups of mutual signals in $i$-th and $j$-th datasets. We consider two \textbf{scenarios}: (1) both directions and strengths of the mutual signals are the same among the $K$ datasets, and (2) only the directions of the mutual signals are the same but the signal strengths are different among the K datasets. For each dataset, the signal strengths within each group $m \in [M]$ are identical.

For \textbf{Scenario 1}, we sample $\omega_{j} \in \mathbb{R}^{s_j}, j \in \{0,1,2,3,12,13,23\}$, with their elements $\omega_{ji}\sim \textnormal{Uniform}[0,{\color{black}A_j}]$ independent for $ i=1,\cdots, s_j$. Then we sample $\epsilon \in \{-1,1\}^M$ where $\epsilon_m$ are independently sampled from Rademacher distribution for $l=1,\cdots,M$. With $K=3$, the coefficients $\beta^1, \beta^2,\beta^3$ are determined by:

\begin{eqnarray*}
\beta^1&=&((\omega_0^\top, \omega_1^\top, \mathbf{0}_{s_2}^\top, \mathbf{0}_{s_3}^\top, \omega_{12}^\top, \omega_{13}^\top, \mathbf{0}_{s_{23}}^\top, \mathbf{0}_{p_s}^\top)^\top  \odot \epsilon )\otimes \mathbf{1}_5,\\
\beta^2&=&((\omega_0^\top , \mathbf{0}_{s_1}^\top,\omega_2^\top, \mathbf{0}_{s_3}^\top, \omega_{12}^\top, \mathbf{0}_{s_{13}}^\top, \omega_{23}^\top, \mathbf{0}_{p_s}^\top)^\top  \odot \epsilon)\otimes \mathbf{1}_5 ,\\
\beta^3&=&((\omega_0^\top , \mathbf{0}_{s_1}^\top, \mathbf{0}_{s_2}^\top, \omega_3^\top, \mathbf{0}_{s_{12}}^\top, \omega_{13}^\top, \omega_{23}^\top, \mathbf{0}_{p_s}^\top)^\top  \odot \epsilon)\otimes \mathbf{1}_5,
\end{eqnarray*}
where $\odot$ is the Hadamard product, and $p_s=M-s_0-s_1-s_{2}-s_{3}-s_{12}-s_{13}-s_{23}$.\\

For \textbf{Scenario 2}, we generate $\omega_{jk} \in \mathbb{R}^{s_j}$ for $k\in [K],j \in \{0,1,2,3,12,13,23\}$ from $\textnormal{Uniform}[0,{\color{black}A_j}]$ independently; for example, we sample $\omega_{0ki} \sim \textnormal{Uniform}[0,{\color{black}A_0}]$ independently for $k \in [K]$ and $i=1,\dots, s_0$. We generate $\epsilon$ the same way as described in \textbf{Scenario 1}. The coefficients $\beta^1, \beta^2, \beta^3$ are determined by:

\begin{eqnarray*}
\beta^1&=&((\omega_{01}^\top , \omega_{11}^\top, \mathbf{0}_{s_2}^\top, \mathbf{0}_{s_3}^\top, \omega_{121}^\top, \omega_{131}^\top, \mathbf{0}_{s_{23}}^\top, \mathbf{0}_{p_s}^\top)^\top  \odot \epsilon )\otimes \mathbf{1}_5,\\
\beta^2&=&((\omega_{02}^\top , \mathbf{0}_{s_1}^\top,\omega_{22}^\top, \mathbf{0}_{s_3}^\top, \omega_{122}^\top, \mathbf{0}_{s_{13}}^\top, \omega_{232}^\top, \mathbf{0}_{p_s}^\top)^\top  \odot \epsilon)\otimes \mathbf{1}_5 ,\\
\beta^3&=&((\omega_{03}^\top , \mathbf{0}_{s_1}^\top, \mathbf{0}_{s_2}^\top, \omega_{33}^\top, \mathbf{0}_{s_{12}}^\top, \omega_{133}^\top, \omega_{233}^\top, \mathbf{0}_{p_s}^\top)^\top  \odot \epsilon)\otimes \mathbf{1}_5,
\end{eqnarray*}
where $\odot$ is the Hadamard product, and $p_s=M-s_0-s_1-s_{2}-s_{3}-s_{12}-s_{13}-s_{23}$.\\

For \textbf{continuous} setting, $Y^k$s are obtained from the following linear model: 
\begin{eqnarray*}\label{eqn:linmod}
Y^k&=&\bar{\X}^k\beta^{k}+\varepsilon^k,
\end{eqnarray*}
where $\varepsilon^k\sim \mathcal{N}(0,\sigma_k^2)$ for $k=1,2,3$, and $\sigma_k$ is the signal noise ratio. 

For \textbf{binary} setting, $Y^k$s are obtained from logistic models: 
\begin{eqnarray*}
Y^k\sim \textnormal{Bernoulli} \left(\frac{\exp(\alpha_k+\bar{\X}^k\beta^{k})}{1+\exp(\alpha_k+\bar{\X}^k\beta^{k})}\right),
\end{eqnarray*}
where $k=1,2,3.$ 

For \textbf{mixed} setting, we generate the latent outcome $\overline{Y}^k$s for $k \in [K]$ from the linear models:
\begin{eqnarray*}
\overline{Y}^k&=&\bar{\X}^k\beta^{k}+\varepsilon^k,
\end{eqnarray*}
where $\varepsilon^k\sim \mathcal{N}(0,\sigma_k^2)$ for $k=1,2,3$, and $\sigma_k$ is the signal noise ratio. Then for continuous outcome $Y^k$, we set $Y^k=\overline{Y}^k$; For binary outcome $Y^k$, we set a threshold for $\overline{Y}^k.$

Here we set $Y^1=\overline{Y}^1$, $Y^3=\overline{Y}^3$ and set a threshold for $\overline{Y}^2$ to construct the binary $Y^2$: \[Y^2 = \mathbbm{1}\{\overline{Y}^2 \geq 0\}.\]. 

\subsection*{E.1.2: Parameter settings} \label{par:E1.2}
We conduct simulations to check the effects of sparsity levels $s_0, s_1, s_2,s_3,s_{12},s_{13},s_{23}$ and different correlation structures. We set the targeted FDR $q=0.2$. We set the amplitude of signals {\color{black}$A_0=2,A_1=A_2=A_3=A_{12}=A_{13}=A_{23}=1$}, the within-group correlations $\rho_k=0.5, ~\text{for}~ k \in [K]$, and the between-group correlations are set to be $\gamma_k\rho_k, ~\text{for}~ k \in [K]$, with the default correlation ratio $\gamma_k=0.1$, and the signal noise parameter $\sigma_1=1$, $\sigma_2=2,\sigma_3=1$ for continuous and mixed settings, $\alpha_1=1$, $\alpha_2=2,\alpha_3=1$ for binary setting. To understand the effects of sparsity levels, within and between group correlations respectively, we vary one of three kinds of parameters (sparsity levels parameters, within-group correlation parameters, and correlation ratio parameters) in each simulation study and fix the other two kinds of parameters.  {\color{black}To maintain the broad applicability of our study, we explore three choices including only simultaneous signals, both simultaneous signals and non-simultaneous signals exist in one dataset and two datasets, and both simultaneous signals and non-simultaneous signals exist in two datasets. These choices are frequently observed in the N3C database.} 
\begin{itemize}
\item Sparsity level parameters: $s_0, s_1, s_2,s_3,s_{12},s_{13},s_{23}.$ We fix $\gamma_k=0.1$, and $\rho_k=0.5$.
 {\color{black}\begin{align*}
        &\text{1. Fixing $s_1=s_2=s_3=s_{12}=s_{13}=s_{23}=0$, we vary $s_0 = 6,8,10,12,14,16,18,20$;}\\
        &\text{2. Fixing $s_0 =12,s_{12}=s_{23}=1,s_{13}=0$, we vary $s_1=s_2=s_3= 2,3,4,5,6,7,8$;}\\
        &\text{3. Fixing $s_0 =12,s_1=s_2=s_3=0$, we vary $s_{12}=s_{13}=s_{23}= 2,3,4,5,6,7,8$.}\\
    \end{align*}}
\item Within-group correlation parameters: $\rho_1, \rho_2,\rho_3$. We fix $\gamma_k=0.1$ and vary within-group correlations $\rho_1=\rho_2=\rho_3 \in \{0.05, 0.15, \dots, 0.95\}$ for the following three choice of $s_0, s_1, s_2,s_3,s_{12},s_{13},s_{23}$.
{\color{black}\begin{align*}
       \textbf{choice 1}:& s_0=12,s_1=s_2=s_3=s_{12}=s_{13}=s_{23}=0;\\
       \textbf{choice 2}:& s_0=12,s_1=s_2=s_3=6,s_{12}=s_{23}=2,s_{13}=0;\\
       \textbf{choice 3}:& s_0=12,s_1=s_2=s_3=0,s_{12}=s_{13}=s_{23}=6.
    \end{align*}}

\item Correlation ratio parameters: $\gamma_1, \gamma_2,\gamma_3$. We fix $\rho_k=0.5$ and vary correlation ratio $\gamma_1=\gamma_2=\gamma_3 \in \{0,0.05,0.1,\dots,0.5\}$ for the following three choice of $s_0, s_1, s_2,s_3,s_{12},s_{13},s_{23}$. Then, the between-group correlations are calculated as $\rho_k\gamma_k$.
{\color{black}\begin{align*}
       \textbf{choice 1}: & s_0=12,s_1=s_2=s_3=s_{12}=s_{13}=s_{23}=0;\\
       \textbf{choice 2}: & s_0=12,s_1=s_2=s_3=6,s_{12}=s_{23}=2,s_{13}=0;\\
       \textbf{choice 3}: & s_0=12,s_1=s_2=s_3=0,s_{12}=s_{13}=s_{23}=6.
    \end{align*}}
\end{itemize}

\subsection*{E.2: Simulation for {\color{black} \textbf{Setting 1} when $K=4$}}
\subsection*{E.2.1: Data Generation}
Our design matrices $\X^k$s {\color{black}{with $n_k=1000$}}, the coefficients $\beta^1, \cdots, \beta^K \in \R^{p_k}$ and outcome variable $Y_i^k$ are generated the same as $K=3$ setting with $j$ extension to $j \in \{\color{black}{0,1,2,3,4,12,13,14,23,24,34,123,124,134,234}\}$. With $K=4$, the coefficients $\beta^1, \beta^2,\beta^3,\beta^4$ for \textbf{Scenario 1} are determined by:

{\color{black}\begin{eqnarray*}
\beta^1&=&((\omega_0^\top , \omega_1^\top, \mathbf{0}_{s_2}^\top, \mathbf{0}_{s_3}^\top, \mathbf{0}_{s_4}^\top,\omega_{12}^\top, \omega_{13}^\top,\omega_{14}^\top,\mathbf{0}_{s_{23}}^\top, \mathbf{0}_{s_{24}}^\top,\mathbf{0}_{s_{34}}^\top,\omega_{123}^\top,\omega_{124}^\top,\omega_{134}^\top,\mathbf{0}_{234}^\top,\mathbf{0}_{p_s}^\top)^\top  \odot \epsilon)\otimes \mathbf{1}_5 ,\\
\beta^2&=&((\omega_0^\top , \mathbf{0}_{s_1}^\top,\omega_2^\top, \mathbf{0}_{s_3}^\top, \mathbf{0}_{s_4}^\top,\omega_{12}^\top, \mathbf{0}_{s_{13}}^\top, \mathbf{0}_{s_{14}}^\top, \omega_{23}^\top, \omega_{24}^\top, \mathbf{0}_{s_{34}}^\top, \omega_{123}^\top,\omega_{124}^\top,\mathbf{0}_{134}^\top,\omega_{234}^\top,\mathbf{0}_{p_s}^\top)^\top  \odot \epsilon)\otimes \mathbf{1}_5 ,\\
\beta^3&=&((\omega_0^\top , \mathbf{0}_{s_1}^\top, \mathbf{0}_{s_2}^\top, \omega_3^\top, \mathbf{0}_{s_4}^\top, \mathbf{0}_{s_{12}}^\top, \omega_{13}^\top, \mathbf{0}_{s_{14}}^\top, \omega_{23}^\top, \mathbf{0}_{s_{24}}^\top,\omega_{34}^\top,\omega_{123}^\top,\mathbf{0}_{124}^\top,\omega_{134}^\top,\omega_{234}^\top, \mathbf{0}_{p_s}^\top)^\top  \odot \epsilon)\otimes \mathbf{1}_5,\\
\beta^4&=&((\omega_0^\top , \mathbf{0}_{s_1}^\top, \mathbf{0}_{s_2}^\top, \mathbf{0}_{s_3}^\top, \omega_{4}^\top, \mathbf{0}_{s_{12}}^\top, \mathbf{0}_{s_{13}}^\top, \omega_{14}^\top, \mathbf{0}_{s_{23}}^\top, \omega_{24}^\top,\omega_{34}^\top, \mathbf{0}_{123}^\top,\omega_{124}^\top,\omega_{134}^\top,\omega_{234}^\top,\mathbf{0}_{p_s}^\top)^\top  \odot \epsilon)\otimes \mathbf{1}_5,
\end{eqnarray*}}

where $\odot$ is the Hadamard product, and $p_s=M-s_0-s_1-s_2-s_3-s_4-s_{12}-s_{13}-s_{14}-s_{23}-s_{24}-s_{34}-{\color{black}s_{123}-s_{124}-s_{134}-s_{234}}$.\\

for \textbf{Scenario 2}, 
{\color{black}\begin{eqnarray*}
\beta^1&=&((\omega_{01}^\top , \omega_{11}^\top, \mathbf{0}_{s_2}^\top, \mathbf{0}_{s_3}^\top, \mathbf{0}_{s_4}^\top,\omega_{121}^\top, \omega_{131}^\top,\omega_{141}^\top,\mathbf{0}_{s_{23}}^\top, \mathbf{0}_{s_{24}}^\top,\mathbf{0}_{s_{34}}^\top,\omega_{1231}^\top,\omega_{1241}^\top,\omega_{1341}^\top,\mathbf{0}_{234}^\top,\mathbf{0}_{p_s}^\top)^\top  \odot \epsilon)\otimes \mathbf{1}_5 ,\\
\beta^2&=&((\omega_{02}^\top , \mathbf{0}_{s_1}^\top,\omega_{22}^\top, \mathbf{0}_{s_3}^\top, \mathbf{0}_{s_4}^\top,\omega_{122}^\top, \mathbf{0}_{s_{13}}^\top, \mathbf{0}_{s_{14}}^\top, \omega_{232}^\top, \omega_{242}^\top, \mathbf{0}_{s_{34}}^\top, \omega_{1232}^\top,\omega_{1242}^\top,\mathbf{0}_{134}^\top,\omega_{2342}^\top,\mathbf{0}_{p_s}^\top)^\top  \odot \epsilon)\otimes \mathbf{1}_5 ,\\
\beta^3&=&((\omega_{03}^\top , \mathbf{0}_{s_1}^\top, \mathbf{0}_{s_2}^\top, \omega_{33}^\top, \mathbf{0}_{s_4}^\top, \mathbf{0}_{s_{12}}^\top, \omega_{133}^\top, \mathbf{0}_{s_{14}}^\top, \omega_{233}^\top, \mathbf{0}_{s_{24}}^\top,\omega_{343}^\top, \omega_{1233}^\top,\mathbf{0}_{124}^\top,\mathbf{0}_{1343}^\top,\omega_{2343}^\top,\mathbf{0}_{p_s}^\top)^\top  \odot \epsilon)\otimes \mathbf{1}_5,\\
\beta^4&=&((\omega_{04}^\top , \mathbf{0}_{s_1}^\top, \mathbf{0}_{s_2}^\top, \mathbf{0}_{s_3}^\top, \omega_{44}^\top, \mathbf{0}_{s_{12}}^\top, \mathbf{0}_{s_{13}}^\top, \omega_{144}^\top, \mathbf{0}_{s_{23}}^\top, \omega_{244}^\top,\omega_{344}^\top, \mathbf{0}_{123}^\top,\omega_{1244}^\top,\omega_{1344}^\top,\omega_{2344}^\top,\mathbf{0}_{p_s}^\top)^\top  \odot \epsilon)\otimes \mathbf{1}_5,
\end{eqnarray*}}

where $\odot$ is the Hadamard product, and $p_s=M-s_0-s_1-s_2-s_3-s_4-s_{12}-s_{13}-s_{14}-s_{23}-s_{24}-s_{34}-{\color{black}s_{123}-s_{124}-s_{134}-s_{234}}$.\\

\textbf{Continuous} and \textbf{Binary} setting are the same as $K=3$. For \textbf{mixed} setting, we set $Y^1=\overline{Y}^1$, $Y^3=\overline{Y}^3$ and set a threshold for $\overline{Y}^2$ and  $\overline{Y}^4$ to construct the binary $Y^2$ and $Y^4$: \[Y^2 = \mathbbm{1}\{\overline{Y}^2 \geq 0\};\]\[Y^4 = \mathbbm{1}\{\overline{Y}^4 \geq 0\}.\]

\subsection*{E.2.2: Parameter settings} \label{par:E2.2}
We also conduct simulations to check the effects of sparsity levels and different correlation structures and set the targeted FDR $q=0.2$. As default, we set the amplitude of signals {\color{black}$A_0=4, A_1=A_2=A_3=A_4=A_{12}=A_{13}=A_{14}=A_{23}=A_{24}=A_{34}=A_{123}=A_{124}=A_{134}=A_{234}=1$}, the within-group correlations $\rho_1=0.5,\rho_2=0.4,\rho_3=0.5,\rho_4=0.6$, and the correlation ratios are set to be $\gamma_1=0.1,\gamma_2=0.15,\gamma_3=0.1,\gamma_4=0.05$, and the signal noise parameter $\sigma_1=1$, $\sigma_2=2,\sigma_3=1,\sigma_4=1$ for continuous and mixed settings, $\alpha_1=1, \alpha_2=2, \alpha_3=1, \alpha_4=1$ for binary setting. To understand the effects of sparsity levels, within and between group correlations respectively, we vary one of three kinds of parameters (sparsity levels parameters, within-group correlation parameters, and correlation ratio parameters) in each simulation study and fix the other two kinds of parameters. {\color{black}Similarly, to avoid loss of generality, we explore three choices including only simultaneous signals, both simultaneous signals and non-simultaneous signals exist in one dataset, and both simultaneous signals and non-simultaneous signals exist in three datasets.} 
\begin{itemize}
\item Sparsity level parameters: $s_0, s_1, s_2,s_3,s_4,s_{12},s_{13},s_{14},s_{23},s_{24},s_{34},s_{123},s_{124},s_{134},s_{234}.$ 
{\color{black}
\begin{align*}
    &\text{1. Vary $s_0 = 6,8,10,12,14,16,18,20$,}\\
    &\text{fixing $s_1=s_2=s_3=s_4=s_{12}=s_{13}=s_{14}=s_{23}=s_{24}=s_{34}=s_{123}=s_{124}=s_{134}=s_{234}=0$;}\\
    &\text{2. Vary $s_1=s_2=s_3=s_4=0,1,2,3,4,5,6$,}\\
    &\text{Fixing $s_0 =12,s_{12}=s_{13}=s_{14}=s_{23}=s_{24}=s_{34}=s_{123}=s_{124}=s_{134}=s_{234}=0$;}\\
    &\text{3. Vary $s_{123}=s_{124}=s_{134}=s_{234}= 0,1,2,3,4,5,6$;}\\
    &\text{Fixing $s_0 =12,s_1=s_2=s_3=s_4=s_{12}=s_{13}=s_{14}=s_{23}=s_{24}=s_{34}=0$.}
\end{align*}
}

\item Within-group correlation parameters: $\rho_1, \rho_2,\rho_3,\rho_4$. We fix {\color{black}$\gamma_1=0.1,\gamma_2=0.15,\gamma_3=0.1,\gamma_4=0.05$} and vary within-group correlations $\rho_1=\rho_2=\rho_3=\rho_4 \in \{0.05, 0.15, \dots, 0.95\}$ for the following choice of $s_0, s_1, s_2,s_3,s_4,s_{12},s_{13},s_{14},s_{23},s_{24},s_{34},s_{123},s_{124},s_{134},s_{234}$.
{\color{black}
\begin{align*}
       \textbf{choice 1}:& s_0=12,s_1=s_2=s_3=s_4=s_{12}=s_{13}=s_{14}=s_{23}=s_{24}=s_{34}=0,\\&s_{123}=s_{124}=s_{134}=s_{234}=0;\\
       \textbf{choice 2}:& s_0=12,s_1=s_2=s_3=s_4=4, s_{12}=s_{13}=s_{14}=s_{23}=s_{24}=s_{34}=0,\\&s_{123}=s_{124}=s_{134}=s_{234}=0;\\
       \textbf{choice 3}: & s_0=12,s_{123}=s_{124}=s_{134}=s_{234}=6,s_1=s_2=s_3=s_4=0,\\&s_{12}=s_{13}=s_{14}=s_{23}=s_{24}=s_{34}=0.
    \end{align*}
}

\item Correlation ratio parameters: $\gamma_1, \gamma_2,\gamma_3,\gamma_4$. We fix {\color{black}$\rho_1=0.5,\rho_2=0.4,\rho_3=0.5,\rho_4=0.6$} and vary correlation ratio $\gamma_1=\gamma_2=\gamma_3=\gamma_4 \in \{0,0.05,0.1,\dots,0.5\}$ for the following choice of $s_0, s_1, s_2,s_3,s_4,s_{12},s_{13},s_{14},s_{23},s_{24},s_{34},\\s_{123},s_{124},s_{134},s_{234}$. Then, the between-group correlations are calculated as $\rho_k\gamma_k$.
{\color{black}

\begin{align*}
       \textbf{choice 1}:& s_0=12,s_1=s_2=s_3=s_4=s_{12}=s_{13}=s_{14}=s_{23}=s_{24}=s_{34}=0,\\&s_{123}=s_{124}=s_{134}=s_{234}=0;\\
       \textbf{choice 2}:& s_0=12,s_1=s_2=s_3=s_4=4, s_{12}=s_{13}=s_{14}=s_{23}=s_{24}=s_{34}=0,\\&s_{123}=s_{124}=s_{134}=s_{234}=0;\\
       \textbf{choice 3}: & s_0=12,s_{123}=s_{124}=s_{134}=s_{234}=6,s_1=s_2=s_3=s_4=0,\\&s_{12}=s_{13}=s_{14}=s_{23}=s_{24}=s_{34}=0.
\end{align*}
}

\end{itemize}
\subsection*{E.3: Simulation for {\color{black} \textbf{Setting 1} when $K=5$}}
\subsection*{E.3.1: Data Generation}
Similar with K=4, our design matrices $\X^k$s {\color{black}with $n_k=1000$}, the coefficients $\beta^1, \cdots, \beta^K \in \R^{p_k}$ and outcome variable $Y_i^k$ are generated the same as $K=3$ setting with $j$ extension to {\color{black}$j \in \{0,1,2,3,4,5,12,13,14,15,23,24,25,34,35,45,123,124,125,134,\\135,145,234,235,245,345,1234,1235,1245,1345,2345\}$}. With $K=5$, the coefficients $\beta^1, \beta^2,\beta^3,\beta^4,\beta^5$ for \textbf{Scenario 1} are determined by:\\
{\color{black}
$\beta^1=(\boldsymbol{\omega_1} \odot \epsilon)\otimes \mathbf{1}_5 ,
\beta^2 =(\boldsymbol{\omega_2}\odot \epsilon) \otimes \mathbf{1}_5,
\beta^3 =(\boldsymbol{\omega_3}\odot \epsilon) \otimes \mathbf{1}_5,
\beta^4 =(\boldsymbol{\omega_4}\odot \epsilon) \otimes \mathbf{1}_5,
\beta^5 =(\boldsymbol{\omega_5}\odot \epsilon) \otimes \mathbf{1}_5.$ where
$\boldsymbol{\omega_1}=
\begin{pmatrix}\omega_0 \\ \mathbf{0}_{m_s}\\ \omega_{s_{123}}\\ \omega_{s_{124}}\\ \omega_{s_{125}}\\ \omega_{s_{134}}\\ \omega_{s_{135}}\\ \omega_{s_{145}}\\ \mathbf{0}_{s_{234}}\\ \mathbf{0}_{s_{235}}\\ \mathbf{0}_{s_{245}}\\ \mathbf{0}_{s_{345}}\\ \omega_{s_{1234}}\\ \omega_{s_{1235}}\\ \omega_{s_{1245}}\\ \omega_{s_{1345}}\\ \mathbf{0}_{s_{2345}}\\ \mathbf{0}_{p_s}
\end{pmatrix}$,
$\boldsymbol{\omega_2}=
\begin{pmatrix}\omega_0\\ \mathbf{0}_{m_s}\\ \omega_{s_{123}}\\ \omega_{s_{124}}\\ \omega_{s_{125}} \\ \mathbf{0}_{s_{134}}\\ \mathbf{0}_{s_{135}}\\ \mathbf{0}_{s_{145}}\\ \omega_{s_{234}}\\ \omega_{s_{235}}\\ \omega_{s_{245}}\\ \mathbf{0}_{s_{345}}\\ \omega_{s_{1234}}\\ \omega_{s_{1235}}\\\omega_{s_{1245}}\\ \mathbf{0}_{s_{1345}}\\\omega_{s_{2345}}\\\mathbf{0}_{p_s}
\end{pmatrix}$,
$\boldsymbol{\omega_3}=
\begin{pmatrix}\omega_0\\ \mathbf{0}_{m_s}\\\omega_{s_{123}}\\ \mathbf{0}_{s_{124}}\\\mathbf{0}_{s_{125}}\\ \omega_{s_{134}}\\ \omega_{s_{135}}\\\mathbf{0}_{s_{145}}\\ \omega_{s_{234}}\\ \omega_{s_{235}}\\ \mathbf{0}_{s_{245}}\\ \omega_{s_{345}}\\\omega_{s_{1234}}\\ \omega_{s_{1235}}\\ \mathbf{0}_{s_{1245}}\\ \omega_{s_{1345}}\\ \omega_{s_{2345}}\\\mathbf{0}_{p_s}
\end{pmatrix}$,
$\boldsymbol{\omega_4}=
\begin{pmatrix}\omega_0\\ \mathbf{0}_{m_s}\\ \mathbf{0}_{s_{123}}\\\omega_{s_{124}}\\\mathbf{0}_{s_{125}}\\ \omega_{s_{134}}\\\mathbf{0}_{s_{135}}\\ \omega_{s_{145}}\\ \omega_{s_{234}}\\ \mathbf{0}_{s_{235}}\\ \omega_{s_{245}}\\ \omega_{s_{345}}\\ \omega_{s_{1234}}\\ \mathbf{0}_{s_{1235}}\\ \omega_{s_{1245}}\\ \omega_{s_{1345}}\\\omega_{s_{2345}}\\ \mathbf{0}_{p_s}
\end{pmatrix}$,
$\boldsymbol{\omega_5}=
\begin{pmatrix}\omega_0\\ \mathbf{0}_{m_s}\\ \mathbf{0}_{s_{123}}\\ \mathbf{0}_{s_{124}}\\ 
\omega_{s_{125}}\\\mathbf{0}_{s_{134}}\\ \omega_{s_{135}}\\ \omega_{s_{145}}\\\mathbf{0}_{s_{234}}\\ \omega_{s_{235}}\\ \omega_{s_{245}}\\ \omega_{s_{345}}\\ \mathbf{0}_{s_{1234}}\\ \omega_{s_{1235}}\\ \omega_{s_{1245}}\\ \omega_{s_{1345}}\\ \omega_{s_{2345}}\\ \mathbf{0}_{p_s}
\end{pmatrix}.$\\
}

Besides, $\odot$ is the Hadamard product, {\color{black}$m_s = s_1+s_2+s_3+s_4+s_5+s_{12}+s_{13}+s_{14}+s_{15}+s_{23}+s_{24}+s_{25}+s_{34}+s_{35}+s_{45}$,} {\color{black}and $p_s=M-s_0-m_s-s_{123}-s_{124}-s_{125}-s_{134}-s_{135}-s_{145}-s_{234}-s_{235}-s_{245}-s_{345}-s_{1234}-s_{1235}-s_{1245}-s_{1345}-s_{2345}.$}

For \textbf{Scenario 2}, 
{\color{black}
$\beta^1 =( \boldsymbol{\widetilde{\omega}_1} \odot\epsilon)\otimes \mathbf{1}_5 ,
\beta^2 = (\boldsymbol{\widetilde{\omega}_2}\odot \epsilon) \otimes \mathbf{1}_5,
\beta^3 = (\boldsymbol{\widetilde{\omega}_3}\odot \epsilon) \otimes \mathbf{1}_5,
\beta^4 = (\boldsymbol{\widetilde{\omega}_4}\odot \epsilon) \otimes \mathbf{1}_5,
\beta^5 = (\boldsymbol{\widetilde{\omega}_5}\odot \epsilon) \otimes \mathbf{1}_5.$ where\\
$\boldsymbol{\widetilde{\omega}_1}=
\begin{pmatrix}\omega_{01}\\ \mathbf{0}_{m_s}\\ \omega_{s_{1231}}\\ \omega_{s_{1241}}\\\omega_{s_{1251}}\\\omega_{s_{1341}}\\\omega_{s_{1351}}\\\omega_{s_{1451}}\\\mathbf{0}_{s_{234}}\\\mathbf{0}_{s_{235}}\\\mathbf{0}_{s_{245}}\\\mathbf{0}_{s_{345}}\\\omega_{s_{12341}}\\\omega_{s_{12351}}\\\omega_{s_{12451}}\\\omega_{s_{13451}}\\\mathbf{0}_{s_{2345}}\\\mathbf{0}_{p_s}
\end{pmatrix}$,
$\boldsymbol{\widetilde{\omega}_2}=
\begin{pmatrix}\omega_{02}\\ \mathbf{0}_{m_s}\\ \omega_{s_{1232}}\\ \omega_{s_{1242}}\\\omega_{s_{1252}}\\ \mathbf{0}_{s_{134}}\\ \mathbf{0}_{s_{135}}\\ \mathbf{0}_{s_{145}}\\ \omega_{s_{2342}}\\ \omega_{s_{2352}}\\ \omega_{s_{2452}}\\ \mathbf{0}_{s_{345}}\\ \omega_{s_{12342}}\\ \omega_{s_{12352}}\\ \omega_{s_{12452}}\\ \mathbf{0}_{s_{1345}}\\ \omega_{s_{23452}}\\ \mathbf{0}_{p_s}
\end{pmatrix}$,
$\boldsymbol{\widetilde{\omega}_3}=
\begin{pmatrix}\omega_{03}\\\mathbf{0}_{m_s}\\\omega_{s_{1233}}\\ \mathbf{0}_{s_{124}}\\ \mathbf{0}_{s_{125}}\\ \omega_{s_{1343}}\\ \omega_{s_{1353}}\\ \mathbf{0}_{s_{145}}\\ \omega_{s_{2343}}\\ \omega_{s_{2353}}\\ \mathbf{0}_{s_{245}}\\ \omega_{s_{3453}}\\ \omega_{s_{12343}}\\ \omega_{s_{12353}}\\ \mathbf{0}_{s_{1245}}\\ \omega_{s_{13453}}\\ \omega_{s_{23453}}\\\mathbf{0}_{p_s}
\end{pmatrix}$,
$\boldsymbol{\widetilde{\omega}_4}=
\begin{pmatrix}\omega_{04} \\\mathbf{0}_{m_s}\\ \mathbf{0}_{s_{123}}\\ \omega_{s_{1244}}\\ \mathbf{0}_{s_{125}}\\ \omega_{s_{1344}}\\ \mathbf{0}_{s_{135}}\\ \omega_{s_{1454}}\\ \omega_{s_{2344}}\\ \mathbf{0}_{s_{235}}\\ \omega_{s_{2454}}\\ \omega_{s_{3454}}\\ \omega_{s_{12344}}\\ \mathbf{0}_{s_{1235}}\\ \omega_{s_{12454}}\\ \omega_{s_{13454}}\\ \omega_{s_{23454}}\\ \mathbf{0}_{p_s},\\
\end{pmatrix}$,
$\boldsymbol{\widetilde{\omega}_5}=
\begin{pmatrix}\omega_{05}\\\mathbf{0}_{m_s}\\ \mathbf{0}_{s_{123}}\\ \mathbf{0}_{s_{124}}\\ 
\omega_{s_{1255}}\\ \mathbf{0}_{s_{134}}\\ \omega_{s_{1355}}\\ \omega_{s_{1455}}\\\mathbf{0}_{s_{234}}\\ \omega_{s_{2355}}\\ \omega_{s_{2455}}\\ \omega_{s_{3455}}\\ \mathbf{0}_{s_{1234}}\\ \omega_{s_{12355}}\\ \omega_{s_{12455}}\\ \omega_{s_{13455}}\\ \omega_{s_{23455}}\\ \mathbf{0}_{p_s}
\end{pmatrix}$.
}
Besides, $\odot$ is the Hadamard product, {\color{black}$m_s = s_1+s_2+s_3+s_4+s_5+s_{12}+s_{13}+s_{14}+s_{15}+s_{23}+s_{24}+s_{25}+s_{34}+s_{35}+s_{45}$,} {\color{black}and $p_s=M-s_0-m_s-s_{123}-s_{124}-s_{125}-s_{134}-s_{135}-s_{145}-s_{234}-s_{235}-s_{245}-s_{345}-s_{1234}-s_{1235}-s_{1245}-s_{1345}-s_{2345}.$}

\textbf{Continuous} and \textbf{Binary} setting are the same as $K=3$. For \textbf{mixed} setting, we set $Y^1=\overline{Y}^1$, $Y^3=\overline{Y}^3$ and set a threshold for $\overline{Y}^2$,  $\overline{Y}^4$ and $\overline{Y}^5$ to construct the binary $Y^2$, $Y^4$ and $Y^5$: \[Y^2 = \mathbbm{1}\{\overline{Y}^2 \geq 0\}.\]\[Y^4 = \mathbbm{1}\{\overline{Y}^4 \geq 0\}.\]
\[Y^5 = \mathbbm{1}\{\overline{Y}^5 \geq 0\}.\]

\subsection*{E.3.2: Parameter settings} \label{par:E3.2}
We also conduct simulations to check the effects of sparsity levels and different correlation structures and set the targeted FDR $q=0.2$. As default, we set the amplitude of signals {\color{black}$A_0=4,A_1=A_2=A_3=A_4=A_5=A_{12}=A_{13}=A_{14}=A_{15}=A_{23}=A_{24}=A_{25}=A_{34}=A_{35}=A_{45}=A_{123}=A_{124}=A_{125}=A_{134}=A_{135}=A_{145}=A_{234}=A_{235}=A_{245}=A_{345}=A_{1234}=A_{1235}=A_{1245}=A_{1345}=A_{2345}=1$}, the within-group correlations $\rho_1=0.5,\rho_2=0.4,\rho_3=0.5,\rho_4=0.6,\rho_5=0.5$, and the correlation ratios are set to be $\gamma_1=0.1,\gamma_2=0.15,\gamma_3=0.1,\gamma_4=0.05,\gamma_5=0.1$, and the signal noise parameter $\sigma_1=1$, $\sigma_2=2,\sigma_3=1,\sigma_4=1,\sigma_5=1$ for continuous and mixed settings, $\alpha_1=1$, $\alpha_2=2,\alpha_3=1,\alpha_4=1,\alpha_5=1$ for binary setting. To understand the effects of sparsity levels, within and between group correlations respectively, we vary one of three kinds of parameters (sparsity levels parameters, within-group correlation parameters, and correlation ratio parameters) in each simulation study and fix the other two kinds of parameters. {\color{black}Similarly, to avoid loss of generality, we explore three choices, including only simultaneous signals exist, simultaneous signals and non-simultaneous signals exist in three datasets, and simultaneous signals and non-simultaneous signals exist in four datasets.} 
\begin{itemize}
\item Sparsity level parameters: $s_0, s_1, s_2,s_3,s_4,s_5,s_{12},s_{13},s_{14},s_{15},s_{23},s_{24},s_{25},s_{34},s_{35},s_{45},\\s_{123},s_{124},s_{125},s_{134},s_{135},s_{145},s_{234},s_{235},s_{245},s_{345},s_{1234}=s_{1235}=s_{1245}=s_{1345}=s_{2345}.$ 
{\color{black}
 \begin{align*}
       &\text{1. Vary $s_0 = 10,12,14,16,18,20,$}\\
       &\text{Fixing } s_1=s_2=s_3=s_4=s_5=s_{12}=s_{13}=s_{14}=s_{15}=s_{23}=s_{24}=s_{25}=s_{34}=s_{35}=s_{45}=0,\\
    &s_{123}=s_{124}=s_{125}=s_{134}=s_{135}=s_{145}=s_{234}=s_{235}=s_{245}=s_{345}=0,\\&s_{1234}=s_{1235}=s_{1245}=s_{1345}=s_{2345}=0\\
       &\text{2. Vary $s_{234}=s_{235}=s_{245}=s_{345}= 0,1,2,3,4,5,6$,}\\
    &\text{Fixing $s_0=12,s_1=s_2=s_3=s_4=s_5=s_{12}=s_{13}=s_{14}=s_{15}=s_{23}=s_{24}=s_{25}=0,$}\\ 
    &\text{$s_{34}=s_{35}=s_{45}=s_{123}=s_{124}=s_{125}=s_{134}=s_{135}=s_{145}=0$,}\\
       &\text{3. Vary $s_{1234}=s_{1235}=s_{1245}=s_{1345}=s_{2345}=0,1,2,3,4,5$,}\\
       &\text{Fixing $s_1=s_2=s_3=s_4=s_5=s_{12}=s_{13}=s_{14}=s_{15}=s_{23}=s_{24}=s_{25}=s_{34}=s_{35}=s_{45}=0$,}\\
       &\text{$s_{123}=s_{124}=s_{125}=s_{134}=s_{135}=s_{145}=s_{234}=s_{235}=s_{245}=s_{345}=0$;}\\
    \end{align*}
}
\item Within-group correlation parameters: $\rho_1, 
 \rho_2, \rho_3, \rho_4, \rho_5$.\\
We fix {\color{black}$\gamma_1=0.1,\gamma_2=0.15,\gamma_3=0.1,\gamma_4=0.05,\gamma_5=0.1$} and vary within-group correlations $\rho_1=\rho_2=\rho_3=\rho_4=\rho_5 \in \{0.05, 0.15, \dots, 0.95\}$ for the following choice of {\color{black}$s_0, s_1, s_2,s_3,s_4,s_5,s_{12},s_{13},s_{14},s_{15},s_{23},s_{24},s_{25},s_{34},s_{35},s_{45},\\s_{123},s_{124},s_{125},s_{134},s_{135},s_{145},s_{234},s_{235},s_{245},s_{345},s_{1234},s_{1235},s_{1245},s_{1345},s_{2345}.$ }
{\color{black}
\begin{align*}
  \textbf{choice 1}:&\ s_0=12,s_1=s_2=s_3=s_4=s_5=0,\\
  &\ s_{12}=s_{13}=s_{14}=s_{15}=s_{23}=s_{24}=s_{25}=s_{34}=s_{35}=s_{45}=0,\\
  &\ s_{123}=s_{124}=s_{125}=s_{134}=s_{135}=s_{145}=s_{234}=s_{235}=s_{245}=s_{345}=0,\\
  &\ s_{1234}=s_{1235}=s_{1245}=s_{1345}=s_{2345}=0.\\
  \textbf{choice 2}:&\ s_0=12,s_{234}=s_{235}=s_{245}=s_{345}=4,\\
  &\ s_1=s_2=s_3=s_4=s_5=0,\\
  &\ s_{12}=s_{13}=s_{14}=s_{15}=s_{23}=s_{24}=s_{25}=s_{34}=s_{35}=s_{45}=0,\\
  &\ s_{123}=s_{124}=s_{125}=s_{134}=s_{135}=s_{145}=s_{1234}=s_{1235}=s_{1245}=s_{1345}=s_{2345}=0,\\
   \textbf{choice 3}:&\ s_0=12,s_{1234}=s_{1235}=s_{1245}=s_{1345}=s_{2345}=5.\\
  &\ s_1=s_2=s_3=s_4=s_5=0,\\
  &\ s_{12}=s_{13}=s_{14}=s_{15}=s_{23}=s_{24}=s_{25}=s_{34}=s_{35}=s_{45}=0,\\
  &\ s_{123}=s_{124}=s_{125}=s_{134}=s_{135}=s_{145}=s_{234}=s_{235}=s_{245}=s_{345}=0.
\end{align*}
}

\item Correlation ratio parameters: $\gamma_1, \gamma_2,\gamma_3,\gamma_4,\gamma_5$. 
We fix {\color{black}$\rho_1=0.5,\rho_2=0.4,\rho_3=0.5,\rho_4=0.6,\rho_5=0.5$} and vary correlation ratio $\gamma_1=\gamma_2=\gamma_3=\gamma_4=\gamma_5 \in \{0,0.05,0.1,\dots,0.5\}$ for the following choice of $s_0, s_1, s_2,s_3,s_4,s_5,s_{12},s_{13},s_{14},s_{15},s_{23},s_{24},s_{25},s_{34},s_{35},s_{45},s_{123},s_{124},s_{125},s_{134},s_{135},s_{145},\\s_{234},s_{235},s_{245},s_{345},s_{1234},s_{1235},s_{1245},s_{1345},s_{2345}.$ Then, the between-group correlations are calculated as $\rho_k\gamma_k$.
{\color{black}
\begin{align*}
  \textbf{choice 1}:&\ s_0=12,s_1=s_2=s_3=s_4=s_5=0,\\
  &\ s_{12}=s_{13}=s_{14}=s_{15}=s_{23}=s_{24}=s_{25}=s_{34}=s_{35}=s_{45}=0,\\
  &\ s_{123}=s_{124}=s_{125}=s_{134}=s_{135}=s_{145}=s_{234}=s_{235}=s_{245}=s_{345}=0,\\
  &\ s_{1234}=s_{1235}=s_{1245}=s_{1345}=s_{2345}=0.\\
  \textbf{choice 2}:&\ s_0=12,s_{234}=s_{235}=s_{245}=s_{345}=4,\\
  &\ s_1=s_2=s_3=s_4=s_5=0,\\
  &\ s_{12}=s_{13}=s_{14}=s_{15}=s_{23}=s_{24}=s_{25}=s_{34}=s_{35}=s_{45}=0,\\
  &\ s_{123}=s_{124}=s_{125}=s_{134}=s_{135}=s_{145}=s_{1234}=s_{1235}=s_{1245}=s_{1345}=s_{2345}=0,\\
   \textbf{choice 3}:&\ s_0=12,s_{1234}=s_{1235}=s_{1245}=s_{1345}=s_{2345}=5.\\
  &\ s_1=s_2=s_3=s_4=s_5=0,\\
  &\ s_{12}=s_{13}=s_{14}=s_{15}=s_{23}=s_{24}=s_{25}=s_{34}=s_{35}=s_{45}=0,\\
  &\ s_{123}=s_{124}=s_{125}=s_{134}=s_{135}=s_{145}=s_{234}=s_{235}=s_{245}=s_{345}=0.
\end{align*}
}

\end{itemize}
{\color{black}
\section*{E.4: Simulation for {\color{black} \textbf{Setting 2} when $K=4$}}
\subsection*{E.4.1: Data Generation}
To ensure broad applicability, we perform a simulation study that reflects our real data application, varying the sample size and the types within the group across different sites. 
We set $n_1=2000, n_2=1200, n_3=700, n_4=600$. The types within the group across the sites are different. Site 1 encompasses four continuous variable features per group. Site 2 also has four features but with a mix of 75\% continuous and 25\% four-leveled categorical variables (3 continuous + 3 dummy = 6 variables in the expanded design). Site 3 differs, offering only two categorical features per group, each with three levels (2 dummy + 2 dummy = 4 variables in the expanded design). Lastly, Site 4 aligns with the first two in feature count but balances the variable types, with two continuous and two binary categorical features (2 continuous + 2 dummy = 4 variables in the expanded design). The coefficients $\beta^1, \beta^2,\beta^3,\beta^4$ for \textbf{Scenario 1} are updated as:
{\color{black}\begin{eqnarray*}
\beta^1&=&((\omega_0^\top , \omega_1^\top, \mathbf{0}_{s_2}^\top, \mathbf{0}_{s_3}^\top, \mathbf{0}_{s_4}^\top,\omega_{12}^\top, \omega_{13}^\top,\omega_{14}^\top,\mathbf{0}_{s_{23}}^\top, \mathbf{0}_{s_{24}}^\top,\mathbf{0}_{s_{34}}^\top,\omega_{123}^\top,\omega_{124}^\top,\omega_{134}^\top,\mathbf{0}_{234}^\top,\mathbf{0}_{p_s}^\top)^\top  \odot \epsilon)\otimes \mathbf{1}_4 ,\\
\beta^2&=&((\omega_0^\top , \mathbf{0}_{s_1}^\top,\omega_2^\top, \mathbf{0}_{s_3}^\top, \mathbf{0}_{s_4}^\top,\omega_{12}^\top, \mathbf{0}_{s_{13}}^\top, \mathbf{0}_{s_{14}}^\top, \omega_{23}^\top, \omega_{24}^\top, \mathbf{0}_{s_{34}}^\top, \omega_{123}^\top,\omega_{124}^\top,\mathbf{0}_{134}^\top,\omega_{234}^\top,\mathbf{0}_{p_s}^\top)^\top  \odot \epsilon)\otimes \mathbf{1}_6 ,\\
\beta^3&=&((\omega_0^\top , \mathbf{0}_{s_1}^\top, \mathbf{0}_{s_2}^\top, \omega_3^\top, \mathbf{0}_{s_4}^\top, \mathbf{0}_{s_{12}}^\top, \omega_{13}^\top, \mathbf{0}_{s_{14}}^\top, \omega_{23}^\top, \mathbf{0}_{s_{24}}^\top,\omega_{34}^\top,\omega_{123}^\top,\mathbf{0}_{124}^\top,\omega_{134}^\top,\omega_{234}^\top, \mathbf{0}_{p_s}^\top)^\top  \odot \epsilon)\otimes \mathbf{1}_4,\\
\beta^4&=&((\omega_0^\top , \mathbf{0}_{s_1}^\top, \mathbf{0}_{s_2}^\top, \mathbf{0}_{s_3}^\top, \omega_{4}^\top, \mathbf{0}_{s_{12}}^\top, \mathbf{0}_{s_{13}}^\top, \omega_{14}^\top, \mathbf{0}_{s_{23}}^\top, \omega_{24}^\top,\omega_{34}^\top, \mathbf{0}_{123}^\top,\omega_{124}^\top,\omega_{134}^\top,\omega_{234}^\top,\mathbf{0}_{p_s}^\top)^\top  \odot \epsilon)\otimes \mathbf{1}_4,
\end{eqnarray*}}

where $\odot$ is the Hadamard product, and $p_s=M-s_0-s_1-s_2-s_3-s_4-s_{12}-s_{13}-s_{14}-s_{23}-s_{24}-s_{34}-{\color{black}s_{123}-s_{124}-s_{134}-s_{234}}$.\\

for \textbf{Scenario 2}, 
{\color{black}\begin{eqnarray*}
\beta^1&=&((\omega_{01}^\top , \omega_{11}^\top, \mathbf{0}_{s_2}^\top, \mathbf{0}_{s_3}^\top, \mathbf{0}_{s_4}^\top,\omega_{121}^\top, \omega_{131}^\top,\omega_{141}^\top,\mathbf{0}_{s_{23}}^\top, \mathbf{0}_{s_{24}}^\top,\mathbf{0}_{s_{34}}^\top,\omega_{1231}^\top,\omega_{1241}^\top,\omega_{1341}^\top,\mathbf{0}_{234}^\top,\mathbf{0}_{p_s}^\top)^\top  \odot \epsilon)\otimes \mathbf{1}_4 ,\\
\beta^2&=&((\omega_{02}^\top , \mathbf{0}_{s_1}^\top,\omega_{22}^\top, \mathbf{0}_{s_3}^\top, \mathbf{0}_{s_4}^\top,\omega_{122}^\top, \mathbf{0}_{s_{13}}^\top, \mathbf{0}_{s_{14}}^\top, \omega_{232}^\top, \omega_{242}^\top, \mathbf{0}_{s_{34}}^\top, \omega_{1232}^\top,\omega_{1242}^\top,\mathbf{0}_{134}^\top,\omega_{2342}^\top,\mathbf{0}_{p_s}^\top)^\top  \odot \epsilon)\otimes \mathbf{1}_6 ,\\
\beta^3&=&((\omega_{03}^\top , \mathbf{0}_{s_1}^\top, \mathbf{0}_{s_2}^\top, \omega_{33}^\top, \mathbf{0}_{s_4}^\top, \mathbf{0}_{s_{12}}^\top, \omega_{133}^\top, \mathbf{0}_{s_{14}}^\top, \omega_{233}^\top, \mathbf{0}_{s_{24}}^\top,\omega_{343}^\top, \omega_{1233}^\top,\mathbf{0}_{124}^\top,\mathbf{0}_{1343}^\top,\omega_{2343}^\top,\mathbf{0}_{p_s}^\top)^\top  \odot \epsilon)\otimes \mathbf{1}_4,\\
\beta^4&=&((\omega_{04}^\top , \mathbf{0}_{s_1}^\top, \mathbf{0}_{s_2}^\top, \mathbf{0}_{s_3}^\top, \omega_{44}^\top, \mathbf{0}_{s_{12}}^\top, \mathbf{0}_{s_{13}}^\top, \omega_{144}^\top, \mathbf{0}_{s_{23}}^\top, \omega_{244}^\top,\omega_{344}^\top, \mathbf{0}_{123}^\top,\omega_{1244}^\top,\omega_{1344}^\top,\omega_{2344}^\top,\mathbf{0}_{p_s}^\top)^\top  \odot \epsilon)\otimes \mathbf{1}_4,
\end{eqnarray*}}

where $\odot$ is the Hadamard product, and $p_s=M-s_0-s_1-s_2-s_3-s_4-s_{12}-s_{13}-s_{14}-s_{23}-s_{24}-s_{34}-{\color{black}s_{123}-s_{124}-s_{134}-s_{234}}$.\\

We focus on the \textbf{mixed} model setting, and set $Y^1=\overline{Y}^1$, $Y^3=\overline{Y}^3$ and set a threshold for $\overline{Y}^2$ and  $\overline{Y}^4$ to construct the binary $Y^2$ and $Y^4$: \[Y^2 = \mathbbm{1}\{\overline{Y}^2 \geq 0\}.\]\[Y^4 = \mathbbm{1}\{\overline{Y}^4 \geq 0\}.\]

\subsection*{E.4.2: Parameter settings} \label{par:E4.2}
We conduct simulations to check the effects of sparsity levels and different correlation structures and set the targeted FDR $q=0.2$. As default, we set the amplitude of signals {\color{black}$A_0=4, A_1=A_2=A_3=A_4=A_{12}=A_{13}=A_{14}=A_{23}=A_{24}=A_{34}=A_{123}=A_{124}=A_{134}=A_{234}=1$}, the within-group correlations $\rho_1=0.5,\rho_2=0.4,\rho_3=0.5,\rho_4=0.6$, and the correlation ratios are set to be $\gamma_1=0.1,\gamma_2=0.15,\gamma_3=0.1,\gamma_4=0.05$, and the signal noise parameter $\sigma_1=1$, $\sigma_2=2,\sigma_3=1,\sigma_4=1$. To understand the effects of sparsity levels, within and between group correlations respectively, we vary one of three kinds of parameters (sparsity levels parameters, within-group correlation parameters, and correlation ratio parameters) in each simulation study and fix the other two kinds of parameters. {\color{black}Similarly, to avoid loss of generality, we explore three choices including only simultaneous signals, both simultaneous signals and non-simultaneous signals exist in one dataset, and both simultaneous signals and non-simultaneous signals exist in three datasets.} 
\begin{itemize}
\item Sparsity level parameters: $s_0, s_1, s_2,s_3,s_4,s_{12},s_{13},s_{14},s_{23},s_{24},s_{34},s_{123},s_{124},s_{134},s_{234}.$ 
{\color{black}
\begin{align*}
    &\text{1. Vary $s_0 = 6,8,10,12,14,16,18,20$,}\\
    &\text{fixing $s_1=s_2=s_3=s_4=s_{12}=s_{13}=s_{14}=s_{23}=s_{24}=s_{34}=s_{123}=s_{124}=s_{134}=s_{234}=0$;}\\
    &\text{2. Vary $s_1=s_2=s_3=s_4=0,1,2,3,4,5,6$,}\\
    &\text{Fixing $s_0 =12,s_{12}=s_{13}=s_{14}=s_{23}=s_{24}=s_{34}=s_{123}=s_{124}=s_{134}=s_{234}=0$;}\\
    &\text{3. Vary $s_{123}=s_{124}=s_{134}=s_{234}= 0,1,2,3,4,5,6$;}\\
    &\text{Fixing $s_0 =12,s_1=s_2=s_3=s_4=s_{12}=s_{13}=s_{14}=s_{23}=s_{24}=s_{34}=0$.}
\end{align*}
}

\item Within-group correlation parameters: $\rho_1, \rho_2,\rho_3,\rho_4$. We fix {\color{black}$\gamma_1=0.1,\gamma_2=0.15,\gamma_3=0.1,\gamma_4=0.05$} and vary within-group correlations $\rho_1=\rho_2=\rho_3=\rho_4 \in \{0.05, 0.15, \dots, 0.95\}$ for the following choice of $s_0, s_1, s_2,s_3,s_4,s_{12},s_{13},s_{14},s_{23},s_{24},s_{34},s_{123},s_{124},s_{134},s_{234}$.
{\color{black}
\begin{align*}
       \textbf{choice 1}:& s_0=12,s_1=s_2=s_3=s_4=s_{12}=s_{13}=s_{14}=s_{23}=s_{24}=s_{34}=0,\\&s_{123}=s_{124}=s_{134}=s_{234}=0;\\
       \textbf{choice 2}:& s_0=12,s_1=s_2=s_3=s_4=4, s_{12}=s_{13}=s_{14}=s_{23}=s_{24}=s_{34}=0,\\&s_{123}=s_{124}=s_{134}=s_{234}=0;\\
       \textbf{choice 3}: & s_0=12,s_{123}=s_{124}=s_{134}=s_{234}=5,s_1=s_2=s_3=s_4=0,\\&s_{12}=s_{13}=s_{14}=s_{23}=s_{24}=s_{34}=0.
    \end{align*}
}

\item Correlation ratio parameters: $\gamma_1, \gamma_2,\gamma_3,\gamma_4$. We fix {\color{black}$\rho_1=0.5,\rho_2=0.4,\rho_3=0.5,\rho_4=0.6$} and vary correlation ratio $\gamma_1=\gamma_2=\gamma_3=\gamma_4 \in \{0,0.05,0.1,\dots,0.5\}$ for the following choice of $s_0, s_1, s_2,s_3,s_4,s_{12},s_{13},s_{14},s_{23},s_{24},s_{34},\\s_{123},s_{124},s_{134},s_{234}$. Then, the between-group correlations are calculated as $\rho_k\gamma_k$.
{\color{black}

\begin{align*}
       \textbf{choice 1}:& s_0=12,s_1=s_2=s_3=s_4=s_{12}=s_{13}=s_{14}=s_{23}=s_{24}=s_{34}=0,\\&s_{123}=s_{124}=s_{134}=s_{234}=0;\\
       \textbf{choice 2}:& s_0=12,s_1=s_2=s_3=s_4=4, s_{12}=s_{13}=s_{14}=s_{23}=s_{24}=s_{34}=0,\\&s_{123}=s_{124}=s_{134}=s_{234}=0;\\
       \textbf{choice 3}: & s_0=12,s_{123}=s_{124}=s_{134}=s_{234}=5,s_1=s_2=s_3=s_4=0,\\&s_{12}=s_{13}=s_{14}=s_{23}=s_{24}=s_{34}=0.
\end{align*}
}

\end{itemize}

Given the disparity in types within the group across various sites, it is clear that \textit{pooling} \textit{Individual (Lasso)} methods are not feasible to be applied. We compare the performance of the proposed method with \textit{Intersection} and \textit{Individual (Group Lasso)}.
}

{\color{black}
\subsection*{E.5: Additional simulation for power comparisons}
Due to the high dimensionality of our data, we apply the Chi-square test to examine the goodness of fit of the filter statistics W distribution from the family of symmetric distribution, instead of assessing each group of predictors individually. We use the parameter settings as the description in section E.1.2 for K=3, E.2.2 for K=4, and E.3.2 for K=5. In order to avoid the redundant presentation of results, we only show results with default parameter settings for the amplitude of signals, the within-group correlations, the correlation ratios, and the signal noise parameters for the mixed model settings. The results are summarized in Table S1 for K=3, Table S2 for K =4, and Table S3 for K =5 respectively.

The symmetrical nulls numerically demonstrate that the sign of $W_j$ is unrelated to its size when signals are absent from all three datasets, maintaining a probability $P\{W_j>0\}=\frac{1}{2}$. This characteristic is essential for the effective application of the false discovery rate (FDR) control theorem. Conversely, all of our non-nulls have corresponded asymmetric distributions of $W_j$, indicating that the presence of signals in all datasets correlates with an increased probability of being positive, thereby enhancing the test's power. Although we have a few nulls that are asymmetric, the results are still under the nominal FDR at 0.2.

\section*{Web Appendix F: Additional simulation results} \label{app:sim-res}
\subsection*{F.1: Additional simulation results for {\color{black} Setting 1 when }K=3,4,5}
In Figure \ref{fig:figure3-1}, we show results for continuous setting with \textbf{Scenario 1} (same strengths) for K=3 {\color{black}when $n_k=1000$}. When only simultaneous signals among K datasets exist ($s_1=s_2=s_3=s_{12}=s_{13}=s_{23}=0$), only \textit{Individual (Lasso)} (i.e., individual sequential knockoff methods with individual filter (Lasso)) fails to control FDR. When both simultaneous signals among K datasets, and non-mutual signals exist, \textit{Individual (Lasso)} and \textit{Pooling}, and \textit{Intersection} methods fail to control FDR (Figure \ref{fig:figure3-1} right column). In general, only our proposed \textit{GS knockoffs} and \textit{Individual (Group Lasso)} (i.e., individual sequential knockoff methods with group lasso fitting)) control FDR across all the settings. However, \textit{Individual (Group Lasso)} method loses the power when the within-group correlation is strong.

Figures \ref{fig:figure3-2} and \ref{fig:figure3-3} show the results for binary and mixed data settings with \textbf{Scenario 1} (same strengths) for K=3 {\color{black}when $n_k=1000$}. The results are consistent with the continuous case, with power slightly lower for all methods than the continuous case. \textit{Pooling} method fails to control FDR when specific signals in $k-$th dataset or mutual signals in two datasets exist. \textit{Intersection} method fails to control FDR when mutual signals among K datasets and non-mutual signals exist. For the individual knockoff methods, the FDR is not guaranteed to be controlled for \textit{Individual (Lasso)}. For \textit{Individual (Group Lasso)}, the FDR can be controlled but the power is lower than the proposed (\textit{GS knockoffs} methods when within-group correlation is strong. 

{\color{black}Figure \ref{fig:figure3-} presents the results for the mixed data settings under \textbf{Scenario 1} (same strengths) with \(K=3\) and a small sample size of \(n_k=200\). The results also indicate that the proposed \textit{GS knockoffs} method achieves the best performance. Despite the limited sample size, the \textit{GS knockoffs} method maintains a stable power, while the \textit{Pooling} and \textit{Intersection} methods exhibit lower power than the \textit{GS knockoffs}. Furthermore, the \textit{Individual (Lasso)} and \textit{Individual (Group Lasso)} methods display lower power (around 0.1), particularly when the within-group correlation is strong.
}

{\color{black}
Figures \ref{fig:figure3-4} and \ref{fig:figure3-5} show the results for mixed data settings with \textbf{Scenario 1} (same strengths) for K=4 and K=5, respectively. The findings align with those observed in the K=3 scenarios. An increase in K results in a marginal reduction in power across all three methodologies. The \textit{GS knockoff} successfully maintains FDR control while exhibiting robust power. Despite the \textit{Pooling} method achieving the highest power, it exhibits a substantial false discovery proportion (FDP) in the presence of non-mutual signals. The \textit{Intersection} method shows similar power with the \textit{GS knockoff} but lacks a guaranteed FDR control. Regarding the individual knockoff approaches, their outcomes agree with the K=3 case. The group filter, namely \textit{Individual (Group Lasso)}, is capable of regulating the group FDR, but its power significantly diminishes under high within-group correlation. Conversely, the individual filter (\textit{Individual (Lasso)}) does not effectively manage group FDR.
\subsection*{F.2: Additional simulation results {\color{black} for Setting 2 when K=4}}
Figure \ref{fig:figure3-6} presents the simulation results conducted with varying sample sizes and types across different sites. The results are consistent with what we observed in the scenarios with the same sample sizes and the same types within the group across the sites. Notably, the disparities in sample sizes and types at various sites do not impinge upon the efficacy of the proposed \textit{GS knockoff method}. This robustness underscores the method's adaptability to diverse data conditions, maintaining its performance regardless of sample size and type variations. This attribute of the \textit{GS knockoff} method is particularly advantageous in multi-site studies where such variability is common, ensuring reliable and stable results across different research settings.

}


\section*{Web Appendix G: Additional information for real data analysis} \label{app:real}
The detailed cohort generating inclusion and exclusion steps are illustrated in Figure \ref{fig:figure6}.
There are {\color{black}40 candidate} risk factors (37 group-level risk factors) included in this analysis: demographic information include sex (``Male", ``Felmale"), age at COVID (continuous), race (``Hispanic or Latino Any Race", ``Others"), binary indicators include mild liver disease, moderate severe liver disease, rheumatologic disease, dementia, congestive heart failure, substanceuse disorder, kidney disease, malignant cancer, cerebrovascular disease, peripheralvascular disease, heart failure, hemiplegia or paraplegia, psychosis, coronaryartery disease, systemic corticosteroids, depression, metastatic solid chronic lung disease, peptic ulcer, myocardial infarction, cardiomyopathies, hypertension, other immunocompromised, negative antibody, pulmonary embolism, tobacco smoker, solid organ or blood stem cell transplant, and some COVID related information include number of COVID vaccine dose, the usage of corticosteroids during the hospitalization, remdesivir usage during COVID,  COVID associated emergency department visit, and severity type (``Asymptomatic", ``Mild", ``Moderate", ``Severe"). Those indicators indicate patients have been diagnosed with those diseases or symptoms before or on the day they were confirmed to get an acute COVID infection (i.e. all the indicators are in two categories: ``Yes", or ``No"). {\color{black} In our analysis there are two group-level variables, obesity and diabetes. The obesity group includes two variables: obesity indicator, and BMI (continuous). Sites A1, B1, and B2 only collect the obesity indicator for obesity information, whereas site A2 also has the BMI information. The diabetes group is composed of three variables: diabetes complicated indicator, diabetes uncomplicated indicator, and pre-COVID glucose level (continuous).  Sites A1, B1, and B2 have all three diabetic variables in the diabetes group, while site A2 lacks the glucose level variable.} The full data dictionary can be found in \url{https://unite.nih.gov/workspace/report/ri.report.main.report.855e1f58-bf44-4343-9721-8b4c878154fe}.


\bibliography{example_paper}
\bibliographystyle{icml2022}

\newpage
\begin{table}[h]
\centering
\caption{The Chi-square test of symmetry for the distribution of the filter statistics W for \textbf{K=3} when \textbf{(a) Only simultaneous signals exist in three datasets} ($s_0 = 0, s_1=s_2=s_3=s_{12}=s_{13}=s_{23}=0$); \textbf{(b) Simultaneous signals and non-simultaneous signals exist in one and two datasets} ($s_0=12, s_1=s_2=s_3 =6, s_{12}=s_{23}=2,s_{13}=0$); \textbf{(c) Simultaneous signals and non-simultaneous signals exist in two datasets} ($s_0=12, s_1=s_2=s_3=0, s_{12}=s_{13}=s_{23} =6$).}
\label{table:1}
\begin{tabular}{c|c|c}
\hline
\multicolumn{3}{c}{(a) Only simultaneous signals exist in three datasets } \\ \hline
 & Symmetric & Not-symmetric \\ \hline
Non-nulls  & 0 & 12 \\ \hline
Nulls & 28 & 0 \\ \hline
\multicolumn{3}{c}{(b) Simultaneous signals and non-simultaneous signals exist in one and two datasets} \\ \hline
 & Symmetric & Not-symmetric \\ \hline
Non-nulls  & 0 & 12 \\ \hline
Nulls & 28 & 0 \\ \hline
\multicolumn{3}{c}{(c)  Simultaneous signals and non-simultaneous signals exist in two datasets} \\ \hline
 & Symmetric & Not-symmetric \\ \hline
Non-nulls  & 0 & 12 \\ \hline
Nulls & 28 & 0 \\ \hline
\end{tabular}
\end{table}

\newpage
\begin{table}[h]
\centering
\caption{The Chi-square test of symmetry for the distribution of the filter statistics W for \textbf{K=4} when \textbf{(a) Only simultaneous signals exist in four datasets} ($s_0=12, s_1=s_2=s_3=s_4=s_{12}=s_{13}=s_{14}=s_{23}=s_{24}=s_{34}=s_{123}=s_{124}=s_{134}=s_{234}=0$); \textbf{(b) Simultaneous signals and non-simultaneous signals exist in one dataset} ($s_0=12, s_1=s_2=s_3=s_4=4, s_{12}=s_{13}=s_{14}=s_{23}=s_{24}=s_{34}=s_{123}=s_{124}=s_{134}=s_{234}=0$); \textbf{(c) Simultaneous signals and non-simultaneous signals exist in three datasets} ($s_0=12, s_1=s_2=s_3=s_4=s_{12}=s_{13}=s_{14}=s_{23}=s_{24}=s_{34}=0, s_{123}=s_{124}=s_{134}=s_{234}=6$).}

\label{table:2}
\begin{tabular}{c|c|c}
\hline
\multicolumn{3}{c}{(a) Only simultaneous signals exist in four datasets} \\ \hline
 & Symmetric & Not-symmetric \\ \hline
Non-nulls  & 0 & 12 \\ \hline
Nulls & 27 & 1 \\ \hline
\multicolumn{3}{c}{(b) Simultaneous signals and non-simultaneous signals exist in one dataset} \\ \hline
 & Symmetric & Not-symmetric \\ \hline
Non-nulls  & 0 & 12 \\ \hline
Nulls & 28 & 0 \\ \hline
\multicolumn{3}{c}{(c) Simultaneous signals and non-simultaneous signals exist in three datasets } \\ \hline
 & Symmetric & Not-symmetric \\ \hline
Non-nulls  & 0 & 12 \\ \hline
Nulls & 28 & 0 \\ \hline
\end{tabular}
\end{table}

\newpage
\begin{table}[h]
\centering
\caption{The Chi-square test of symmetry for the distribution of the filter statistics W for \textbf{K=5} when \textbf{(a) Only simultaneous signals exist in five datasets} ($s_0=12, s_1=s_2=s_3=s_4=s_5=s_{12}=s_{13}=s_{14}=s_{15}=s_{23}=s_{24}=s_{25}=s_{34}=s_{35}=s_{45}=s_{123}=s_{124}=s_{125}=s_{134}=s_{135}=s_{145}=s_{234}=s_{235}=s_{245}=s_{345}=s_{1234}=s_{1235}=s_{1245}=s_{1345}=s_{2345}=0$); 
\textbf{ (b)Simultaneous signals and non-simultaneous signals exist in three datasets} ($s_0=12, s_{234}=s_{235}=s_{245}=s_{345}=4, s_1=s_2=s_3=s_4=s_5=s_{12}=s_{13}=s_{14}=s_{15}=s_{23}=s_{24}=s_{25}=s_{34}=s_{35}=s_{45}=s_{123}=s_{124}=s_{125}=s_{134}=s_{135}=s_{145}=s_{1234}=s_{1235}=s_{1245}=s_{1345}=s_{2345}=0$; \textbf{(c)Simultaneous signals and non-simultaneous signals exist in four datasets} ($s_0 = 12, s_{1234}=s_{1235}=s_{1245}=s_{1345}=s_{2345}=4, s_1=s_2=s_3=s_4=s_5=s_{12}=s_{13}=s_{14}=s_{15}=s_{23}=s_{24}=s_{25}=s_{34}=s_{35}=s_{45}=s_{123}=s_{124}=s_{125}=s_{134}=s_{135}=s_{145}=s_{234}=s_{235}=s_{245}=s_{345}=0.$)}
\label{table:3}
\begin{tabular}{c|c|c}
\hline
\multicolumn{3}{c}{(a) Only simultaneous signals exist in five datasets } \\ \hline
 & Symmetric & Not-symmetric \\ \hline
Non-nulls  & 0 & 12 \\ \hline
Nulls & 27 & 1 \\ \hline
\multicolumn{3}{c}{(b) Simultaneous signals and non-simultaneous signals exist in three datasets } \\ \hline
 & Symmetric & Not-symmetric \\ \hline
Non-nulls  & 0 & 12 \\ \hline
Nulls & 27 & 1 \\ \hline
\multicolumn{3}{c}{(c) Simultaneous signals and non-simultaneous signals exist in four datasets } \\ \hline
 & Symmetric & Not-symmetric \\ \hline
Non-nulls  & 0 & 12 \\ \hline
Nulls & 28 & 0 \\ \hline
\end{tabular}
\end{table}
}

\begin{figure}
    \centering
    \includegraphics[scale=0.23]{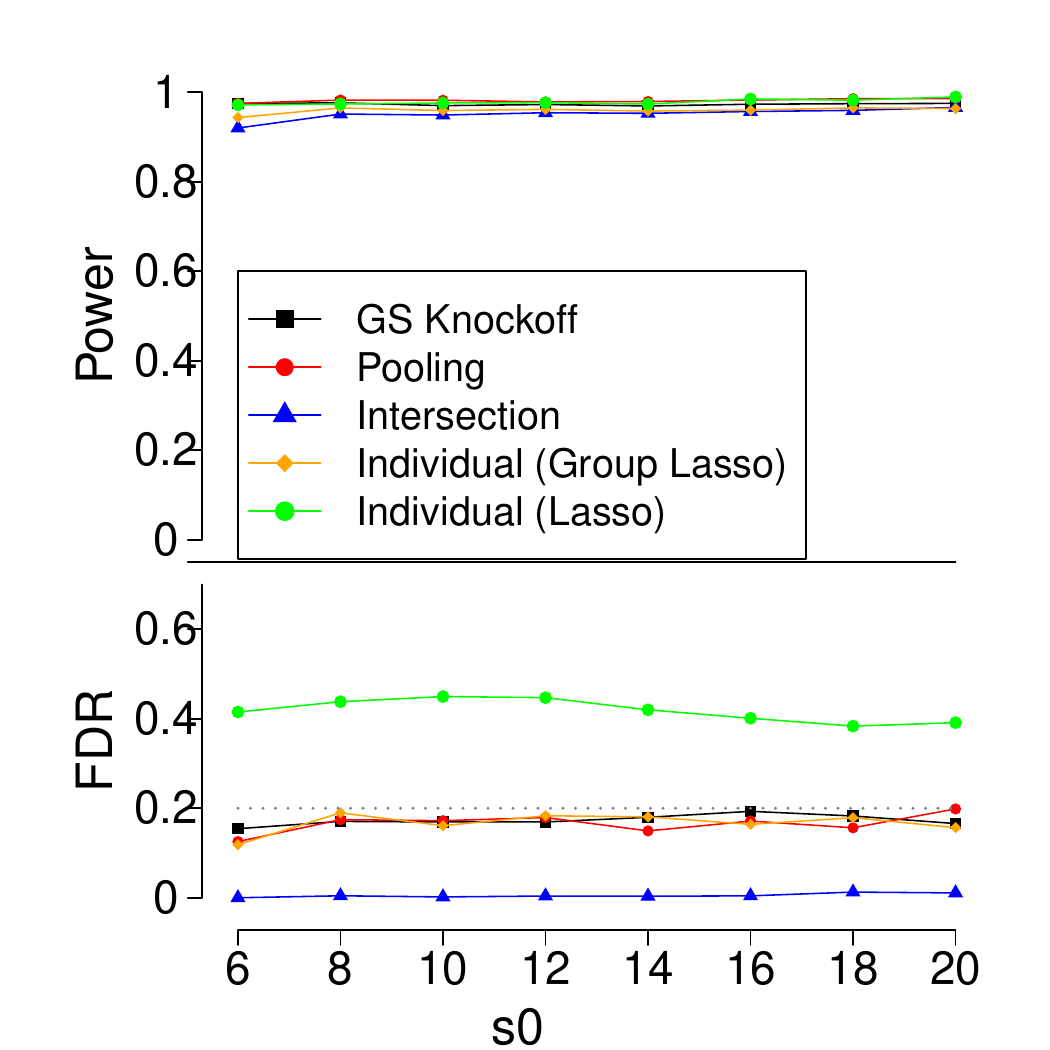}
    \includegraphics[scale=0.23]{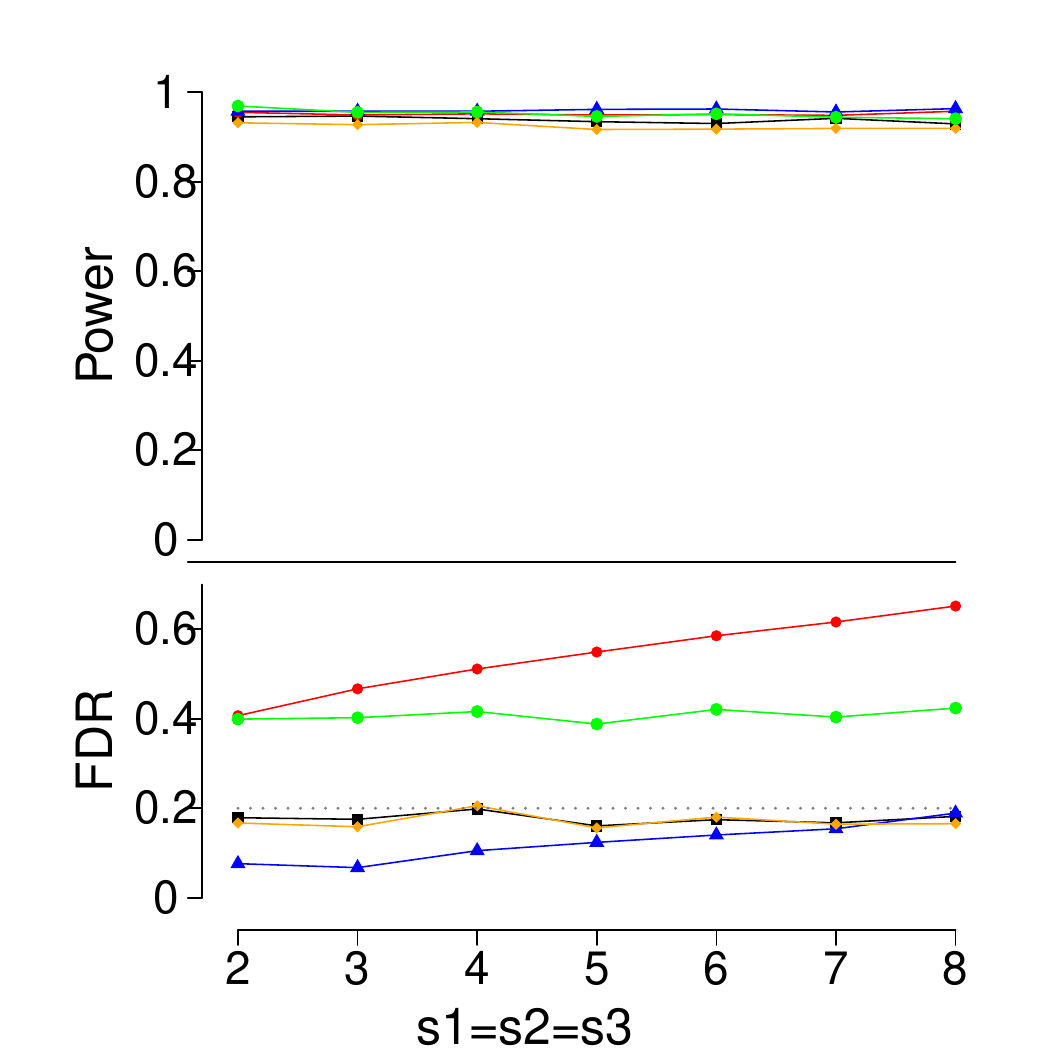}
    \includegraphics[scale=0.23]{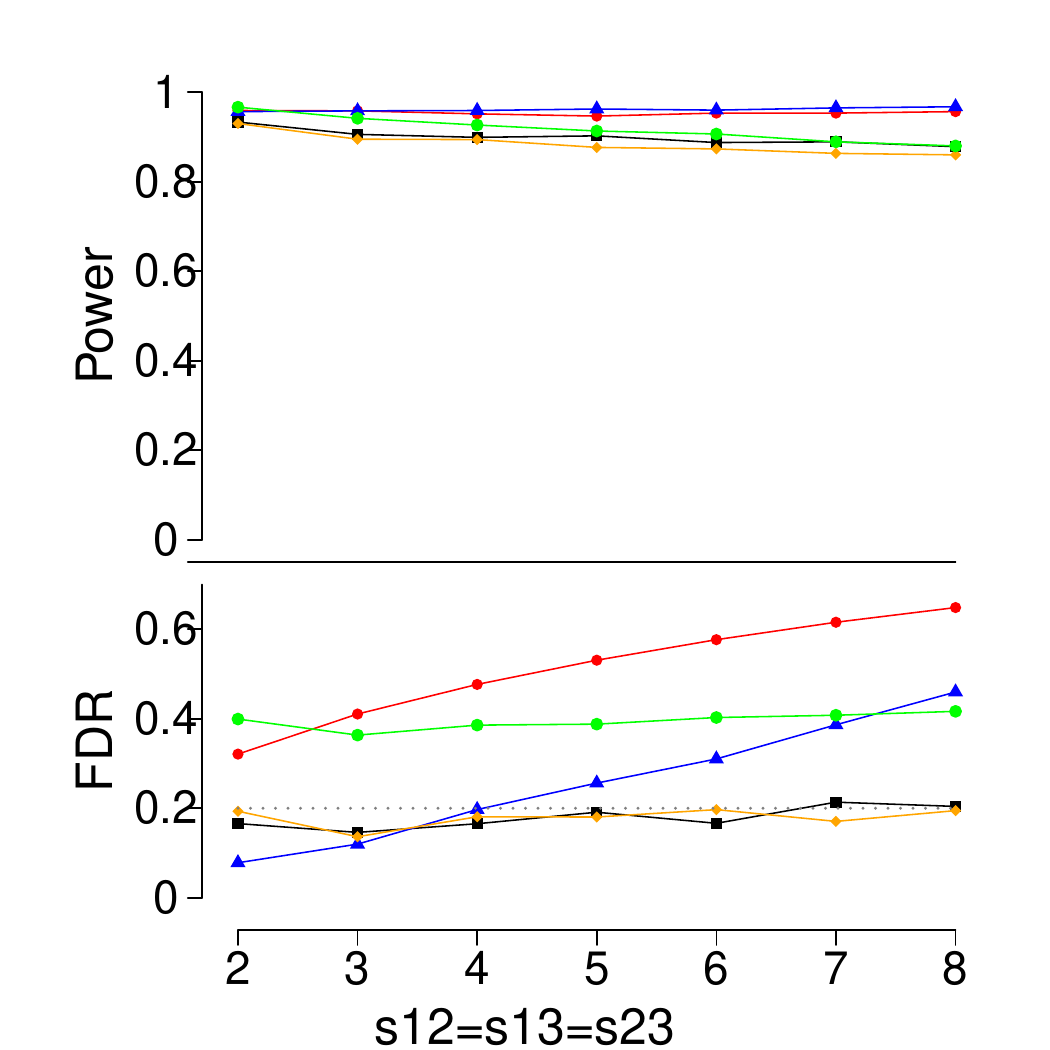}\\
    \includegraphics[scale=0.23]{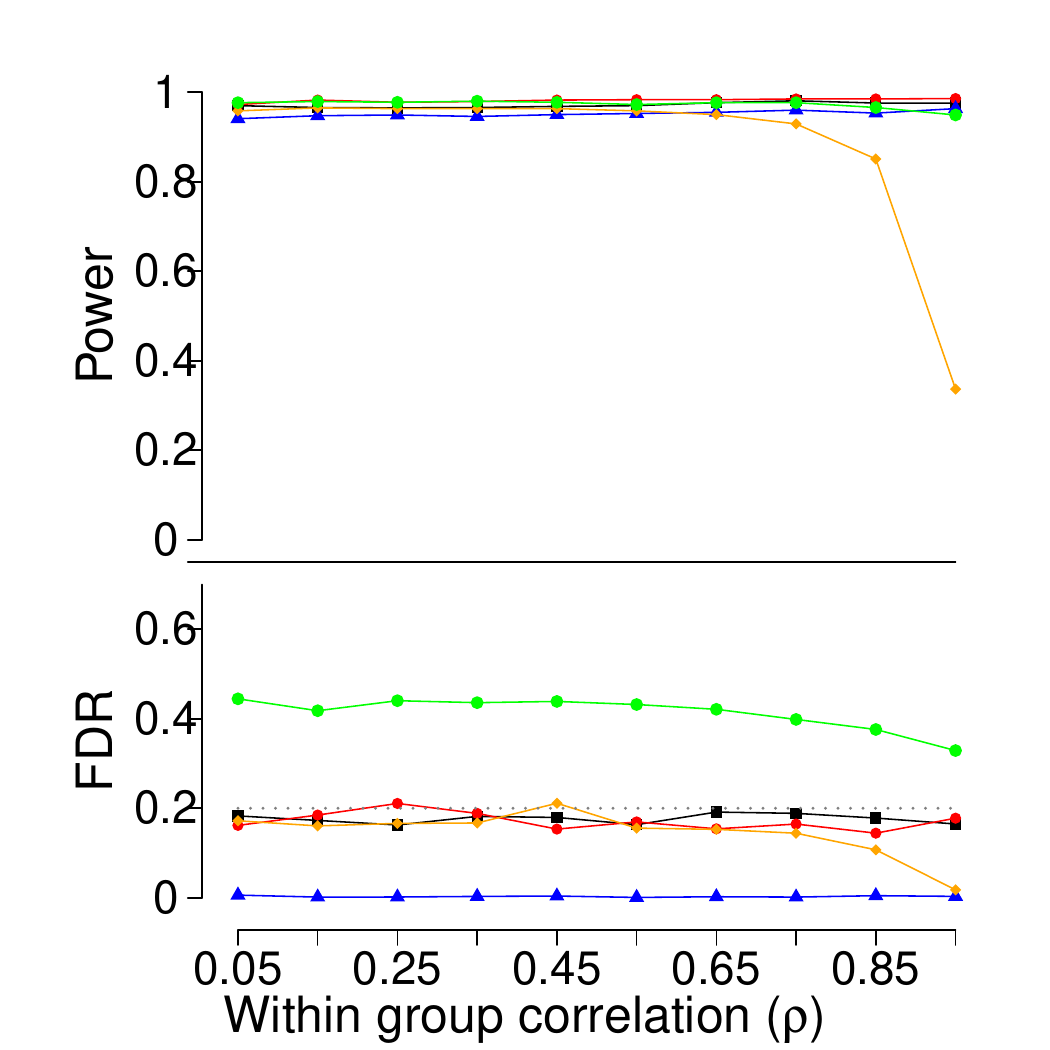}
    \includegraphics[scale=0.23]{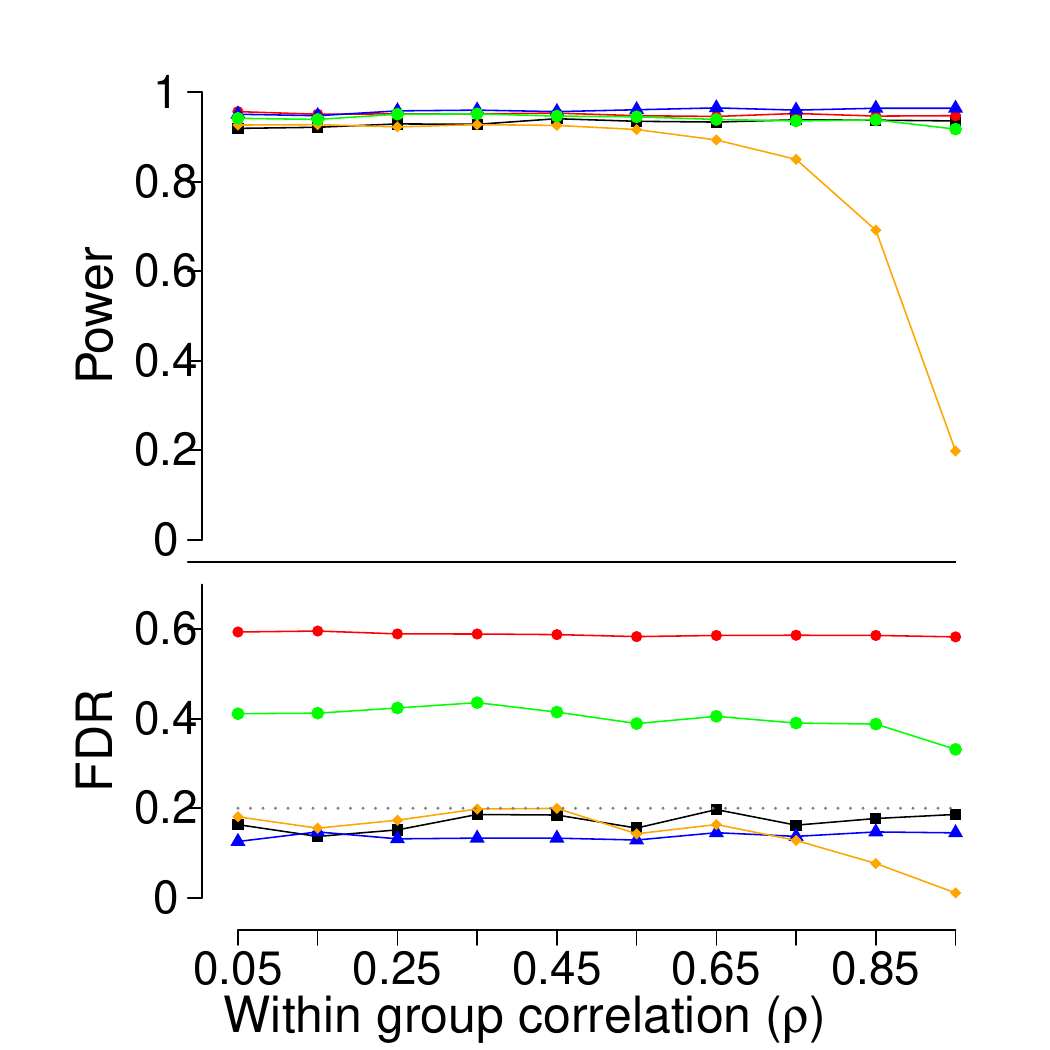}
    \includegraphics[scale=0.23]{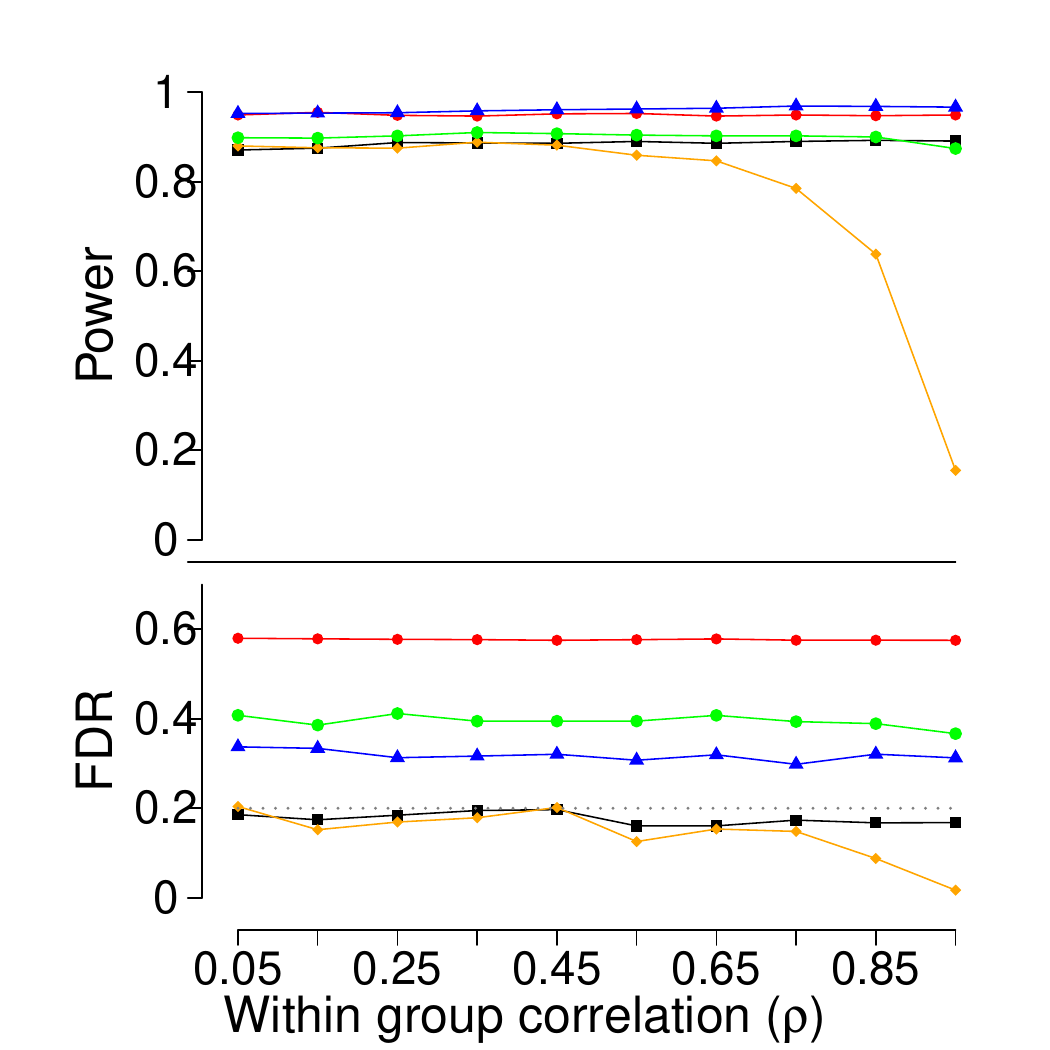}\\
    \includegraphics[scale=0.23]{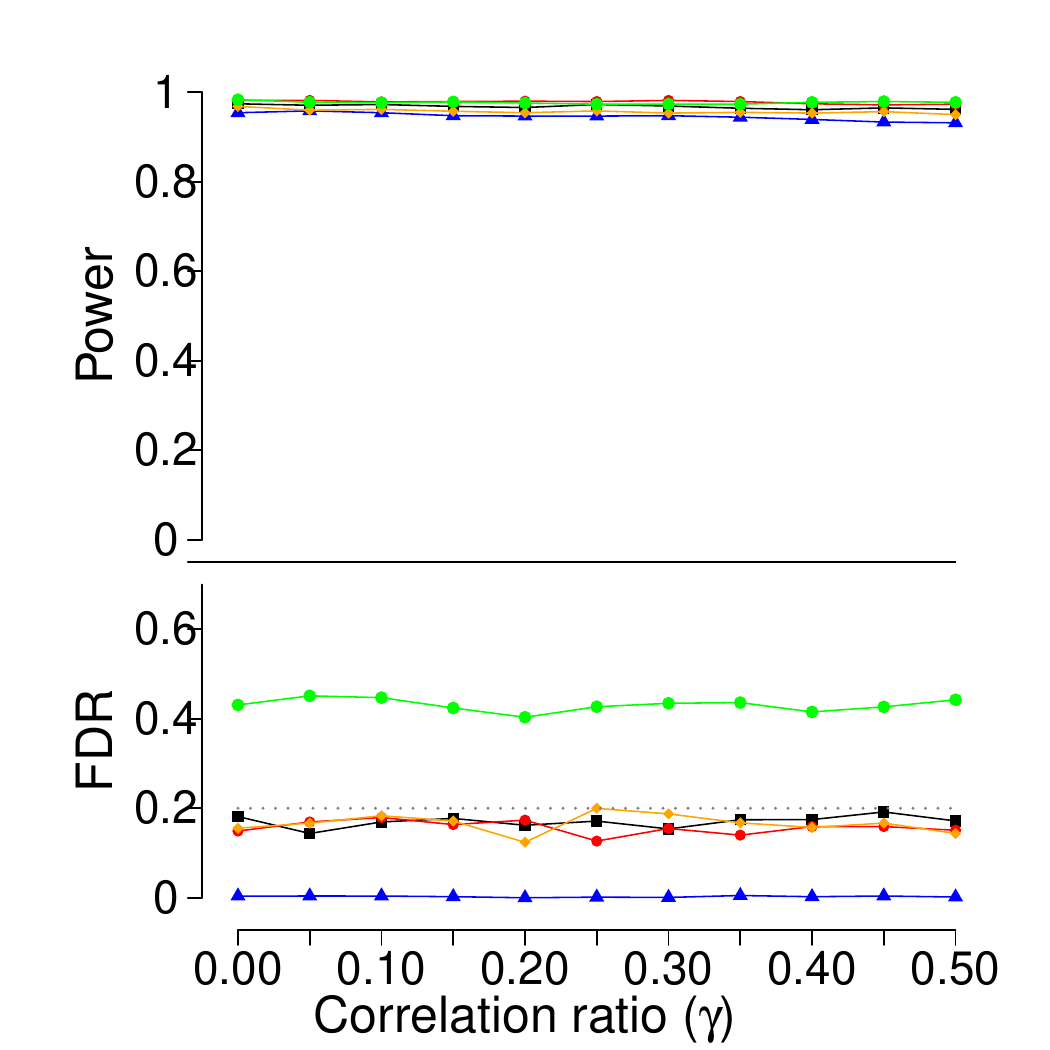}
    \includegraphics[scale=0.23]{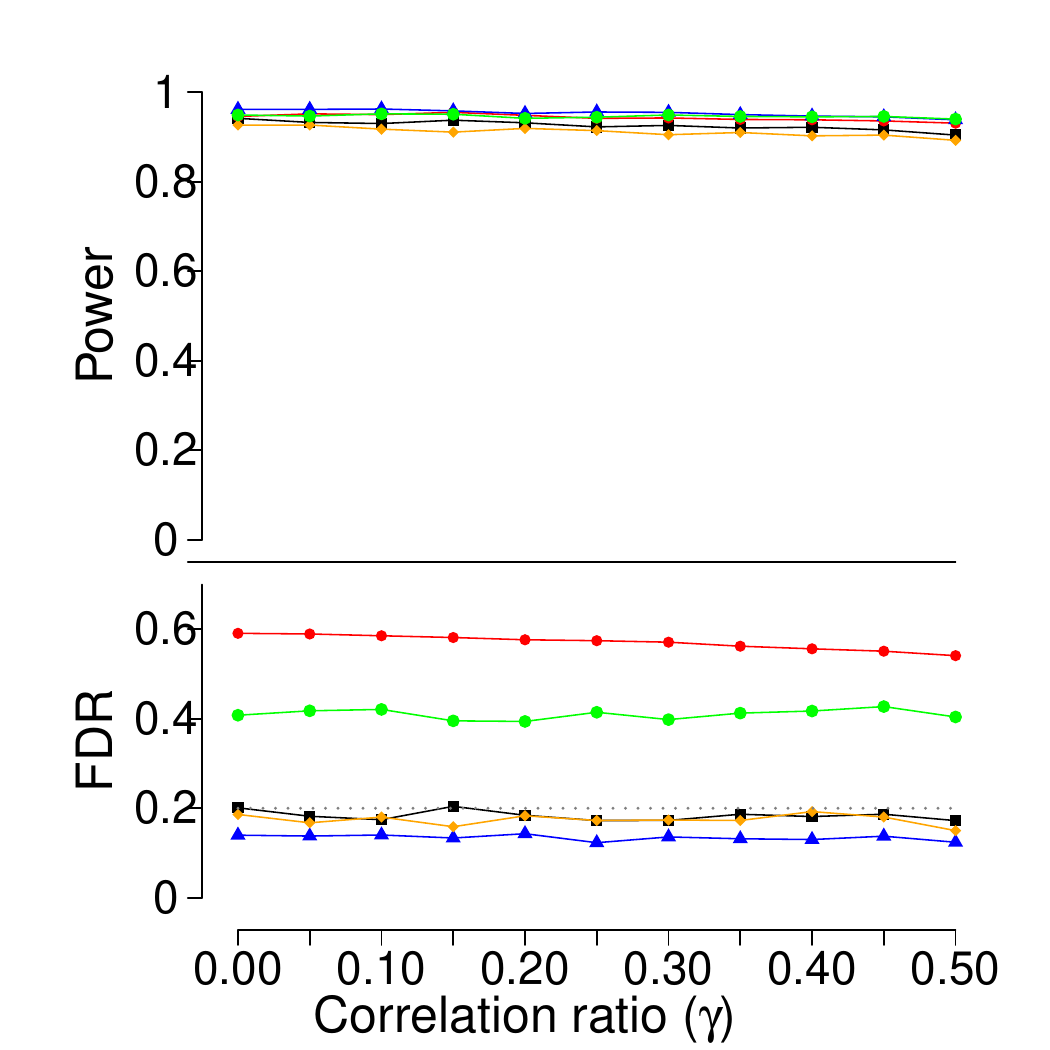}
    \includegraphics[scale=0.23]{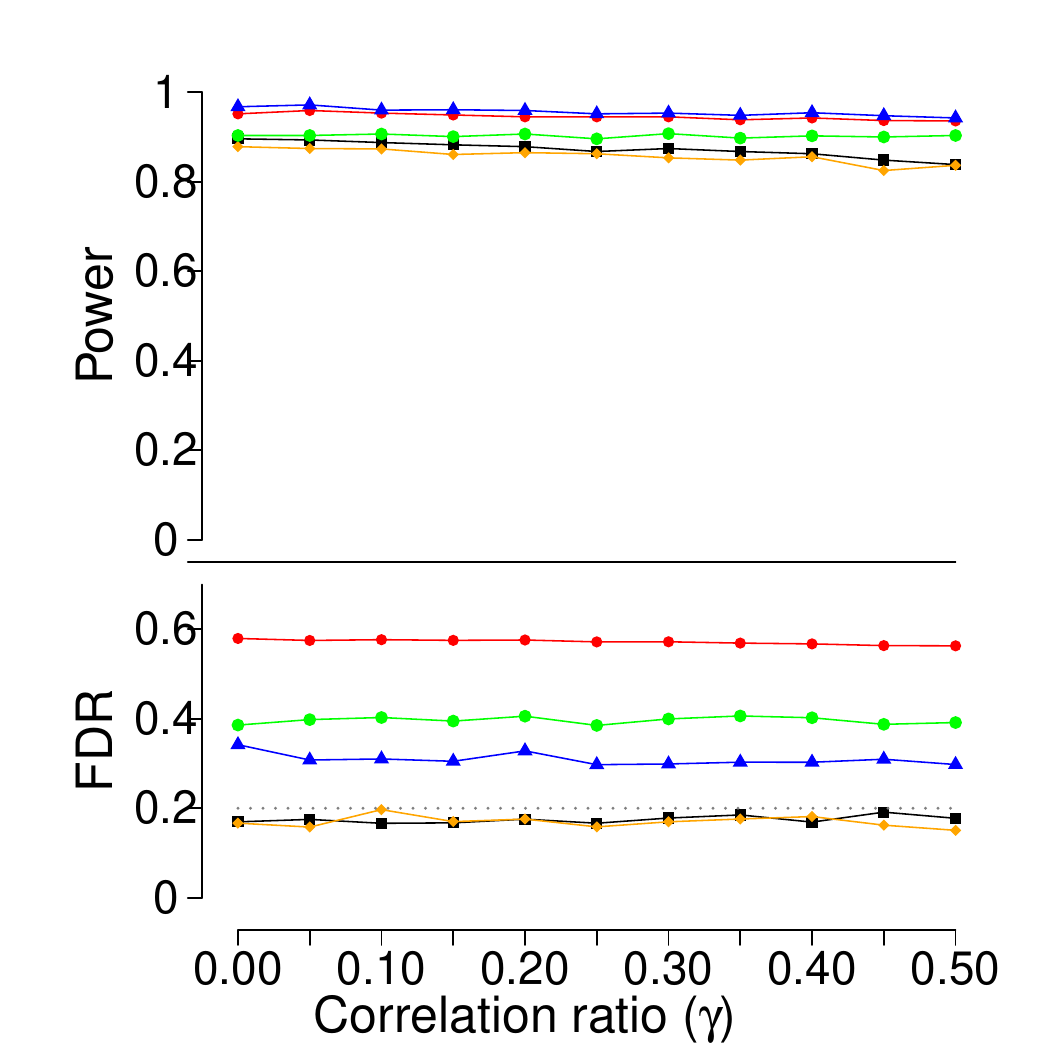}
    
    \caption{{\color{black} The power and the FDR for identifying group level simultaneous signals with data generated from \textbf{Setting 1} for the \textbf{Continuous} models (K=3) on \textbf{Scenario 1} when {\color{black}$n_1=n_2=n3=1000$.}. Left column includes settings with $s_0 \neq 0, s_1=s_2=s_3=s_{12}=s_{13}=s_{23}=0$; middle column includes settings with $s_0=12, s_1=s_2=s_3 \neq 0, s_{12}=s_{23}=2,s_{13}=0$; right column includes settings with $s_0=6, s_1=s_2=s_3=0, s_{12}=s_{13}=s_{23} \neq 0.$}}
    \label{fig:figure3-1}
\end{figure}

\begin{figure}
    \centering
    \includegraphics[scale=0.23]{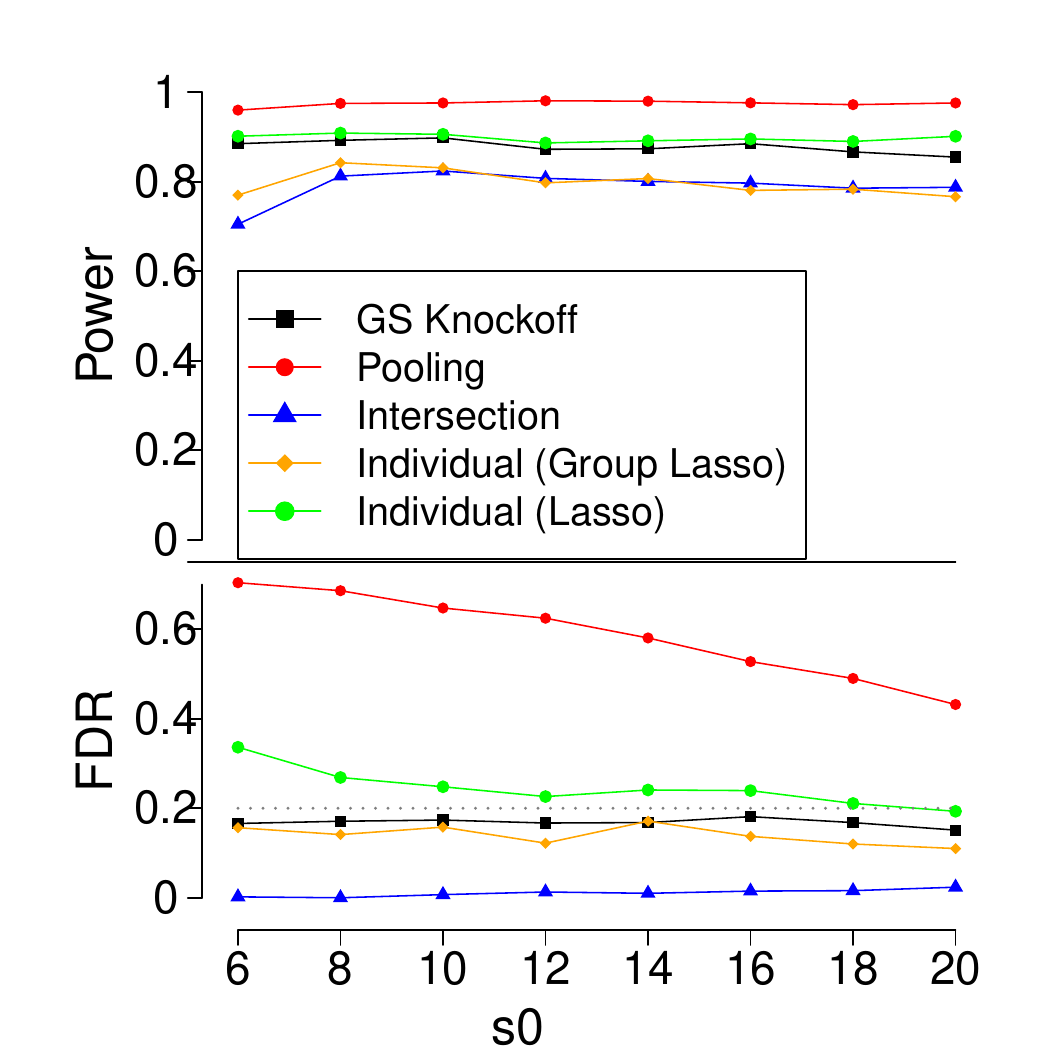}
    \includegraphics[scale=0.23]{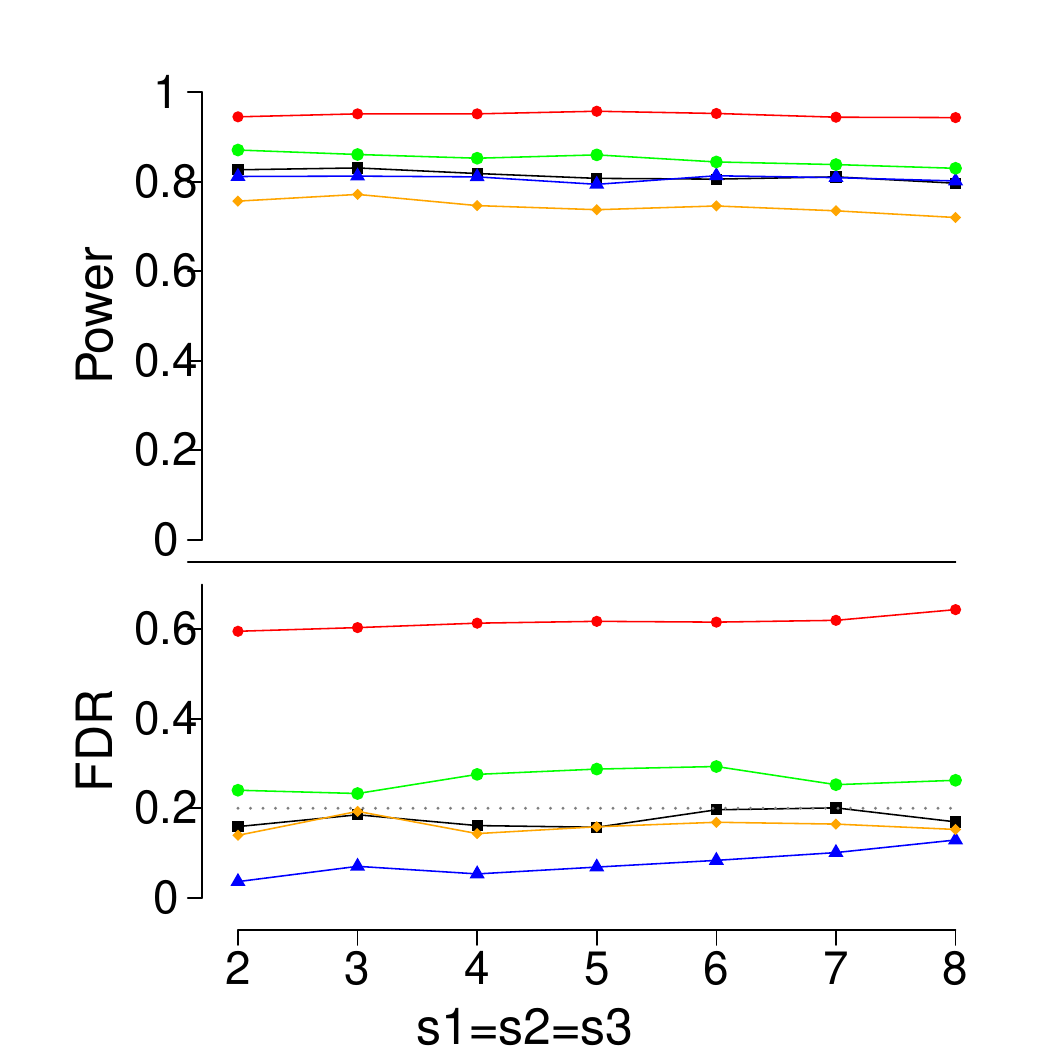}
    \includegraphics[scale=0.23]{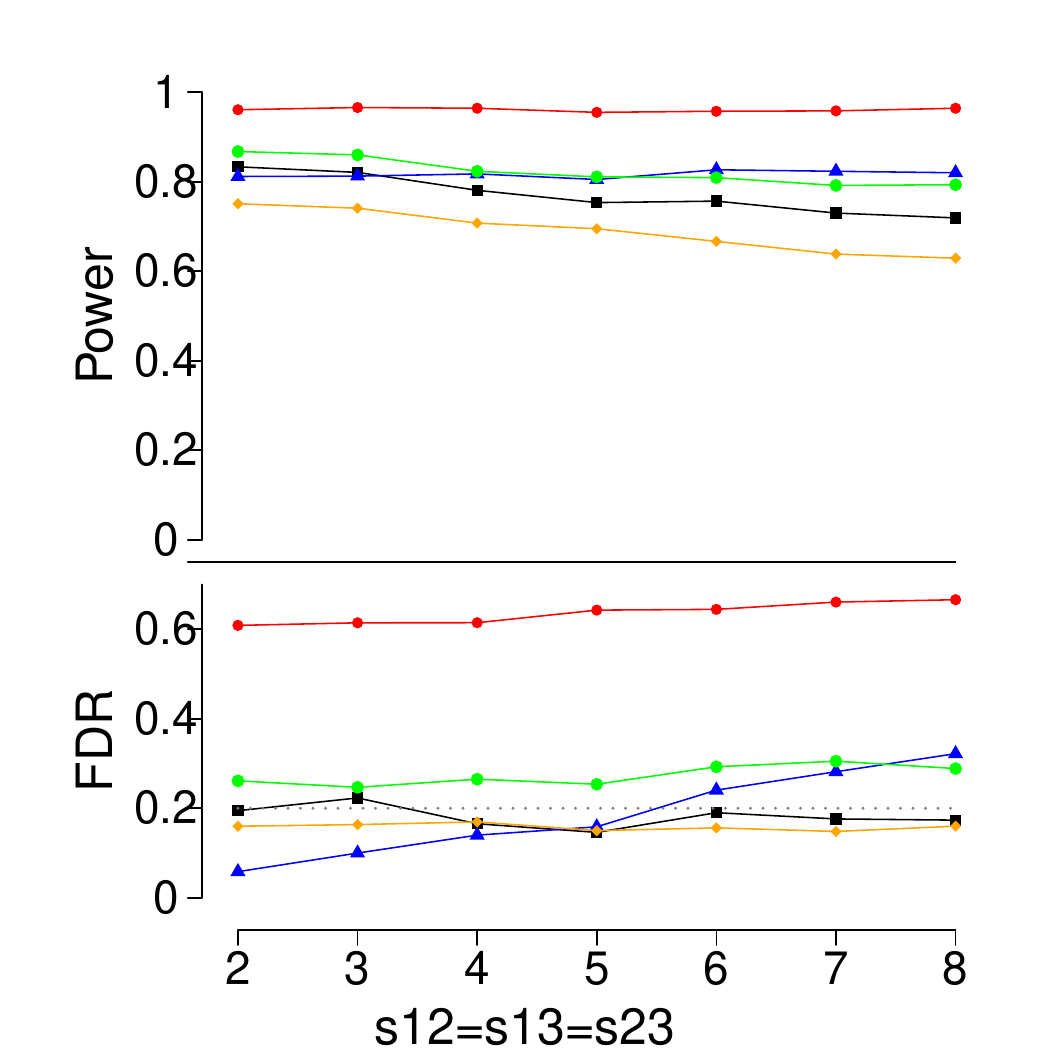}\\
    \includegraphics[scale=0.23]{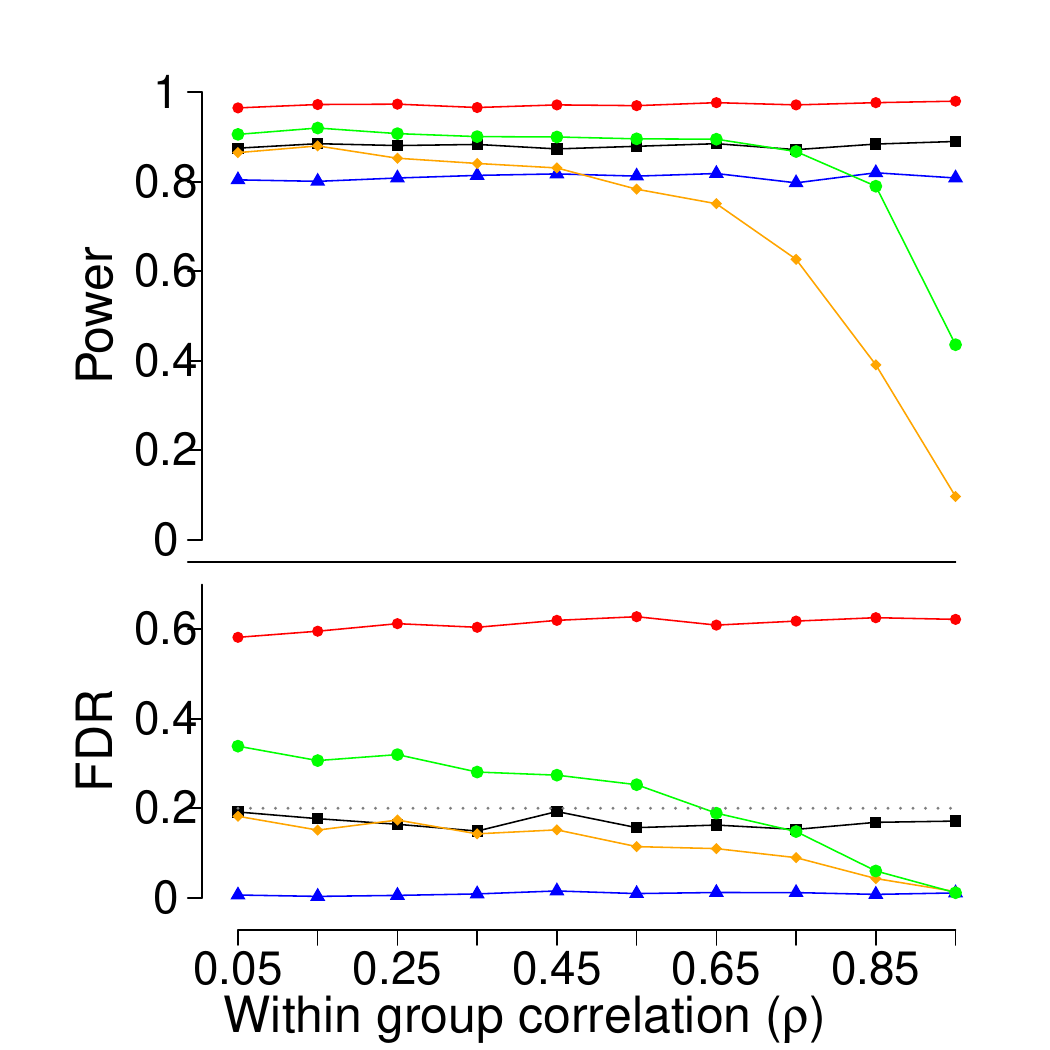}
    \includegraphics[scale=0.23]{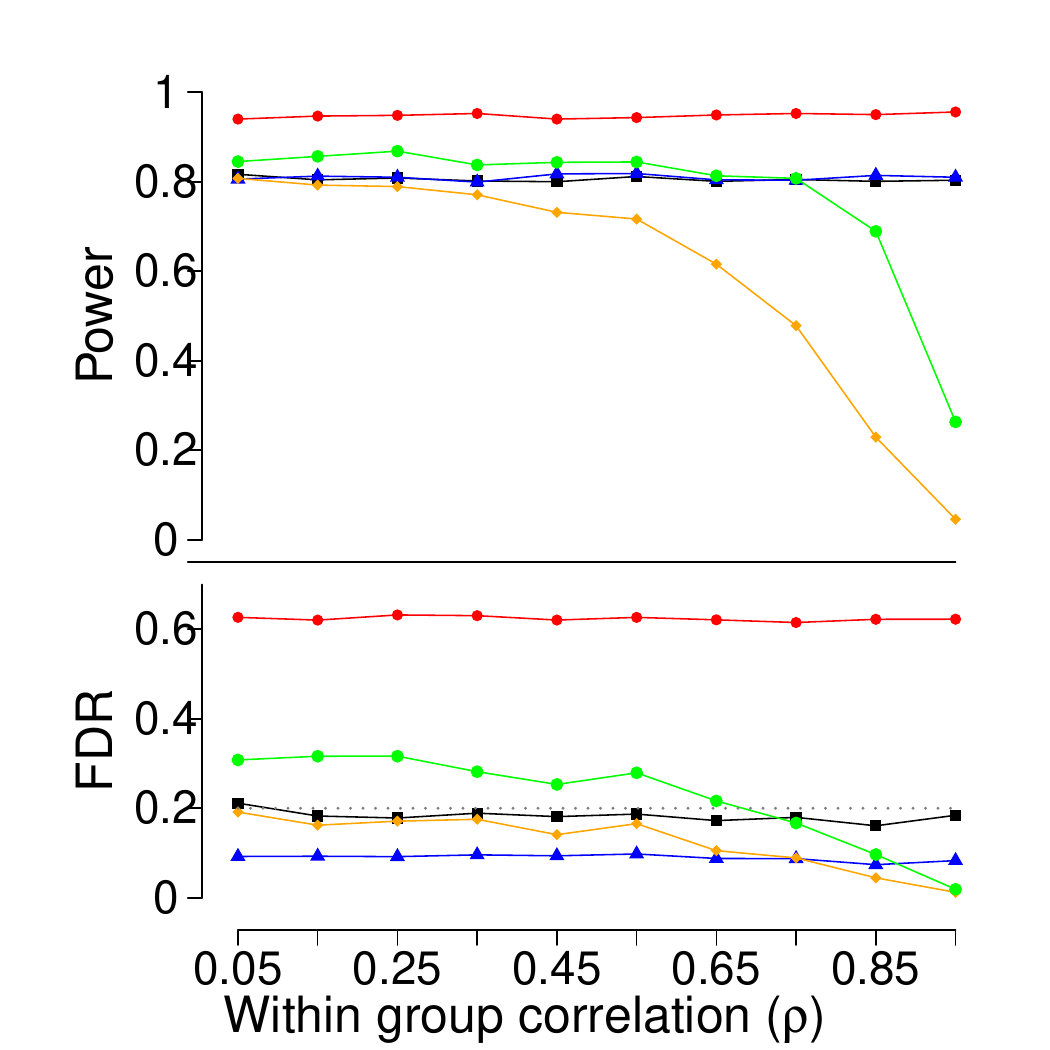}
    \includegraphics[scale=0.23]{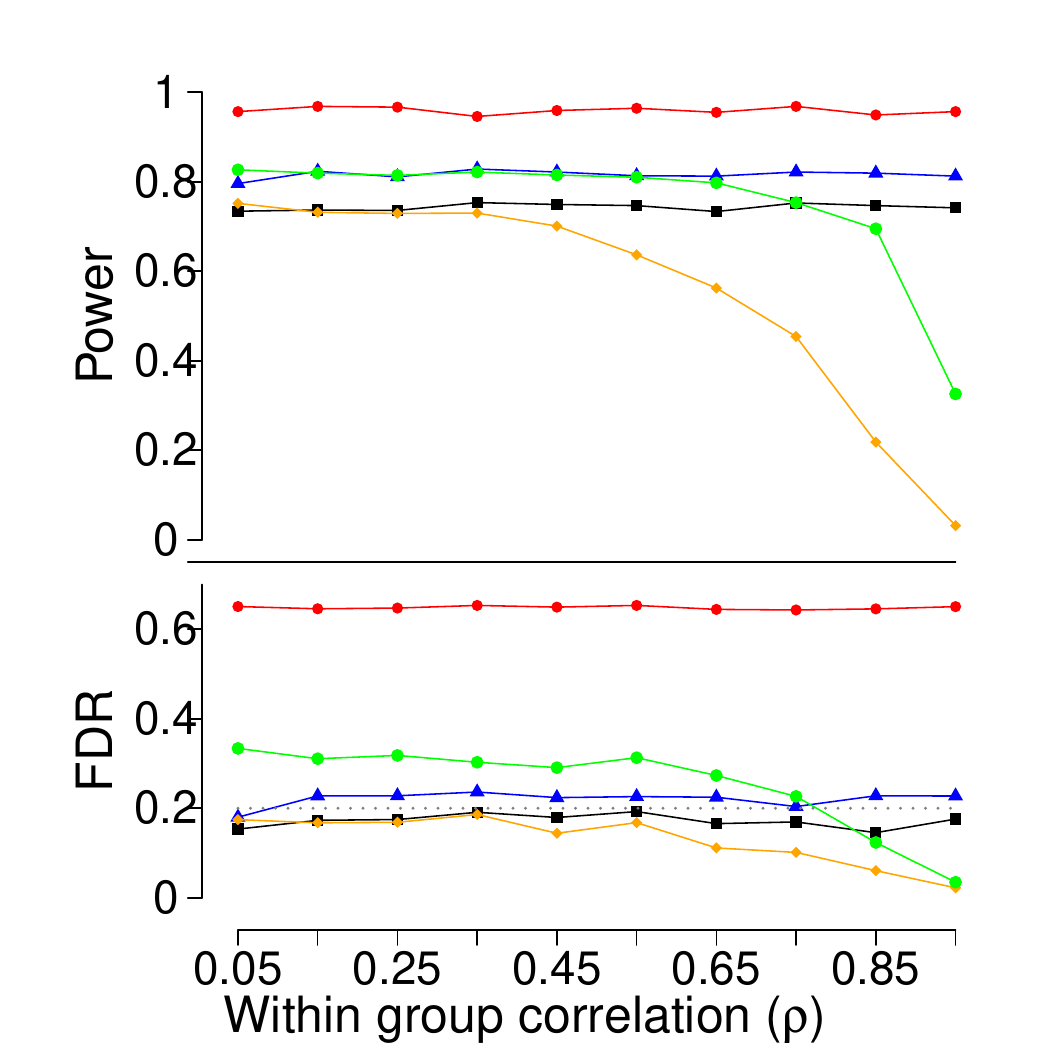}\\
    \includegraphics[scale=0.23]{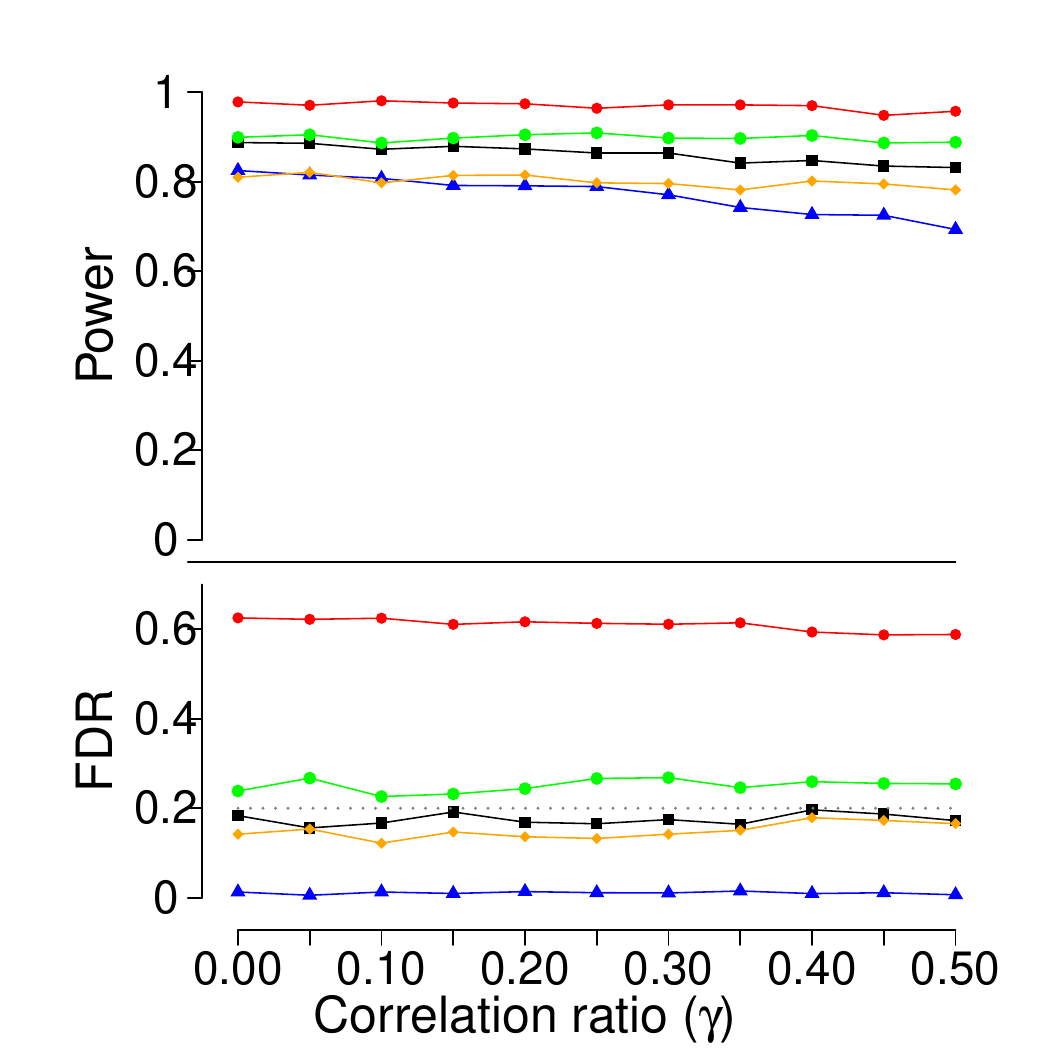}
    \includegraphics[scale=0.23]{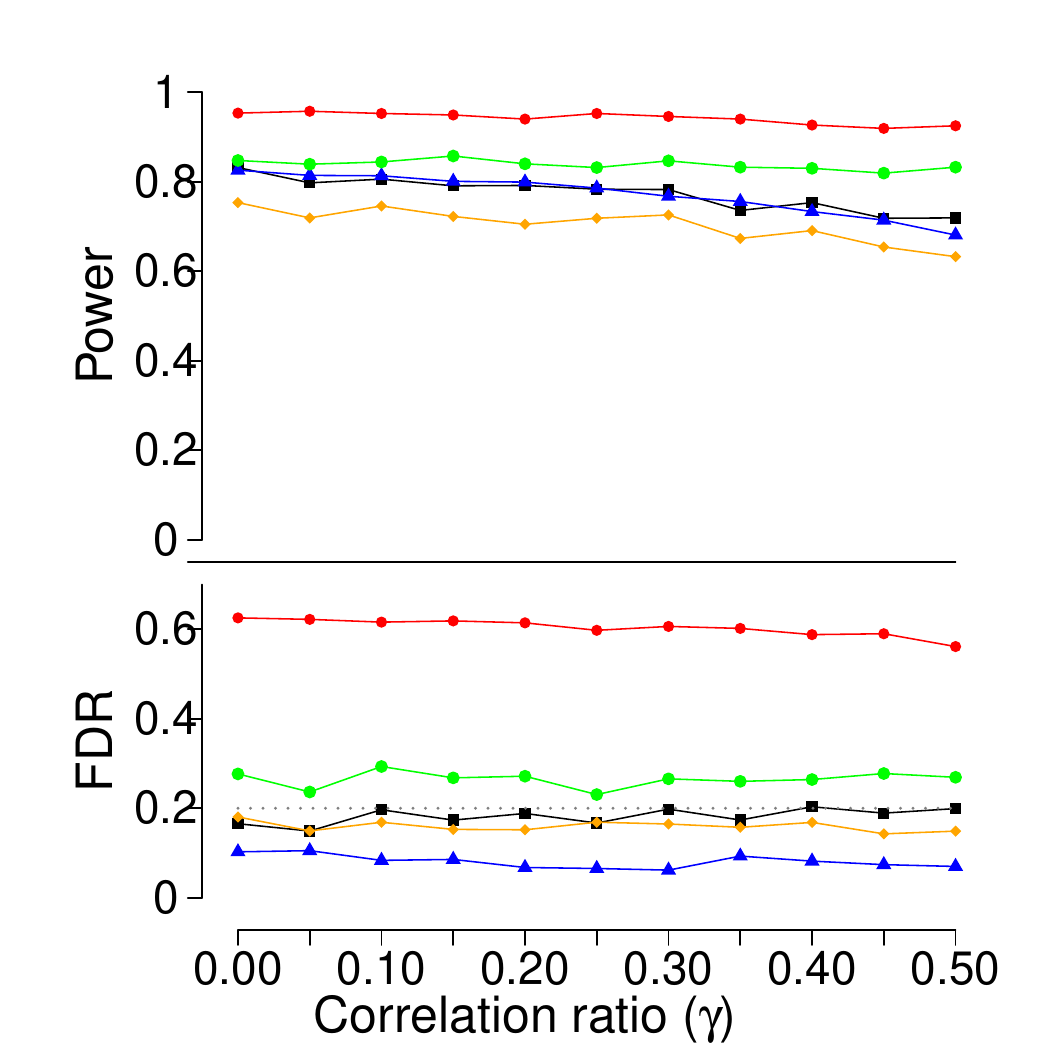}
    \includegraphics[scale=0.23]{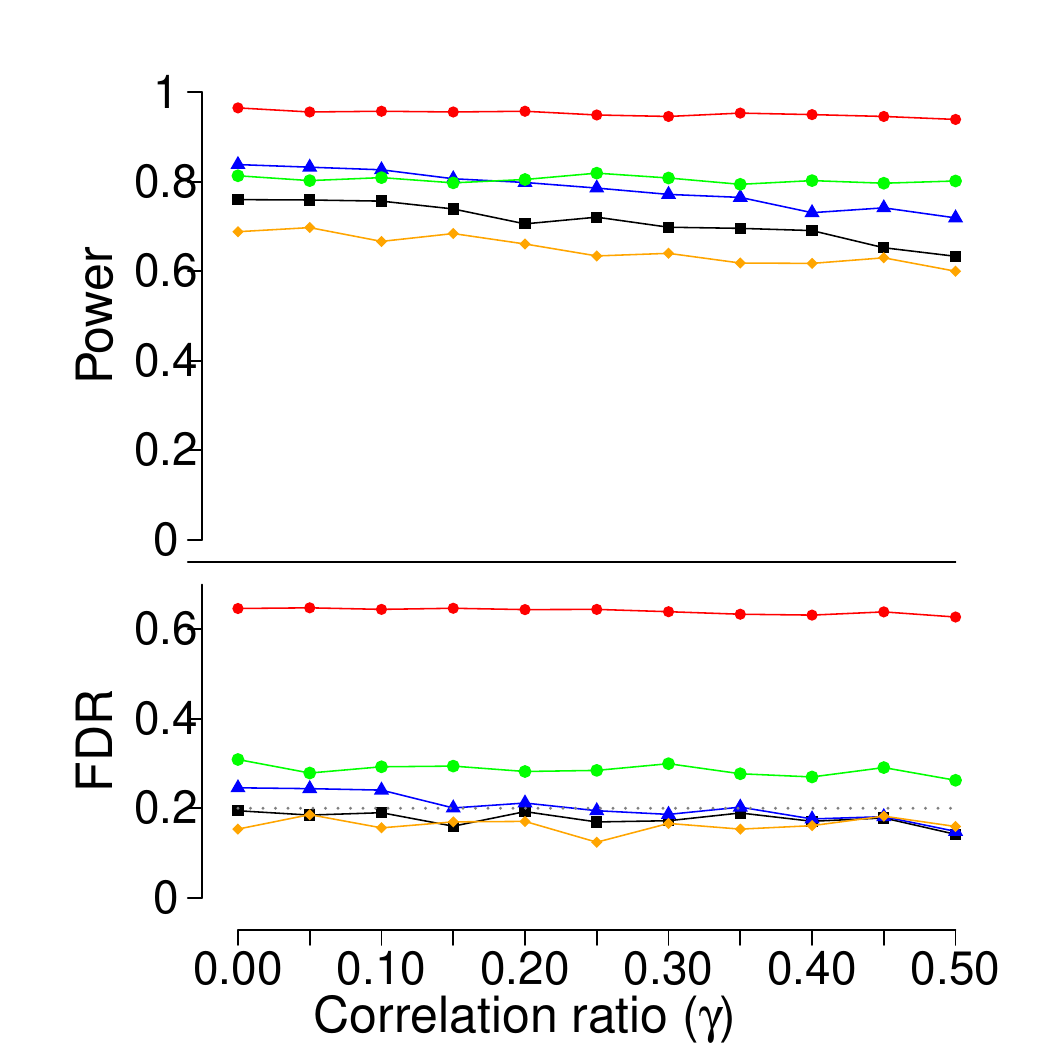}
    
    \caption{{\color{black}The power and the FDR for identifying group level simultaneous signals with data generated from \textbf{Setting 1} for the \textbf{Binary} models (K=3) on \textbf{Scenario 1} when {\color{black}$n_1=n_2=n3=1000$.} Left column includes settings with $s_0 \neq 0, s_1=s_2=s_3=s_{12}=s_{13}=s_{23}=0$; middle column includes settings with $s_0=6, s_1=s_2=s_3 \neq 0, s_{12}=s_{23}=2,s_{13}=0$; right column includes settings with $s_0=12, s_1=s_2=s_3=0, s_{12}=s_{13}=s_{23} \neq 0.$}}
    \label{fig:figure3-2}
\end{figure}

\begin{figure}
    \centering
    \includegraphics[scale=0.23]{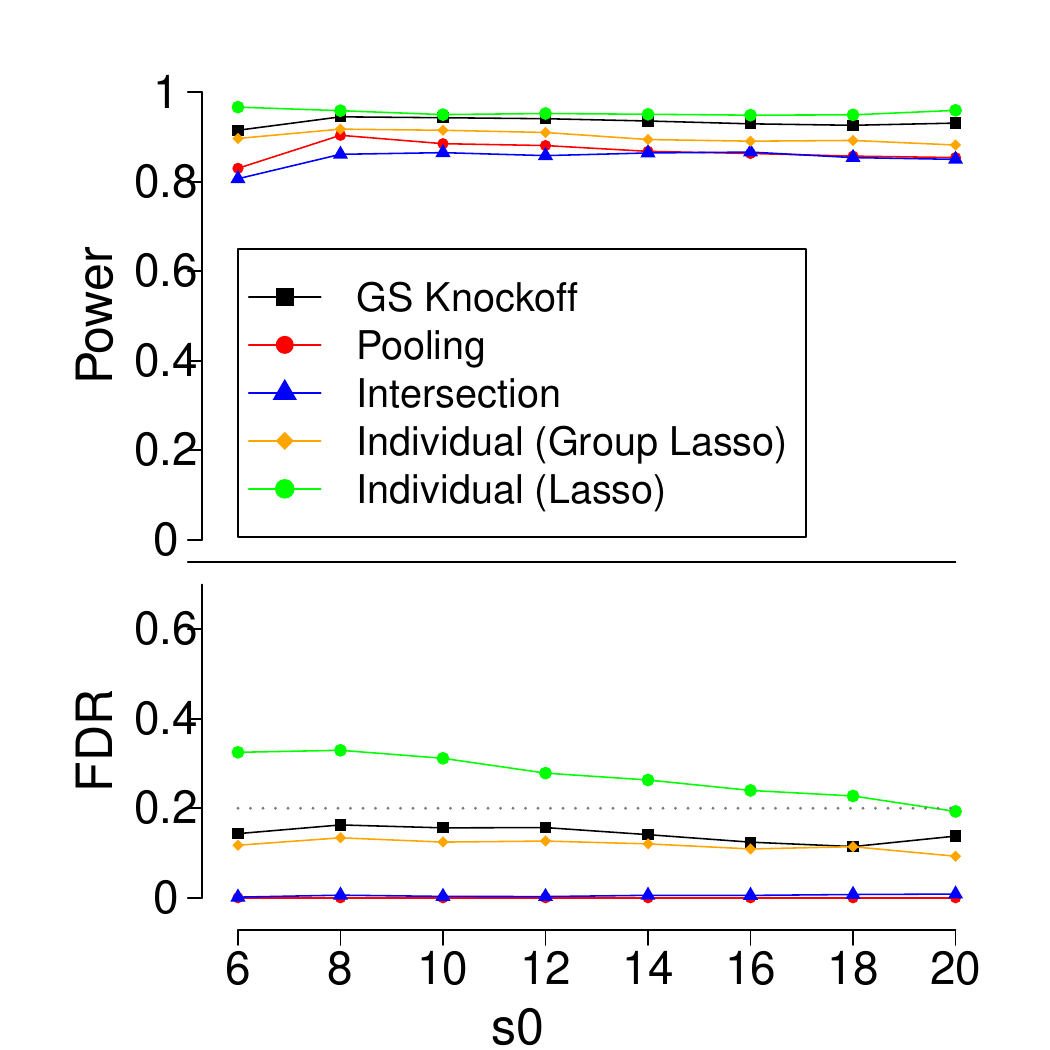}
    \includegraphics[scale=0.23]{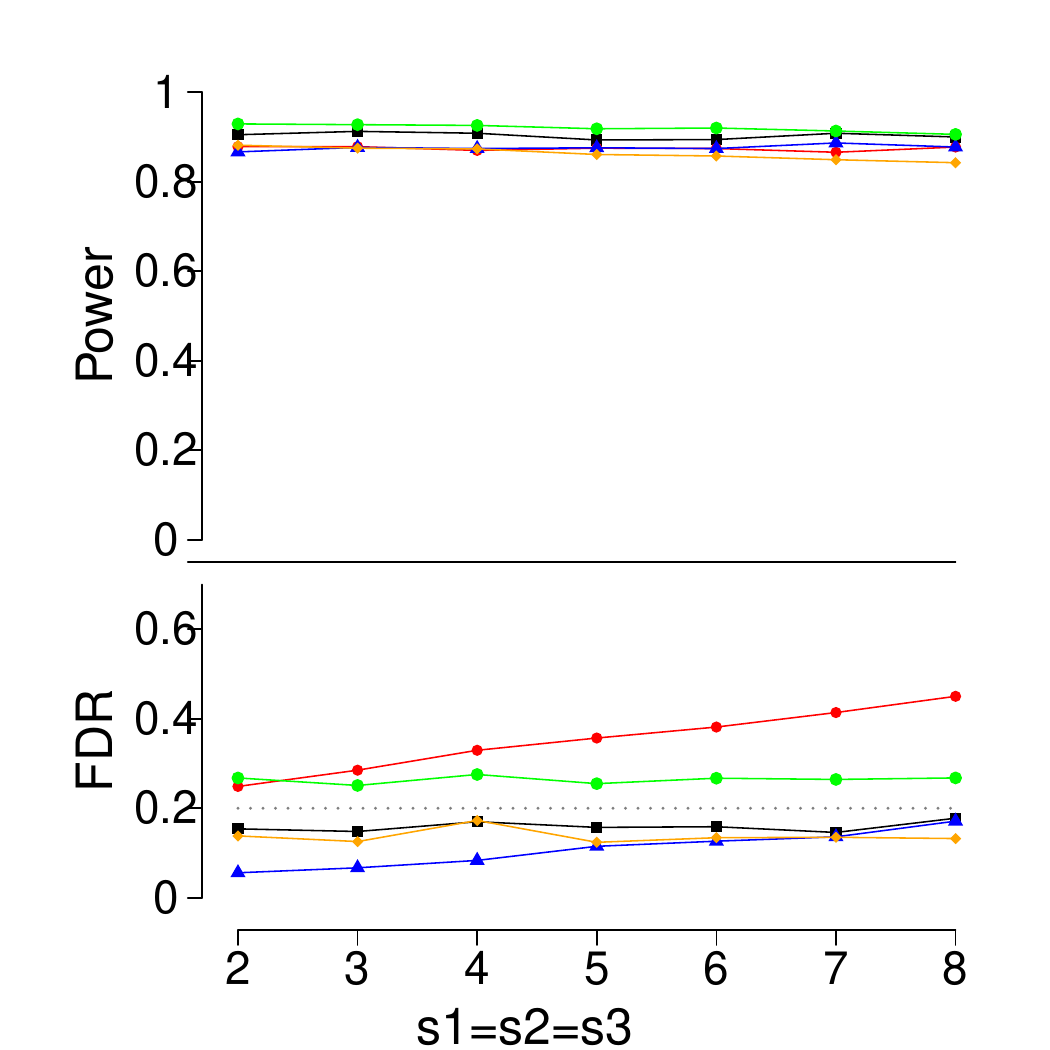}
    \includegraphics[scale=0.23]{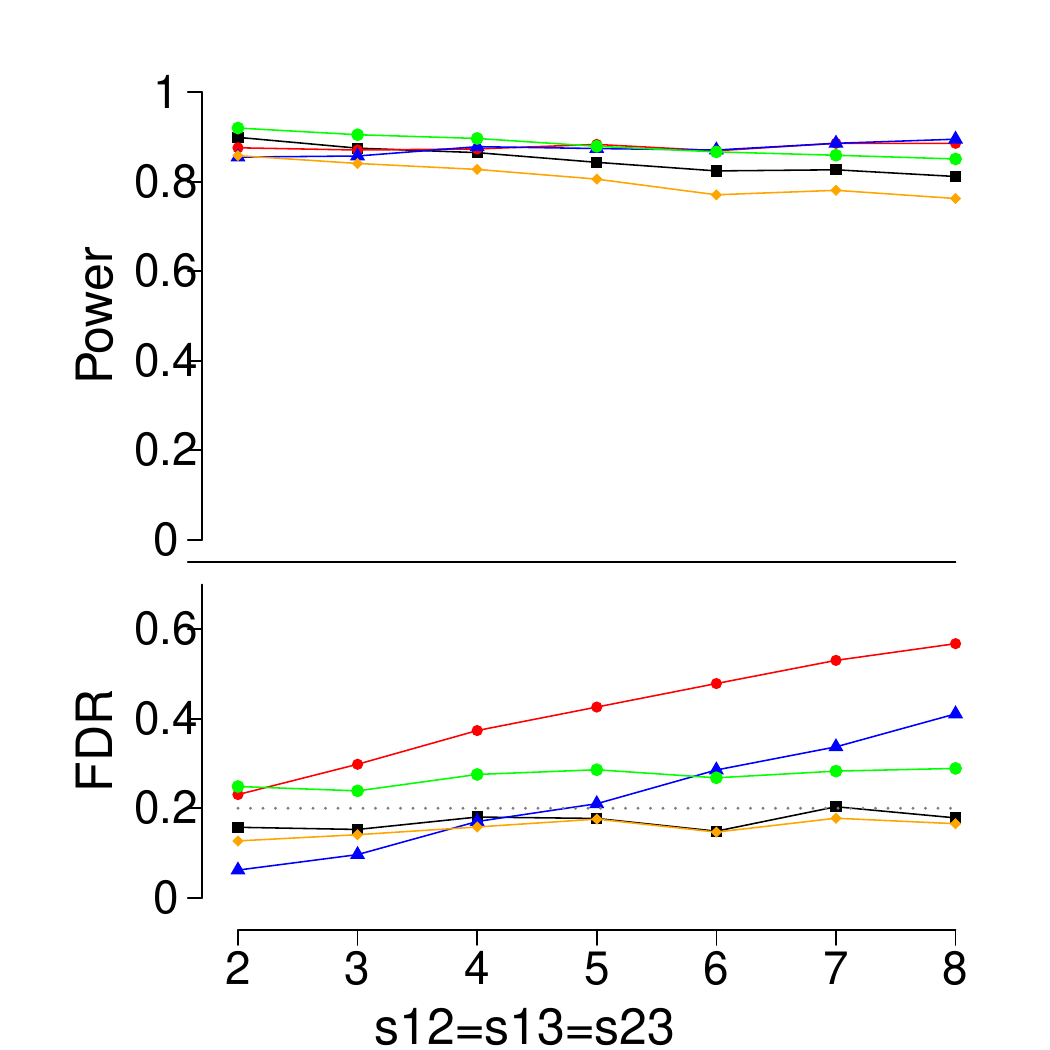}\\
    \includegraphics[scale=0.23]{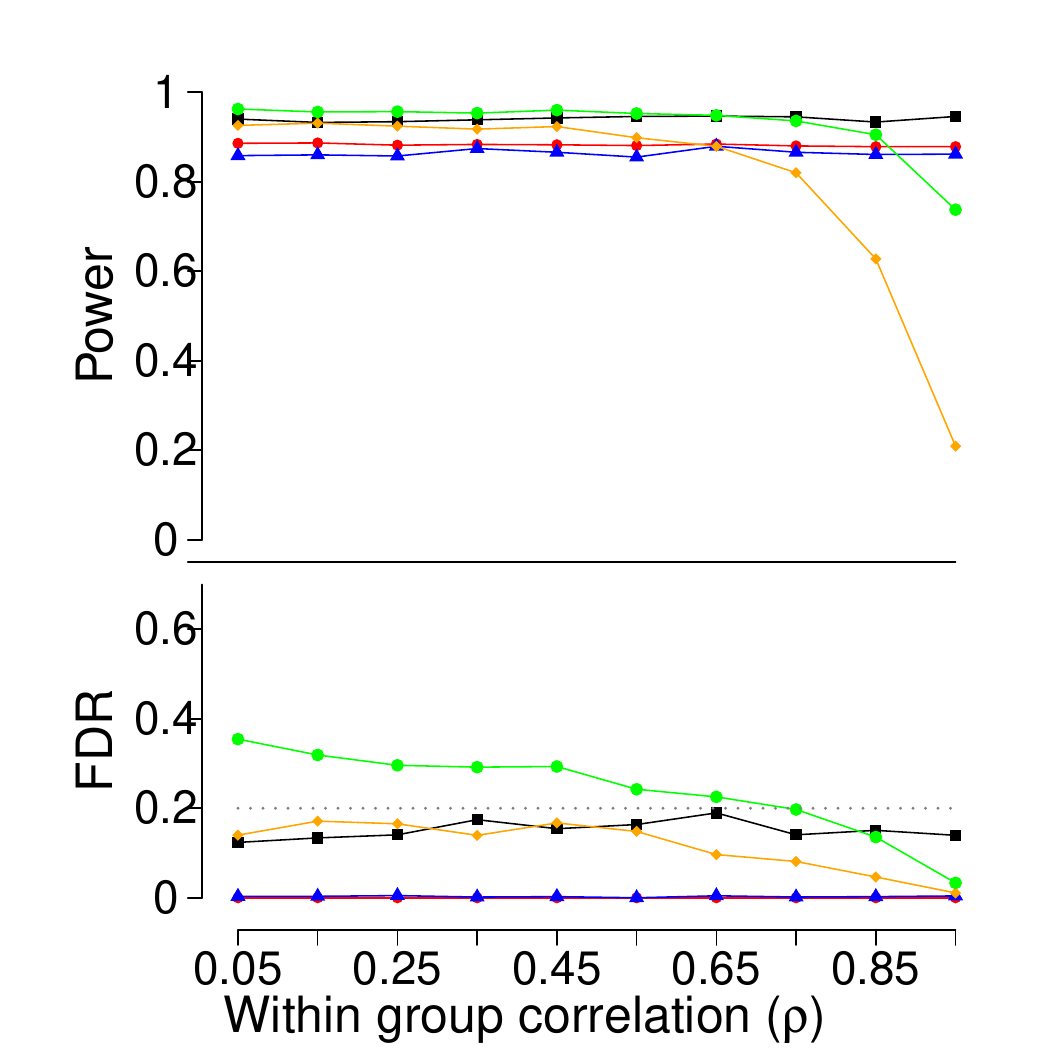}
    \includegraphics[scale=0.23]{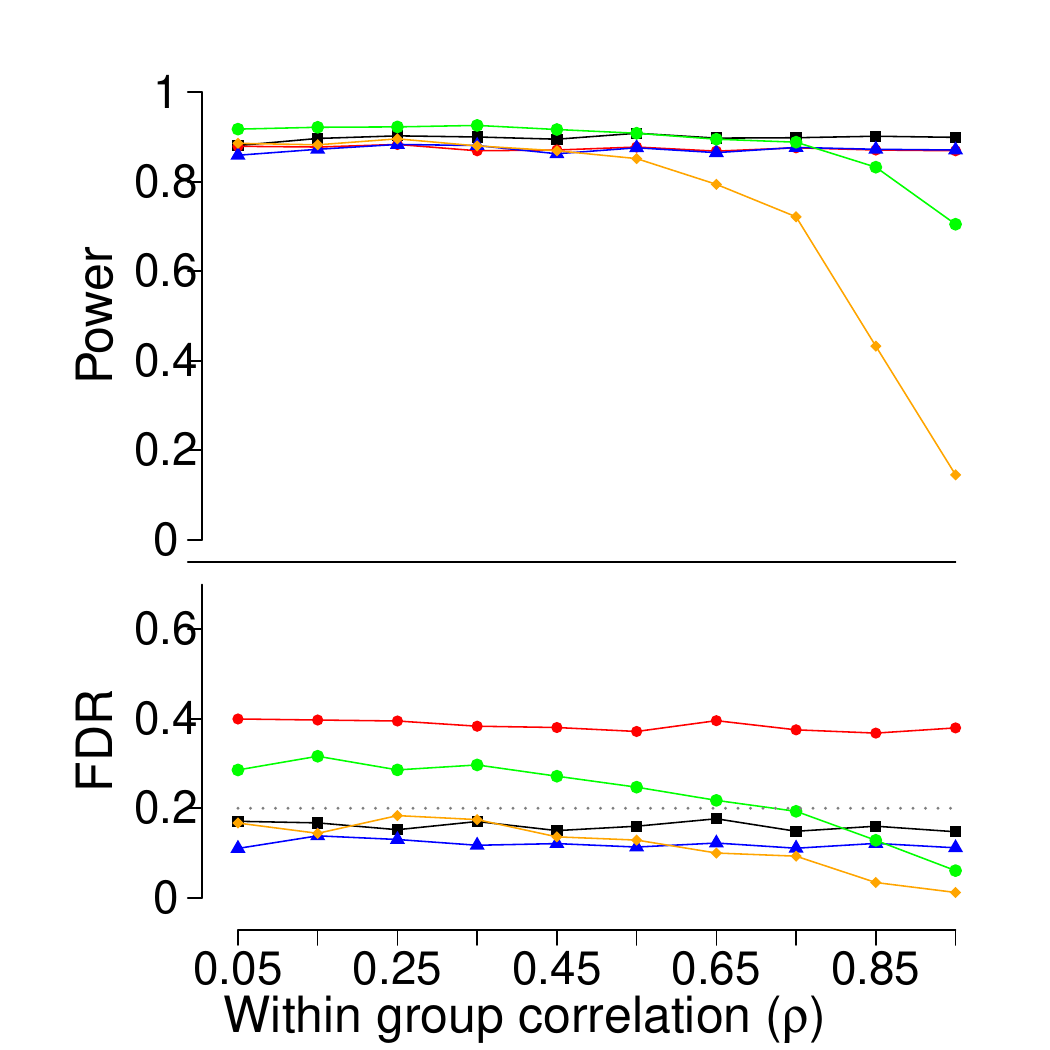}
    \includegraphics[scale=0.23]{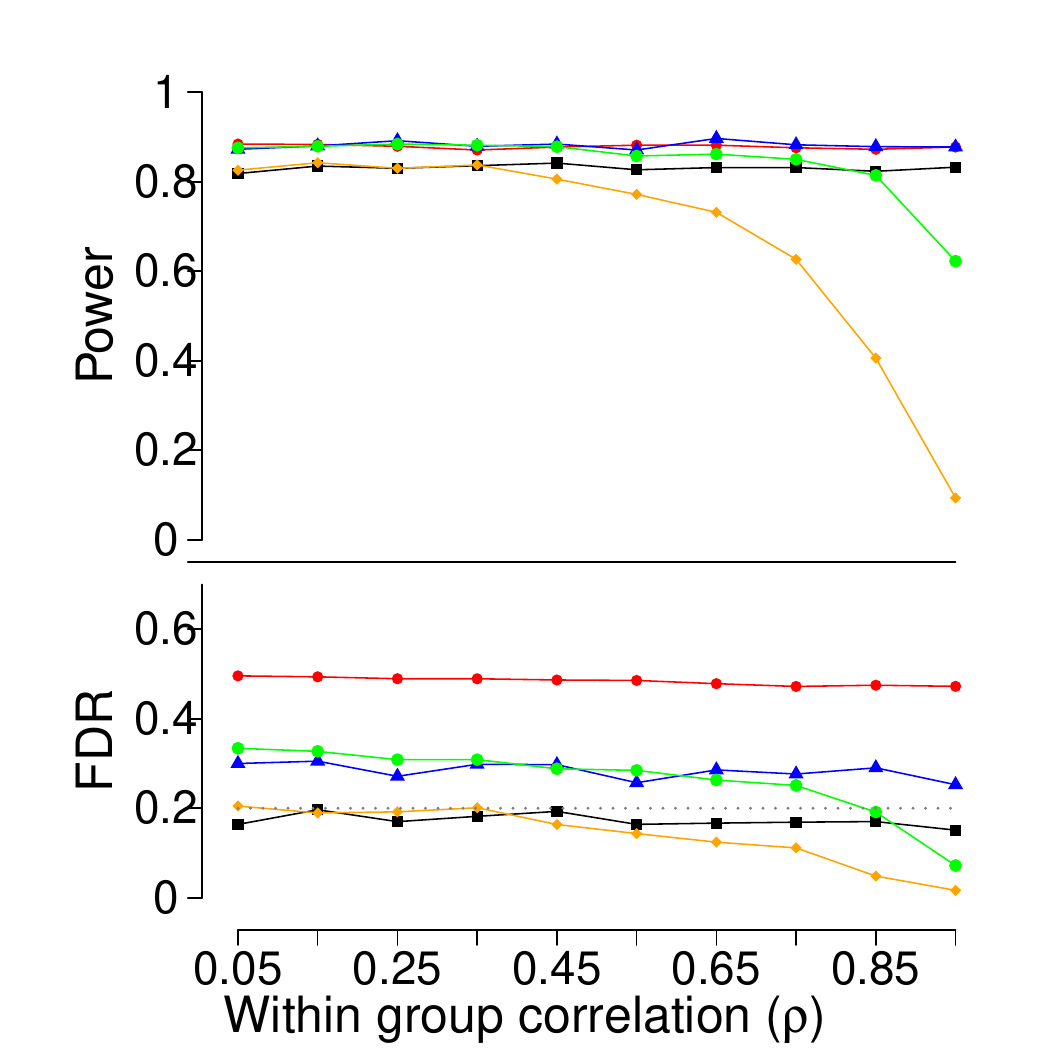}\\
    \includegraphics[scale=0.23]{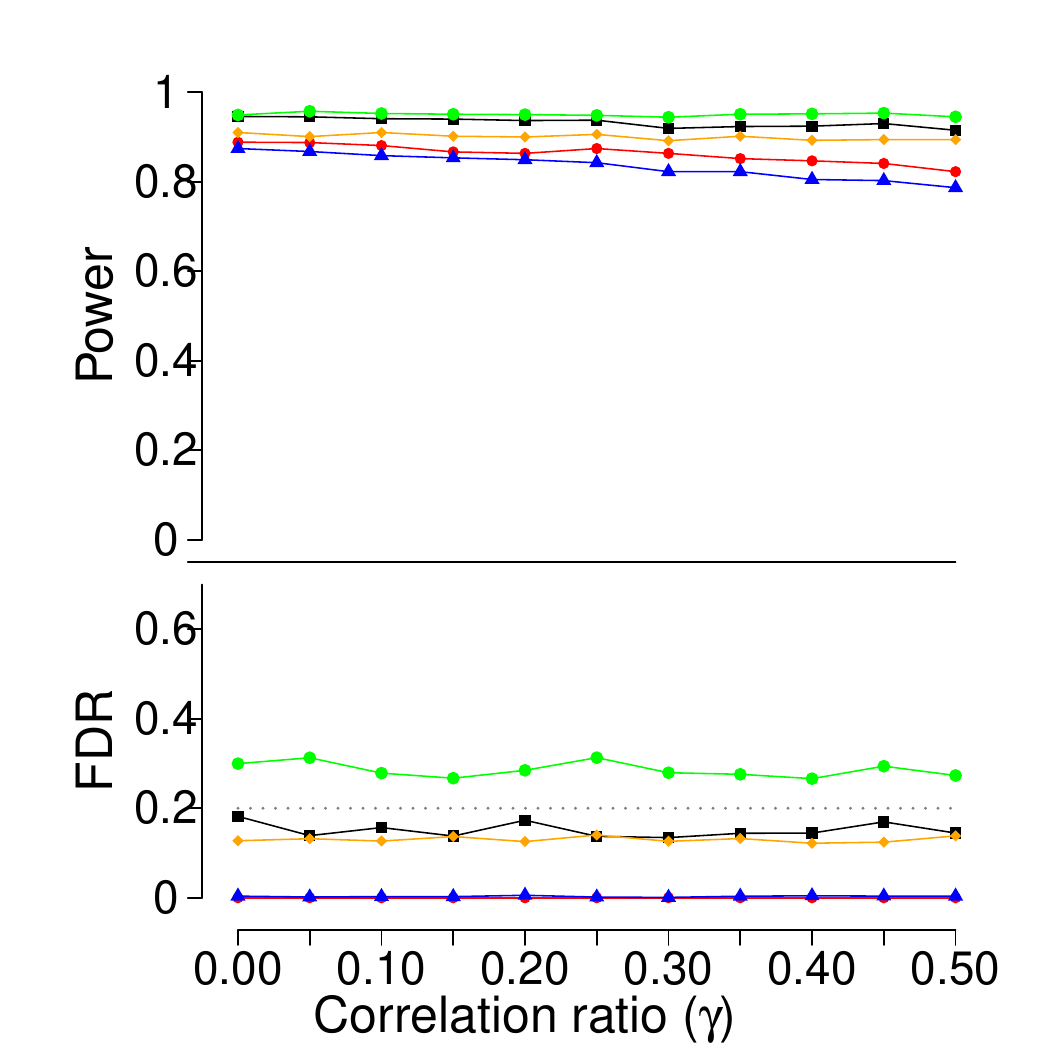}
    \includegraphics[scale=0.23]{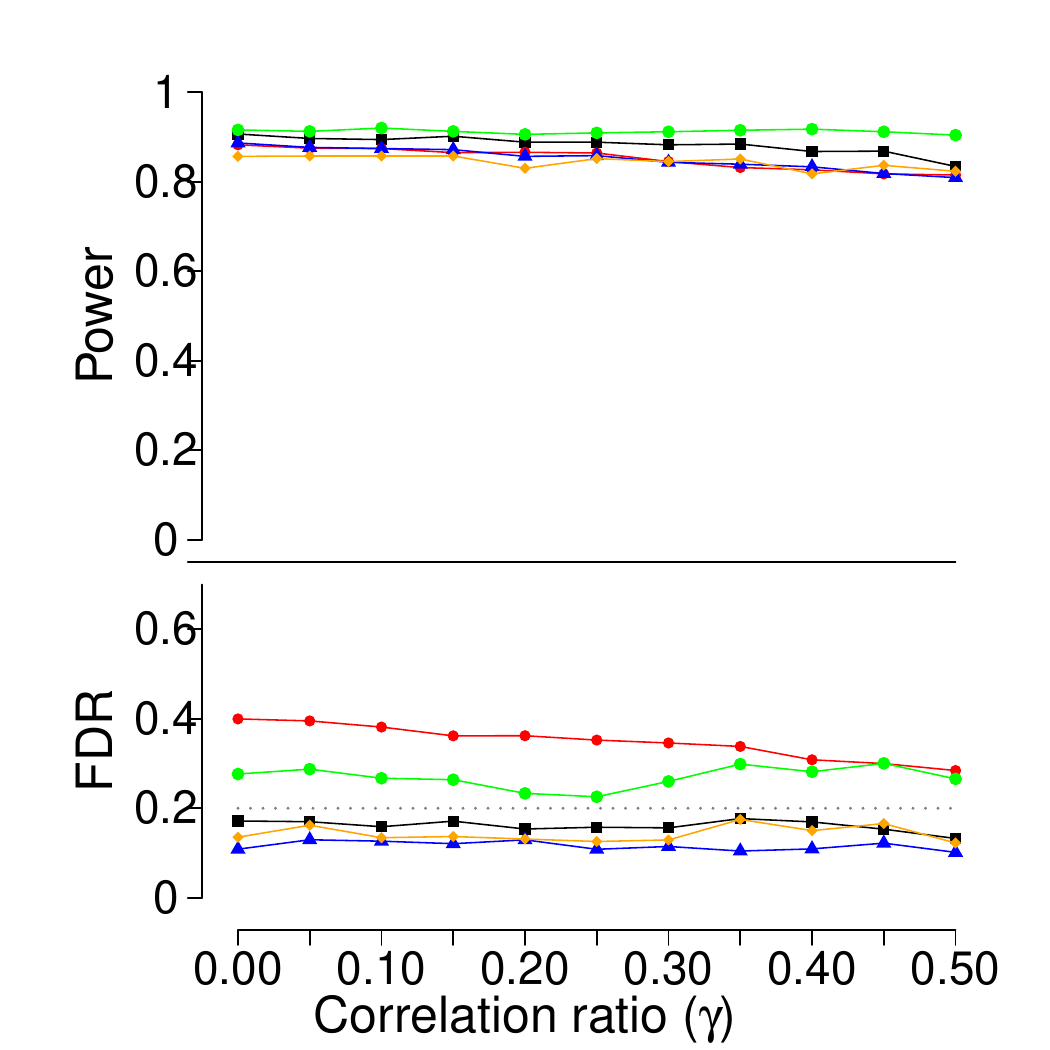}
    \includegraphics[scale=0.23]{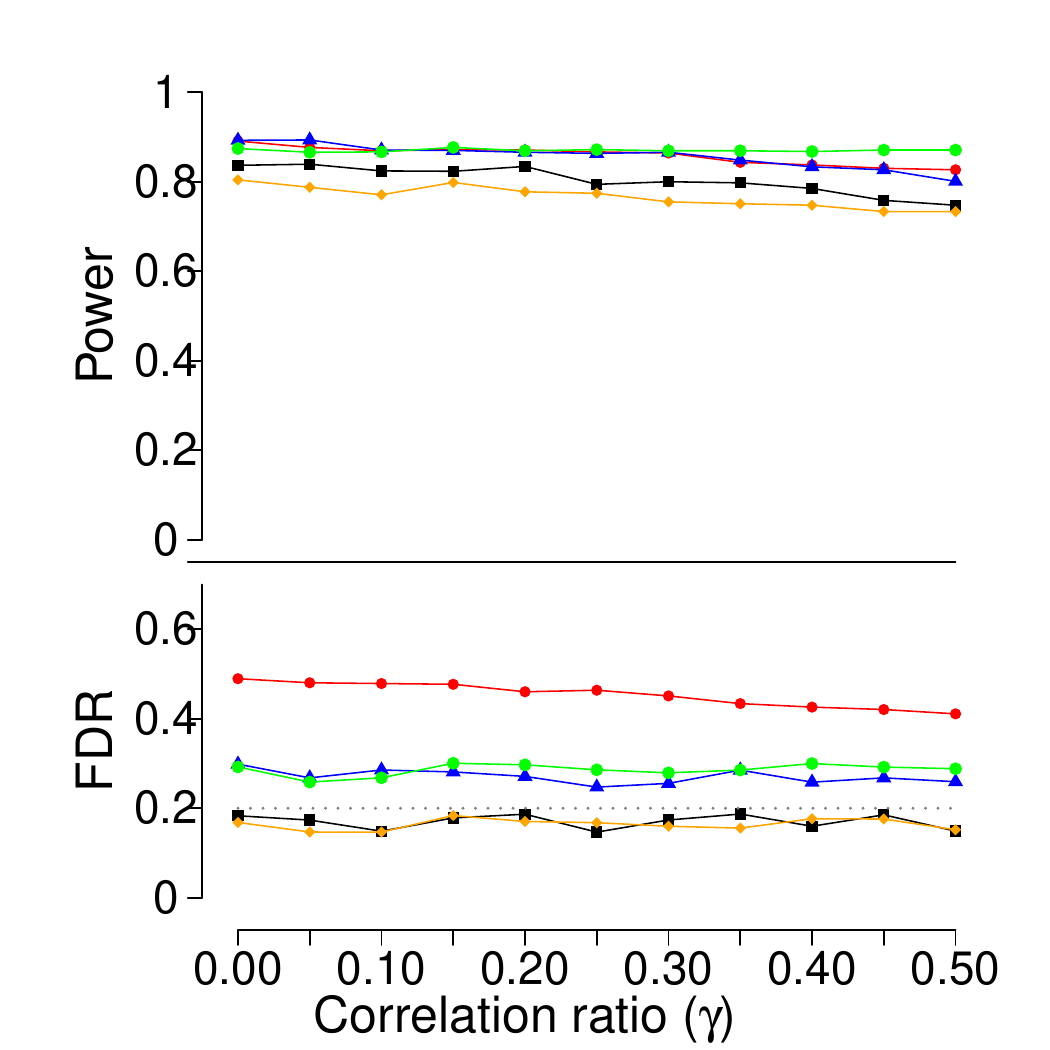}
    
    \caption{{\color{black}The power and the FDR for identifying group level simultaneous signals with data generated from \textbf{Setting 1} for the \textbf{Mixed} models (K=3) on \textbf{Scenario 1} when {\color{black}$n_1=n_2=n3=1000$}. Left column includes settings with $s_0 \neq 0, s_1=s_2=s_3=s_{12}=s_{13}=s_{23}=0$; middle column includes settings with $s_0=12, s_1=s_2=s_3 \neq 0, s_{12}=s_{23}=2,s_{13}=0$; right column includes settings with $s_0=12, s_1=s_2=s_3=0, s_{12}=s_{13}=s_{23} \neq 0.$}}
    \label{fig:figure3-3}
\end{figure}

\begin{figure}
    \centering
    \includegraphics[scale=0.23]{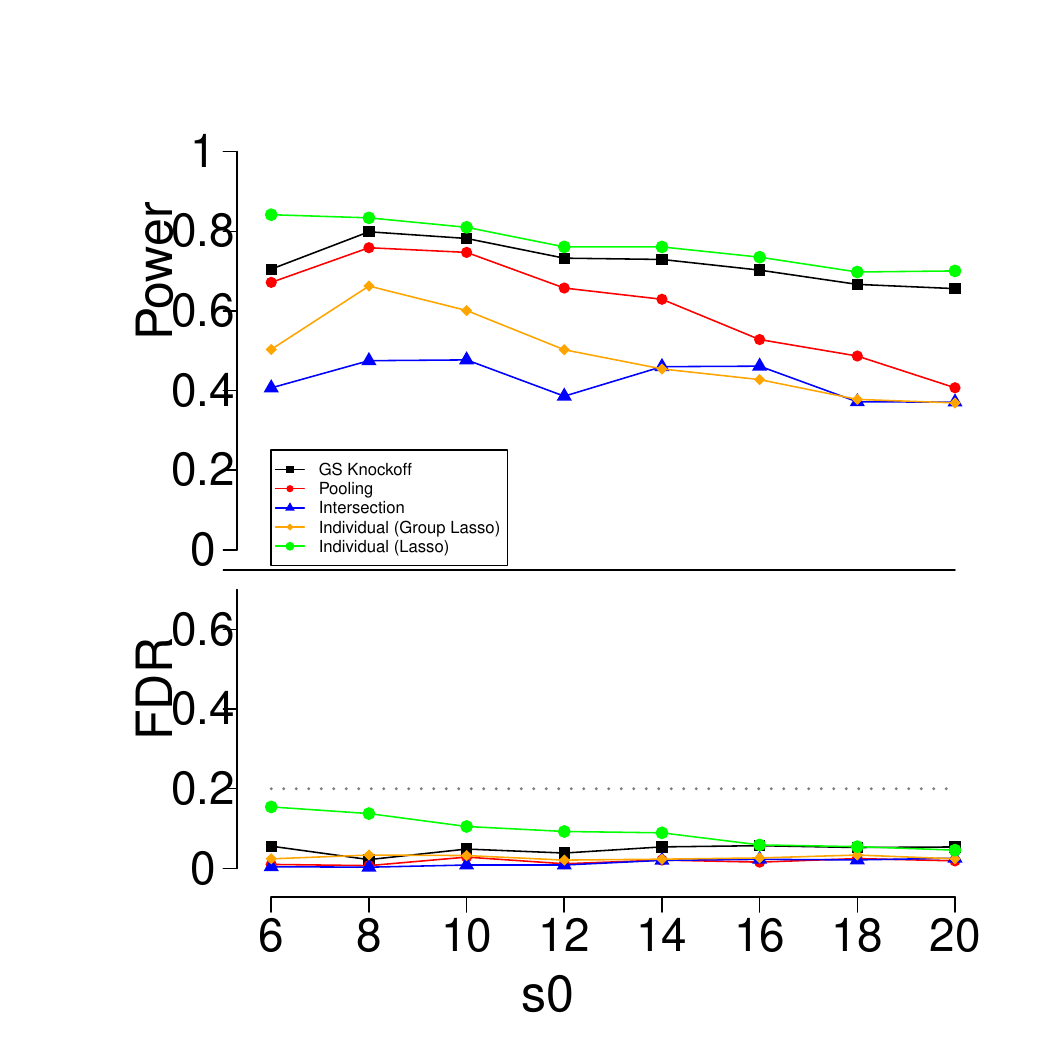}
    \includegraphics[scale=0.23]{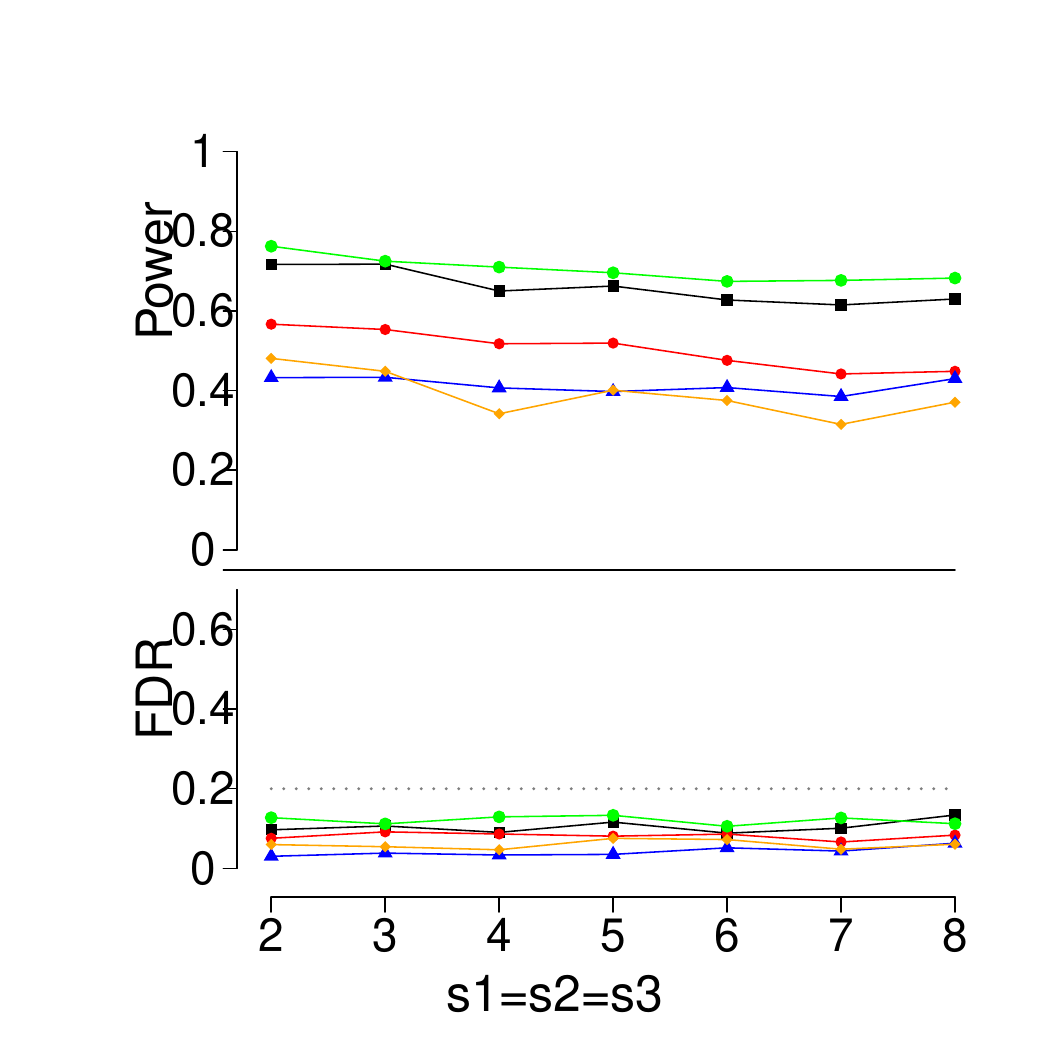}
    \includegraphics[scale=0.23]{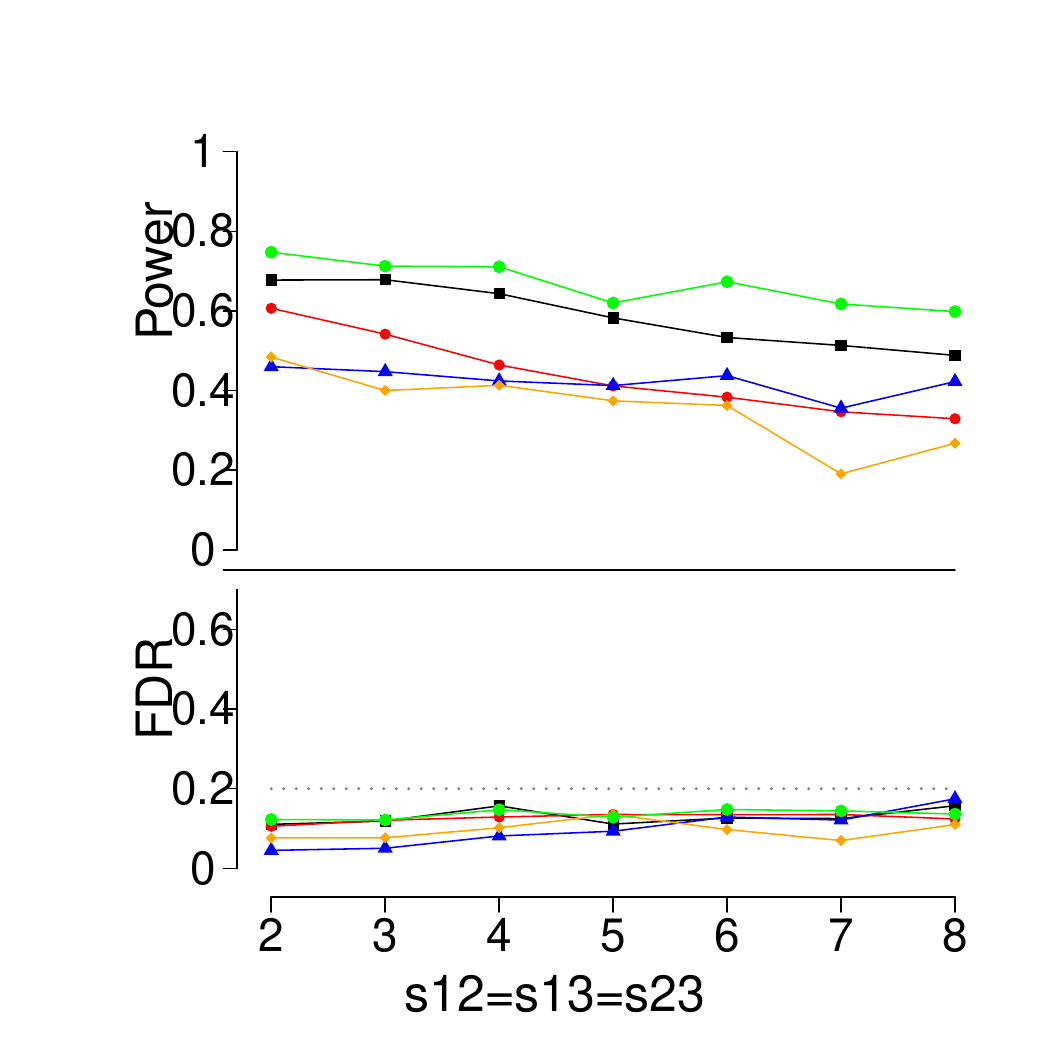}\\
    \includegraphics[scale=0.23]{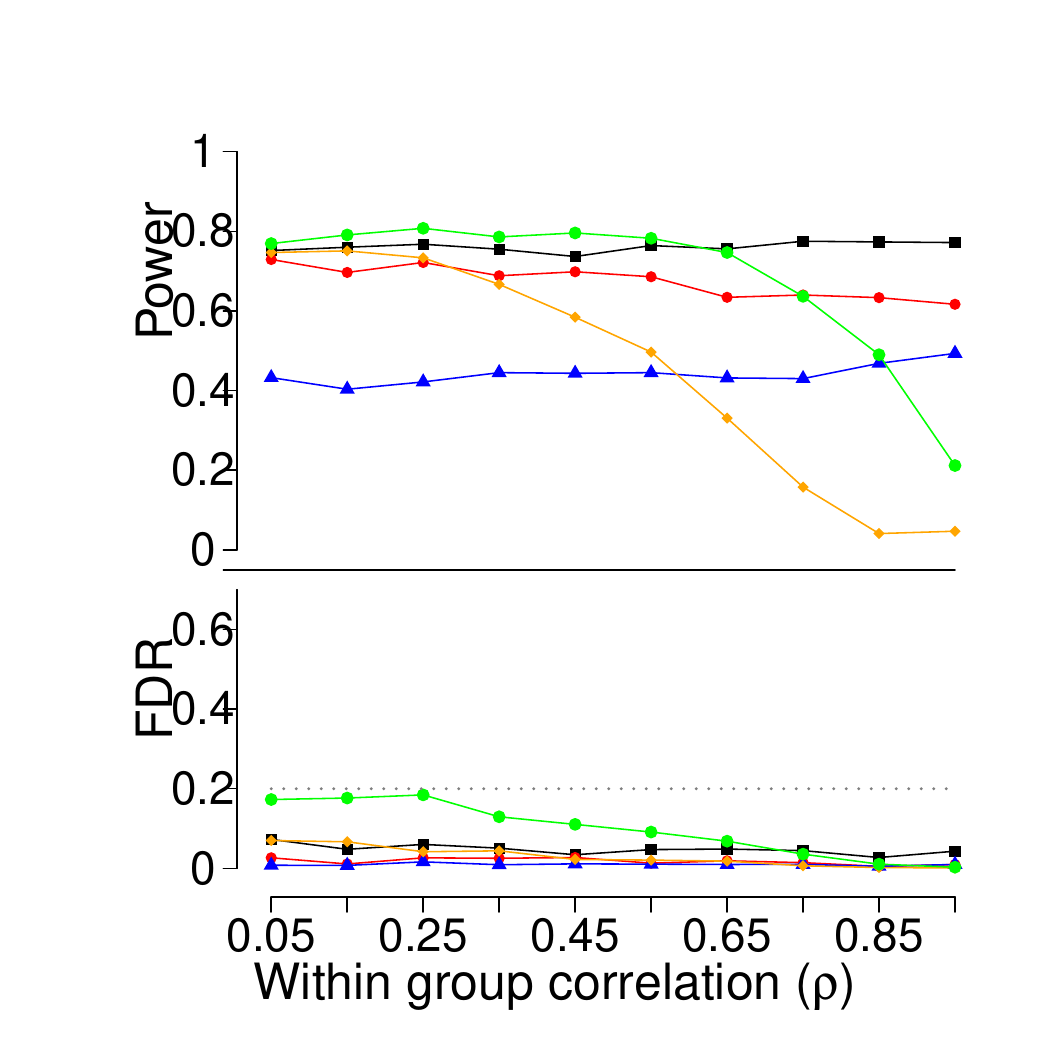}
    \includegraphics[scale=0.23]{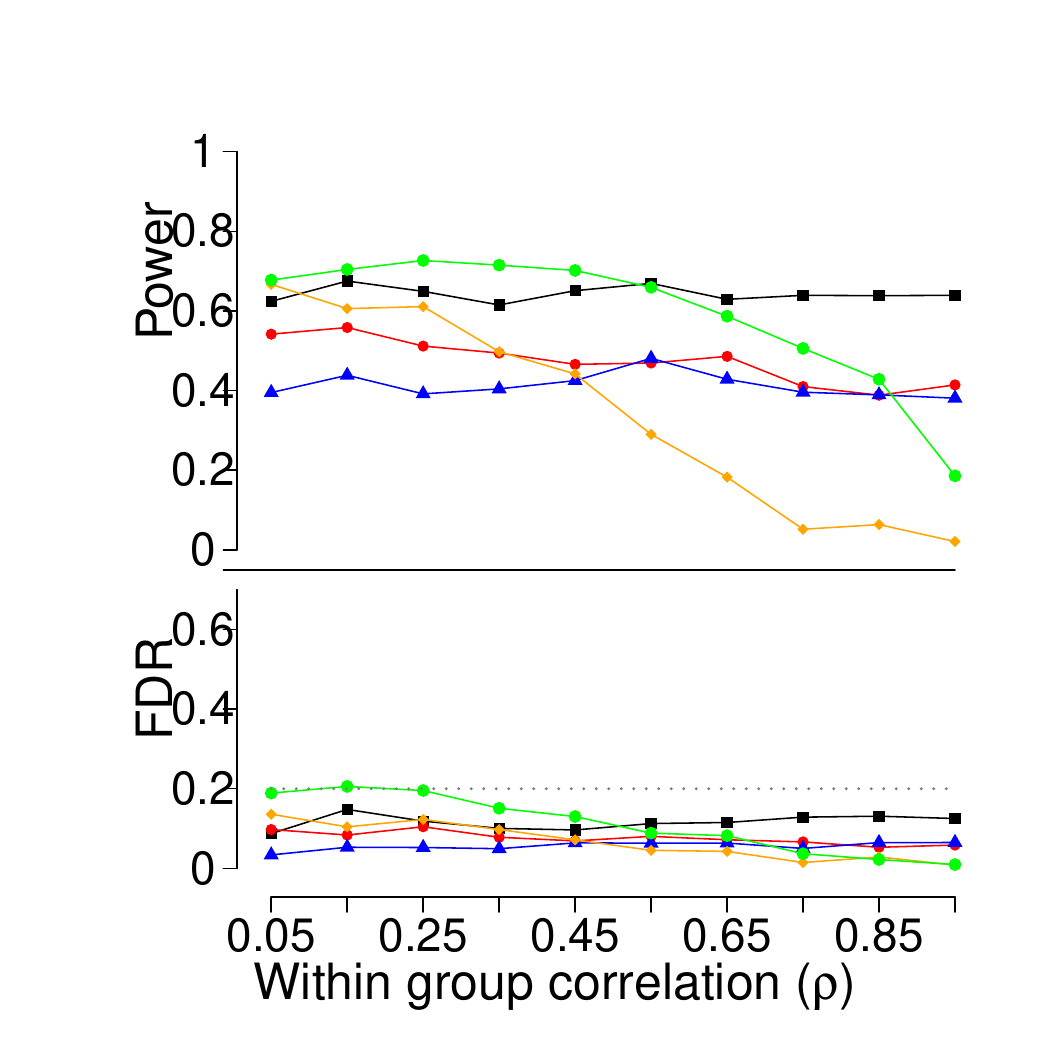}
    \includegraphics[scale=0.23]{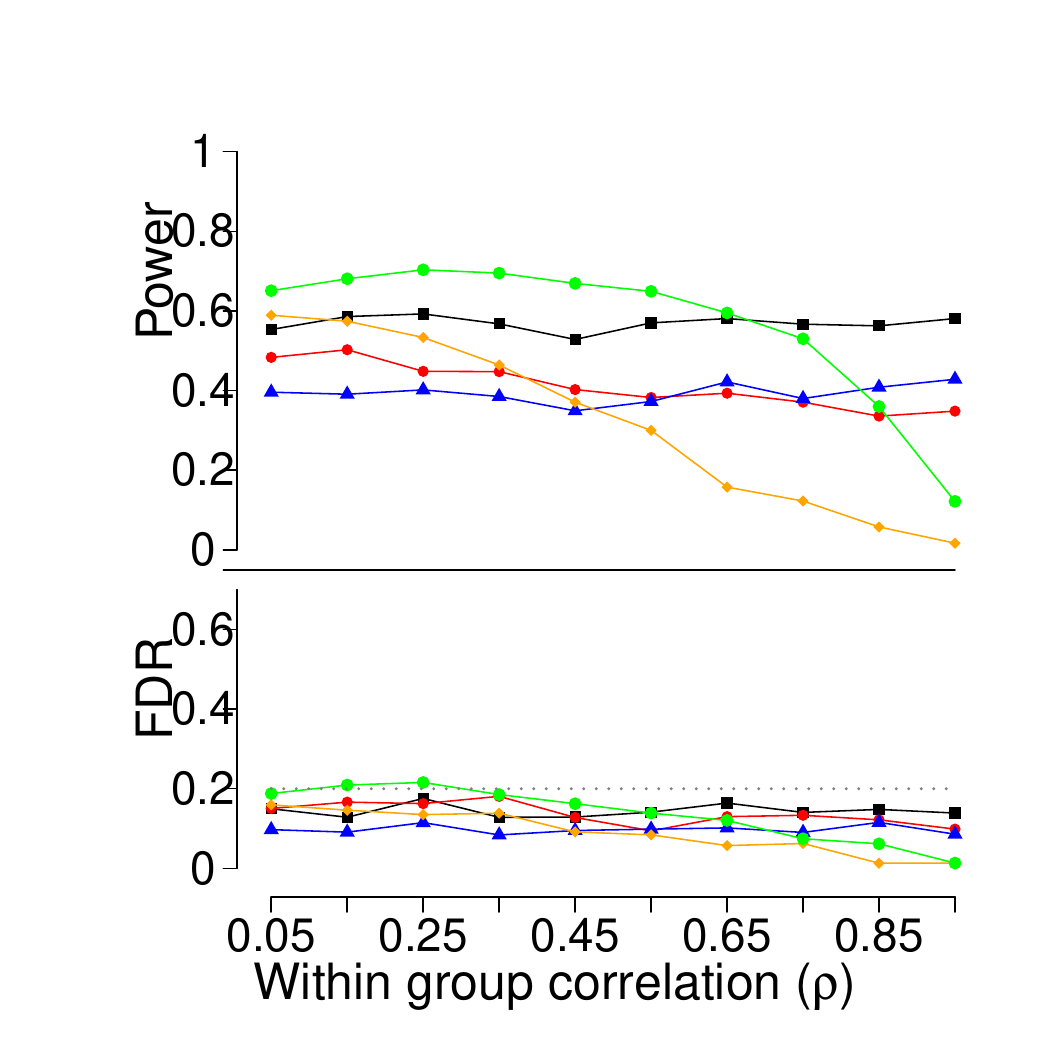}\\
    \includegraphics[scale=0.23]{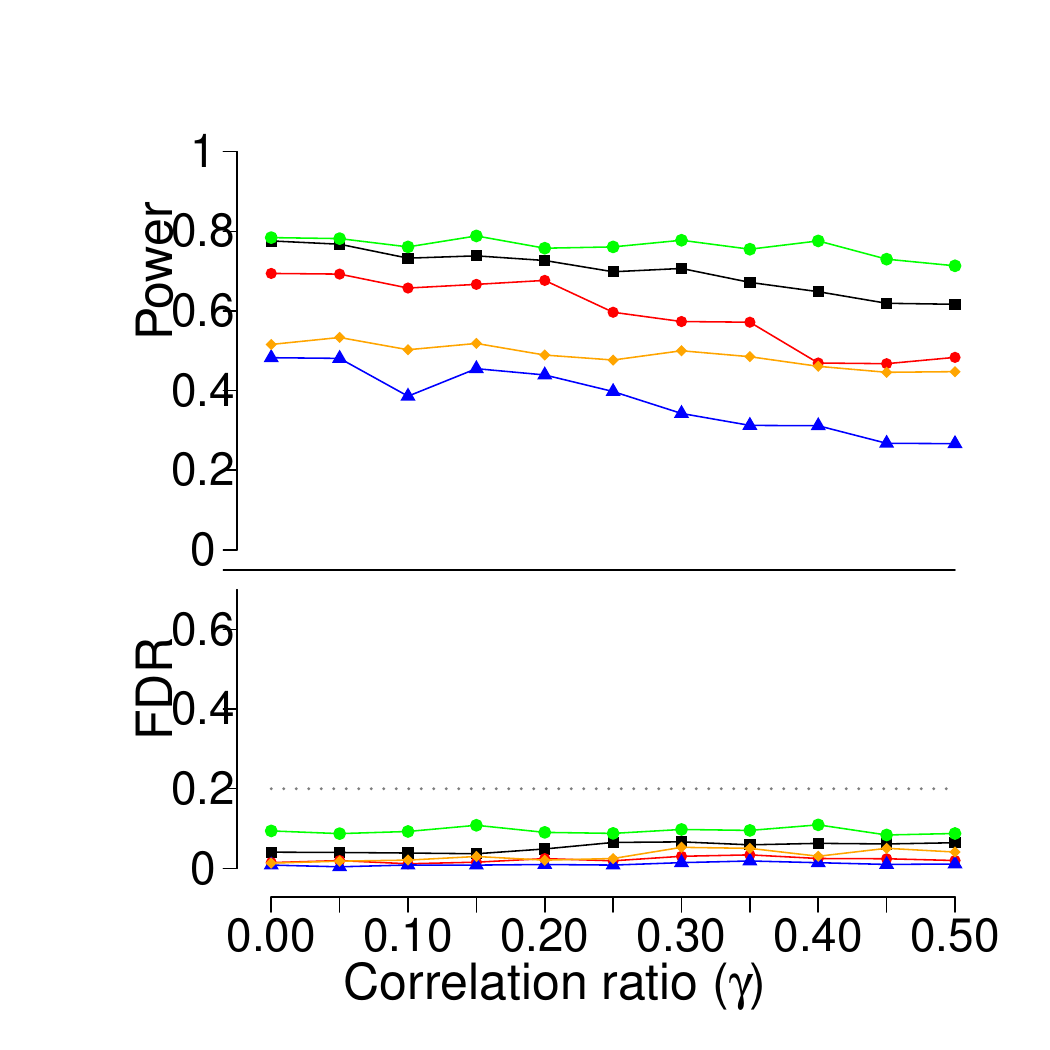}
    \includegraphics[scale=0.23]{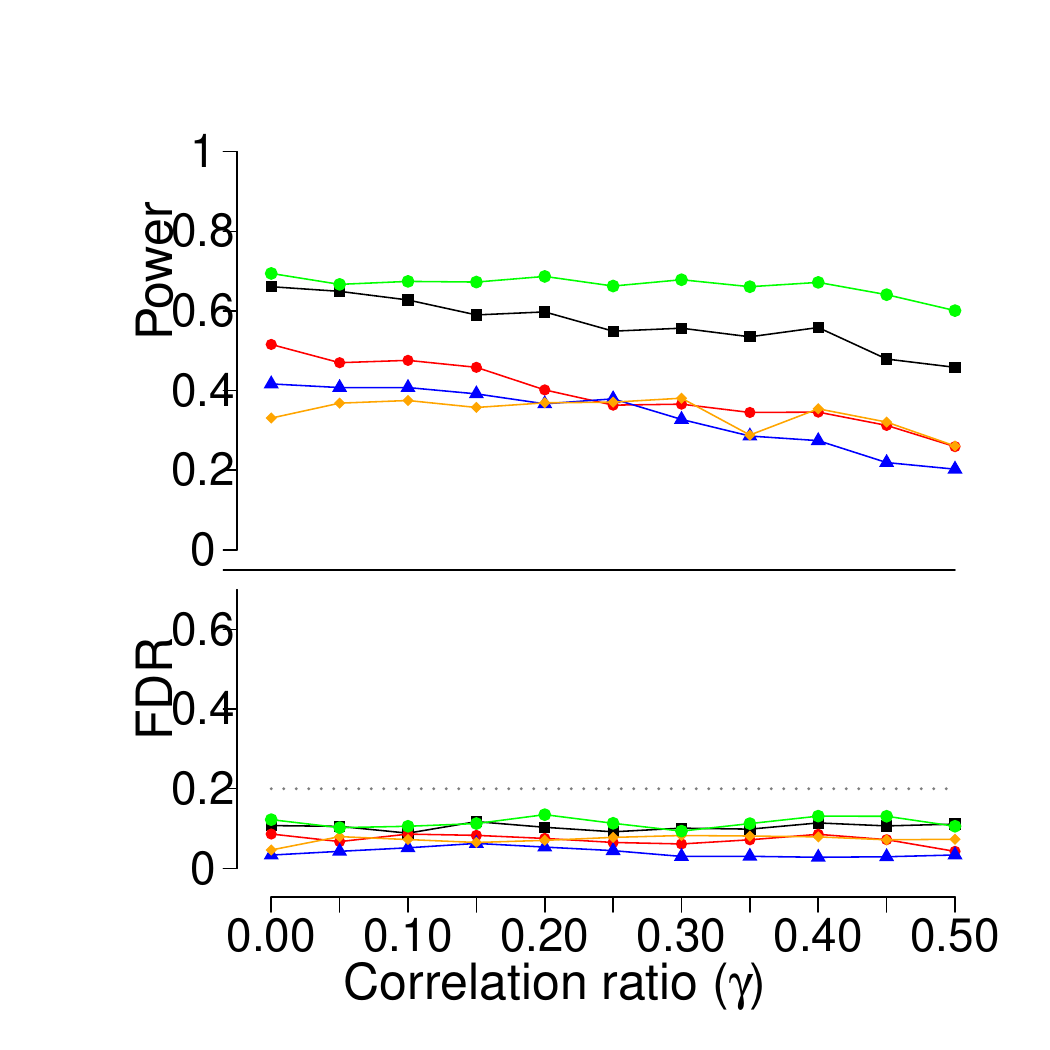}
    \includegraphics[scale=0.23]{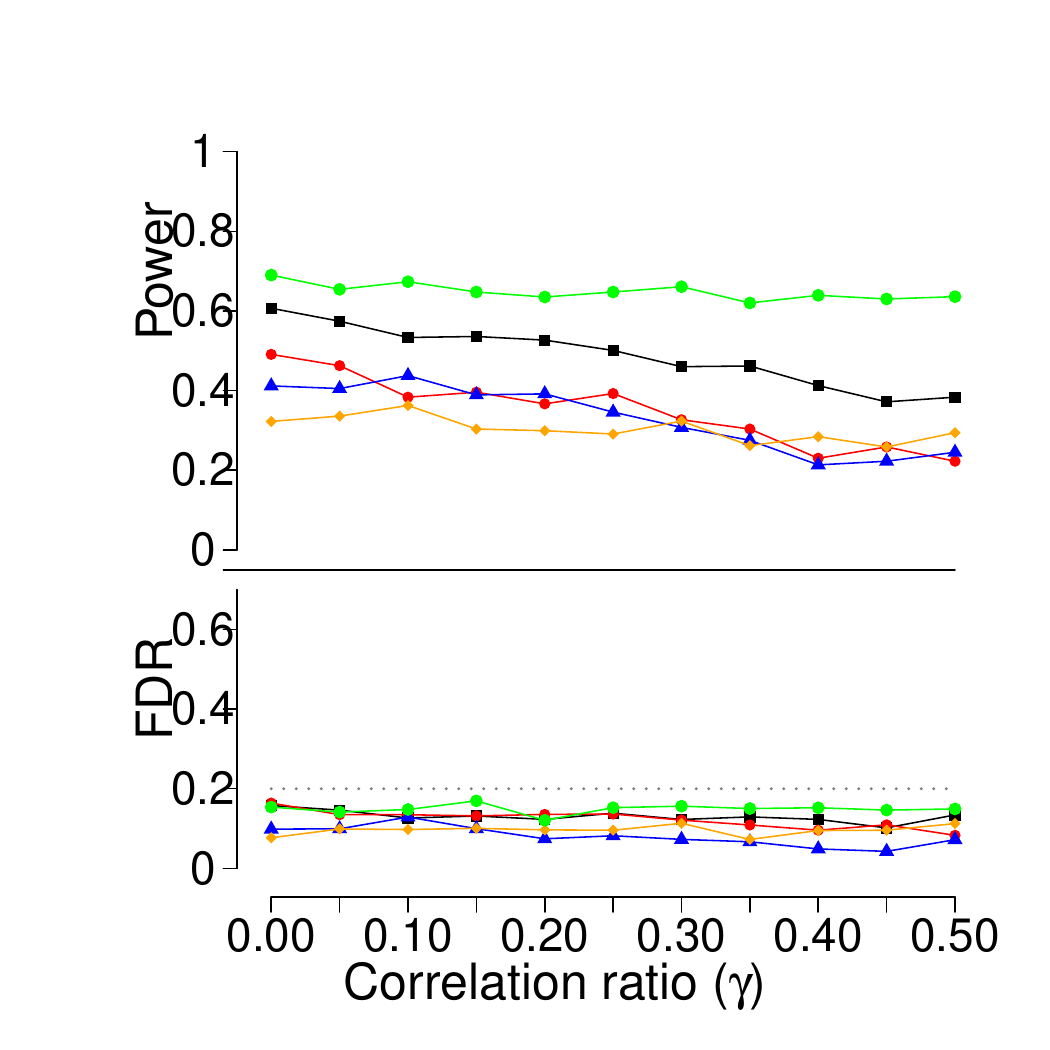}
    
    \caption{{\color{black} The power and the FDR for identifying group level simultaneous signals with data generated from \textbf{Setting 1} for the \textbf{Continuous} models (K=3) on \textbf{Scenario 1} when {\color{black}$n_1=n_2=n3=200$.} Left column includes settings with $s_0 \neq 0, s_1=s_2=s_3=s_{12}=s_{13}=s_{23}=0$; middle column includes settings with $s_0=12, s_1=s_2=s_3 \neq 0, s_{12}=s_{23}=2,s_{13}=0$; right column includes settings with $s_0=6, s_1=s_2=s_3=0, s_{12}=s_{13}=s_{23} \neq 0.$}}
    \label{fig:figure3-}
\end{figure}

\begin{figure}
    \centering
    \includegraphics[scale=0.23]{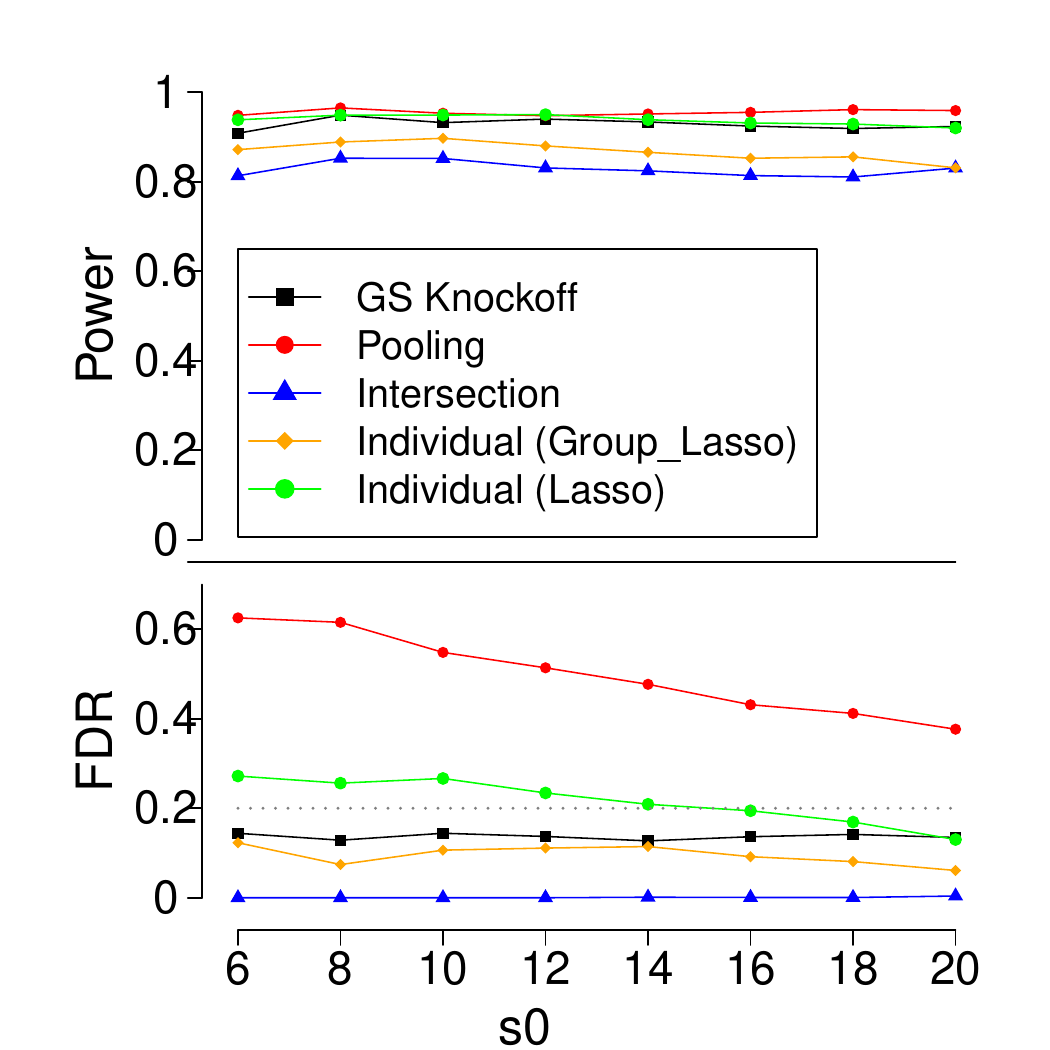}
    \includegraphics[scale=0.23]{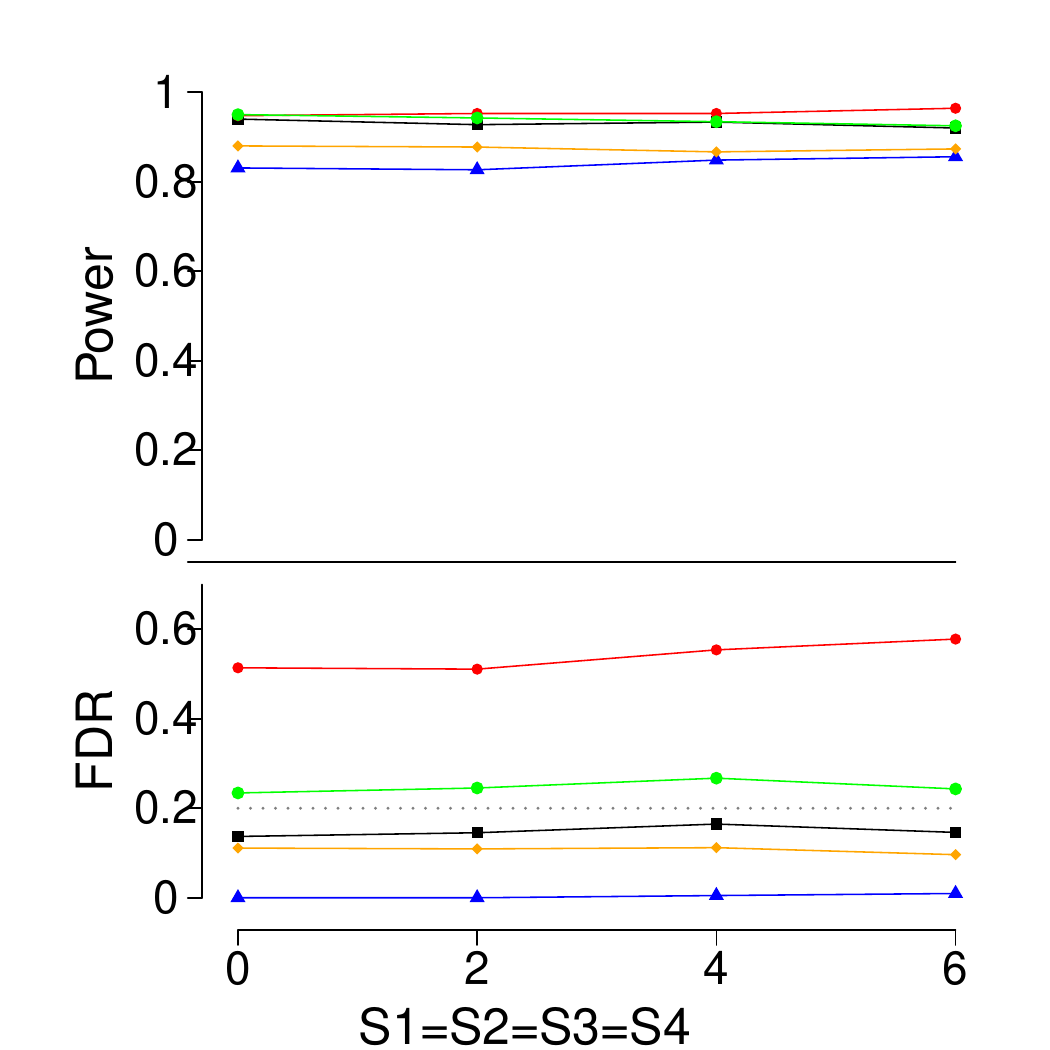}
    \includegraphics[scale=0.23]{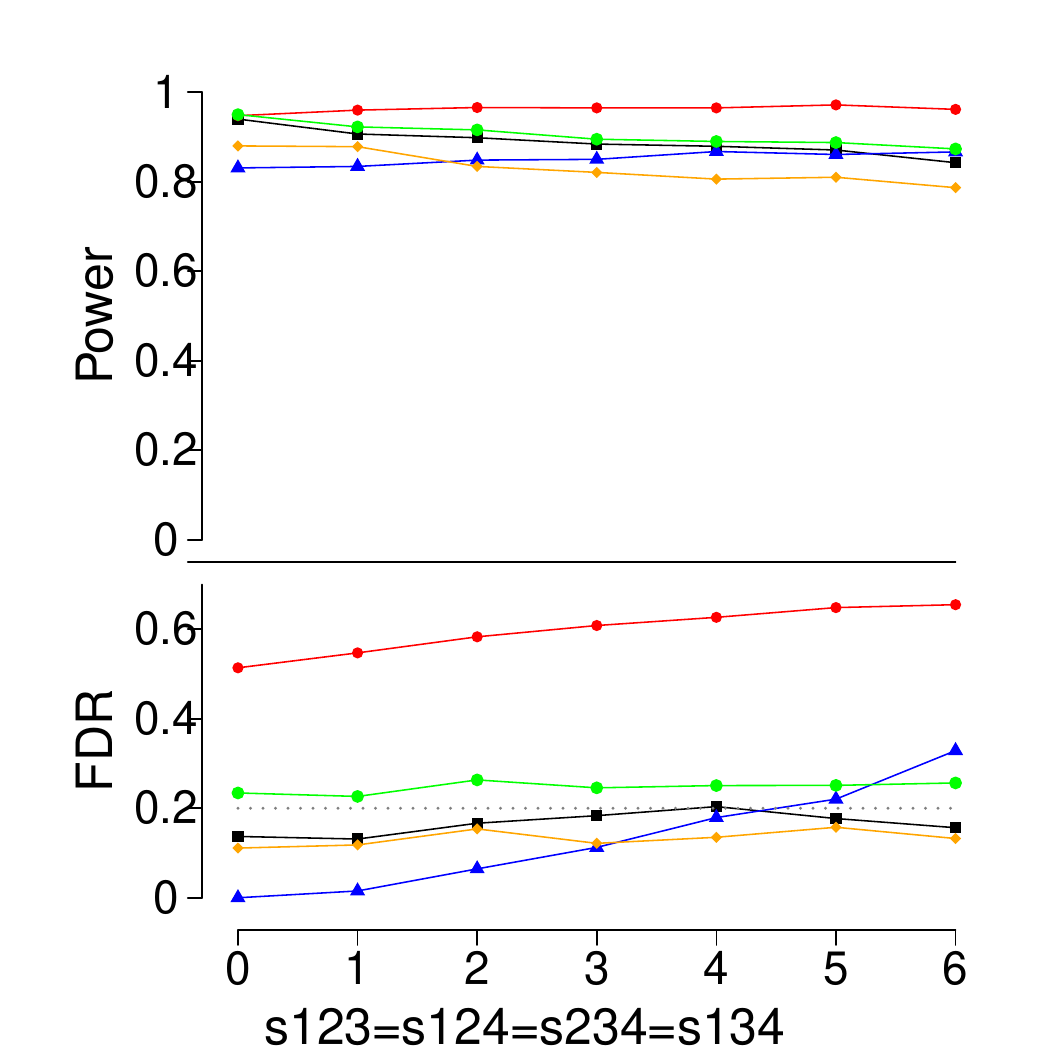}\\
    \includegraphics[scale=0.23]{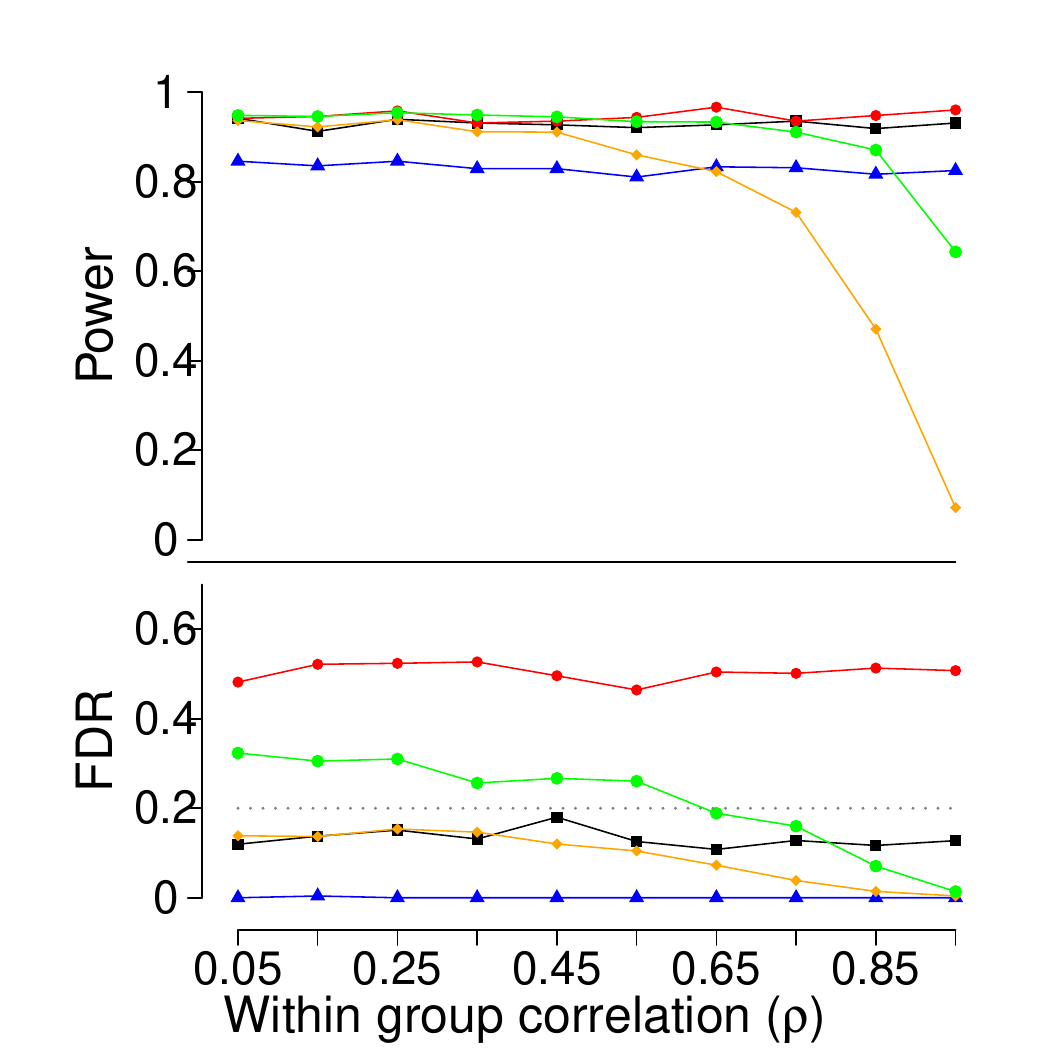}
    \includegraphics[scale=0.23]{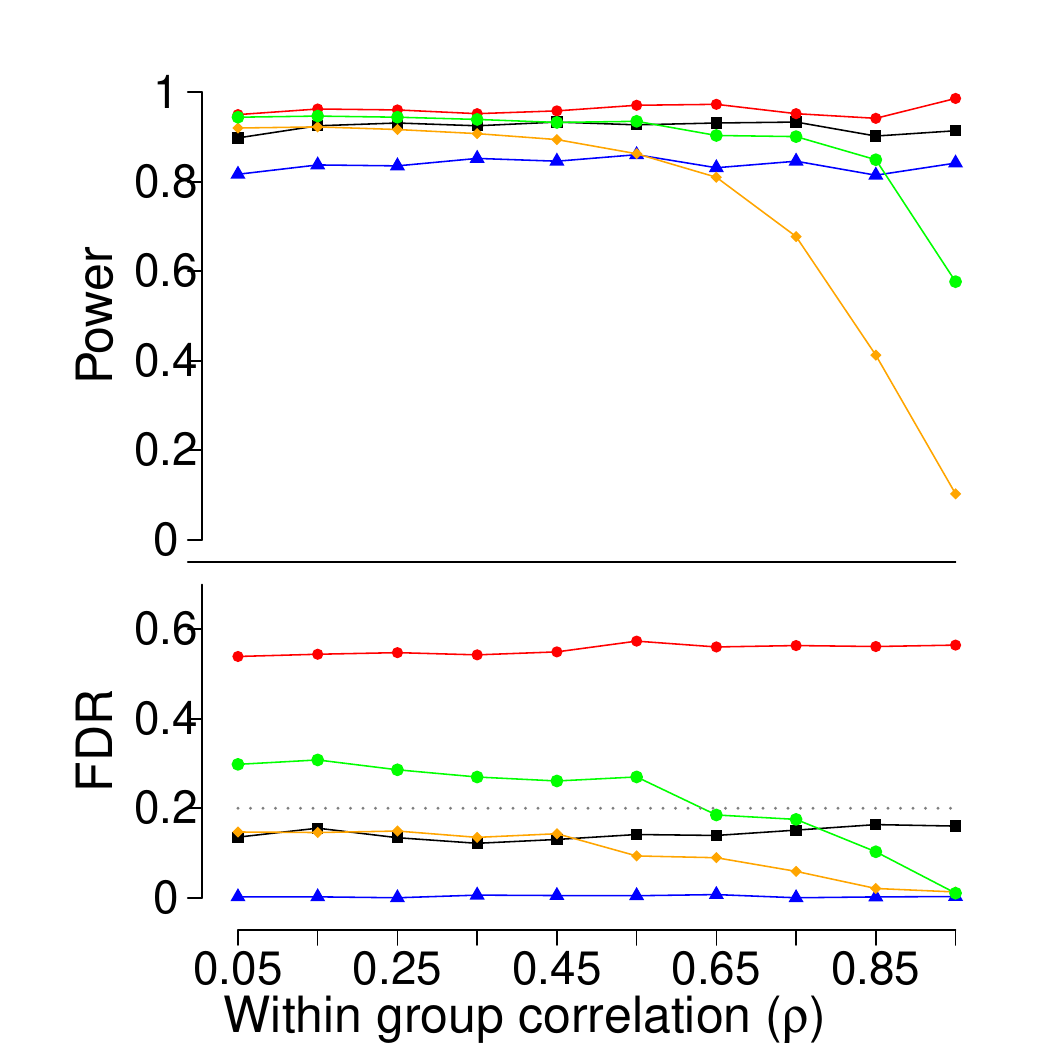}
    \includegraphics[scale=0.23]{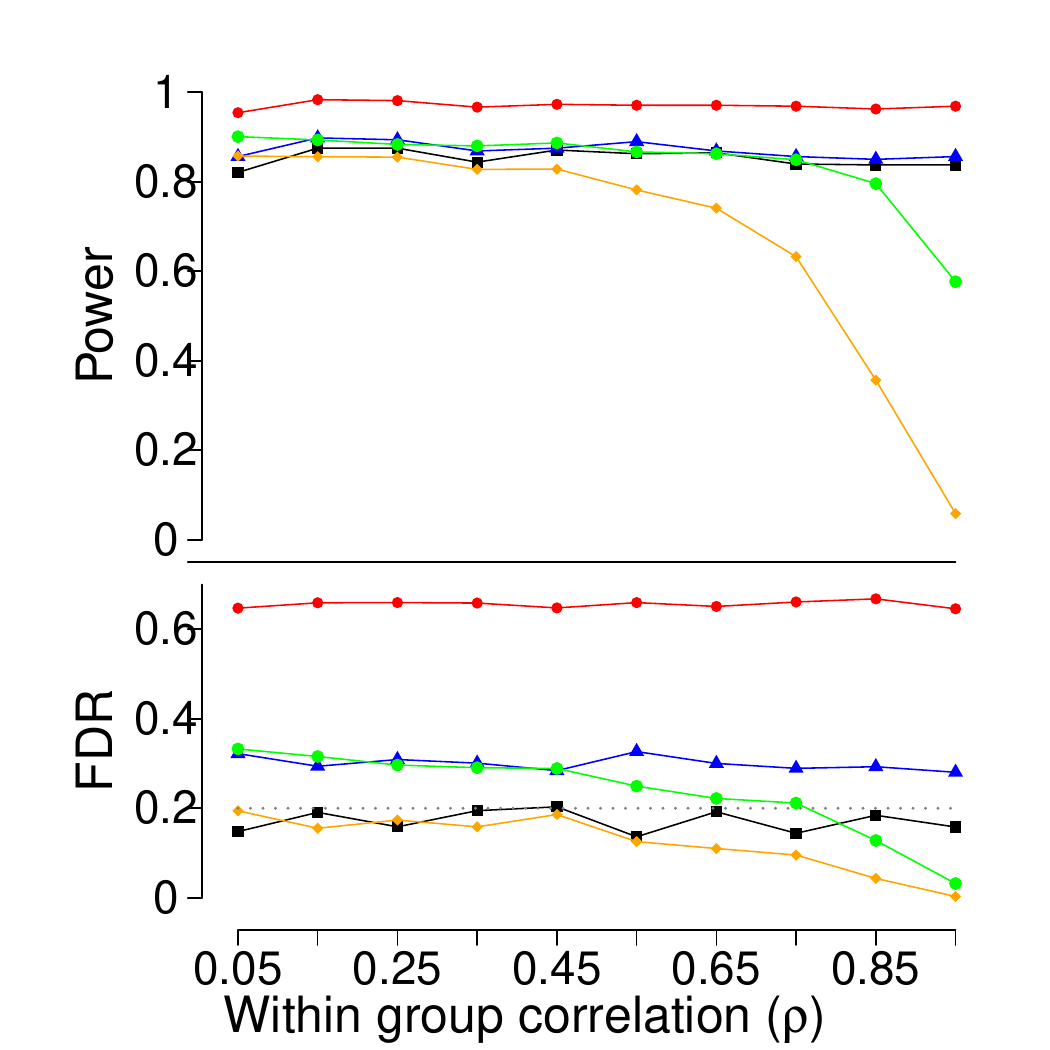}\\
    \includegraphics[scale=0.23]{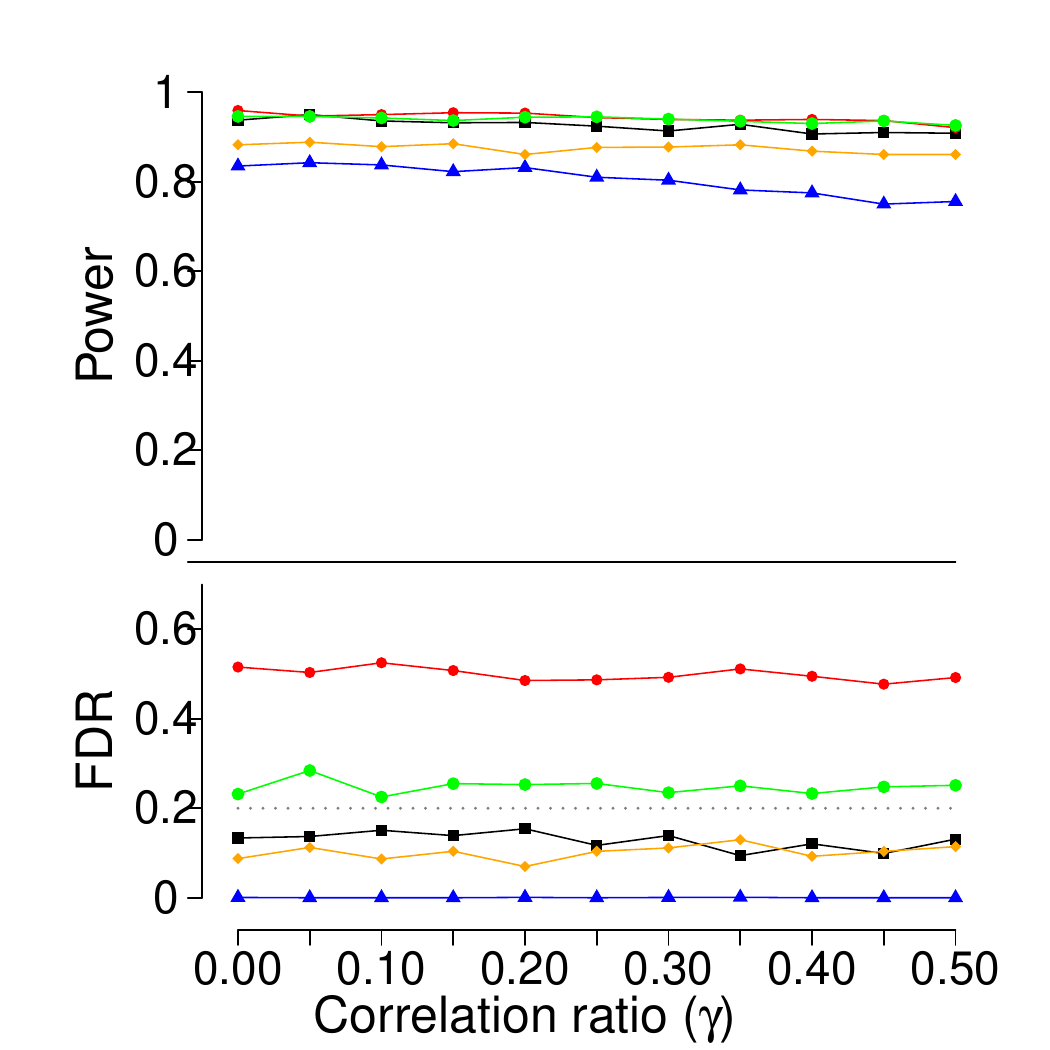}
    \includegraphics[scale=0.23]{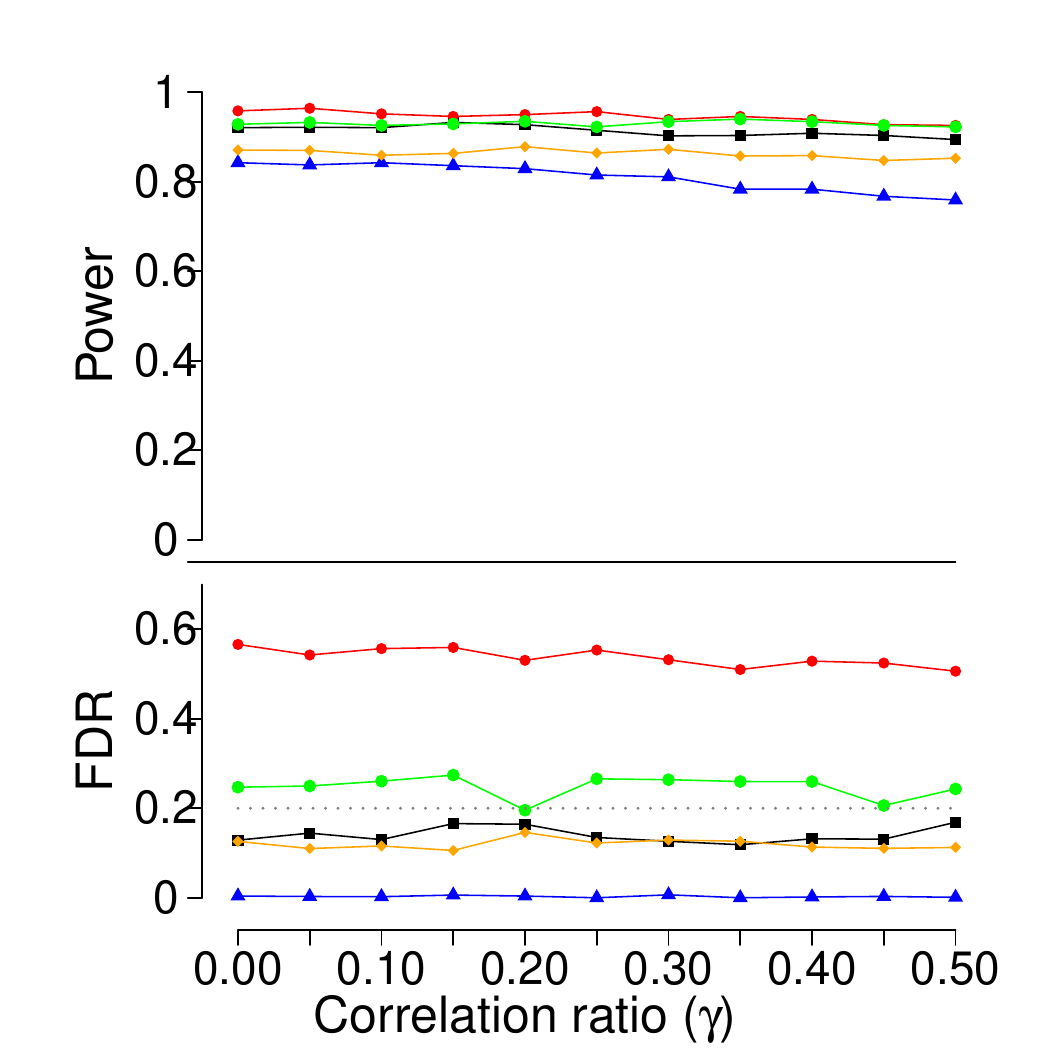}
    \includegraphics[scale=0.23]{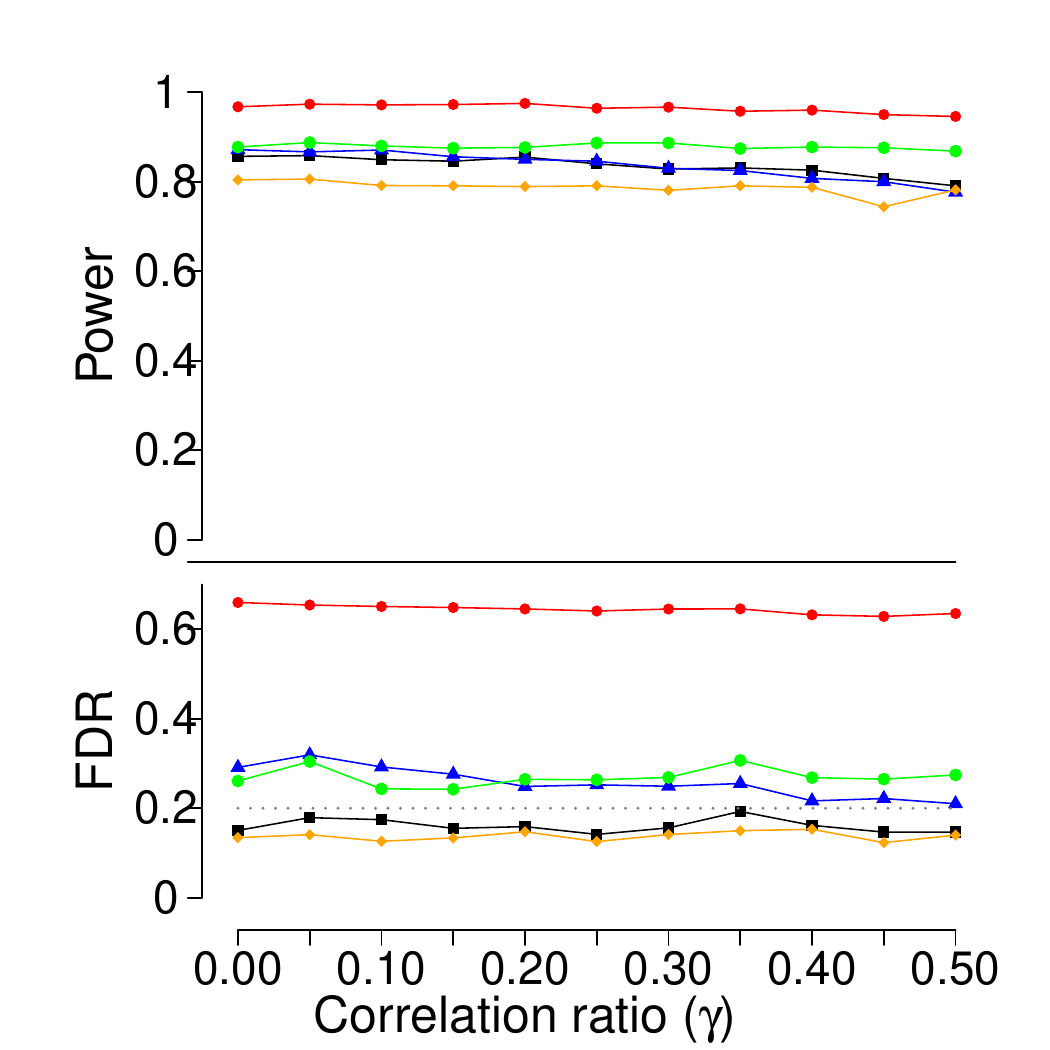}
    
    \caption{{\color{black} The power and the FDR for identifying group level simultaneous signals with data generated from \textbf{Setting 1} for the \textbf{Mixed} models (K=4) on \textbf{Scenario 1}. Left column includes settings with $s_0 \neq 0, s_1=s_2=s_3=s_4=s_{12}=s_{13}=s_{14}=s_{23}=s_{24}=s_{34}=s_{123}=s_{124}=s_{134}=s_{234}=0$; middle column includes settings with $s_0=12, s_1=s_2=s_3=s_4 \neq 0, s_{12}=s_{13}=s_{14}=s_{23}=s_{24}=s_{34}=s_{123}=s_{124}=s_{134}=s_{234}=0$; right column includes settings with $s_0=12, s_1=s_2=s_3=s_4=s_{12}=s_{13}=s_{14}=s_{23}=s_{24}=s_{34}=0, s_{123}=s_{124}=s_{134}=s_{234}\neq 0.$}}
    \label{fig:figure3-4}
\end{figure}

\begin{figure}
    \centering
    \includegraphics[scale=0.23]{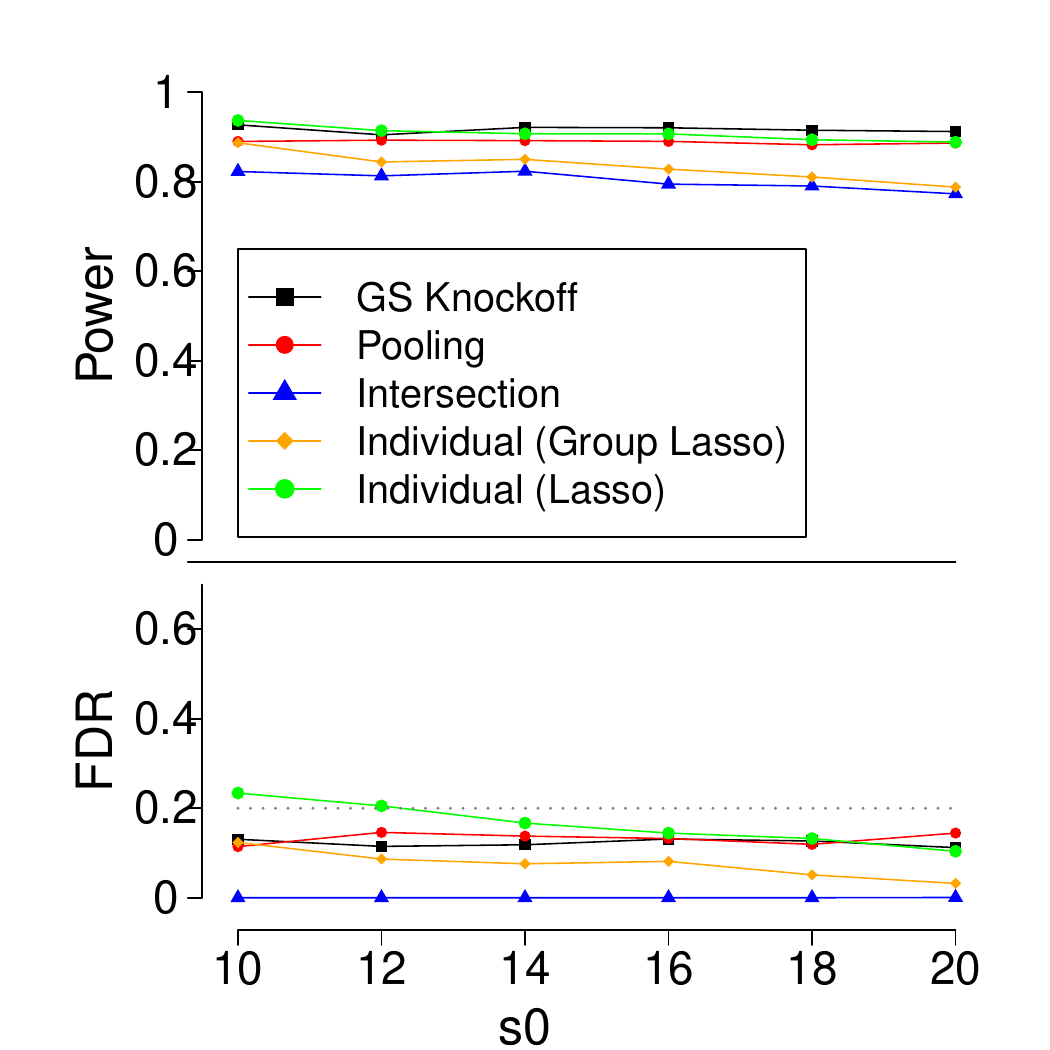}
    \includegraphics[scale=0.23]{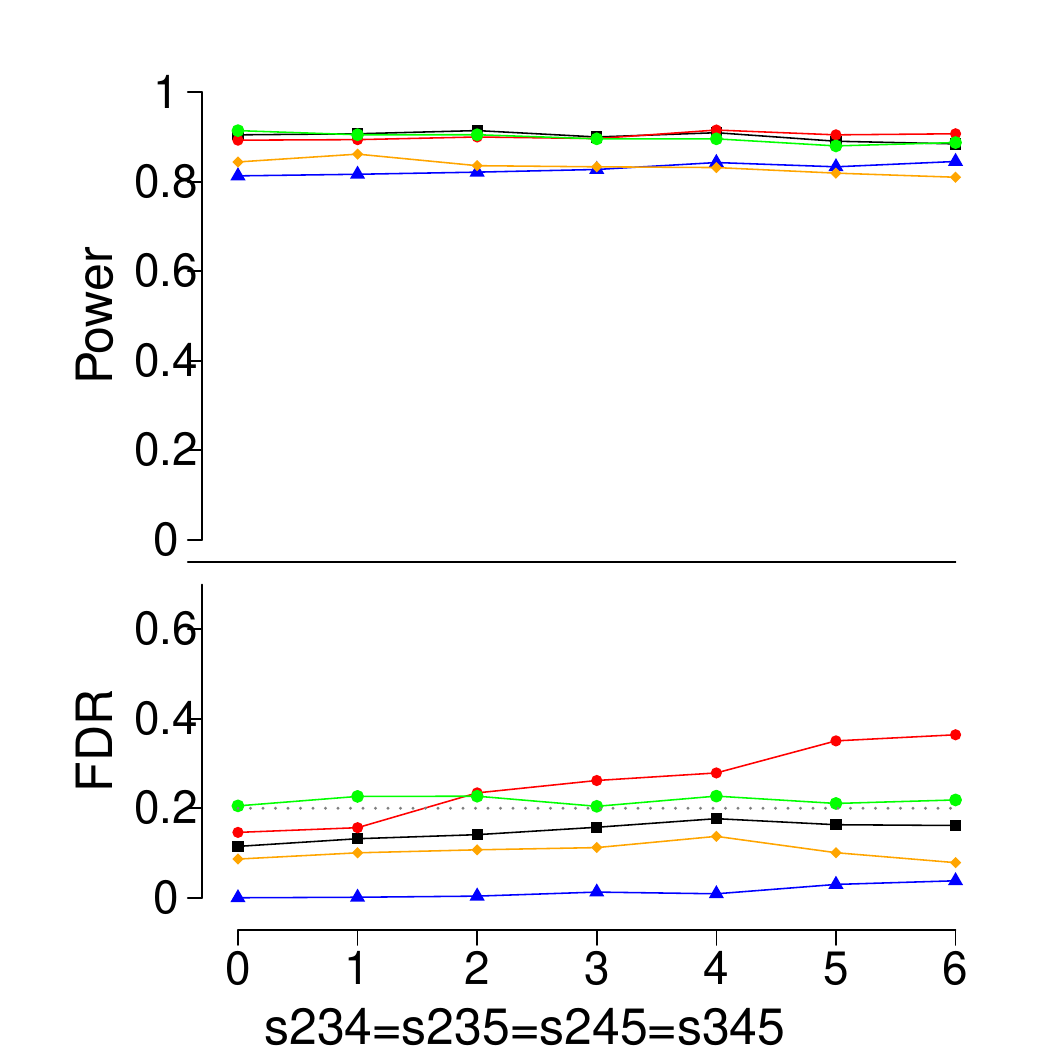}
    \includegraphics[scale=0.23]{images/K=5_Mixed_s4_scale=4_samesig=1.pdf}\\
    \includegraphics[scale=0.23]{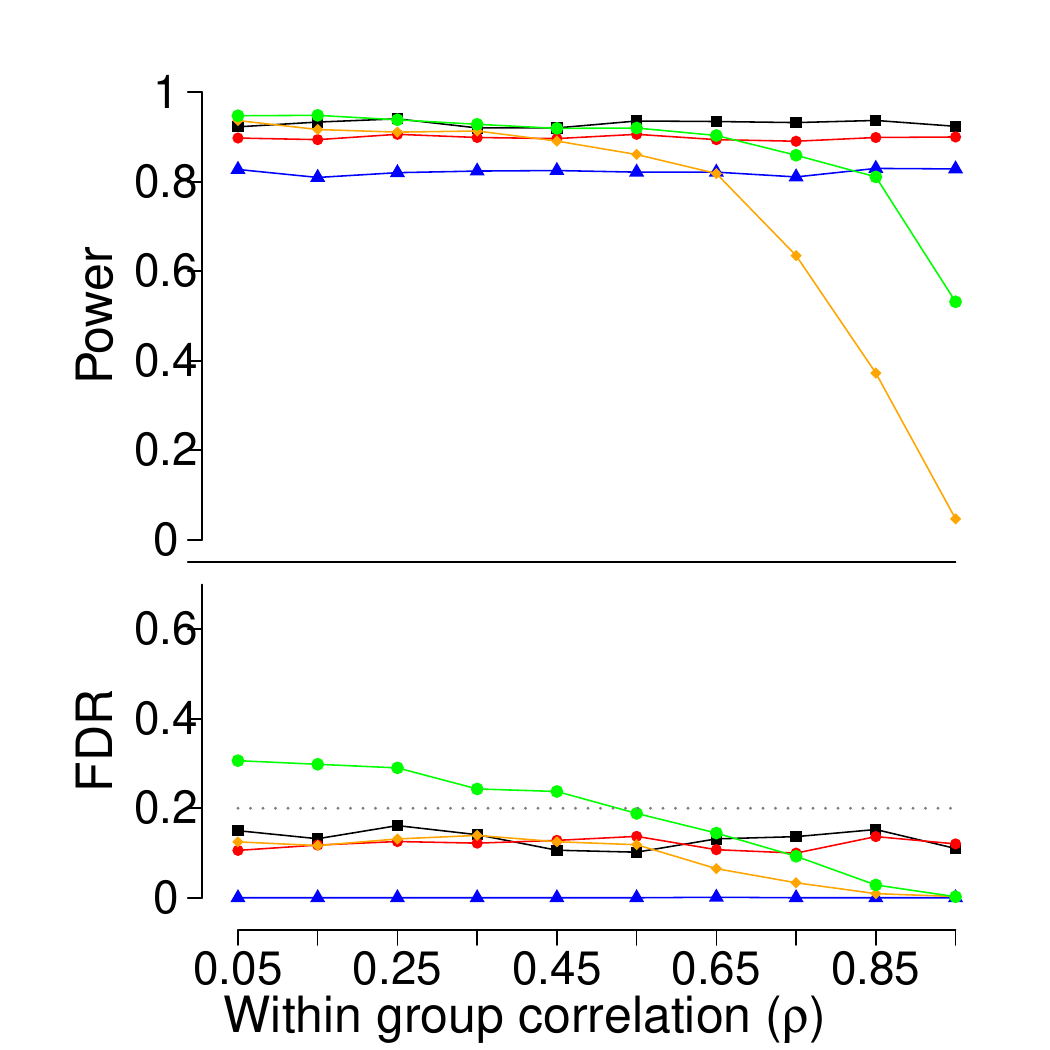}
    \includegraphics[scale=0.23]{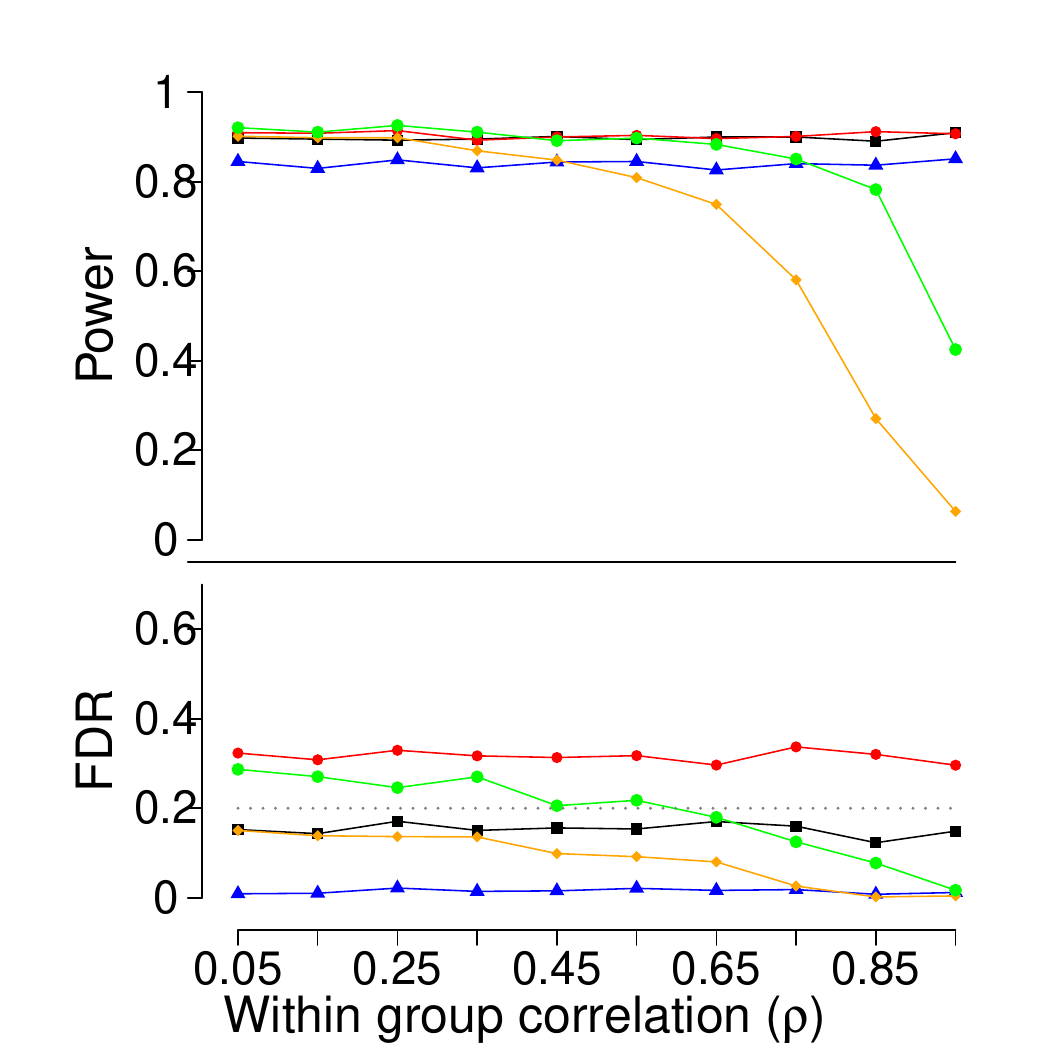}
    \includegraphics[scale=0.23]{images/K=5_Mixed_rho_s4_scale=4_samesig=1.pdf}\\
    \includegraphics[scale=0.23]{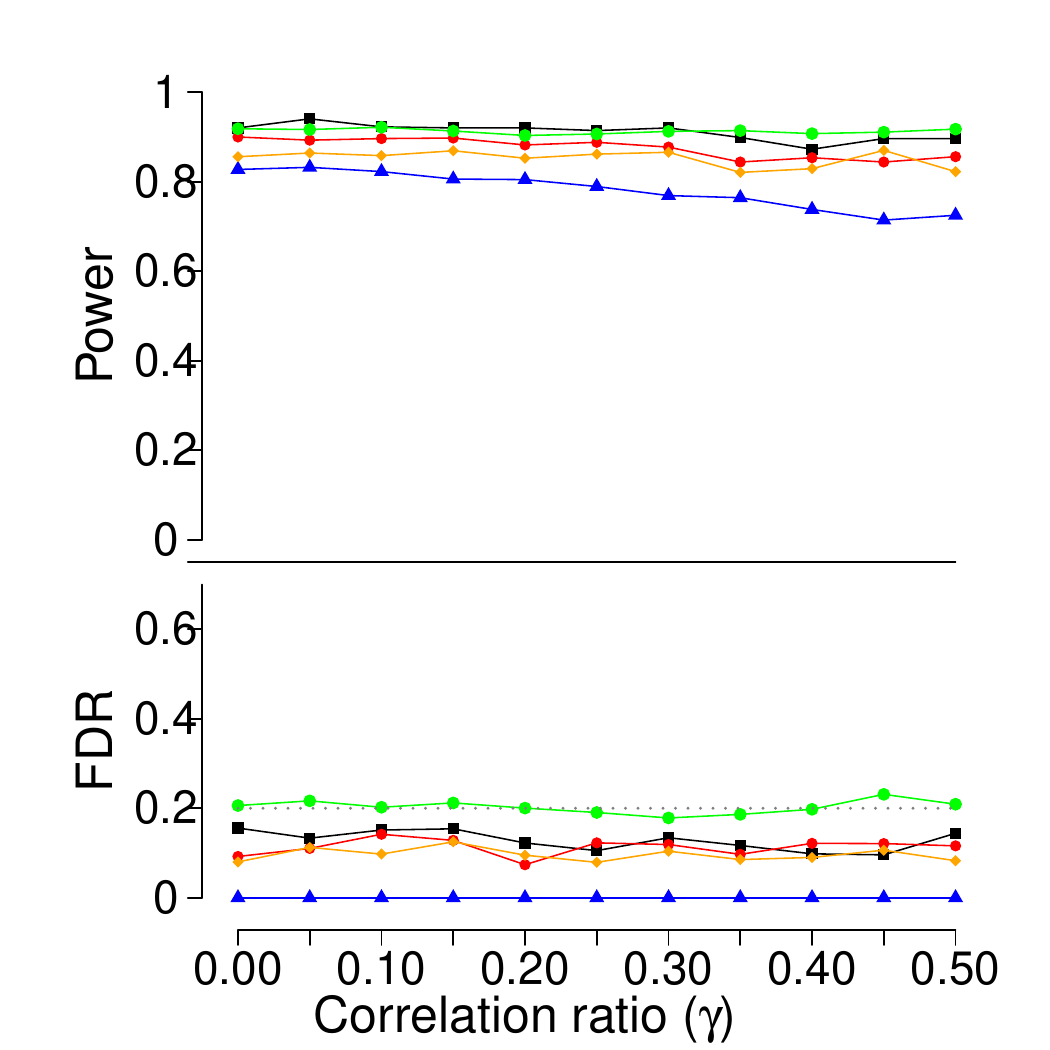}
    \includegraphics[scale=0.23]{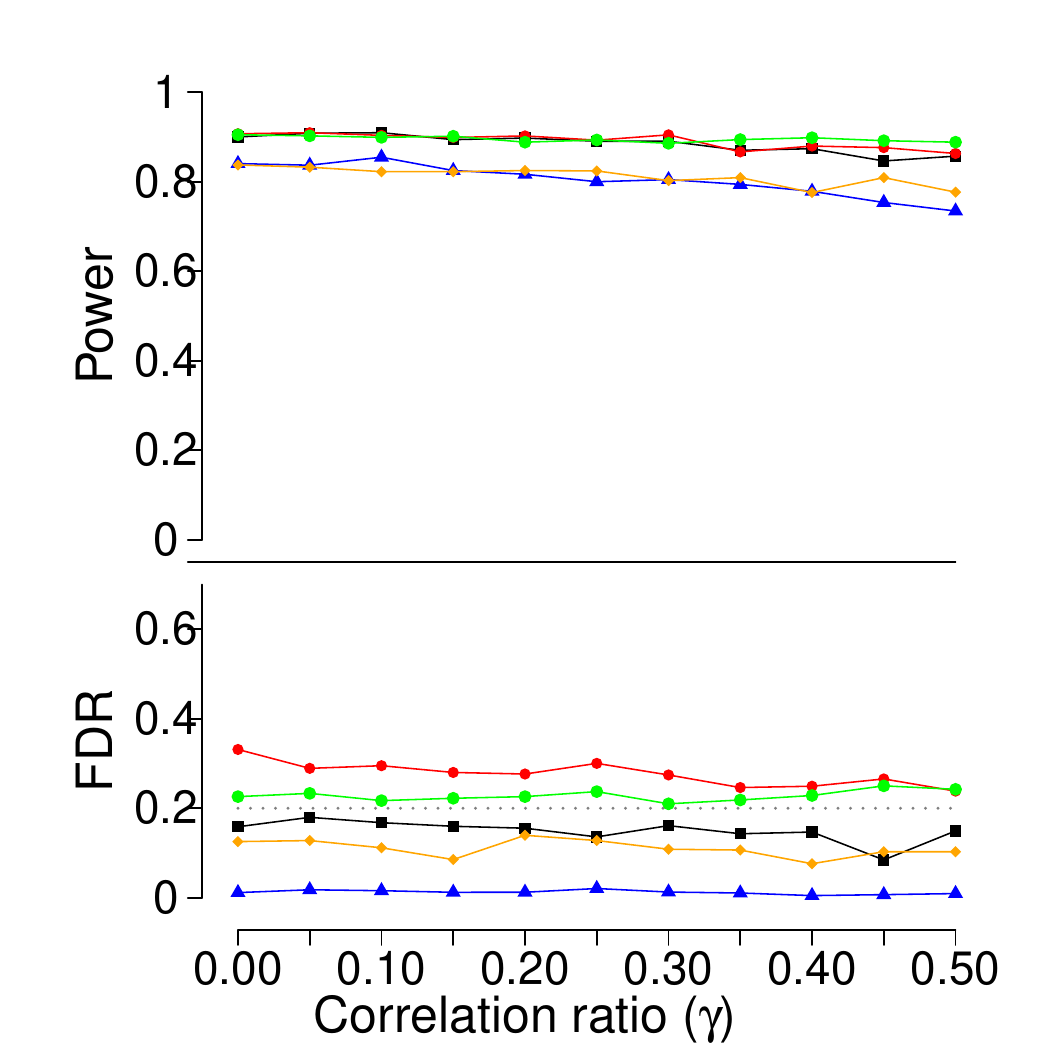}
    \includegraphics[scale=0.23]{images/K=5_Mixed_gamma_s4_scale=4_samesig=1.pdf}
     \caption{{\color{black} The power and the FDR for identifying group level simultaneous signals with data generated from \textbf{Setting 1} for the \textbf{Mixed} models (K=5) on \textbf{Scenario 1}. Left column includes settings with $s_0 \neq 0, s_1=s_2=s_3=s_4=s_5=s_{12}=s_{13}=s_{14}=s_{15}=s_{23}=s_{24}=s_{25}=s_{34}=s_{35}=s_{45}=s_{123}=s_{124}=s_{125}=s_{134}=s_{135}=s_{145}=s_{234}=s_{235}=s_{245}=s_{345}=s_{1234}=s_{1235}=s_{1245}=s_{1345}=s_{2345}=0$; middle column includes settings with $s_0=12, s_{234}=s_{235}=s_{245}=s_{345}\neq 0, s_1=s_2=s_3=s_4=s_5=s_{12}=s_{13}=s_{14}=s_{15}=s_{23}=s_{24}=s_{25}=s_{34}=s_{35}=s_{45}=s_{123}=s_{124}=s_{125}=s_{134}=s_{135}=s_{145}=s_{1234}=s_{1235}=s_{1245}=s_{1345}=s_{2345}=0$; right column includes settings with $s_0 = 12, s_{1234}=s_{1235}=s_{1245}=s_{1345}=s_{2345}\neq 0, s_1=s_2=s_3=s_4=s_5=s_{12}=s_{13}=s_{14}=s_{15}=s_{23}=s_{24}=s_{25}=s_{34}=s_{35}=s_{45}=s_{123}=s_{124}=s_{125}=s_{134}=s_{135}=s_{145}=s_{234}=s_{235}=s_{245}=s_{345}=0.$}}
    \label{fig:figure3-5}
\end{figure}

\begin{figure}
    \centering
    \includegraphics[scale=0.23]{images/K=41_Mixed_s0_scale=4_samesig=1.pdf}
    \includegraphics[scale=0.23]{images/K=41_Mixed_s1_scale=4_samesig=1.pdf}
    \includegraphics[scale=0.23]{images/K=41_Mixed_s3_scale=4_samesig=1.pdf}\\
    \includegraphics[scale=0.23]{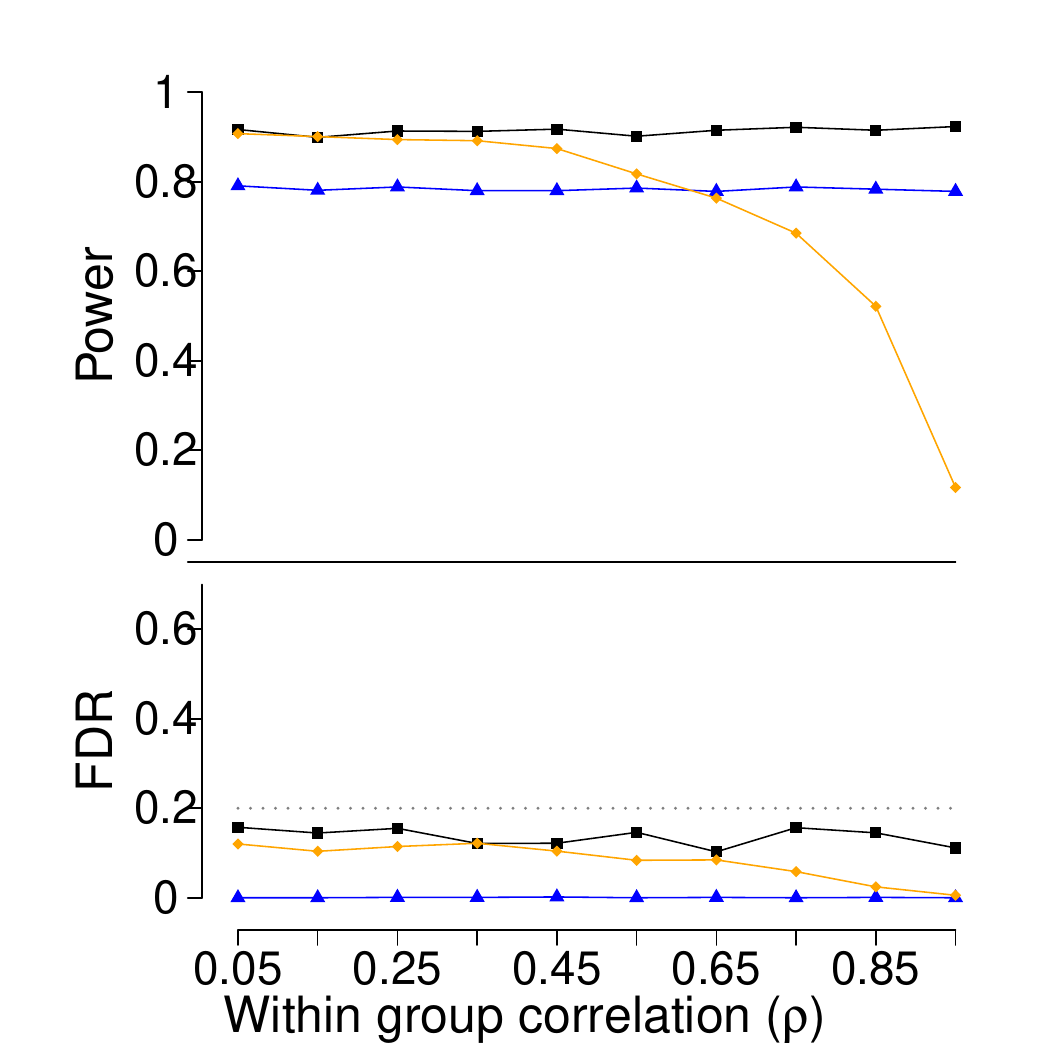}
    \includegraphics[scale=0.23]{images/K=41_Mixed_rho_s1_scale=4_samesig=1.pdf}
    \includegraphics[scale=0.23]{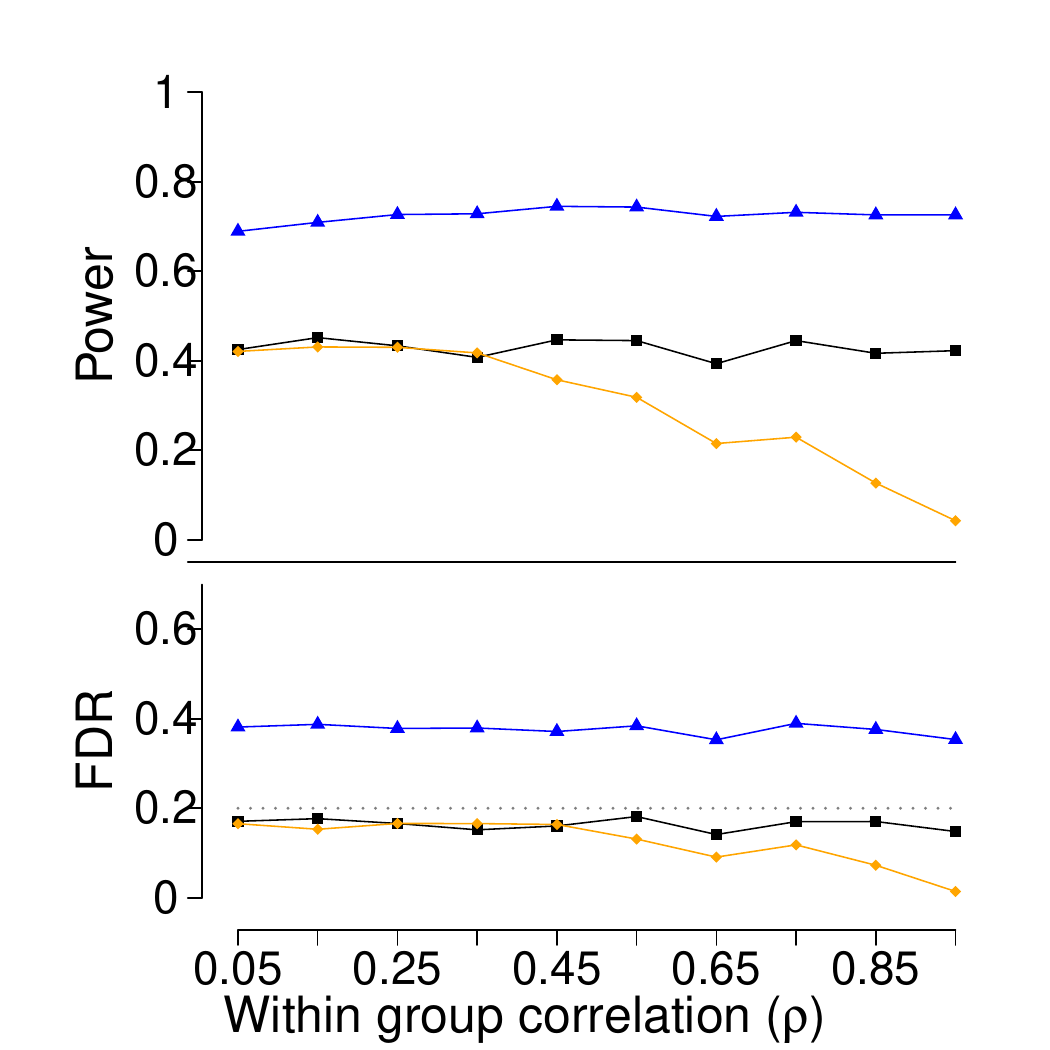}\\
    \includegraphics[scale=0.23]{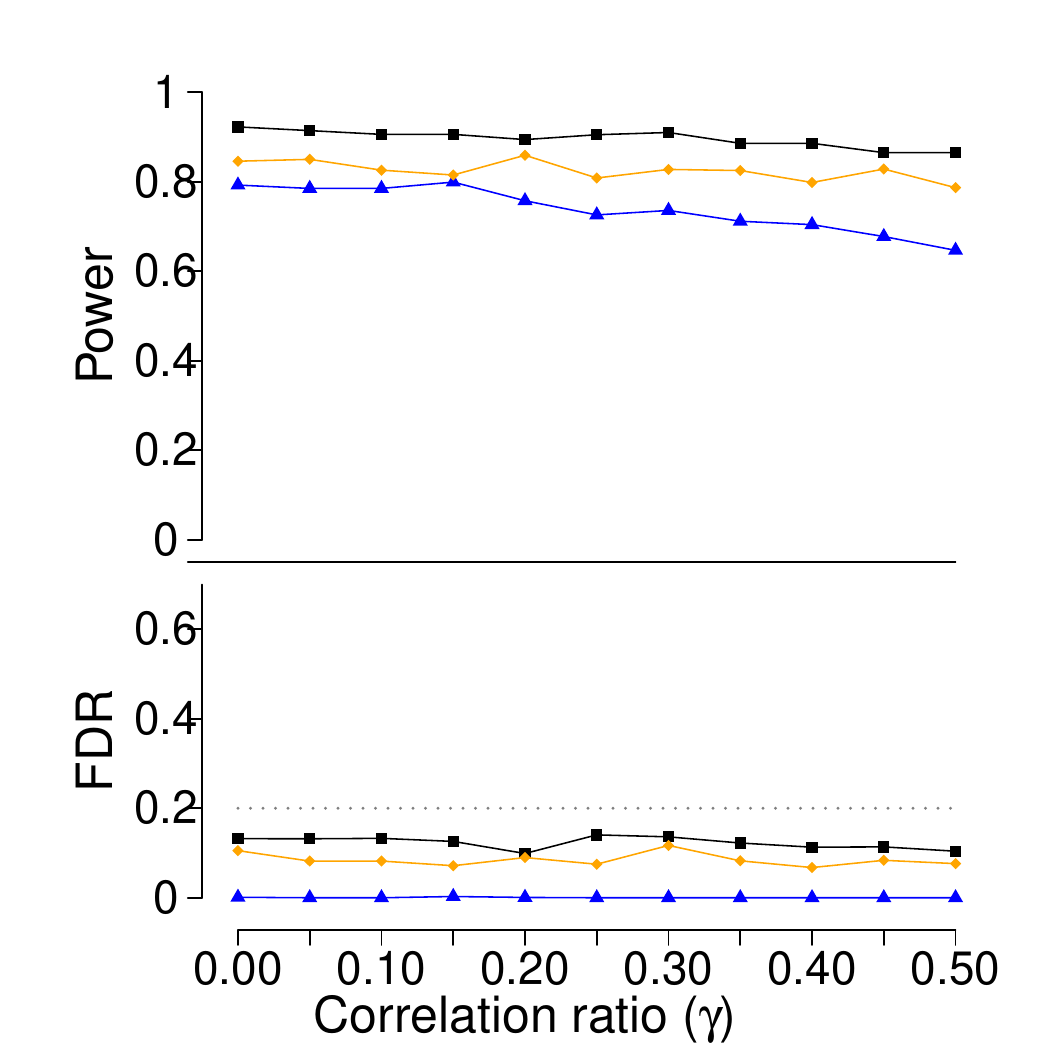}
    \includegraphics[scale=0.23]{images/K=41_Mixed_gamma_s1_scale=4_samesig=1.pdf}
    \includegraphics[scale=0.23]{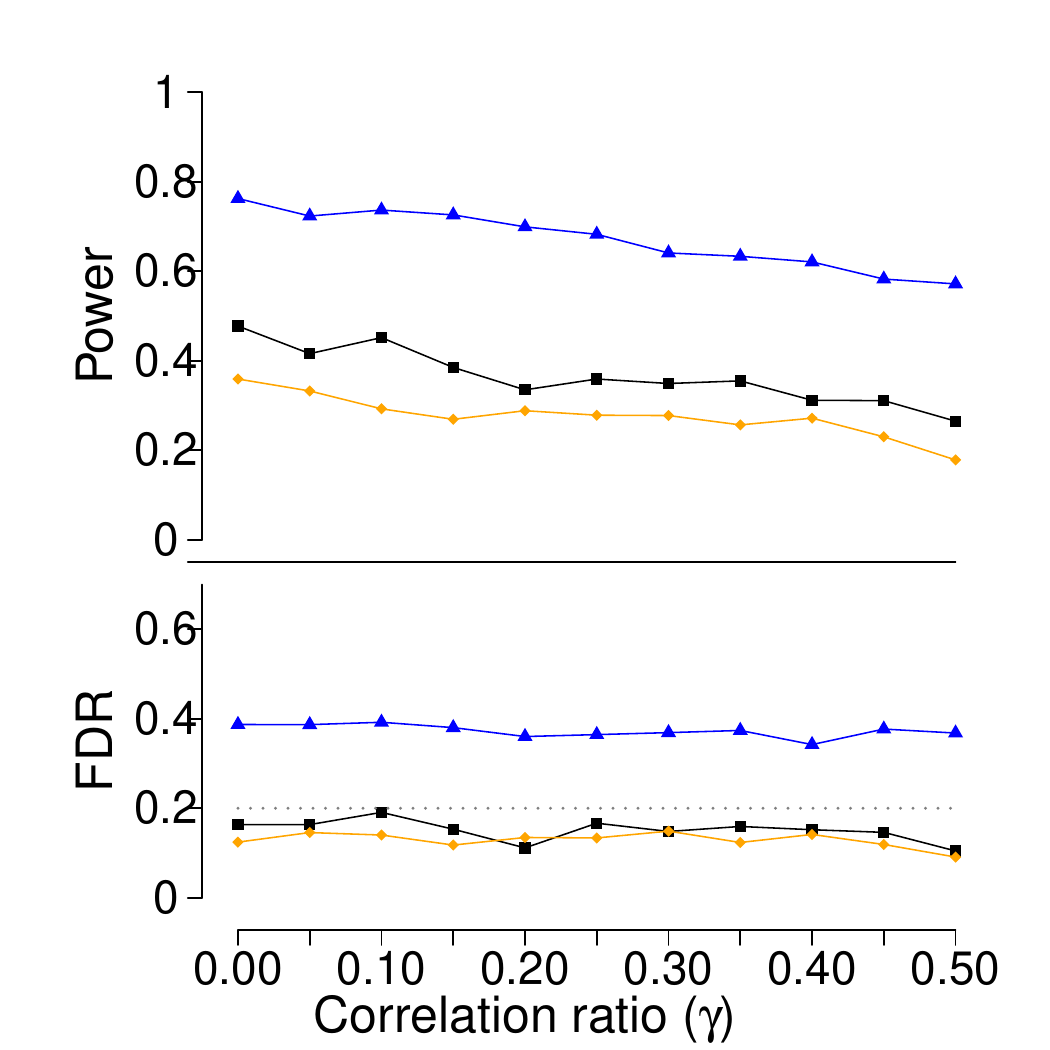}
    \caption{{\color{black} The power and the FDR for identifying group level simultaneous signals with data generated from \textbf{Setting 2} for the \textbf{Mixed} models (K=4) on \textbf{Scenario 1}.} Left column includes settings with $s_0 \neq 0, s_1=s_2=s_3=s_4=s_{12}=s_{13}=s_{14}=s_{23}=s_{24}=s_{34}=s_{123}=s_{124}=s_{134}=s_{234}=0$; middle column includes settings with $s_0=12, s_1=s_2=s_3=s_4 \neq 0, s_{12}=s_{13}=s_{14}=s_{23}=s_{24}=s_{34}=s_{123}=s_{124}=s_{134}=s_{234}=0$; right column includes settings with $s_0=12, s_1=s_2=s_3=s_4=s_{12}=s_{13}=s_{14}=s_{23}=s_{24}=s_{34}=0, s_{123}=s_{124}=s_{134}=s_{234}\neq 0.$}
    \label{fig:figure3-6}
\end{figure}

\begin{figure}
    \centering
    \caption{Cohort construction for N3C Knowledge Store Shared Project.}
    \includegraphics[scale=0.7]{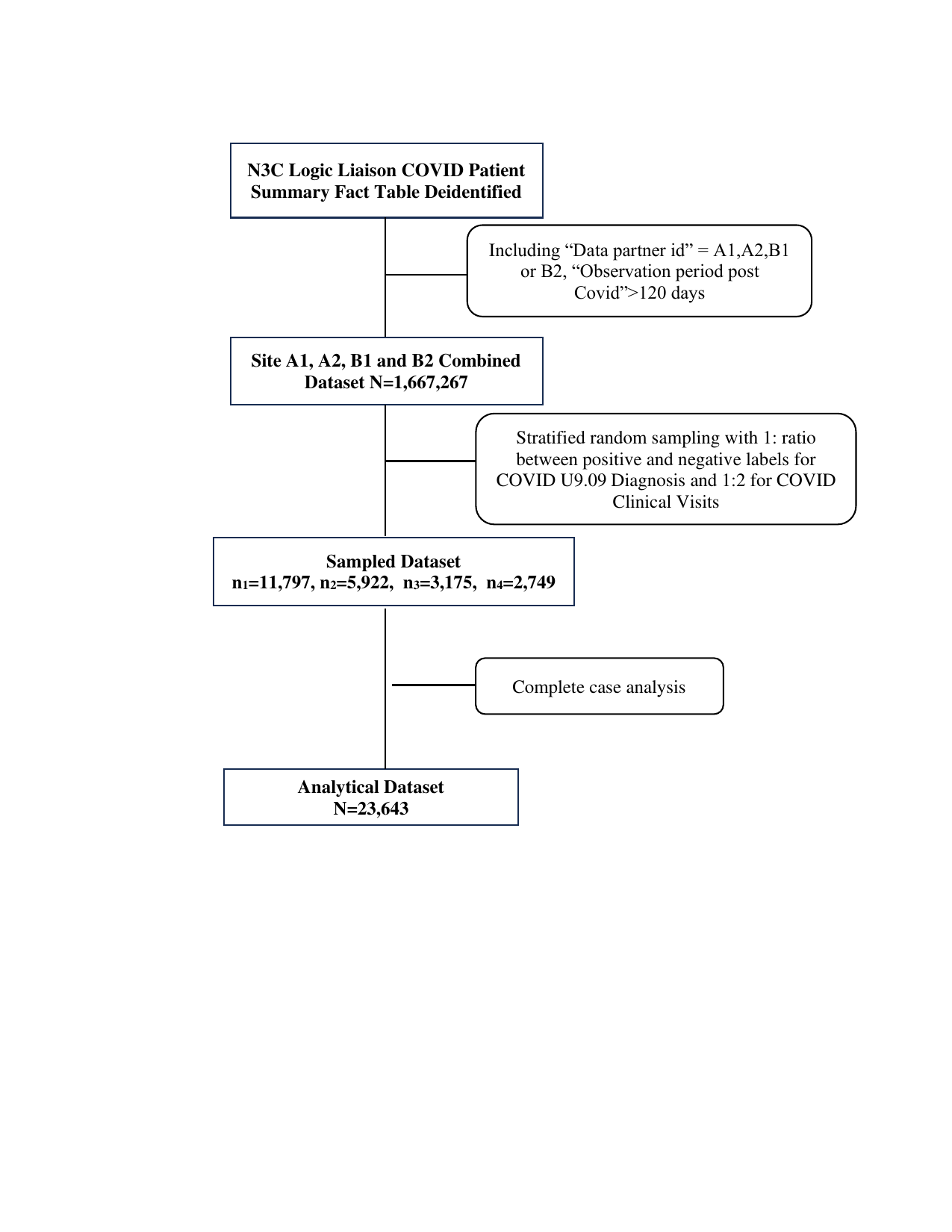}  
    \label{fig:figure6}
\end{figure}